\let\bbordermatrix\bordermatrix
\patchcmd{\bbordermatrix}{8.75}{4.75}{}{}
\patchcmd{\bbordermatrix}{\left(}{\left[}{}{}
\patchcmd{\bbordermatrix}{\right)}{\right]}{}{}
\theoremstyle{plain}
\newtheorem{theorem}{Theorem}
\newtheorem{lemma}[theorem]{Lemma}
\newtheorem{corollary}[theorem]{Corollary}
\newtheorem{proposition}[theorem]{Proposition}
\theoremstyle{definition}
\newtheorem{example}[theorem]{Example}
\newtheorem{problem}[theorem]{Problem}
\theoremstyle{remark}
\newcommand{\set}[1]{\left\{#1\right\}}
\def\tn{\textnormal}
\def\ld{\lambda}
\def\mN{\mathbb{N}}
\def\mR{\mathbb{R}}
\def\mS{\mathbb{S}}
\def\mZ{\mathbb{Z}}
\def\S{\mathcal{S}}
\def\tn{\textnormal}
\def\ep{\epsilon}
\def\het{\hat}
\def\a{\alpha}
\def\b{\beta}
\def\ce{\coloneqq}
\newcommand{\ignore}[1]{}
\DeclareMathOperator{\SA}{SA}
\DeclareMathOperator{\LS}{LS}
\DeclareMathOperator{\BZ}{BZ}
\DeclareMathOperator{\Las}{Las}
\DeclareMathOperator{\FRAC}{FRAC}
\DeclareMathOperator{\CLIQ}{CLIQ}
\DeclareMathOperator{\STAB}{STAB}
\DeclareMathOperator{\diag}{diag}
\DeclareMathOperator{\conv}{conv}
\DeclareMathOperator{\cone}{cone}
\title[Rank-Monotone Operations and Minimal Graphs $\LS_+$]{On Rank-Monotone Graph Operations \\ and Minimal Obstruction Graphs \\ for the Lov{\'a}sz--Schrijver SDP Hierarchy}
\author{Yu Hin (Gary) Au}
\thanks{Yu Hin (Gary) Au: Corresponding author. Department of Mathematics and Statistics, University of Saskatchewan, Saskatoon, Saskatchewan, S7N 5E6 Canada. E-mail: gary.au@usask.ca}
\author{Levent Tun{\c c}el}
\thanks{Levent Tun{\c c}el: Research of this author was supported in part by an NSERC Discovery Grant. Department of Combinatorics and Optimization, Faculty of Mathematics, University of Waterloo, Waterloo, Ontario, N2L 3G1 Canada. E-mail: levent.tuncel@uwaterloo.ca}
\date{\today}
\keywords{stable set problem, lift and project, combinatorial optimization, semidefinite programming, integer programming}
\begin{document}

%\Large
\maketitle % typeset the title of the contribution

\begin{abstract}
We study the lift-and-project rank of the stable set polytopes of graphs with respect to the Lov{\'a}sz--Schrijver SDP operator $\LS_+$, with a particular focus on finding and characterizing the smallest graphs with a given $\LS_+$-rank (the needed number of iterations of the $\LS_+$ operator on the fractional stable set polytope to compute the stable set polytope). We introduce a generalized vertex-stretching operation that appears to be promising in generating $\LS_+$-minimal graphs and study its properties. We also provide several new $\LS_+$-minimal graphs, most notably the first known instances of $12$-vertex graphs with $\LS_+$-rank $4$, which provides the first advance in this direction since Escalante, Montelar, and Nasini's discovery of a $9$-vertex graph with $\LS_+$-rank $3$ in 2006.
\end{abstract}

\section{Introduction}\label{sec1}

Given a simple, undirected graph $G = (V(G), E(G))$, we say that $S \subseteq V(G)$ is a \emph{stable set} if no two vertices in $S$ are joined by an edge. The \emph{(maximum) stable set problem}, which aims to find a stable set of maximum cardinality in a given graph $G$, is one of the most well-studied problems in combinatorial optimization. While this problem is $\mathcal{NP}$-hard, a standard approach for tackling the problem is to associate stable sets of $G$ with points in $\mathbb{R}^{V(G)}$, and model it as a convex optimization problem. Given a set $S \subseteq V(G)$, its \emph{incidence vector} $\chi_S \in \set{0,1}^{V(G)}$ is defined so that $[\chi_S]_i = 1$ if $i \in S$, and $[\chi_S]_i =0$ otherwise. Then we define the \emph{stable set polytope} of a given graph $G$ to be the convex hull of the incidence vectors of stable sets of $G$:
\[
\STAB(G) \ce \conv\left( \set{ \chi_S :~\tn{$S \subseteq V(G)$ is a stable set of $G$}}\right).
\]
Observe that if we let $\alpha(G)$ be the cardinality of a maximum stable set in $G$, then
\begin{equation}\label{eqSTABG}
\alpha(G) = \max \set{\sum_{i \in V(G)} x_i : x \in \STAB(G)}. 
\end{equation}
While~\eqref{eqSTABG} is a linear program, considering again that the underlying combinatorial problem is $\mathcal{NP}$-hard, it is a difficult task to find an explicit description (e.g., via listing its facets) of $\STAB(G)$ for a general graph $G$. This naturally leads to the pursuit of ``nice'' convex relaxations of $\STAB(G)$. Below we list several desirable characteristics of such a convex relaxation $P$:
\begin{itemize}
\item[(i)]
$P \cap \set{0,1}^{V(G)} = \STAB(G) \cap \set{0,1}^{V(G)}$. That is, a $0$-$1$ vector is in $P$ if and only if it is the incidence vector of a stable set in $G$.
\item[(ii)]
$P$ is tractable. That is, one can optimize a linear function over $P$ with arbitrary precision in polynomial time.
\item[(iii)]
$P$ is a ``strong'' relaxation. This is a relatively subjective measure, and can mean that important families of valid inequalities of $\STAB(G)$ are also valid for $P$, and/or that $\max \set{ \sum_{i \in V(G)} x_i : x \in P}$ is ``close'' to $\alpha(G)$.
\end{itemize}
One of the simplest convex relaxations of $\STAB(G)$ is the \emph{fractional stable set polytope}
\[
\FRAC(G) \ce \set{ x \in [0,1]^{V(G)}: x_i + x_j \leq 1, \forall \set{i,j} \in E(G)}.
\]
While $\FRAC(G)$ satisfies properties (i) and (ii), it is rather weak in general. For a stronger relaxation, we call $C \subseteq V(G)$ a \emph{clique} if every pair of vertices in $C$ is joined by an edge. Then notice that, for every clique $C$, the inequality
\[
\sum_{i \in C} x_i \leq 1
\]
is valid for $\STAB(G)$. Thus, if we define the \emph{clique polytope}
\[
\CLIQ(G) \ce \set{ x \in [0,1]^{V(G)} : \sum_{i \in C} x_ i \leq 1~\tn{for every clique $C \subseteq V(G)$}},
\]
then $\STAB(G) \subseteq \CLIQ(G) \subseteq \FRAC(G)$ for every graph $G$. (For the second containment. observe that every edge is a clique of size $2$.) However, while $\CLIQ(G)$ is a stronger relaxation than $\FRAC(G)$, it is not tractable in general. 

In this manuscript, we focus on semidefinite relaxations of $\STAB(G)$ produced by $\LS_+$, a lift-and-project operator devised by Lov{\'a}sz and Schrijver~\cite{LovaszS91} which we fully define in Section~\ref{sec2}. (The operator has also been referred to as $N_+$ in the literature.) Given a graph $G$, the $\LS_+$ operator generates a sequence of relaxations $\LS_+^k(G)$ which satisfies
\[
\FRAC(G) =: \LS_+^0(G) \supseteq \LS_+^1(G) \supseteq \LS_+^2(G) \supseteq \cdots  \supseteq \LS_+^{|V(G)|}(G) = \STAB(G).
\]
(We will usually refer to $\LS_+^1(G)$ as simply $\LS_+(G)$.) When $k = O(1)$, $\LS_+^k(G)$ can be described as the feasible region of a semidefinite program whose number of variables and constraints are polynomial in the size of the number of vertices and edges in $G$, and thus the relaxation is indeed tractable in this case. Moreover, the first relaxation $\LS_+(G)$ already satisfies many well-known families of valid inequalities of $\STAB(G)$, including (among others) the aforementioned clique inequalities, orthogonality constraints imposed by the Lov{\'a}sz theta body~\cite{Lovasz79}, as well as odd hole, odd antihole, and odd wheel constraints~\cite{LovaszS91}.

The hierarchy of relaxations generated by $\LS_+$ gives rise to the notion of the \emph{$\LS_+$-rank} of a graph $G$, which is defined to be the smallest integer $k$ where $\LS_+^k(G) = \STAB(G)$, and gives us a measure of how difficult the stable set problem is for the $\LS_+$ operator. It is well-known that a graph $G$ has $\LS_+$-rank $0$ (i.e., satisfies $\FRAC(G) = \STAB(G)$) if and only if $G$ is bipartite. Some families of graphs that are known to have $\LS_+$-rank $1$ (i.e., satisfy $\LS_+(G) = \STAB(G)$) include --- but are not limited to --- odd cycles, odd antiholes, odd wheels, and perfect graphs (which are defined to be the graphs with $\CLIQ(G) = \STAB(G)$). In the last decade, considerable progress has been made in finding a combinatorial characterization of graphs with $\LS_+$-rank $1$ --- see, for instance,~\cite{BianchiENT13, BianchiENT17, Wagler22, BianchiENW23}.

Nevertheless, since the maximum stable problem is $\mathcal{NP}$-hard, there has to be graphs with unbounded $\LS_+$-rank. The first family of graphs that have unbounded $\LS_+$-rank was obtained by Stephen and the second author~\cite{StephenT99}, who showed that the line graph of the complete graph on $2k+1$ vertices has $\LS_+$-rank $k$, giving a family of graphs $G$ whose $\LS_+$-rank is asymptotically $\Omega(\sqrt{|V(G)|})$. On the other hand, Lipt{\'a}k and the second author~\cite{LiptakT03} showed the following:

\begin{theorem}\label{thmNover3}
For every graph $G$, the $\LS_+$-rank of $G$ is at most $\left\lfloor \frac{|V(G)|}{3} \right\rfloor$.
\end{theorem}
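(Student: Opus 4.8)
The plan is to argue by induction on $n := |V(G)|$, writing $r_+(G)$ for the $\LS_+$-rank. The base cases $n \leq 2$ are immediate, as every graph on at most two vertices is bipartite and hence has rank $0 = \lfloor n/3 \rfloor$; more generally, any bipartite $G$ has $r_+(G) = 0$. The engine I would aim to establish for the inductive step is the recursion
\begin{equation*}
r_+(G) \leq \max\set{ r_+(G - v),\ 1 + r_+(G - N[v]) },
\end{equation*}
valid for every vertex $v$, where $G - v$ deletes $v$ and $G - N[v]$ deletes $v$ together with all of its neighbours. Granting this, the theorem follows quickly. If $G$ is not bipartite then some vertex $v$ has degree at least $2$, so $|N[v]| \geq 3$ and $G - N[v]$ has at most $n - 3$ vertices; the inductive hypothesis then gives $1 + r_+(G - N[v]) \leq 1 + \lfloor (n-3)/3 \rfloor = \lfloor n/3 \rfloor$, while $r_+(G - v) \leq \lfloor (n-1)/3 \rfloor \leq \lfloor n/3 \rfloor$. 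The two estimates combine to yield $r_+(G) \leq \lfloor n/3 \rfloor$, and the factor $\tfrac13$ is precisely the ``$\geq 3$ vertices removed per extra round'' coming from $|N[v]| \geq 3$.

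The substance of the proof is therefore the recursion itself, and the two ingredients I would try to combine are the ways $\LS_+$ interacts with fixing a single coordinate. First, deleting a vertex should be \emph{free}: setting $x_v = 0$ ought to commute with the operator, so that $\LS_+^k(G) \cap \set{x : x_v = 0}$, read off on the surviving coordinates, is exactly $\LS_+^k(G - v)$; this face is then integral as soon as $k \geq r_+(G - v)$. Second, on the face $x_v = 1$ every neighbour of $v$ is forced to $0$ by the edge inequalities, and the restriction to $V(G) \sm N[v]$ should land inside $\LS_+^k(G - N[v])$ by monotonicity under passing to induced subgraphs; so this face is integral once $k \geq r_+(G - N[v])$. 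Thus both the $x_v = 0$ and $x_v = 1$ faces of $\LS_+^k(G)$ become integral at the \emph{same} level $k = \max\set{ r_+(G-v), r_+(G-N[v]) }$.

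The delicate point — and the step where I expect the real difficulty to sit — is that a general point of $\LS_+^k(G)$ need not lie on either face, and the Lovász--Schrijver certificate matrix only expresses such a point as a convex combination of two points \emph{one level down}, in $\LS_+^{k-1}(G)$, with $v$-coordinates $0$ and $1$. Reconciling this one-level loss with the target bound is exactly what makes the constant come out to $\tfrac13$: the $x_v = 1$ branch must be shown to cost precisely one extra round (paid for by the simultaneous deletion of the $\geq 3$ vertices of $N[v]$), while the $x_v = 0$ branch must be genuinely free rather than also consuming a round. I would spend most of the effort here, verifying this round-accounting carefully using the closure properties of $\LS_+$ developed in Section~\ref{sec2}; a naive application of the decomposition loses a round on \emph{both} branches and yields only a bound of the order $\tfrac23 n$, so the tight version is what must be extracted. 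The remaining cases are routine: graphs in which every vertex has degree at most $1$ are bipartite and dealt with by the base case, so any non-bipartite $G$ supplies a vertex of degree at least $2$ to feed into the recursion, and isolated or pendant vertices can be stripped off without increasing the rank while strictly decreasing $n$.
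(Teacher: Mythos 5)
There is a genuine gap, and it sits exactly where you predicted the difficulty would be: the engine of your induction, the single-vertex recursion $r_+(G) \leq \max\set{r_+(G-v),\ 1 + r_+(G \ominus v)}$, is false, and the paper's own graph $G_{2,1}$ (Figure~\ref{figKnownEG}) refutes it. Take $v = 1$: its neighbours are $2, 3, 4_2$, so $G_{2,1} \ominus 1$ is the single edge $\set{4_1, 4_0}$, bipartite with rank $0$, while $G_{2,1} - 1$ is the $5$-cycle $2, 3, 4_1, 4_0, 4_2$, with rank $1$. Your recursion would give $r_+(G_{2,1}) \leq \max\set{1,\ 1+0} = 1$, whereas $r_+(G_{2,1}) = 2$; run at this vertex (which has degree $3 \geq 2$, so your inductive step is entitled to pick it), your argument would even ``prove'' the false bound $r_+(G_{2,1}) \leq 1$, strictly below the claimed $\lfloor 6/3 \rfloor$. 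So the delicate point you flagged cannot be closed by any round-accounting: for a point with $0 < x_v < 1$, the certificate matrix only decomposes $x$ into two points of $\LS_+^{k-1}(G)$ lying on the faces $x_v = 1$ and $x_v = 0$, and the (correct, via Lemma~\ref{lemfacet}) integrality of the $x_v = 0$ face of $\LS_+^{k}(G)$ says nothing about the $0$-branch point, which lives one level down. The best single-vertex statement this decomposition yields is $r_+(G) \leq 1 + \max\set{r_+(G - v),\ r_+(G \ominus v)}$, and the counterexample shows the round lost on the deletion branch is genuinely lost.

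The repair --- essentially the proof in~\cite{LiptakT03}, which the paper cites rather than reproves --- keeps your outer induction and your preprocessing, but replaces the hybrid recursion by Theorem~\ref{thmDeleteDestroy}(i): $r_+(G) \leq \max \set{ r_+(G \ominus i) : i \in V(G) } + 1$. Note that the maximum runs over \emph{all} vertices, and the known proof of this bound never descends into the deletion face at all: roughly, once $k-1 \geq r_+(G \ominus i)$ for every $i$, every column $Ye_i$ of a certificate for $x \in \LS_+^k(G)$ lies in $\cone(\STAB(G))$, and positive semidefiniteness of $Y$ then forces $x$ to satisfy every facet of $\STAB(G)$ with nonnegative coefficients --- which is why destruction costs exactly one round with no deletion branch to pay for. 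Your remaining steps then work verbatim: isolated vertices are free; a pendant vertex $v$ is removed at no cost by Proposition~\ref{propCliqueCut}, since its unique neighbour is a cut clique separating $v$ from the rest, so $r_+(G) = r_+(G-v)$; and once the minimum degree is at least $2$, every $G \ominus i$ has at most $|V(G)| - 3$ vertices, whence induction gives $r_+(G) \leq 1 + \left\lfloor \frac{|V(G)|-3}{3} \right\rfloor = \left\lfloor \frac{|V(G)|}{3} \right\rfloor$. In short: your base case, your stripping of low-degree vertices, and your ``three vertices per round'' heuristic are all sound; the one inequality you built the proof on is not.
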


This naturally raises the following question: For every integer $\ell \geq 1$, is there a graph on $3\ell$ vertices which has $\LS_+$-rank $\ell$? If these graphs exist, their extremal nature (in terms of being the smallest possible graphs with a given $\LS_+$-rank) may help reveal the critical structures that expose the limitations of these $\LS_+$-relaxations. This understanding could be extremely helpful when it comes to analyzing other convex relaxations of the maximum stable set problem, particularly those which are produced by other lift-and-project methods.

This direction of investigation was already set in the seminal paper~\cite{LovaszS91} and questions about the behaviour of $\LS_+$-rank under simple graph operations were also raised in~\cite{GoemansT01}. In the same general direction of research, Laurent~\cite{Laurent2001} analyzed the $\LS_+$-rank and related ranks in the context of the maximum cut problem by establishing nice behaviour (only in the context of maximum cut problems) of the underlying lift-and-project operators under graph minor operations; also see~\cite{Laurent2003b} for an analysis of the Lasserre operator. However, as it was illustrated in some depth in~\cite{LiptakT03}, the $\LS_+$-rank of a graph does not behave in a nice, uniform way under the usual graph minor operations for the stable set problem. Therefore, a deeper investigation is necessary to construct the kind of graph operations which would be helpful in discovering and understanding minimal obstructions to tractable convex relaxations of the stable set polytope obtained by $\LS_+$ or other convex optimization based lift-and-project hierarchies.
Overall, the importance of the quest to understand minimal obstructions to families of SDP relaxations in particular --- and convex relaxations in general --- has been raised by many others. For example, Knuth, in his well-known survey ``The Sandwich Theorem"~\cite{Knuth1994} poses six open problems in the general context of Lov\'{a}sz theta function. Two of the six open problems concern
$\LS_+(\FRAC(G))$. One of them asks for finding what we call below a $2$-minimal graph (answered in~\cite{LiptakT03}).

We say that a graph $G$ is \emph{$\ell$-minimal} if $|V(G)| = 3\ell$ and $G$ has $\LS_+$-rank $\ell$. It is known that $\ell$-minimal graphs exist for $\ell \in \set{1,2,3}$. For $\ell = 1$, it is easy to see that the $3$-cycle is the only $1$-minimal graph. The first $2$-minimal graph ($G_{2,1}$ in Figure~\ref{figKnownEG}) was found by Lipt{\'a}k and the second author~\cite{LiptakT03}, who also conjectured that $\ell$-minimal graphs exist for all $\ell \in \mN$. Subsequently, Escalante, Montelar, and Nasini~\cite{EscalanteMN06} showed that there is only one other $2$-minimal graph ($G_{2,2}$ in Figure~\ref{figKnownEG}), while providing the first example of a $3$-minimal graph ($G_{3,1}$ in Figure~\ref{figKnownEG}).  (The logic behind the seemingly odd choice of vertex labels in the figures of this section will be explained in Section~\ref{sec4} when we introduce the vertex-stretching operation.)

\def\y{0.70}
\def\sc{2}
\def\x{180}
\def\z{360/4}

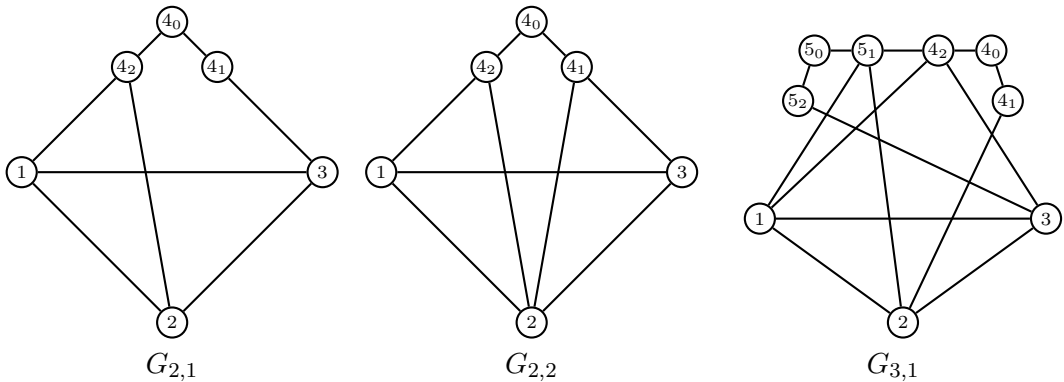
\begin{figure}[ht!]
\begin{center}
\begin{tabular}{ccc}

\begin{tikzpicture}
[scale=\sc, thick,main node/.style={circle, minimum size=4mm, inner sep=0.1mm,draw,font=\tiny\sffamily}]

\node[main node] at ({cos(\x+(0)*\z)},{sin(\x+(0)*\z)}) (1) {$1$};
\node[main node] at ({cos(\x+(1)*\z)},{sin(\x+(1)*\z)}) (2) {$2$};
\node[main node] at ({cos(\x+(2)*\z)},{sin(\x+(2)*\z)}) (3) {$3$};

\node[main node] at ({ \y* cos(\x+(3)*\z) + (1-\y)*cos(\x+(2)*\z)},{ \y* sin(\x+(3)*\z) + (1-\y)*sin(\x+(2)*\z)}) (4) {$4_1$};
\node[main node] at ({cos(\x+(3)*\z)},{sin(\x+(3)*\z)}) (5) {$4_0$};
\node[main node] at ({ \y* cos(\x+(3)*\z) + (1-\y)*cos(\x+(4)*\z)},{ \y* sin(\x+(3)*\z) + (1-\y)*sin(\x+(4)*\z)}) (6) {$4_2$};

 \path[every node/.style={font=\sffamily}]
(1) edge (2)
(2) edge (3)
(3) edge (1)
(4) edge (5)
(5) edge (6)
(4) edge (3)
(6) edge (1)
(6) edge (2);
\end{tikzpicture}
&
\begin{tikzpicture}[scale=\sc, thick,main node/.style={circle, minimum size=4mm, inner sep=0.1mm,draw,font=\tiny\sffamily}]
\node[main node] at ({cos(\x+(0)*\z)},{sin(\x+(0)*\z)}) (1) {$1$};
\node[main node] at ({cos(\x+(1)*\z)},{sin(\x+(1)*\z)}) (2) {$2$};
\node[main node] at ({cos(\x+(2)*\z)},{sin(\x+(2)*\z)}) (3) {$3$};

\node[main node] at ({ \y* cos(\x+(3)*\z) + (1-\y)*cos(\x+(2)*\z)},{ \y* sin(\x+(3)*\z) + (1-\y)*sin(\x+(2)*\z)}) (4) {$4_1$};
\node[main node] at ({cos(\x+(3)*\z)},{sin(\x+(3)*\z)}) (5) {$4_0$};
\node[main node] at ({ \y* cos(\x+(3)*\z) + (1-\y)*cos(\x+(4)*\z)},{ \y* sin(\x+(3)*\z) + (1-\y)*sin(\x+(4)*\z)}) (6) {$4_2$};

 \path[every node/.style={font=\sffamily}]
(1) edge (2)
(2) edge (3)
(3) edge (1)
(4) edge (5)
(5) edge (6)
(4) edge (2)
(4) edge (3)
(6) edge (1)
(6) edge (2);
\end{tikzpicture}

&

\def\x{270 - 360/5}
\def\z{360/5}
\begin{tikzpicture}[scale=\sc, thick,main node/.style={circle, minimum size=4mm, inner sep=0.1mm,draw,font=\tiny\sffamily}]
\node[main node] at ({cos(\x+(0)*\z)},{sin(\x+(0)*\z)}) (1) {$1$};
\node[main node] at ({cos(\x+(1)*\z)},{sin(\x+(1)*\z)}) (2) {$2$};
\node[main node] at ({cos(\x+(2)*\z)},{sin(\x+(2)*\z)}) (3) {$3$};

\node[main node] at ({ \y* cos(\x+(3)*\z) + (1-\y)*cos(\x+(2)*\z)},{ \y* sin(\x+(3)*\z) + (1-\y)*sin(\x+(2)*\z)}) (4) {$4_1$};
\node[main node] at ({cos(\x+(3)*\z)},{sin(\x+(3)*\z)}) (5) {$4_0$};
\node[main node] at ({ \y* cos(\x+(3)*\z) + (1-\y)*cos(\x+(4)*\z)},{ \y* sin(\x+(3)*\z) + (1-\y)*sin(\x+(4)*\z)}) (6) {$4_2$};

\node[main node] at ({ \y* cos(\x+(4)*\z) + (1-\y)*cos(\x+(3)*\z)},{ \y* sin(\x+(4)*\z) + (1-\y)*sin(\x+(3)*\z)}) (7) {$5_1$};
\node[main node] at ({cos(\x+(4)*\z)},{sin(\x+(4)*\z)}) (8) {$5_0$};
\node[main node] at ({ \y* cos(\x+(4)*\z) + (1-\y)*cos(\x+(5)*\z)},{ \y* sin(\x+(4)*\z) + (1-\y)*sin(\x+(5)*\z)}) (9) {$5_2$};

 \path[every node/.style={font=\sffamily}]
(1) edge (3)
(1) edge (2)
(2) edge (3)
(4) edge (5)
(5) edge (6)
(7) edge (8)
(8) edge (9)
(4) edge (2)
(6) edge (1)
(6) edge (3)
(7) edge (1)
(7) edge (2)
(9) edge (3)
(6) edge (7);
\end{tikzpicture}

\\
$G_{2,1}$ & $G_{2,2}$ & $G_{3,1}$ 
\end{tabular}
\end{center}
\caption{Known $2$- and $3$-minimal graphs due to~\cite{LiptakT03} and~\cite{EscalanteMN06}}\label{figKnownEG}
\end{figure}

In producing the first $3$-minimal graph, Escalante et al.~\cite{EscalanteMN06} also showed that there does not exist an $\ell$-minimal graph for any $\ell \geq 4$ if we restrict ourselves to graphs that can be obtained by starting with a complete graph and replacing every edge by a path of length at least $1$. (Let $K_n$ denote the complete graph on $n$ vertices. Notice that $G_{2,1}$ and $G_{3,1}$ can be respectively obtained from $K_4$ and $K_5$ by replacing some edges with paths of length $3$.) 
 
Recently, the authors~\cite{AuT24} discovered several family of graphs $G$ which has $\LS_+$-rank $\Omega(|V(G)|)$. One of them is the family of graphs $H_k$, which is defined as follows. Given $k \in \mN$, let $[k]$ denote the set $\set{1,2,\ldots,k}$. For every $k \geq 3$, let
\[
V(H_k) \ce \set{ i_0, i_1, i_2 : i \in [k]}
\]
and
\[
E(H_k) \ce \set{ \set{i_1, i_0}, \set{i_0, i_2} : i \in [k]} \cup \set{\set{ i_1, j_2} : i,j \in [k], i \neq j}.
\]

\def\y{0.70}
\def\sc{2}

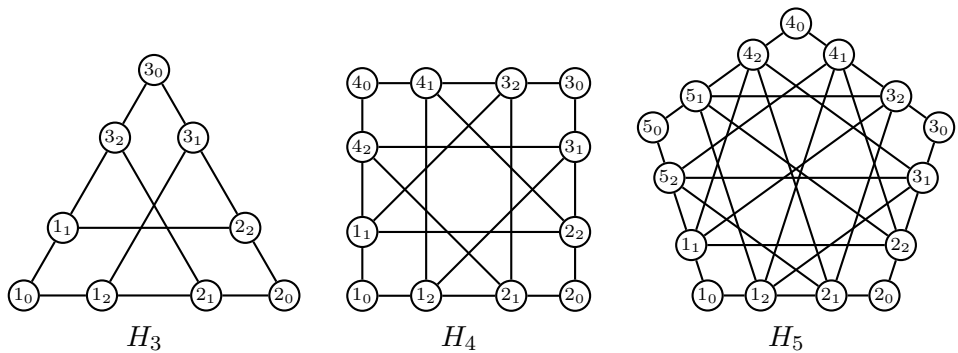
\begin{figure}[ht!]
\begin{center}
\begin{tabular}{ccc}

\def\x{270 - 180/3}
\def\z{360/3}

\begin{tikzpicture}[scale=\sc, thick,main node/.style={circle, minimum size=4mm, inner sep=0.1mm,draw,font=\tiny\sffamily}]
\node[main node] at ({ \y* cos(\x + (0)*\z) + (1-\y)*cos(\x+(-1)*\z)},{ \y* sin(\x+(0)*\z) + (1-\y)*sin(\x+(-1)*\z)}) (1) {$1_1$};
\node[main node] at ({cos(\x+(0)*\z)},{sin(\x+(0)*\z)}) (2) {$1_0$};
\node[main node] at ({ \y* cos(\x+(0)*\z) + (1-\y)*cos(\x+(1)*\z)},{ \y* sin(\x+(0)*\z) + (1-\y)*sin(\x+(1)*\z)}) (3) {$1_2$};

\node[main node] at ({ \y* cos(\x+(1)*\z) + (1-\y)*cos(\x+(0)*\z)},{ \y* sin(\x+(1)*\z) + (1-\y)*sin(\x+(0)*\z)}) (4) {$2_1$};
\node[main node] at ({cos(\x+(1)*\z)},{sin(\x+(1)*\z)}) (5) {$2_0$};
\node[main node] at ({ \y* cos(\x+(1)*\z) + (1-\y)*cos(\x+(2)*\z)},{ \y* sin(\x+(1)*\z) + (1-\y)*sin(\x+(2)*\z)}) (6) {$2_2$};

\node[main node] at ({ \y* cos(\x+(2)*\z) + (1-\y)*cos(\x+(1)*\z)},{ \y* sin(\x+(2)*\z) + (1-\y)*sin(\x+(1)*\z)}) (7) {$3_1$};
\node[main node] at ({cos(\x+(2)*\z)},{sin(\x+(2)*\z)}) (8) {$3_0$};
\node[main node] at ({ \y* cos(\x+(2)*\z) + (1-\y)*cos(\x+(3)*\z)},{ \y* sin(\x+(2)*\z) + (1-\y)*sin(\x+(3)*\z)}) (9) {$3_2$};

 \path[every node/.style={font=\sffamily}]
(2) edge (1)
(2) edge (3)
(5) edge (4)
(5) edge (6)
(8) edge (7)
(8) edge (9)
(1) edge (6)
(1) edge (9)
(4) edge (3)
(4) edge (9)
(7) edge (3)
(7) edge (6);
\end{tikzpicture}

&

\def\x{270 - 180/4}
\def\z{360/4}

\begin{tikzpicture}[scale=\sc, thick,main node/.style={circle, minimum size=4mm, inner sep=0.1mm,draw,font=\tiny\sffamily}]

\node[main node] at ({ \y* cos(\x+(0)*\z) + (1-\y)*cos(\x+(-1)*\z)},{ \y* sin(\x+(0)*\z) + (1-\y)*sin(\x+(-1)*\z)}) (1) {$1_1$};
\node[main node] at ({cos(\x+(0)*\z)},{sin(\x+(0)*\z)}) (2) {$1_0$};
\node[main node] at ({ \y* cos(\x+(0)*\z) + (1-\y)*cos(\x+(1)*\z)},{ \y* sin(\x+(0)*\z) + (1-\y)*sin(\x+(1)*\z)}) (3) {$1_2$};

\node[main node] at ({ \y* cos(\x+(1)*\z) + (1-\y)*cos(\x+(0)*\z)},{ \y* sin(\x+(1)*\z) + (1-\y)*sin(\x+(0)*\z)}) (4) {$2_1$};
\node[main node] at ({cos(\x+(1)*\z)},{sin(\x+(1)*\z)}) (5) {$2_0$};
\node[main node] at ({ \y* cos(\x+(1)*\z) + (1-\y)*cos(\x+(2)*\z)},{ \y* sin(\x+(1)*\z) + (1-\y)*sin(\x+(2)*\z)}) (6) {$2_2$};

\node[main node] at ({ \y* cos(\x+(2)*\z) + (1-\y)*cos(\x+(1)*\z)},{ \y* sin(\x+(2)*\z) + (1-\y)*sin(\x+(1)*\z)}) (7) {$3_1$};
\node[main node] at ({cos(\x+(2)*\z)},{sin(\x+(2)*\z)}) (8) {$3_0$};
\node[main node] at ({ \y* cos(\x+(2)*\z) + (1-\y)*cos(\x+(3)*\z)},{ \y* sin(\x+(2)*\z) + (1-\y)*sin(\x+(3)*\z)}) (9) {$3_2$};

\node[main node] at ({ \y* cos(\x+(3)*\z) + (1-\y)*cos(\x+(2)*\z)},{ \y* sin(\x+(3)*\z) + (1-\y)*sin(\x+(2)*\z)}) (10) {$4_1$};
\node[main node] at ({cos(\x+(3)*\z)},{sin(\x+(3)*\z)}) (11) {$4_0$};
\node[main node] at ({ \y* cos(\x+(3)*\z) + (1-\y)*cos(\x+(4)*\z)},{ \y* sin(\x+(3)*\z) + (1-\y)*sin(\x+(4)*\z)}) (12) {$4_2$};

 \path[every node/.style={font=\sffamily}]
(2) edge (1)
(2) edge (3)
(5) edge (4)
(5) edge (6)
(8) edge (7)
(8) edge (9)
(11) edge (10)
(11) edge (12)
(1) edge (6)
(1) edge (9)
(1) edge (12)
(4) edge (3)
(4) edge (9)
(4) edge (12)
(7) edge (3)
(7) edge (6)
(7) edge (12)
(10) edge (3)
(10) edge (6)
(10) edge (9);
\end{tikzpicture}

&

\def\x{270 - 180/5}
\def\z{360/5}

\begin{tikzpicture}[scale=\sc, thick,main node/.style={circle, minimum size=4mm, inner sep=0.1mm,draw,font=\tiny\sffamily}]
\node[main node] at ({ \y* cos(\x+(0)*\z) + (1-\y)*cos(\x+(-1)*\z)},{ \y* sin(\x+(0)*\z) + (1-\y)*sin(\x+(-1)*\z)}) (1) {$1_1$};
\node[main node] at ({cos(\x+(0)*\z)},{sin(\x+(0)*\z)}) (2) {$1_0$};
\node[main node] at ({ \y* cos(\x+(0)*\z) + (1-\y)*cos(\x+(1)*\z)},{ \y* sin(\x+(0)*\z) + (1-\y)*sin(\x+(1)*\z)}) (3) {$1_2$};

\node[main node] at ({ \y* cos(\x+(1)*\z) + (1-\y)*cos(\x+(0)*\z)},{ \y* sin(\x+(1)*\z) + (1-\y)*sin(\x+(0)*\z)}) (4) {$2_1$};
\node[main node] at ({cos(\x+(1)*\z)},{sin(\x+(1)*\z)}) (5) {$2_0$};
\node[main node] at ({ \y* cos(\x+(1)*\z) + (1-\y)*cos(\x+(2)*\z)},{ \y* sin(\x+(1)*\z) + (1-\y)*sin(\x+(2)*\z)}) (6) {$2_2$};

\node[main node] at ({ \y* cos(\x+(2)*\z) + (1-\y)*cos(\x+(1)*\z)},{ \y* sin(\x+(2)*\z) + (1-\y)*sin(\x+(1)*\z)}) (7) {$3_1$};
\node[main node] at ({cos(\x+(2)*\z)},{sin(\x+(2)*\z)}) (8) {$3_0$};
\node[main node] at ({ \y* cos(\x+(2)*\z) + (1-\y)*cos(\x+(3)*\z)},{ \y* sin(\x+(2)*\z) + (1-\y)*sin(\x+(3)*\z)}) (9) {$3_2$};

\node[main node] at ({ \y* cos(\x+(3)*\z) + (1-\y)*cos(\x+(2)*\z)},{ \y* sin(\x+(3)*\z) + (1-\y)*sin(\x+(2)*\z)}) (10) {$4_1$};
\node[main node] at ({cos(\x+(3)*\z)},{sin(\x+(3)*\z)}) (11) {$4_0$};
\node[main node] at ({ \y* cos(\x+(3)*\z) + (1-\y)*cos(\x+(4)*\z)},{ \y* sin(\x+(3)*\z) + (1-\y)*sin(\x+(4)*\z)}) (12) {$4_2$};

\node[main node] at ({ \y* cos(\x+(4)*\z) + (1-\y)*cos(\x+(3)*\z)},{ \y* sin(\x+(4)*\z) + (1-\y)*sin(\x+(3)*\z)}) (13) {$5_1$};
\node[main node] at ({cos(\x+(4)*\z)},{sin(\x+(4)*\z)}) (14) {$5_0$};
\node[main node] at ({ \y* cos(\x+(4)*\z) + (1-\y)*cos(\x+(5)*\z)},{ \y* sin(\x+(4)*\z) + (1-\y)*sin(\x+(5)*\z)}) (15) {$5_2$};

 \path[every node/.style={font=\sffamily}]
(2) edge (1)
(2) edge (3)
(5) edge (4)
(5) edge (6)
(8) edge (7)
(8) edge (9)
(11) edge (10)
(11) edge (12)
(14) edge (13)
(14) edge (15)
(1) edge (6)
(1) edge (9)
(1) edge (12)
(1) edge (15)
(4) edge (3)
(4) edge (9)
(4) edge (12)
(4) edge (15)
(7) edge (3)
(7) edge (6)
(7) edge (12)
(7) edge (15)
(10) edge (3)
(10) edge (6)
(10) edge (9)
(10) edge (15)
(13) edge (3)
(13) edge (6)
(13) edge (9)
(13) edge (12);
\end{tikzpicture}
\\
$H_3$ & $H_4$ & $H_5$ 
\end{tabular}
\caption{Several graphs in the family $H_k$}\label{figH_k}
\end{center}
\end{figure}

Figure~\ref{figH_k} illustrates the graphs $H_k$ for $k=3,4,5$. (Note that our vertex labels for $H_k$ are different from those in~\cite{AuT24}.) The authors~\cite[Theorem 2]{AuT24} %{thmHk}
proved the following.

\begin{theorem}\label{thmHk}
For every $k \geq 3$, The $\LS_+$-rank of $H_k$ is at least $\frac{3k}{16}$.
\end{theorem}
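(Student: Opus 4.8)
The plan is to exhibit, for a suitable level $r$ just below $\frac{3k}{16}$, a fractional point $\bar x^{(k)} \in \FRAC(H_k)$ that survives $r$ applications of $\LS_+$ while lying outside $\STAB(H_k)$, so that the $\LS_+$-rank of $H_k$ is at least $r+1$. I would work throughout from the recursive certificate description of the operator: a point $x$ lies in $\LS_+(K)$ exactly when there is a symmetric positive semidefinite matrix $Y$ with $Y e_0 = \diag(Y) = (1,x)$ such that every normalized column $\frac{1}{x_i} Y e_i$ and every complementary column $\frac{1}{1-x_i} Y(e_0 - e_i)$ lies in $K$. Iterating this gives a characterization of $\LS_+^r$, and the whole argument becomes an induction that decreases the level $r$ and the block count $k$ simultaneously.

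The structural engine is the behaviour of $H_k$ under deletion of closed neighborhoods. The key identity I would record is $H_k \setminus N[i_0] \cong H_{k-1}$: conditioning a stable set on a center vertex $i_0$ forces out exactly $\{i_0, i_1, i_2\}$ and leaves the other $k-1$ blocks, together with all cross-edges $\set{j_1, l_2}$ among them, untouched. Thus the ``center slices'' of the certificate reduce precisely to the smaller instance, and their membership in $\LS_+^{r-1}(H_k)$ follows from $\bar x^{(k-1)} \in \LS_+^{r-1}(H_{k-1})$ via the standard lift-and-project lemmas relating $\LS_+^{t}$ of a graph to the induced subgraph obtained by deleting a closed neighborhood (padding with zeros on the deleted coordinates). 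The remaining slices behave differently: conditioning on an endpoint $i_1$ deletes $i_0, i_1$ and all $j_2$ with $j \neq i$, leaving an asymmetric ``spider-like'' residual rather than a clean copy of $H_{k-1}$, and the complementary ($x_i = 0$) slices delete only a single vertex; these are the slices that govern how efficiently the recursion can be pushed.

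Concretely, I would choose $\bar x^{(k)}$ to respect $\Aut(H_k)$, which contains the symmetric group $S_k$ permuting blocks and the involution sending $i_1 \leftrightarrow i_2$ in all blocks simultaneously (one checks this preserves each edge type $\set{i_1, j_2} \mapsto \set{i_2, j_1}$). Hence $\bar x^{(k)}$ has the form $\bar x^{(k)}_{i_0} = a$, $\bar x^{(k)}_{i_1} = \bar x^{(k)}_{i_2} = b$ for scalars $a,b$ tuned to keep it in $\FRAC(H_k)$ (so $2b \le 1$ from $\set{i_1,j_2}$ and $a+b \le 1$ from $\set{i_0,i_1}$) while staying outside $\STAB(H_k)$. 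The certificate $Y^{(k)}$ would be a symmetric, $\Aut(H_k)$-invariant matrix with diagonal $(1, \bar x^{(k)})$; the invariance collapses the positive-semidefiniteness check and the per-column verifications to a bounded number of orbit representatives and lets me feed each family of slices into the inductive hypothesis on the appropriate residual graph.

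The hard part will be the endpoint and complementary slices. Unlike the center slices, they do not reduce to a clean $H_{k-1}$, so I cannot simply invoke the inductive hypothesis on some $H_{k'}$; instead I must show that the spider-like and single-vertex-deletion residuals, under the chosen $\bar x^{(k)}$ and $Y^{(k)}$, still support a copy of $H_{k'}$ with $k'$ not much smaller than $k$, so that each downward step in the hierarchy consumes only a bounded number of blocks. Balancing this trade-off across all slice types is exactly what forces $r$ to be a linear function of $k$, and the constant $\frac{3}{16}$ is the outcome of this deliberately lossy but robust accounting (comfortably below the ceiling $\frac{|V(H_k)|}{3} = k$ from Theorem~\ref{thmNover3}). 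Verifying positive semidefiniteness of $Y^{(k)}$ uniformly in $k$, and confirming $\bar x^{(k)} \notin \STAB(H_k)$ through a separating clique or odd-structure inequality, would then complete the proof.
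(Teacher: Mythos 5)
You should first note that this paper does not actually prove Theorem~\ref{thmHk}: it is imported verbatim from~\cite[Theorem 2]{AuT23}. What the present paper reveals about the real proof (see the proof of Proposition~\ref{propHk'}) is that it exhibits real numbers $a,b$ such that the block-symmetric point $w_k(a,b)$ lies in $\LS_+^p(H_k)$ with $p = \lceil 3k/16 \rceil - 1$ and violates the \emph{weighted} inequality $w_k(k-1,k-2)^{\top}x \leq k(k-1)$, whose validity for $\STAB(H_k)$ is itself a lemma of~\cite{AuT23}. Your skeleton is directionally consistent with that: the symmetric ansatz, the $\Aut(H_k)$-invariant certificate (your check that $i_1 \leftrightarrow i_2$ in all blocks is an automorphism is correct), and the key identity $H_k \ominus i_0 \cong H_{k-1}$ are all right and are indeed the structural engine of the known argument.

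However, there are two genuine gaps. First, your closing step --- certifying $\bar{x}^{(k)} \notin \STAB(H_k)$ ``through a separating clique or odd-structure inequality'' --- provably cannot work. As recalled in Section~\ref{sec1}, $\LS_+(G)$ already satisfies all clique, odd hole, odd antihole, and odd wheel constraints; since your point must survive $p-1 \geq 1$ applications of $\LS_+$, no inequality from those families can separate it. (Worse, $H_k$ is triangle-free --- any three endpoint vertices include two with the same subscript, which are non-adjacent --- so its clique inequalities are just the edge inequalities, which your point satisfies by construction.) The separator must be a genuinely new valid inequality, and finding and validating $w_k(k-1,k-2)^{\top}x \leq k(k-1)$ is a nontrivial part of the proof you have omitted. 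Second, you explicitly defer the quantitative core: you give no schedule for $(a,b)$ as the level decreases, no positive semidefiniteness verification, and --- most importantly --- no accounting for the complementary slices $Y(e_0 - e_i)$ and the endpoint slices, which do not reduce the block count at all, so your proposed induction ``decreasing $r$ and $k$ simultaneously'' is not well-founded as stated. You acknowledge that the constant $\tfrac{3}{16}$ is ``the outcome of this deliberately lossy but robust accounting,'' but that accounting --- how much the parameters $(a,b)$ must degrade per round so that every slice type remains certifiable, and why this sustains exactly $\Theta(k)$ rounds with constant $\tfrac{3}{16}$ --- is precisely the content of the theorem. What you have is a plausible and well-oriented research plan, not a proof.
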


Theorem~\ref{thmHk} (and other results in~\cite{AuT24}) ended a 17-year lull in new hardness results for $\LS_+$-relaxations of the stable set problem, and provides renewed hope that $\ell$-minimal graphs do exist for $\ell \geq 4$. Indeed, one of the main contributions of this work is the discovery of what we believe to be the first known instance of a $4$-minimal graph ($G_{4,1}$ in Figure~\ref{figG41}).

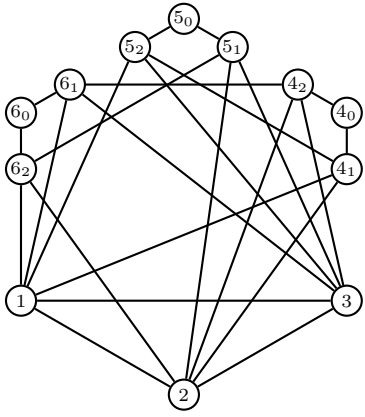
\begin{figure}[ht!]
\begin{center}
\def\x{270 - 360/6}
\def\z{360/6}
\def\y{0.7}
\def\sc{2.5}

\begin{tikzpicture}[scale=\sc, thick,main node/.style={circle, minimum size=4mm, inner sep=0.1mm,draw,font=\tiny\sffamily}]

\node[main node] at ({cos(\x+(0)*\z)},{sin(\x+(0)*\z)}) (1) {$1$};
\node[main node] at ({cos(\x+(1)*\z)},{sin(\x+(1)*\z)}) (2) {$2$};
\node[main node] at ({cos(\x+(2)*\z)},{sin(\x+(2)*\z)}) (3) {$3$};

\node[main node] at ({ \y* cos(\x+(3)*\z) + (1-\y)*cos(\x+(2)*\z)},{ \y* sin(\x+(3)*\z) + (1-\y)*sin(\x+(2)*\z)}) (4) {$4_1$};
\node[main node] at ({cos(\x+(3)*\z)},{sin(\x+(3)*\z)}) (5) {$4_0$};
\node[main node] at ({ \y* cos(\x+(3)*\z) + (1-\y)*cos(\x+(4)*\z)},{ \y* sin(\x+(3)*\z) + (1-\y)*sin(\x+(4)*\z)}) (6) {$4_2$};

\node[main node] at ({ \y* cos(\x+(4)*\z) + (1-\y)*cos(\x+(3)*\z)},{ \y* sin(\x+(4)*\z) + (1-\y)*sin(\x+(3)*\z)}) (7) {$5_1$};
\node[main node] at ({cos(\x+(4)*\z)},{sin(\x+(4)*\z)}) (8) {$5_0$};
\node[main node] at ({ \y* cos(\x+(4)*\z) + (1-\y)*cos(\x+(5)*\z)},{ \y* sin(\x+(4)*\z) + (1-\y)*sin(\x+(5)*\z)}) (9) {$5_2$};

\node[main node] at ({ \y* cos(\x+(5)*\z) + (1-\y)*cos(\x+(4)*\z)},{ \y* sin(\x+(5)*\z) + (1-\y)*sin(\x+(4)*\z)}) (10) {$6_1$};
\node[main node] at ({cos(\x+(5)*\z)},{sin(\x+(5)*\z)}) (11) {$6_0$};
\node[main node] at ({ \y* cos(\x+(5)*\z) + (1-\y)*cos(\x+(6)*\z)},{ \y* sin(\x+(5)*\z) + (1-\y)*sin(\x+(6)*\z)}) (12) {$6_2$};

 \path[every node/.style={font=\sffamily}]
(1) edge (2)
(1) edge (3)
(2) edge (3)
(4) edge (5)
(5) edge (6)
(7) edge (8)
(8) edge (9)
(10) edge (11)
(11) edge (12)
(4) edge (9)
(7) edge (12)
(10) edge (6)
(1) edge (4)
(2) edge (4)
(2) edge (6)
(3) edge (6)
(2) edge (7)
(3) edge (7)
(3) edge (9)
(1) edge (9)
(3) edge (10)
(1) edge (10)
(1) edge (12)
(2) edge (12);
\end{tikzpicture}
\caption{$G_{4,1}$, a $12$-vertex graph with $\LS_+$-rank $4$}\label{figG41}
\end{center}
\end{figure}

This paper is organized as follows. In Section~\ref{sec2}, we define the $\LS_+$ operator and introduce some of the tools and notation we will need for our subsequent analysis. Then, in Section~\ref{sec3}, we discuss what we call \emph{star-homomorphism} between graphs, and provide a template for constructing graph operations that are $\LS_+$-rank non-decreasing. Using this template, we define our \emph{vertex-stretching} operation in Section~\ref{sec4}, which generalizes similar graph operations studied previously~\cite{LiptakT03, AguileraEF14, BianchiENT17}. We then show in Section~\ref{sec5} that every $\ell$-minimal graph for $\ell \geq 2$ must be obtained from applying our vertex-stretching operation to a smaller graph, and in particular study the $\LS_+$-ranks of graphs obtained from stretching the vertices of a complete graph. In Section~\ref{sec6}, we prove that $G_{4,1}$ is indeed $4$-minimal and discuss some of the immediate consequences of the result, which includes the discovery of several other new $3$- and $4$-minimal graphs. We do remark that while the discovery of $G_{4,1}$ is largely guided by the structural results for $\ell$-minimal graphs presented in Sections $3$ to $5$, these results are not needed for the proof of $G_{4,1}$ having $\LS_+$-rank $4$. After that, we revisit the aforementioned families of graphs $H_k$ in Section~\ref{sec7}, and apply our results on vertex stretching to show that there exists a family of graphs $G$ with maximum degree $3$ and $\LS_+$-rank $\Omega(\sqrt{|V(G)|})$. Finally, we conclude our paper in Section~\ref{sec8} by mentioning some natural future research directions.

\section{Preliminaries}\label{sec2}

In this section, we define the lift-and-project operator $\LS_+$ due to Lov{\'a}sz and Schrijver~\cite{LovaszS91} and the convex relaxations of $\STAB(G)$ it produces, as well as go over the basic tools we will use in subsequent sections to analyze the $\LS_+$-rank of graphs.

\subsection{The $\LS_+$-operator}

Given a set $P \subseteq [0,1]^n$, we define the \emph{homogenized cone} of $P$ to be
\[
\cone(P) \ce \set{ \begin{bmatrix} \lambda \\ \lambda x \end{bmatrix} : \ld \geq 0, x \in P}.
\]
Notice that $\cone(P) \subseteq \mR^{n+1}$, and we will refer to the new coordinate with index $0$. Next, given a vector $x$ and an index $i$, we may refer to the $i$-entry in $x$ by $x_i$ or $[x]_i$. All vectors are column vectors by default, so $x^{\top}$, the transpose of a vector $x$, is a row vector. Next, let $\mathbb{S}_+^n$ denote the set of $n$-by-$n$ real symmetric positive semidefinite matrices, and $\diag(Y)$ be the vector formed by the diagonal entries of a square matrix $Y$. We also let $e_i$ be the $i^{\tn{th}}$ unit vector.

Given $P \subseteq [0,1]^n$, the operator $\LS_+$ first \emph{lifts} $P$ to the following set of matrices:
\[
\widehat{\LS}_+(P) \ce \set{ Y \in \mS_+^{n+1} : Ye_0 = \diag(Y), Ye_i, Y(e_0-e_i) \in \cone(P)~\forall i \in [n] }.
\]
It then \emph{projects} the set back down to the following set in $\mR^n$:
\[
\LS_+(P) \ce \set{ x \in \mR^n : \exists Y \in \widehat{\LS}_+(P), Ye_0 = \begin{bmatrix} 1 \\ x \end{bmatrix}}.
\]
Given $x \in \LS_+(P)$, we say that $Y \in \widehat{\LS}_+(P)$ is a \emph{certificate matrix} for $x$ if $Ye_0 = \begin{bmatrix} 1 \\ x \end{bmatrix}$. Also, given a set $P \subseteq [0,1]^n$, we define
\[
P_I \ce \conv\left( P \cap \set{0,1}^n \right)
\]
to be the \emph{integer hull} of $P$. The following foundational property of $\LS_+$ was shown in~\cite{LovaszS91}.

\begin{lemma}\label{lemLS+}
For every set $P \subseteq [0,1]^n$, $P_I \subseteq \LS_+(P) \subseteq P$.
\end{lemma}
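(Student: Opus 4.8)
The plan is to establish the two inclusions separately, using throughout that $P$ (being a convex relaxation of $\STAB(G)$) is convex, so that $\cone(P)$ is a convex cone. Neither direction is deep; the work is in unwinding the definitions of $\widehat{\LS}_+(P)$ and $\cone(P)$ carefully.

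For the containment $\LS_+(P) \subseteq P$, I would start with an arbitrary $x \in \LS_+(P)$ together with a certificate matrix $Y \in \widehat{\LS}_+(P)$, so that $Y e_0 = \begin{bmatrix} 1 \\ x \end{bmatrix}$. The key algebraic identity is $e_0 = e_i + (e_0 - e_i)$ for any fixed $i \in [n]$, which yields the splitting
\[
\begin{bmatrix} 1 \\ x \end{bmatrix} = Y e_0 = Y e_i + Y(e_0 - e_i).
\]
By definition of $\widehat{\LS}_+(P)$, both summands lie in $\cone(P)$, so I can write $Y e_i = \begin{bmatrix} \lambda \\ \lambda u \end{bmatrix}$ and $Y(e_0 - e_i) = \begin{bmatrix} \mu \\ \mu v \end{bmatrix}$ with $\lambda, \mu \geq 0$ and $u, v \in P$ (the degenerate cases $\lambda = 0$ or $\mu = 0$ being harmless). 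Comparing $0$-th coordinates gives $\lambda + \mu = 1$, and comparing the remaining coordinates gives $x = \lambda u + \mu v$; thus $x$ is a convex combination of two points of $P$, hence $x \in P$.

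For the reverse containment $P_I \subseteq \LS_+(P)$, I would first observe that $\LS_+(P)$ is convex: the set $\widehat{\LS}_+(P)$ is cut out by the semidefiniteness condition, the linear conditions $Y e_0 = \diag(Y)$, and the convex-cone membership conditions, hence is convex, and $\LS_+(P)$ is its image under a linear projection. It therefore suffices to show that every integral point $z \in P \cap \set{0,1}^n$ lies in $\LS_+(P)$, since then $P_I = \conv(P \cap \set{0,1}^n) \subseteq \LS_+(P)$. For such a $z$, the natural candidate certificate is the rank-one matrix $Y \ce \begin{bmatrix} 1 \\ z \end{bmatrix} \begin{bmatrix} 1 \\ z \end{bmatrix}^{\top}$, which is positive semidefinite with $Y e_0 = \begin{bmatrix} 1 \\ z \end{bmatrix}$. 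Verifying $Y \in \widehat{\LS}_+(P)$ then reduces to three checks: that $\diag(Y) = Y e_0$, which holds because $z_j^2 = z_j$ for $0$-$1$ entries; that $Y e_i = z_i \begin{bmatrix} 1 \\ z \end{bmatrix} \in \cone(P)$; and that $Y(e_0 - e_i) = (1 - z_i)\begin{bmatrix} 1 \\ z \end{bmatrix} \in \cone(P)$. In the last two, the multiplier $z_i$ (resp.\ $1 - z_i$) is either $0$ or $1$, so the vector is either the origin or $\begin{bmatrix} 1 \\ z \end{bmatrix}$ with $z \in P$, and hence lies in $\cone(P)$ in both cases.

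I expect the argument to be almost entirely routine, the only points requiring care being the degenerate cone coefficients in the first inclusion and the (standard) convexity of $\LS_+(P)$ that lets the integral points suffice in the second. The one genuinely essential hypothesis is the convexity of $P$, which is used decisively in concluding $x = \lambda u + \mu v \in P$; without it the claimed inclusion $\LS_+(P) \subseteq P$ would fail.
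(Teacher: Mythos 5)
Your proof is correct and follows essentially the same route as the paper's: the rank-one certificate $\begin{bmatrix} 1 \\ z \end{bmatrix}\begin{bmatrix} 1 \\ z \end{bmatrix}^{\top}$ for integral points, and the splitting $Ye_0 = Ye_i + Y(e_0 - e_i)$ with both summands in $\cone(P)$ for the containment in $P$. If anything, you are slightly more careful than the paper, which leaves implicit both the convexity of $\LS_+(P)$ (needed to pass from $P \cap \set{0,1}^n$ to its convex hull $P_I$) and the convexity of $P$ (needed to conclude that the sum of the two cone elements, rescaled, is a convex combination lying in $P$).
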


\begin{proof}
For the first containment, let $x \in P \cap \set{0,1}^n$. Observe that $Y \ce \begin{bmatrix} 1 \\ x \end{bmatrix}\begin{bmatrix} 1 \\ x \end{bmatrix}^{\top} \in \widehat{\LS}_+(P)$, and so $x \in \LS_+(P)$. For the second containment, let $x \in \LS_+(P)$, and $Y \in \widehat{\LS}_+(P)$ be a certificate matrix for $x$. Since $Ye_0 = Ye_i + Y(e_0 - e_i)$ for any index $i \in [n]$ and that $\widehat{\LS}_+$ imposes that $Ye_i, Y(e_0 - e_i) \in \cone(P)$, it follows that $Ye_0 = \begin{bmatrix} 1 \\ x \end{bmatrix} \in \cone(P)$, and thus $x \in P$. 
\end{proof}

Therefore, $\LS_+(P)$ contains the same set of integral solutions as $P$. Also, if $P$ is tractable, then so is $\LS_+(P)$, and it is known that $P \supset \LS_+(P)$ unless $P = P_I$. Thus, $\LS_+(P)$ offers a tractable relaxation of $P_I$ that is tighter than the initial relaxation $P$.

Furthermore, we can apply $\LS_+$ multiple times to obtain yet tighter relaxations. Given $k \in \mN$, let $\LS_+^k(P)$ be the set obtained from applying $k$ successive $\LS_+$ operations to $P$. (We also let $\LS_+^0(P) \ce P$.) Then it is well-known that
\[
P_I = \LS_+^n(P) \subseteq \LS_+^{n-1}(P) \subseteq \cdots \subseteq \LS_+(P) \subseteq P.
\]
Thus, $\LS_+$ generates a hierarchy of progressively tighter convex relaxations which converge to $P_I$ in no more than $n$ iterations. The reader may refer to Lov{\'a}sz and Schrijver~\cite{LovaszS91} for a proof of this fact and some other properties of the $\LS_+$ operator.

\subsection{Analyzing the $\LS_+$-rank of a graph}

Recall that $\FRAC(G)$, the fractional stable set polytope of a graph $G$, offers a simple and tractable convex relaxation of $\STAB(G)$. Thus, we could apply $\LS_+$ to obtain stronger relaxations of $\STAB(G)$ than $\FRAC(G)$. Given an integer $k \geq 0$, define
\[
\LS_+^k(G) \ce \LS_+^k(\FRAC(G)),
\]
and let $r_+(G)$ denote the $\LS_+$-rank of $G$ (which, again, is the smallest integer $k$ where $\LS_+^k(G) = \STAB(G)$).

To show that a graph $G$ has $\LS_+$-rank at least $p$, the standard approach is to find a point $\bar{x}$ where $\bar{x} \not\in \STAB(G)$ and $\bar{x} \in \LS_+^{p-1}(G)$ --- this is the approach we will take when verifying that $r_+(G_{4,1}) \geq 4$. We do remark that verifying $\bar{x} \in \LS_+^{p-1}(G)$ tends to get progressively more challenging as $p$ increases, unless the symmetries of $G$ allow for an inductive argument (which is the case for the line graphs of odd cliques~\cite{StephenT99}, and to a lesser extent for $H_k$ and related graphs~\cite{AuT24}). Given $p \in \mN$, we also define
\[
\a_{\LS_+^p}(G) \ce \max \set{\bar{e}^{\top}x : x \in \LS_+^p(G)},
\]
where $\bar{e}$ denotes the vector of all-ones. Notice that if $\a_{\LS_+^p}(G) > \a(G)$, then $r_+(G) \geq p+1$. 

Next, the following is a well-known property of $\LS_+$.

\begin{lemma}\label{lemfacet}
Let $P \subseteq [0,1]^n$ be a polyhedron, and $F$ be a face of $[0,1]^n$. Then
\[
\LS_+(P \cap F) = \LS_+(P) \cap F.
\]
\end{lemma}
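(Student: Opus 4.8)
The plan is to start by making the face $F$ explicit. Every face of the cube $[0,1]^n$ has the form $F = \set{x \in [0,1]^n : x_a = 0 \ \forall a \in A, \ x_b = 1 \ \forall b \in B}$ for some disjoint index sets $A, B \subseteq [n]$. This description is the only place where the hypothesis that $F$ is a face of the cube (rather than an arbitrary convex set) enters, and everything downstream reduces to tracking these coordinate constraints through the lift.

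The inclusion $\LS_+(P \cap F) \subseteq \LS_+(P) \cap F$ I would dispatch formally. Monotonicity of $\LS_+$ is immediate from the definition: $P \cap F \subseteq P$ gives $\cone(P \cap F) \subseteq \cone(P)$, hence $\widehat{\LS}_+(P \cap F) \subseteq \widehat{\LS}_+(P)$, so $\LS_+(P \cap F) \subseteq \LS_+(P)$. The second containment of Lemma~\ref{lemLS+} gives $\LS_+(P \cap F) \subseteq P \cap F \subseteq F$, and combining the two yields the claim.

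The content is in the reverse inclusion $\LS_+(P) \cap F \subseteq \LS_+(P \cap F)$. I would take $x \in \LS_+(P) \cap F$ together with a certificate $Y \in \widehat{\LS}_+(P)$ satisfying $Ye_0 = \begin{bmatrix} 1 \\ x \end{bmatrix}$, and show that \emph{the same} matrix $Y$ certifies $x \in \LS_+(P \cap F)$; since the conditions $Y \in \mS_+^{n+1}$ and $Ye_0 = \diag(Y)$ are unchanged, the only thing to verify is $Ye_i, Y(e_0-e_i) \in \cone(P \cap F)$ for every $i \in [n]$. The engine is to extract from $x \in F$ the structure it forces on $Y$. Because $\diag(Y) = Ye_0$ gives $Y_{ii} = x_i$, for each $a \in A$ we have $Y_{aa} = 0$, so positive semidefiniteness forces the entire $a$-th row and column of $Y$ to vanish. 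For each $b \in B$, the vector $Y(e_0-e_b) \in \cone(P)$ has zeroth coordinate $Y_{00} - Y_{0b} = 1 - 1 = 0$; since the zeroth coordinate of any element of $\cone(P)$ equals its scaling parameter $\lambda$, this element must be the origin, whence $Ye_b = Ye_0 = \begin{bmatrix} 1 \\ x \end{bmatrix}$, i.e.\ the $b$-th row and column of $Y$ coincide with the $0$-th. Granting these two facts, a direct check shows that for every $i$ the vectors $Ye_i$ and $Y(e_0-e_i)$ have their $a$-entries equal to $0$ and their $b$-entries equal to their $0$-entry --- precisely the coordinate constraints defining $F$.

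The main obstacle I anticipate is the final upgrade, from ``$v := Ye_i \in \cone(P)$ satisfies the coordinate constraints of $F$'' to ``$v \in \cone(P \cap F)$,'' which requires handling the scaling with care. If $v = 0$, it lies in $\cone(P \cap F)$ because $x \in P \cap F$ makes $P \cap F$ nonempty. If $v \neq 0$, its zeroth coordinate $\mu$ is positive, the truncation $v/\mu$ lies in $P$, and the coordinate constraints on $v$ translate exactly into $v/\mu \in F$, so $v = \begin{bmatrix} \mu \\ \mu (v/\mu) \end{bmatrix} \in \cone(P \cap F)$. Carrying out this verification for both $Ye_i$ and $Y(e_0-e_i)$ across all $i$ shows $Y \in \widehat{\LS}_+(P \cap F)$, and hence $x \in \LS_+(P \cap F)$, completing the proof.
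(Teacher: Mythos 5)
Your proof is correct and complete. Note that the paper itself offers no proof to compare against: it states Lemma~\ref{lemfacet} as a well-known property of $\LS_+$ (the result goes back to Lov\'asz and Schrijver~\cite{LovaszS91}), so your argument stands on its own. What you wrote is the standard certificate-reuse proof, and you handle the two points where such proofs usually leak. First, the structural facts are right: $x_a=0$ gives $Y_{aa}=0$, and positive semidefiniteness (via the $2\times 2$ principal minors) kills row and column $a$; $x_b=1$ gives $[Y(e_0-e_b)]_0 = 1-1 = 0$, and since every element of $\cone(P)$ has the form $\begin{bmatrix} \lambda \\ \lambda z\end{bmatrix}$ with $\lambda \geq 0$, a vanishing zeroth coordinate forces the zero vector, whence $Ye_b = Ye_0$. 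These two facts cover the boundary indices $i \in A \cup B$ automatically ($Ye_a = 0$, $Y(e_0-e_a) = Ye_0$, etc.), so your uniform check over all $i \in [n]$ is legitimate. Second, you correctly treat the rescaling step, including the degenerate case $v = 0$: since $\cone(\emptyset) = \emptyset$ under the paper's definition, one genuinely needs $P \cap F \neq \emptyset$, which you obtain from $x \in \LS_+(P) \cap F \subseteq P \cap F$ (using Lemma~\ref{lemLS+}). Two minor remarks: your argument never uses that $P$ is a polyhedron, so you have in fact proved the statement for arbitrary $P \subseteq [0,1]^n$, which is the generality in which it is used implicitly; and your coordinate description covers only the nonempty faces of the cube, but the case $F = \emptyset$ is trivial since then both sides are empty, so nothing is lost.
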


It follows from Lemma~\ref{lemfacet} that if $\bar{x} \in \LS_+^p(G)$ and $G'$ is an induced subgraph of $G$, then the vector obtained from $\bar{x}$ by removing entries not in $V(G')$ is in $\LS_+^p(G')$. This in turn implies that $r_+(G') \leq r_+(G)$ --- see, for instance,~\cite[Lemma 5]{AuT24} for a proof.

We next mention several other ways of bounding $r_+(G)$ using the $\LS_+$-rank of graphs that are related to $G$. Given a graph $G$ and $S \subseteq V(G)$, we let $G-S$ denote the subgraph of $G$ induced by the vertices $V(G) \setminus S$, and call $G-S$ the graph obtained by the \emph{deletion} of $S$. (When $S = \set{i}$ for some vertex $i$, we simply write $G-i$ instead of $G - \set{i}$.) Next, given $i \in V(G)$, let 
\[
\Gamma_G(i) \ce \set{ j \in V(G) : \set{i,j} \in E(G)}
\]
be the \emph{open neighborhood} of $i$ in $G$, and $\Gamma_G[i] \ce \Gamma_G(i) \cup \set{i}$ be the \emph{closed neighborhood} of $i$ in $G$. Then the graph obtained from the \emph{destruction} of $i$ in $G$ is defined as
\[
G \ominus i \ce G - \Gamma[i].
\]
Then we have the following.
 
\begin{theorem}\label{thmDeleteDestroy}
For every graph $G$,
\begin{itemize}
\item[(i)]
\cite[Corollary 2.16]{LovaszS91} $r_+(G) \leq \max \set{ r_+(G \ominus i) : i \in V(G) } + 1$;
\item[(ii)]
\cite[Theorem 36]{LiptakT03} $r_+(G) \leq \min \set{ r_+(G - i) : i \in V(G) } + 1$.
\end{itemize}
\end{theorem}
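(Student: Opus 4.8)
The plan is to prove both bounds by establishing the reverse inclusion $\LS_+^{r+1}(G) \subseteq \STAB(G)$, where $r$ is the relevant maximum (for (i)) or minimum (for (ii)); the inclusion $\STAB(G) \subseteq \LS_+^{r+1}(G)$ is automatic from Lemma~\ref{lemLS+} applied iteratively, since $\STAB(G) = \FRAC(G)_I$. The common engine is an analysis of the columns of a certificate matrix. Suppose $x \in \LS_+^{k+1}(G) = \LS_+(\LS_+^k(G))$ with certificate $Y \in \widehat{\LS}_+(\LS_+^k(G))$, so $Ye_0 = \diag(Y) = \begin{bmatrix} 1 \\ x \end{bmatrix}$ and $Ye_i, Y(e_0 - e_i) \in \cone(\LS_+^k(G))$ for each $i$. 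First I would record that $[Ye_i]_0 = Y_{0i} = Y_{ii} = x_i$, so for each $i$ with $x_i > 0$ the normalized vector $y^{(i)} \ce Ye_i / x_i$ lies in $\LS_+^k(G)$ with $y^{(i)}_i = 1$; the edge inequalities of $\FRAC(G)$ then force $y^{(i)}_j = 0$ for all $j \in \Gamma_G(i)$, so $y^{(i)}$ lies in the face $F_i$ of $[0,1]^{V(G)}$ cut out by $x_i = 1$ and $x_j = 0$ $(j \in \Gamma_G(i))$. Applying Lemma~\ref{lemfacet} with $F = F_i$ (iterated over the $k$ rounds) and noting that $\FRAC(G) \cap F_i$ projects onto $\FRAC(G \ominus i)$ (its free coordinates being exactly $V(G \ominus i) = V(G) \setminus \Gamma_G[i]$), the restriction $y^{(i)}|_{V(G \ominus i)}$ lies in $\LS_+^k(G \ominus i)$. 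Symmetrically, when $x_i < 1$ the point $z^{(i)} \ce Y(e_0 - e_i)/(1 - x_i) \in \LS_+^k(G)$ lies in the face $x_i = 0$, so $z^{(i)}|_{V(G - i)} \in \LS_+^k(G - i)$, and from $Ye_0 = Ye_i + Y(e_0 - e_i)$ we obtain the convex decomposition $x = x_i y^{(i)} + (1 - x_i) z^{(i)}$.

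For part (ii) this machinery suffices directly. Let $i^*$ attain the minimum, so $r \ce r_+(G - i^*) = \min_i r_+(G - i)$, take any $x \in \LS_+^{r+1}(G)$ with certificate $Y$, and split along $i^*$. Since $G \ominus i^*$ is an induced subgraph of $G - i^*$, we have $r_+(G \ominus i^*) \le r_+(G - i^*) = r$ by the induced-subgraph monotonicity noted after Lemma~\ref{lemfacet}; hence $y^{(i^*)}|_{V(G \ominus i^*)} \in \LS_+^r(G \ominus i^*) = \STAB(G \ominus i^*)$, and reattaching the coordinates $i^* \mapsto 1$, $\Gamma_G(i^*) \mapsto 0$ shows $y^{(i^*)} \in \STAB(G)$ (each stable set $S$ of $G \ominus i^*$ extends to the stable set $S \cup \set{i^*}$ of $G$). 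On the other hand $z^{(i^*)}|_{V(G - i^*)} \in \LS_+^r(G - i^*) = \STAB(G - i^*)$, and since $z^{(i^*)}_{i^*} = 0$ and every stable set of $G - i^*$ is stable in $G$, also $z^{(i^*)} \in \STAB(G)$. Thus $x$ is a convex combination of two points of $\STAB(G)$, so $x \in \STAB(G)$; the degenerate cases $x_{i^*} \in \set{0,1}$ (where one term drops out, using $Ye_{i^*} = 0$ when $x_{i^*} = 0$, which holds since $Y \succeq 0$ has $Y_{i^*i^*} = x_{i^*} = 0$) are immediate.

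For part (i), with $r \ce \max_i r_+(G \ominus i)$, the same computation gives $y^{(i)} \in \STAB(G)$ for every $i$ with $x_i > 0$, since $r \ge r_+(G \ominus i)$ forces $y^{(i)}|_{V(G \ominus i)} \in \LS_+^r(G \ominus i) = \STAB(G \ominus i)$. What remains --- and what I expect to be the main obstacle --- is to assemble $x$ itself as a convex combination of these conditioned points $\set{y^{(i)} : x_i > 0}$ together with the origin $\mathbf{0} \in \STAB(G)$. A naive induction on $|V(G)|$ that peels one vertex $i$ and recurses on $G - i$ for the leftover $z^{(i)}$ runs into an off-by-one in the rank bookkeeping: $z^{(i)}$ sits only at level $r$ for $G - i$, while the inductive hypothesis for $G - i$ certifies $\STAB(G - i)$ only at level $\max_j r_+((G - i) \ominus j) + 1$, which may equal $r + 1$. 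Closing this gap seems to require either choosing the peeled vertex so as to strictly decrease the destruction-rank or, more robustly, passing to a dual formulation.

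Accordingly, the route I would actually pursue for (i) is dual. It suffices to show that every valid inequality $a^\top x \le \beta$ for $\STAB(G)$ (with $\beta = \max\set{a^\top \chi_S : S \text{ stable}} \ge 0$) is valid on $\LS_+^{r+1}(G)$. Conditioning on $x_i = 1$ turns it into $\sum_{j \in V(G \ominus i)} a_j x_j \le \beta - a_i$, which is valid for $\STAB(G \ominus i)$ and hence, because $r_+(G \ominus i) \le r$, already valid on $\LS_+^r(G) \cap F_i$ by Lemma~\ref{lemfacet}. Evaluating the candidate inequality against the certificate columns then gives $\begin{bmatrix} -\beta \\ a \end{bmatrix}^\top Ye_i = x_i\left(a^\top y^{(i)} - \beta\right) \le 0$ for every $i \in V(G)$ (with $\beta \ge 0$ covering the ``no vertex chosen'' case at $x = 0$), and the crux is to deduce $\begin{bmatrix} -\beta \\ a \end{bmatrix}^\top Ye_0 = a^\top x - \beta \le 0$ from these facts using $Y \succeq 0$ and $Ye_0 = \diag(Y)$. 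This final step --- showing that the positive semidefiniteness of the single extra round glues the per-vertex conditioned validities into validity at $e_0$ --- is precisely the heart of the Lov\'asz--Schrijver neighborhood bound, and the step I would budget the most effort for.
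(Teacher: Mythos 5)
The paper itself does not prove this theorem --- it quotes part (i) from Lov\'asz--Schrijver and part (ii) from Lipt\'ak--Tun\c{c}el --- so your attempt can only be measured against those sources. Your part (ii) is complete and correct, and is essentially the Lipt\'ak--Tun\c{c}el argument: split $x = x_{i^*}y^{(i^*)} + (1-x_{i^*})z^{(i^*)}$ along the minimizing vertex, use Lemma~\ref{lemfacet} on the hypercube faces $\set{x_{i^*}=1,\ x_j = 0\ (j \in \Gamma_G(i^*))}$ and $\set{x_{i^*}=0}$ to land the two conditioned points in $\LS_+^r(G\ominus i^*)$ and $\LS_+^r(G-i^*)$, and invoke $r_+(G\ominus i^*)\le r_+(G-i^*)$ via induced-subgraph monotonicity. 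Your diagnosis that the naive primal recursion for part (i) suffers an off-by-one is also accurate.

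Part (i), however, contains a genuine gap: you reduce everything correctly to the claim that $w^{\top}Ye_i \le 0$ for all $i\in V(G)$ (where $w \ce \begin{bmatrix} -\b \\ a\end{bmatrix}$) forces $w^{\top}Ye_0 = a^{\top}x - \b \le 0$, but you then declare this ``the heart of the Lov\'asz--Schrijver neighborhood bound'' and budget effort for it rather than proving it --- and this step \emph{is} the content of (i), so as written the proposal does not establish it. The missing piece is in fact short, given what you already have: since $w = -\b e_0 + \sum_{i \in V(G)} a_i e_i$, positive semidefiniteness gives
\[
0 \;\le\; w^{\top} Y w \;=\; -\b\, w^{\top}Ye_0 \;+\; \sum_{i \in V(G)} a_i \, w^{\top} Y e_i \;\le\; -\b\, w^{\top}Ye_0 \;=\; -\b\left(a^{\top}x - \b\right),
\]
where the second inequality uses $a \ge 0$ together with your per-vertex facts $w^{\top}Ye_i = x_i\left(a^{\top}y^{(i)} - \b\right) \le 0$ (and $Ye_i = 0$ when $x_i = 0$, by $Y \succeq 0$). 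Every facet of $\STAB(G)$ other than $-x_i \le 0$ has $a \ge 0$ and $\b > 0$ (if $\b = 0$, validity at the singletons $\chi_{\set{i}}$ forces $a = 0$), so this yields $a^{\top}x \le \b$, while the nonnegativity facets hold already in $\FRAC(G)$. In short: your reduction is exactly right and the gap closes in two lines, but the decisive PSD computation is named rather than supplied, so part (i) is incomplete as submitted.
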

 
Recall that $r_+(G) = 0$ if and only if $G$ is bipartite (in which case $\FRAC(G) = \STAB(G)$). Thus, it follows immediately from Theorem~\ref{thmDeleteDestroy}(ii) that if $G$ is non-bipartite but $G- i$ is bipartite for some $i \in V(G)$, then $r_+(G) =1$ --- an example for such graphs is the odd cycles. Likewise, if $G$ is non-bipartite while $G \ominus i$ is bipartite for every $i \in V(G)$, then $r_+(G)=1$ as well --- such as when $G$ is an odd antihole (i.e., the graph complement of an odd cycle of length at least $5$), or an odd wheel (i.e., the graph obtained from joining a vertex to every vertex of an odd cycle of length at least $5$).
 
We say that a graph $G$ is \emph{perfect} if $\CLIQ(G) = \STAB(G)$. In terms of forbidden subgraphs, $G$ is perfect if and only if it does not contain an induced subgraph that is an odd hole (i.e., an odd cycle of length at least $5$) or an odd antihole~\cite{ChudnovskyRST06}. Since $\LS_+(G) \subseteq \CLIQ(G)$ in general~\cite{LovaszS91}, it follows that $r_+(G) \leq 1$ if $G$ is perfect.

The following is a restatement of~\cite[Lemma 5]{LiptakT03}.

\begin{proposition}\label{propCliqueCut}
Let $G$ be a graph, and $S_1, S_2, C \subseteq V(G)$ are mutually disjoint subsets such that
\begin{itemize}
\item
$S_1 \cup S_2 \cup C = V(G)$;
\item
$C$ induces a clique in $G$;
\item
There is no edge $\set{i,j} \in E(G)$ where $i\in S_1, j \in S_2$.
\end{itemize}
Then $r_+(G) = \max \set{ r_+(G-S_1), r_+(G-S_2)}$.
\end{proposition}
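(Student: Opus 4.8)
The plan is to establish the two inequalities separately. The inequality $r_+(G) \geq \max\set{r_+(G-S_1), r_+(G-S_2)}$ is immediate: both $G-S_1$ and $G-S_2$ are induced subgraphs of $G$, so by the consequence of Lemma~\ref{lemfacet} recorded just after its statement (that $r_+$ never increases under passing to an induced subgraph), their $\LS_+$-ranks are at most $r_+(G)$. The content of the proposition is therefore the reverse inequality, which I would obtain from a decomposition (gluing) property of stable set polytopes across the clique cutset $C$, combined once more with Lemma~\ref{lemfacet}.

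Write $G_1 \ce G - S_2 = G[S_1 \cup C]$ and $G_2 \ce G - S_1 = G[S_2 \cup C]$, so that $V(G_1) \cap V(G_2) = C$ and $V(G_1) \cup V(G_2) = V(G)$. Since there is no edge between $S_1$ and $S_2$, every edge of $G$ lies entirely within $G_1$ or entirely within $G_2$; hence a set $S \subseteq V(G)$ is stable in $G$ if and only if $S \cap V(G_1)$ is stable in $G_1$ and $S \cap V(G_2)$ is stable in $G_2$. The heart of the argument is the polytope-level analogue of this statement, namely the claim that $x \in \STAB(G)$ holds if and only if the restrictions $x|_{V(G_1)}$ and $x|_{V(G_2)}$ lie in $\STAB(G_1)$ and $\STAB(G_2)$ respectively. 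The forward direction is routine, since restricting a stable set of $G$ to $V(G_i)$ yields a stable set of $G_i$.

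For the reverse direction --- which I expect to be the main obstacle --- I would argue by explicitly gluing convex combinations. Write $x|_{V(G_1)} = \sum_k \lambda_k \chi_{A_k}$ and $x|_{V(G_2)} = \sum_l \mu_l \chi_{B_l}$ as convex combinations of incidence vectors of stable sets of $G_1$ and $G_2$. Because $C$ is a clique, each $A_k$ and each $B_l$ meets $C$ in at most one vertex, so I can group these stable sets according to their intersection with $C$. A short computation shows that in both decompositions the total weight of the group whose sets meet $C$ exactly in $\set{c}$ equals $x_c$, while the weight of the group meeting $C$ emptily equals $1 - \sum_{c \in C} x_c$; in particular the group masses match across the two decompositions. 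Within each matched group of common mass $m$, I couple the two weight distributions by the product rule $\nu_{k,l} \ce \lambda_k \mu_l / m$ and set $S_{k,l} \ce A_k \cup B_l$. Since $A_k$ and $B_l$ then agree on $C$, one has $S_{k,l} \cap V(G_1) = A_k$ and $S_{k,l} \cap V(G_2) = B_l$, so $S_{k,l}$ is stable in $G$; and a direct coordinate-by-coordinate check (treating vertices in $S_1$, in $S_2$, and in $C$ separately) shows that $\sum \nu_{k,l} \chi_{S_{k,l}} = x$ with weights summing to $1$, proving $x \in \STAB(G)$.

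To finish, let $p \ce \max\set{r_+(G_1), r_+(G_2)}$ and take any $\bar x \in \LS_+^p(G)$. By the induced-subgraph consequence of Lemma~\ref{lemfacet}, the restrictions satisfy $\bar x|_{V(G_1)} \in \LS_+^p(G_1) = \STAB(G_1)$ and $\bar x|_{V(G_2)} \in \LS_+^p(G_2) = \STAB(G_2)$, where the equalities use $p \geq r_+(G_i)$. The gluing claim then yields $\bar x \in \STAB(G)$, so $\LS_+^p(G) = \STAB(G)$ and hence $r_+(G) \leq p$, completing the reverse inequality and thus the proof.
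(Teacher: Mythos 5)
Your proposal is correct. Note that the paper does not actually prove this proposition---it is presented as a restatement of Lemma~5 of~\cite{LiptakT03} with no argument given---so your write-up supplies what the paper delegates to the citation, and it does so along the standard lines of that lemma: the lower bound via the induced-subgraph monotonicity of $r_+$ recorded after Lemma~\ref{lemfacet}, and the upper bound via the classical (Chv\'{a}tal-style) clique-cutset gluing of $\STAB(G_1)$ and $\STAB(G_2)$ followed by one more application of the same restriction property. Your coupling step, the only delicate point, is sound: since $C$ is a clique each stable set meets it in at most one vertex, the group masses are $x_c$ for $c \in C$ and $1 - \sum_{c \in C} x_c$ for the empty group (nonnegative because the clique inequality for $C$ is valid for each $\STAB(G_i)$), so the masses match across the two decompositions and the product coupling within each group reconstructs $x$ exactly as you check.
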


Thus, if $G$ has a cut clique (i.e., a clique $C$ where $G-C$ has multiple components), then the $\LS_+$-rank of $G$ is equal to that of one of its proper subgraphs.

Next, observe that $\FRAC(G)$ is \emph{lower-comprehensive} for every graph $G$ (i.e., if $x \in \FRAC(G)$, then $y \in \FRAC(G)$ for all vectors $y$ where $0 \leq y \leq x$). Also, one can show (such as using Lemma~\ref{lemfacet}) that $\LS_+$ preserves lower-comprehensiveness. Thus, it follows that $\LS_+^k(G)$ is lower-comprehensive as well for every $k \in \mN$. 

Finally, in addition to preserving lower-comprehensiveness, it is also clear from the definition of $\LS_+$ that the operator also preserves containment (i.e., if $P_1 \subseteq P_2$, then $\LS_+(P_1) \subseteq \LS_+(P_2)$). This implies the following.

\begin{lemma}\label{lem05subgraph}
Given graphs $G, H$ where $V(G) = V(H)$ and $E(G) \subseteq E(H)$,
\begin{itemize}
\item[(i)]
If $a^{\top}x \leq \b$ is valid for $\LS_+^p(G)$, then $a^{\top}x \leq \b$ is valid for $\LS_+^p(H)$;
\item[(ii)]
If $a^{\top}x \leq \b$ is not valid for $\LS_+^p(H)$, then $a^{\top}x \leq \b$ is not valid for $\LS_+^p(G)$.
\end{itemize}
\end{lemma}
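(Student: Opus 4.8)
The plan is to reduce both parts to a single polytope containment, namely $\LS_+^p(H) \subseteq \LS_+^p(G)$, and then simply read off the two validity statements. Both parts are logically equivalent (part (ii) is the contrapositive of part (i)), so the entire content of the lemma lies in establishing this one inclusion.

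First I would record the base containment at the level of fractional stable set polytopes. Since $V(G) = V(H)$ and $E(G) \subseteq E(H)$, every edge constraint $x_i + x_j \leq 1$ appearing in the definition of $\FRAC(G)$ is also an edge constraint of $\FRAC(H)$, and $\FRAC(H)$ is cut out by possibly more such inequalities. Hence $\FRAC(H) \subseteq \FRAC(G)$. The one point that genuinely demands attention is the \emph{direction} of this inclusion: adding edges imposes more constraints and therefore \emph{shrinks} the fractional stable set polytope, so the graph with the larger edge set yields the smaller body.

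Next I would invoke the monotonicity of $\LS_+$ recalled immediately before the statement, namely that $P_1 \subseteq P_2$ implies $\LS_+(P_1) \subseteq \LS_+(P_2)$. Starting from $\FRAC(H) \subseteq \FRAC(G)$ and applying this implication $p$ times gives $\LS_+^p(H) \subseteq \LS_+^p(G)$. Formally this is a one-line induction on $p$: the case $p = 0$ is exactly the containment of the $\FRAC$ bodies established above, and each additional application of $\LS_+$ preserves the inclusion by monotonicity.

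With the containment in hand, both parts are immediate. For (i), if $a^{\top}x \leq \b$ holds for every $x \in \LS_+^p(G)$, then in particular it holds for every $x \in \LS_+^p(H)$ by the inclusion $\LS_+^p(H) \subseteq \LS_+^p(G)$, so the inequality is valid for $\LS_+^p(H)$. For (ii), if $a^{\top}x \leq \b$ fails at some $\bar{x} \in \LS_+^p(H)$, then that same $\bar{x}$ lies in $\LS_+^p(G)$ and witnesses that the inequality is not valid there. There is no real obstacle in this argument; it is purely a matter of tracking the directions of the two inclusions correctly.
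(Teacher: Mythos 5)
Your proof is correct and follows exactly the route the paper intends: it derives $\FRAC(H) \subseteq \FRAC(G)$ from $E(G) \subseteq E(H)$ and then iterates the monotonicity property $P_1 \subseteq P_2 \Rightarrow \LS_+(P_1) \subseteq \LS_+(P_2)$, which the paper states immediately before the lemma as the sole justification. You also correctly track the direction of the inclusions, which is the only place where one could go wrong.
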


\section{Star-homomorphic graphs}\label{sec3}

In this section, we introduce the notion of two graphs being star-homomorphic, and describe how the $\LS_+$-relaxations of such a pair of graphs are related. Given a graph $G = (V(G), E(G))$, we define the graph $G^{\dagger}$ where
\begin{align*}
V(G^{\dagger}) &\ce \set{i, \overline{i} : i \in V(G)},\\
E(G^{\dagger}) &\ce E(G) \cup \set{ \set{i, \overline{i}} : i \in V(G)}.
\end{align*}
In other words, we obtain $G^{\dagger}$ from $G$ by adding a new vertex $\overline{i}$ for every $i \in V(G)$, and then adding an edge between $\overline{i}$ and $i$. Figure~\ref{figGstar} provides an example of constructing $G^{\dagger}$ from $G$.

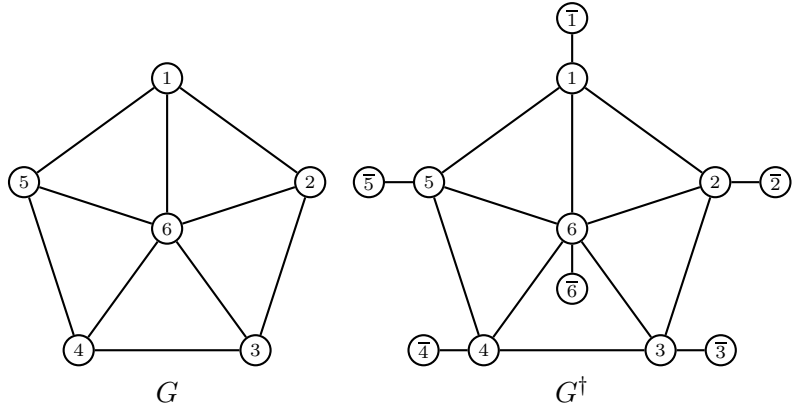
\begin{figure}[ht!]
\begin{center}

\def\sc{2}
\def\w{0.4}
\def\x{90}
\def\y{0.70}
\def\z{-360/5}
\begin{tabular}{cc}

\begin{tikzpicture}[scale=\sc, thick,main node/.style={circle, minimum size=4mm, inner sep=0.1mm,draw,font=\tiny\sffamily}]
\node[main node] at ({cos(\x+(0)*\z)},{sin(\x+(0)*\z)}) (1) {$1$}; 
\node[main node] at ({cos(\x+(1)*\z)},{sin(\x+(1)*\z)}) (2) {$2$};
 \node[main node] at ({cos(\x+(2)*\z)},{sin(\x+(2)*\z)}) (3) {$3$}; 
 \node[main node] at ({cos(\x+(3)*\z)},{sin(\x+(3)*\z)}) (4) {$4$}; 
 \node[main node] at ({cos(\x+(4)*\z)},{sin(\x+(4)*\z)}) (5) {$5$}; 
 \node[main node] at (0,0) (6) {$6$};

 \path[every node/.style={font=\sffamily}]
(1) edge (2)
(2) edge (3)
(3) edge (4)
(4) edge (5)
(5) edge (1)
(6) edge (1)
(6) edge (2)
(6) edge (3)
(6) edge (4)
(6) edge (5);
\end{tikzpicture}
&

\begin{tikzpicture}[scale=\sc, thick,main node/.style={circle, minimum size=4mm, inner sep=0.1mm,draw,font=\tiny\sffamily}]
\node[main node] at ({cos(\x+(0)*\z)},{sin(\x+(0)*\z)}) (1) {$1$}; 
\node[main node] at ({cos(\x+(1)*\z)},{sin(\x+(1)*\z)}) (2) {$2$};
 \node[main node] at ({cos(\x+(2)*\z)},{sin(\x+(2)*\z)}) (3) {$3$}; 
 \node[main node] at ({cos(\x+(3)*\z)},{sin(\x+(3)*\z)}) (4) {$4$}; 
 \node[main node] at ({cos(\x+(4)*\z)},{sin(\x+(4)*\z)}) (5) {$5$};
 \node[main node] at ({cos(\x+(0)*\z)},{sin(\x+(0)*\z)+\w}) (1b) {$\overline{1}$};  
  \node[main node] at ({cos(\x+(1)*\z)+\w},{sin(\x+(1)*\z)}) (2b) {$\overline{2}$};  
   \node[main node] at ({cos(\x+(2)*\z)+\w},{sin(\x+(2)*\z)}) (3b) {$\overline{3}$};  
    \node[main node] at ({cos(\x+(3)*\z)-\w},{sin(\x+(3)*\z)}) (4b) {$\overline{4}$};  
     \node[main node] at ({cos(\x+(4)*\z)-\w},{sin(\x+(4)*\z)}) (5b) {$\overline{5}$};  
     
 \node[main node] at (0,0) (6) {$6$};
  \node[main node] at (0,{-\w}) (6b) {$\overline{6}$};

 \path[every node/.style={font=\sffamily}]
(1) edge (2)
(2) edge (3)
(3) edge (4)
(4) edge (5)
(5) edge (1)
(6) edge (1)
(6) edge (2)
(6) edge (3)
(6) edge (4)
(6) edge (5)
(1) edge (1b)
(2) edge (2b)
(3) edge (3b)
(4) edge (4b)
(5) edge (5b)
(6) edge (6b);
\end{tikzpicture}\\
$G$ & $G^{\dagger}$
\end{tabular}
\caption{Constructing $G^{\dagger}$ from $G$}\label{figGstar}
\end{center}
\end{figure}
Also, given graphs $G$ and $H$, we say that $g : V(H) \to V(G)$ is a \emph{homomorphism} if, for all $i,j \in V(H)$,
\[
\set{i,j} \in E(H) \Rightarrow \set{g(i), g(j)} \in E(G).
\]
Furthermore, given graphs $G$ and $H$, if there exists a homomorphism $g : V(H) \to V(G^{\dagger})$, then we say that $H$ is  \emph{star-homomorphic to $G$ under $g$}.

\def\sc{2}
\def\w{0.4}
\def\x{90}
\def\y{0.70}
\def\z{-360/5}

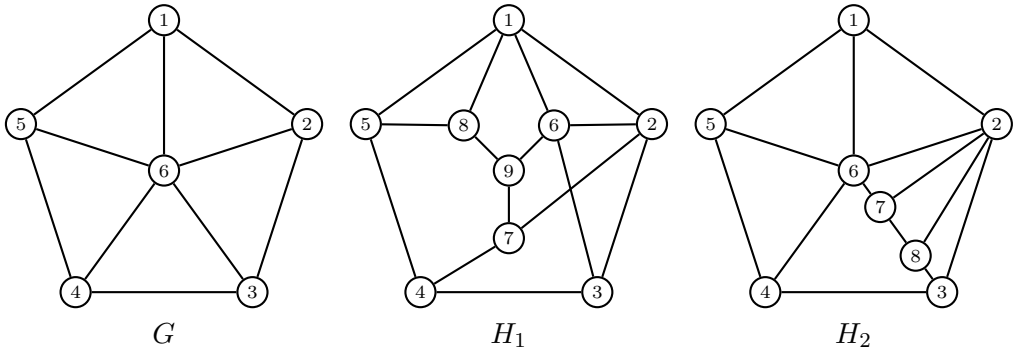
\begin{figure}[ht!]
\begin{center}
\begin{tabular}{ccc}

\begin{tikzpicture}[scale=\sc, thick,main node/.style={circle, minimum size=4mm, inner sep=0.1mm,draw,font=\tiny\sffamily}]
\node[main node] at ({cos(\x+(0)*\z)},{sin(\x+(0)*\z)}) (1) {$1$}; 
\node[main node] at ({cos(\x+(1)*\z)},{sin(\x+(1)*\z)}) (2) {$2$};
 \node[main node] at ({cos(\x+(2)*\z)},{sin(\x+(2)*\z)}) (3) {$3$}; 
 \node[main node] at ({cos(\x+(3)*\z)},{sin(\x+(3)*\z)}) (4) {$4$}; 
 \node[main node] at ({cos(\x+(4)*\z)},{sin(\x+(4)*\z)}) (5) {$5$}; 
 \node[main node] at (0,0) (6) {$6$};

 \path[every node/.style={font=\sffamily}]
(1) edge (2)
(2) edge (3)
(3) edge (4)
(4) edge (5)
(5) edge (1)
(6) edge (1)
(6) edge (2)
(6) edge (3)
(6) edge (4)
(6) edge (5);
\end{tikzpicture}
&

\begin{tikzpicture}[scale=\sc, thick,main node/.style={circle, minimum size=4mm, inner sep=0.1mm,draw,font=\tiny\sffamily}]
\node[main node] at ({cos(\x+(0)*\z)},{sin(\x+(0)*\z)}) (1) {$1$}; 
\node[main node] at ({cos(\x+(1)*\z)},{sin(\x+(1)*\z)}) (2) {$2$};
 \node[main node] at ({cos(\x+(2)*\z)},{sin(\x+(2)*\z)}) (3) {$3$}; 
 \node[main node] at ({cos(\x+(3)*\z)},{sin(\x+(3)*\z)}) (4) {$4$}; 
 \node[main node] at ({cos(\x+(4)*\z)},{sin(\x+(4)*\z)}) (5) {$5$}; 

 \node[main node] at (0.3,0.3) (6) {$6$};
  \node[main node] at (0,-0.45) (7) {$7$};
   \node[main node] at (-0.3,0.3) (8) {$8$};

 \node[main node] at (0,0) (9) {$9$};

 \path[every node/.style={font=\sffamily}]
(1) edge (2)
(2) edge (3)
(3) edge (4)
(4) edge (5)
(5) edge (1)
(6) edge (9)
(7) edge (9)
(8) edge (9)
(6) edge (1)
(6) edge (2)
(6) edge (3)
(7) edge (2)
(7) edge (4)
(8) edge (1)
(8) edge (5);
\end{tikzpicture}
&

\begin{tikzpicture}[scale=\sc, thick,main node/.style={circle, minimum size=4mm, inner sep=0.1mm,draw,font=\tiny\sffamily}]
\node[main node] at ({cos(\x+(0)*\z)},{sin(\x+(0)*\z)}) (1) {$1$}; 
\node[main node] at ({cos(\x+(1)*\z)},{sin(\x+(1)*\z)}) (2) {$2$};
 \node[main node] at ({cos(\x+(2)*\z)},{sin(\x+(2)*\z)}) (3) {$3$}; 
 \node[main node] at ({cos(\x+(3)*\z)},{sin(\x+(3)*\z)}) (4) {$4$}; 
 \node[main node] at ({cos(\x+(4)*\z)},{sin(\x+(4)*\z)}) (5) {$5$}; 
 \node[main node] at (0,0) (6) {$6$};
 \node[main node] at ({0.3*cos(\x+(2)*\z)},{0.3*sin(\x+(2)*\z)}) (7) {$7$}; 
  \node[main node] at ({0.7*cos(\x+(2)*\z)},{0.7*sin(\x+(2)*\z)}) (8) {$8$}; 

 \path[every node/.style={font=\sffamily}]
(1) edge (2)
(2) edge (3)
(3) edge (4)
(4) edge (5)
(5) edge (1)
(6) edge (1)
(6) edge (2)
(6) edge (7)
(7) edge (8)
(8) edge (3)
(7) edge (2)
(8) edge (2)
(6) edge (4)
(6) edge (5);
\end{tikzpicture}\\
$G$ & $H_1$ & $H_2$ 
\end{tabular}
\caption{Two graphs $H_1,H_2$ that are star-homomorphic to $G$}\label{figStarHomo}
\end{center}
\end{figure}

\begin{example}\label{egStarHomo}
Consider the graphs in Figure~\ref{figStarHomo}. Then $H_1$ is star-homomorphic to $G$ under $g_1$ where
\[
\begin{array}{r|c|c|c|c|c|c|c|c|c}
j \in V(H_1) & 1 & 2 & 3 & 4 & 5 & 6 & 7 & 8 & 9 \\
\hline
g_1(j) \in V(G^{\dagger}) & 1  & 2 & 3 & 4 & 5 & 6 & 6 & 6 & \overline{6} 
\end{array}
\]
Likewise, $H_2$ is star-homomorphic to $G$ under $g_2$ where
\[
\begin{array}{r|c|c|c|c|c|c|c|c}
j \in V(H_2) & 1 & 2 & 3 & 4 & 5 & 6 & 7 & 8 \\
\hline
g_2(j) \in V(G^{\dagger}) & 1  & 2 & 3 & 4 & 5 & 6 & 3 & 6 
\end{array}
\]
\end{example}

Given that $H$ is star-homomorphic to $G$ under $g$ and $x \in \mR^{V(G)}$, we let $\tilde{g}(x) \in \mR^{V(H)}$ be the vector where
\begin{equation}\label{eq_tildeg}
\left[\tilde{g}(x) \right]_j \ce \begin{cases}
x_i & \tn{if $g(j) = i$;}\\
1 - x_i & \tn{if $g(j) = \overline{i}$.}
\end{cases}
\end{equation}

\begin{example}\label{egStarHomo2}
To illustrate the function $\tilde{g}$, consider again the graphs $G$ and $H_1$ from Figure~\ref{figStarHomo}, as well as the star-homomorphism $g_1 : V(H_1) \to V(G^{\dagger})$ given in Example~\ref{egStarHomo}. Now let $x \ce (\frac{2}{5}, \frac{2}{5}, \frac{2}{5}, \frac{2}{5}, \frac{2}{5}, \frac{1}{5})^{\top} \in \mR^{V(G)}$. Then $\tilde{g}_1(x) \in \mR^{V(H_1)}$ where
\[
\begin{array}{r|c|c|c|c|c|c|c|c|c}
j \in V(H_1) & 1 & 2 & 3 & 4 & 5 & 6 & 7 & 8 & 9 \\
\hline
[ \tilde{g}_1(x)]_j  & \frac{2}{5}  & \frac{2}{5} & \frac{2}{5} &\frac{2}{5} & \frac{2}{5} & \frac{1}{5} & \frac{1}{5} & \frac{1}{5} & \frac{4}{5} 
\end{array}
\]
\end{example}

Notice the point $x$ given in Example~\ref{egStarHomo2} belongs to $\FRAC(G)$, and that $\tilde{g}_1$ maps $x$ to a point that belongs to $\FRAC(H_1)$. This is not a coincidence, but rather a general property of $\tilde{g}$ which follows readily from the definition of star-homomorphism:

\begin{lemma}\label{lemStarHomo}
Suppose $H$ is star-homomorphic to $G$ under $g$, and $x \in \mR^{V(G)}$. If $x \in \FRAC(G)$, then $\tilde{g}(x) \in \FRAC(H)$.
\end{lemma}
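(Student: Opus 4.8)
The plan is to verify $\tilde{g}(x) \in \FRAC(H)$ directly against the two families of inequalities defining $\FRAC(H)$: the box constraints $[\tilde{g}(x)]_j \in [0,1]$ for each $j \in V(H)$, and the edge constraints $[\tilde{g}(x)]_j + [\tilde{g}(x)]_k \leq 1$ for each $\set{j,k} \in E(H)$. The only real content lies in the edge constraints, where the homomorphism hypothesis is used; the box constraints are immediate.

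First I would dispose of the box constraints. Since $x \in \FRAC(G) \subseteq [0,1]^{V(G)}$, every entry $x_i$ lies in $[0,1]$, and hence so does $1 - x_i$. By definition each coordinate $[\tilde{g}(x)]_j$ equals either $x_i$ or $1 - x_i$ for the unique vertex $i$ determined by $g(j)$, so it lies in $[0,1]$. Here it is worth noting, in one sentence, that this case split is well-posed: each image $g(j)$ is a single vertex of $G^{\dagger}$, hence of exactly one of the forms $i$ or $\overline{i}$, which is precisely what makes the subsequent edge analysis exhaustive.

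The substantive step is the edge constraints. Fix an edge $\set{j,k} \in E(H)$. Because $g$ is a homomorphism into $G^{\dagger}$, we have $\set{g(j), g(k)} \in E(G^{\dagger})$. I would then split according to the two ways an edge can arise from the definition $E(G^{\dagger}) = E(G) \cup \set{ \set{i, \overline{i}} : i \in V(G)}$. In the first case $\set{g(j), g(k)} = \set{a,b}$ with $\set{a,b} \in E(G)$, so both images are unbarred; then $[\tilde{g}(x)]_j + [\tilde{g}(x)]_k = x_a + x_b \leq 1$ follows directly from $x \in \FRAC(G)$. In the second case $\set{g(j), g(k)} = \set{i, \overline{i}}$ for some $i \in V(G)$; then one coordinate equals $x_i$ and the other equals $1 - x_i$, so their sum is exactly $1$. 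In both cases the required inequality holds, completing the verification.

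I do not anticipate any genuine obstacle: the statement is an unwinding of definitions, and its only conceptual content is the observation that the star edges $\set{i, \overline{i}}$ newly introduced in $G^{\dagger}$ encode precisely the tight relation $x_i + (1 - x_i) = 1$, so that the complementation built into $\tilde{g}$ is automatically compatible with $\FRAC$-feasibility. The single point deserving care is confirming that the two edge-types above are exhaustive, which amounts to noting that $G^{\dagger}$ has no edge joining two barred vertices; this is immediate from its definition.
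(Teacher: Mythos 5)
Your proof is correct and follows essentially the same route as the paper's: dispatch the box constraints from $x \in [0,1]^{V(G)}$, then use the homomorphism property to reduce each edge of $H$ to either an edge of $G$ (giving $x_a + x_b \leq 1$) or a star edge $\set{i,\overline{i}}$ (giving $x_i + (1-x_i) = 1$). The paper states this in three sentences; your version merely makes the same case analysis explicit.
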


\begin{proof}
First, it is easy to see that $x \in [0,1]^{V(G)}$ implies $\tilde{g}(x) \in [0,1]^{V(H)}$. Now given edge $\set{j_1. j_2} \in E(H)$, $\set{ g(j_1), g(j_2)}$ is either an edge in $G$ or $\set{i, \overline{i}}$ for some $i \in V(G)$. In both cases, we see that $[\tilde{g}(x)]_{j_1} +[\tilde{g}(x)]_{j_2} \leq 1$.
\end{proof}

In fact, the implication in Lemma~\ref{lemStarHomo} is preserved under applications of $\LS_+$.

\begin{proposition}\label{propStarHomo}
Suppose $H$ is star-homomorphic to $G$ under $g$, and $x \in \mR^{V(G)}$. If $x \in \LS_+^p(G)$, then $\tilde{g}(x) \in \LS_+^p(H)$.
\end{proposition}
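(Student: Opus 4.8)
The plan is to prove the statement by induction on $p$, with the base case $p=0$ being exactly Lemma~\ref{lemStarHomo}. For the inductive step, I would assume the implication holds at level $p-1$ and upgrade a certificate matrix for $x$ at level $p$ over $G$ into a certificate matrix for $\tilde{g}(x)$ over $H$, by means of a single congruence transformation. So suppose $x \in \LS_+^p(G) = \LS_+(\LS_+^{p-1}(G))$ and let $Y \in \widehat{\LS}_+(\LS_+^{p-1}(G))$ be a certificate matrix for $x$, so that $Ye_0 = \diag(Y) = \begin{bmatrix} 1 \\ x \end{bmatrix}$ and $Ye_i, Y(e_0-e_i) \in \cone(\LS_+^{p-1}(G))$ for all $i \in V(G)$.

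The central object is the linear map $\sigma$ that realizes $\tilde{g}$ at the level of homogenized cones. Viewing $\cone(\cdot)$ as living in the space with homogenizing coordinate $0$, I would define $\sigma$ to be the matrix with rows indexed by $\set{0} \cup V(H)$ and columns indexed by $\set{0} \cup V(G)$ whose $0$-th row is $e_0^{\top}$, and whose $j$-th row (for $j \in V(H)$) equals $e_i^{\top}$ if $g(j) = i$ and equals $(e_0 - e_i)^{\top}$ if $g(j) = \overline{i}$. (Several vertices of $H$ may share an image, which simply produces repeated rows and causes no difficulty.) The point of this definition is that $\sigma$ sends $\begin{bmatrix} \lambda \\ \lambda x \end{bmatrix}$ to $\begin{bmatrix} \lambda \\ \lambda \tilde{g}(x) \end{bmatrix}$; applying the inductive hypothesis to the point underlying each nonzero conic element (and noting that the only element with $\lambda = 0$ is the origin), it follows that $\sigma$ carries $\cone(\LS_+^{p-1}(G))$ into $\cone(\LS_+^{p-1}(H))$.

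I would then set $Z \ce \sigma Y \sigma^{\top}$ and verify that $Z$ is a certificate matrix for $\tilde{g}(x)$ at level $p$ over $H$. Positive semidefiniteness is immediate from $Y \succeq 0$. The identity $Ze_0 = \begin{bmatrix} 1 \\ \tilde{g}(x) \end{bmatrix}$ holds because $\sigma^{\top}e_0 = e_0$ (the $0$-th row of $\sigma$ is $e_0^{\top}$) and $\sigma \begin{bmatrix} 1 \\ x \end{bmatrix} = \begin{bmatrix} 1 \\ \tilde{g}(x) \end{bmatrix}$. For the cone conditions, observe that $\sigma^{\top}e_j$ equals $e_i$ or $e_0 - e_i$ according to whether $g(j) = i$ or $g(j) = \overline{i}$; hence each of $Ze_j$ and $Z(e_0-e_j)$ equals $\sigma$ applied to one of $Ye_i$ or $Y(e_0-e_i)$, both of which lie in $\cone(\LS_+^{p-1}(G))$, so their images lie in $\cone(\LS_+^{p-1}(H))$ by the previous paragraph.

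The step I expect to require the most care is the constraint $\diag(Z) = Ze_0$. For $j$ with $g(j) = i$, this reduces to $[Z]_{jj} = e_i^{\top}Ye_i = [Y]_{ii} = x_i = [\tilde{g}(x)]_j$, using $\diag(Y) = Ye_0$. For $j$ with $g(j) = \overline{i}$, one computes
\[
[Z]_{jj} = (e_0-e_i)^{\top}Y(e_0-e_i) = Y_{00} - 2Y_{0i} + Y_{ii} = 1 - 2x_i + x_i = 1 - x_i = [\tilde{g}(x)]_j,
\]
where the crucial cancellation again uses that both $Y_{0i}$ and $Y_{ii}$ equal $x_i$. Together with $[Z]_{00} = Y_{00} = 1$, this matches $\diag(Z)$ with $Ze_0$ entrywise. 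Hence $Z \in \widehat{\LS}_+(\LS_+^{p-1}(H))$ certifies $\tilde{g}(x) \in \LS_+(\LS_+^{p-1}(H)) = \LS_+^p(H)$, completing the induction.
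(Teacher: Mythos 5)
Your proposal is correct and takes essentially the same route as the paper: your map $\sigma$ is precisely the transpose of the matrix $U$ constructed in the paper's proof, so your certificate $Z = \sigma Y \sigma^{\top}$ coincides with the paper's $Y' = U^{\top} Y U$, and the cone conditions $Ze_j, Z(e_0-e_j) \in \cone(\LS_+^{p-1}(H))$ are discharged by the same inductive hypothesis. If anything, you are slightly more explicit than the paper on the verification of $\diag(Z) = Ze_0$ (including the cancellation $Y_{00} - 2Y_{0i} + Y_{ii} = 1 - x_i$) and on why the $\lambda = 0$ ray causes no trouble, steps the paper dismisses as easy to see.
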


\begin{proof}
Suppose $x \in \LS_+^p(G)$. We prove that $\tilde{g}(x) \in \LS_+^p(H)$ by induction on $p$. The base case $p=0$ reduces to Lemma~\ref{lemStarHomo}. Next, suppose $p \geq 1$, and let $Y \in \widehat{\LS}_+^p(G)$ be a certificate matrix. For convenience, we also extend the function $\tilde{g}$ as follows: Given a real number $k \geq 0$, define $\tilde{g}_k : \mR^{\set{0} \cup V(G)} \to \mR^{\set{0} \cup V(H)}$ such that
\[
\left[\tilde{g}_k(x) \right]_j \ce \begin{cases}
x_0 & \tn{if $j=0$;}\\
x_i & \tn{if $j \in V(H)$ and $g(j) = i$;}\\
k - x_i & \tn{if $j \in V(H)$ and $g(j) = \overline{i}$.}
\end{cases}
\]
Notice that the function $\tilde{g}_k$ satisfies
\[
\tilde{g}_{\ld}\left( \begin{bmatrix} \ld \\ \ld x \end{bmatrix} \right) =  \begin{bmatrix} \ld \\ \ld \tilde{g}(x) \end{bmatrix}
\]
for every $\ld \geq 0$ and $x \in \mR^{V(G)}$. 

Since $Y \in \widehat{\LS}_+^p(G)$, $Ye_i, Y(e_0-e_i) \in \cone\left(\LS_+^{p-1}(G)\right)$ for every $i \in V(G)$. Thus, applying the inductive hypothesis we have $\tilde{g}_{x_i}(Ye_i), \tilde{g}_{1-x_i}(Y(e_0-e_i)) \in \cone\left(\LS_+^{p-1}(H)\right)$.

Next, define the matrix $U \in \mR^{ (\set{0} \cup V(G)) \times (\set{0} \cup V(H))}$ where
\[
Ue_j \ce 
\begin{cases}
e_0 & \tn{ if $j = 0$;}\\
e_{i} &\tn{if $j \in V(H)$ and $g(j) = i$;}\\
e_0 - e_{i} &\tn{if $j \in V(H)$ and $g(j) = \overline{i}$.}
\end{cases}
\]
Then, given $z \in \mR^{\set{0} \cup V(G)}$,  $U^{\top}z = \tilde{g}_{z_0}(z)$. 

Next, we claim that the matrix $Y' \ce U^{\top} Y U \in  \widehat{\LS}_+^p(H)$. First, since $Y = Y^{\top}, \diag(Y) = Ye_0$, and $Y \succeq 0$, it is easy to see that the corresponding properties also hold for $Y'$. We next show that $Y'e_j, Y'(e_0-e_j) \in \cone\left(\LS_+^{p-1}(H)\right)$ for every $j \in V(H)$. First, if $g(j) = i \in V(G)$, then $Y'e_j = \tilde{g}_{x_i}(Ye_i)$, and 
\[
Y'(e_0- e_j) =Y'e_0 - Y'e_j = \tilde{g}_1(Ye_0) - \tilde{g}_{x_i}(Ye_i) = \tilde{g}_{1-x_i}(Y(e_0-e_i)).
\]
To see the last equality, notice that 
\begin{align*}
&\left[ \tilde{g}_1(Ye_0) - \tilde{g}_{x_i}(Ye_i)\right]_{\ell} = \left[ \tilde{g}_{1-x_i}(Y(e_0-e_i))\right]_{\ell} \\
= &
\begin{cases}
1-x_i & \tn{ if $\ell = 0$;}\\
x_{i'} - Y[i',i] &\tn{ if $g(\ell) = i' \in V(G)$;}\\
1 - x_{i'} - x_i + Y[i',i] &\tn{ if $g(\ell) = \overline{i'}$ for some $i' \in V(G)$.}
\end{cases}
\end{align*}
Likewise, if $g(j) = \overline{i}$ for some $i \in V(G)$, then $Y'e_j = \tilde{g}_{1-x_i}(Y(e_0- e_i))$ and $Y'(e_0- e_j) = \tilde{g}_{x_i}(Ye_i)$. In all cases, it follows from the inductive hypothesis that $Y'e_j, Y'(e_0-e_j) \in \cone\left(\LS_+^{p-1}(H)\right)$. Therefore, $Y' \in \widehat{\LS}_+^p(H)$. Since $Y'e_0 = \tilde{g}_1(Ye_0) = \begin{bmatrix} 1 \\ \tilde{g}(x) \end{bmatrix}$, it follows that $\tilde{g}(x) \in \LS_+^p(H)$.
\end{proof}

Proposition~\ref{propStarHomo} helps establish a framework for bounding the $\LS_+$-rank of a graph by that of another.

\begin{lemma}\label{lemStarHomo2}
Let $G$ and $H$ be graphs where $H$ is star-homomorphic to $G$ under $g$. Assume that for every $x \in \mR^{V(G)}$, the following holds:
 \[
x \not\in \STAB(G) \Rightarrow \tilde{g}(x) \not\in \STAB(H).
\]
Then $r_+(H) \geq r_+(G)$.
\end{lemma}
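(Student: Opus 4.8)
The plan is to extract a witness point for the $\LS_+$-rank of $G$ and transport it into the hierarchy of $H$ via Proposition~\ref{propStarHomo}. First I would dispose of the trivial case: if $r_+(G) = 0$, then the claimed inequality $r_+(H) \geq 0$ holds automatically since the rank is a nonnegative integer. So I may assume $r \ce r_+(G) \geq 1$.

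Next I would produce a point of $\LS_+^{r-1}(G)$ lying outside $\STAB(G)$. By definition, $r_+(G)$ is the least integer $k$ with $\LS_+^k(G) = \STAB(G)$, so minimality of $r$ forces $\LS_+^{r-1}(G) \neq \STAB(G)$. Iterating Lemma~\ref{lemLS+} gives $\STAB(G) = \FRAC(G)_I \subseteq \LS_+^{r-1}(G)$, so this inequality is in fact a strict inclusion, and hence there exists a witness $x \in \LS_+^{r-1}(G) \setminus \STAB(G)$.

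Then I would push $x$ across the star-homomorphism. Applying Proposition~\ref{propStarHomo} with $p = r-1$ yields $\tilde{g}(x) \in \LS_+^{r-1}(H)$, while the hypothesis of the lemma, applied to $x \notin \STAB(G)$, gives $\tilde{g}(x) \notin \STAB(H)$. Thus $\tilde{g}(x)$ certifies $\LS_+^{r-1}(H) \neq \STAB(H)$. Since the relaxations $\LS_+^k(H)$ are nested and stabilize at $\STAB(H)$, no $k \leq r-1$ can satisfy $\LS_+^k(H) = \STAB(H)$, and therefore $r_+(H) \geq r = r_+(G)$, as desired.

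I do not anticipate any genuine obstacle here, as the argument is essentially the logic of the rank definition combined with the already-established propagation property of $\tilde{g}$. The only step warranting care is the extraction of the witness $x$: one must combine the minimality built into the definition of $r_+(G)$ with the containment $\STAB(G) \subseteq \LS_+^{r-1}(G)$ to be sure the strict inclusion actually furnishes a point \emph{outside} $\STAB(G)$ rather than merely an inequality of sets. Everything downstream of that is a direct invocation of Proposition~\ref{propStarHomo} and the lemma's hypothesis.
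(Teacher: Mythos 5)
Your proof is correct and takes essentially the same route as the paper's: handle the trivial case $r_+(G)=0$, extract a witness $x \in \LS_+^{r-1}(G) \setminus \STAB(G)$ from the minimality of the rank, and transport it via Proposition~\ref{propStarHomo} together with the lemma's hypothesis to get $\tilde{g}(x) \in \LS_+^{r-1}(H) \setminus \STAB(H)$, forcing $r_+(H) \geq r_+(G)$. Your added justification of the witness's existence (combining $\STAB(G) \subseteq \LS_+^{r-1}(G)$, which follows from Lemma~\ref{lemLS+}, with the strictness $\LS_+^{r-1}(G) \neq \STAB(G)$) simply makes explicit what the paper leaves implicit.
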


\begin{proof}
Suppose $r_+(G) = p \geq 1$ (the claim is trivial if $p=0$). Then there exists $x \in \LS_+^{p-1}(G) \setminus \STAB(G)$. Then by the hypothesis and Proposition~\ref{propStarHomo}, $\tilde{g}(x) \in  \LS_+^{p-1}(H) \setminus \STAB(H)$, showing that $r_+(H) \geq p$.
\end{proof}

Finally, while our focus for this paper is the $\LS_+$ operator, we remark that the framework of star-homomorphism can be extended to analyze relaxations generated by other lift-and-project operators. Again, let $H$ be a graph that is star-homomorphic to $G$ under $g$. Then notice that the function $\tilde{g} : \mR^{V(G)} \to \mR^{V(H)}$ can be expressed as a composition of the following four elementary operations:
\begin{enumerate}
\item
Deleting a coordinate. E.g., $L : \mR^n \to \mR^{n-1}$ where
\[
L(x_1, x_2,  x_3, \ldots, x_n) = (x_2,  x_3, \ldots, x_n).
\]
\item
Swapping two coordinates. E.g., $L : \mR^n \to \mR^{n}$ where
\[
L(x_1, x_2, x_3, \ldots, x_n) = (x_2,x_1, x_3, \ldots, x_n).
\]
\item
Cloning a coordinate. E.g., $L : \mR^n \to \mR^{n+1}$ where
\[
L(x_1, x_2, \ldots, x_n) = (x_1,x_1, x_2, \ldots, x_n).
\]
\item
Flipping a coordinate. E.g., $L : \mR^n \to \mR^{n}$ where
\[
L(x_1, x_2, \ldots, x_n) = (1-x_1, x_2, \ldots, x_n).
\]
\end{enumerate}
Now, let $\mathcal{L}$ be a lift-and-project operator. If one can show that
\begin{equation}\label{eqpro0}
x \in \mathcal{L}(P) \Rightarrow L(x) \in \mathcal{L}(L(P)),
\end{equation}
for every function $L$ that belongs to one of the four categories above (where $L(P)$ denotes $\set{ L(z) : z \in P}$), then one can prove the analogous version of Proposition~\ref{propStarHomo} for $\mathcal{L}$. For instance, the ideas from the proof of Proposition~\ref{propStarHomo} show that~\eqref{eqpro0} holds for $\mathcal{L} \in \set{\LS_+, \LS, \LS_0}$ (where $\LS, \LS_0$~\cite{LovaszS91} are operators that generate polyhedral relaxations which are generally weaker than $\LS_+$). Also, it has been shown~\cite[Proposition 1]{AuT18} that the Lasserre operator $\Las$~\cite{Lasserre01} commutes with all automorphisms of the unit hypercube, a property that is also shared by the Sherali--Adams operator $\SA$~\cite{SheraliA90} and one of its PSD variants $\SA_+$~\cite{AuT16}. Thus, these operators satisfy~\eqref{eqpro0} as well, and much of what we show for $\LS_+$ in this section and the next section also applies to these operators.

\section{The (generalized) vertex-stretching operation}\label{sec4}

In this section, we introduce a graph operation that shows promise in producing relatively small graphs with high $\LS_+$-ranks, and study some of its properties. Given a graph $G$, vertex $v \in V(G)$, and non-empty sets $A_1, \ldots,  A_p \subset \Gamma_G(v)$ where $\bigcup_{\ell=1}^p A_{\ell}  = \Gamma_G(v)$, we define the \emph{stretching} of $v$ in $G$ by applying the following sequence of transformations to $G$: 
\begin{itemize}
\item
Replace $v$ by $p+1$ vertices: $v_0, v_1, \ldots, v_p$;
\item
For every $\ell \in [p]$, Join $v_{\ell}$ to $v_0$, as well as to all vertices in $A_{\ell}$.
\end{itemize}

We will also refer to the operation as \emph{$p$-stretching} when we would like to specify $p$ (which is necessarily at least $2$). For example, Figure~\ref{figstretching} shows the graph obtained from $2$-stretching vertex $5$ in $K_5$ (with $A_1 = \set{2,3,4}$ and $A_2 = \set{1,2,3}$). For another example, observe that in Figure~\ref{figStarHomo}, the graph $H_1$ can be obtained by $3$-stretching vertex $6$ in $G$.

\def\x{270 - 180/5}
\def\z{360/5}
\def\y{0.70}
\def\sc{2}

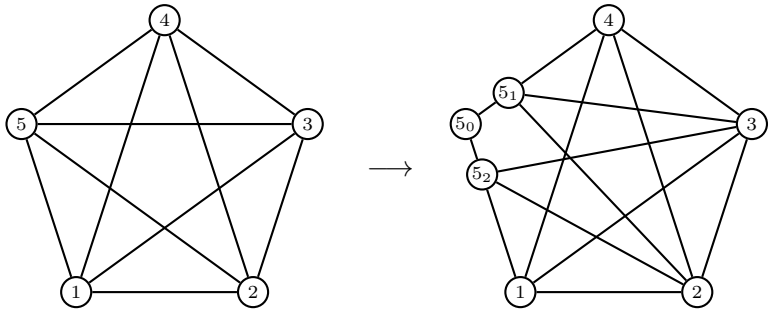
\begin{figure}[ht!]
\begin{center}
\begin{tabular}{cc}
\begin{tikzpicture}[scale=\sc, thick,main node/.style={circle, minimum size=4mm, inner sep=0.1mm,draw,font=\tiny\sffamily}]
\node[main node] at ({cos(\x+(0)*\z)},{sin(\x+(0)*\z)})  (1) {$1$};
\node[main node] at ({cos(\x+(1)*\z)},{sin(\x+(1)*\z)})  (2) {$2$};
\node[main node] at ({cos(\x+(2)*\z)},{sin(\x+(2)*\z)})  (3) {$3$};
\node[main node] at ({cos(\x+(3)*\z)},{sin(\x+(3)*\z)})  (4) {$4$};
\node[main node] at ({cos(\x+(4)*\z)},{sin(\x+(4)*\z)})  (5) {$5$};

 \path[every node/.style={font=\sffamily}]
(1) edge (2)
(1) edge (3)
(1) edge (4)
(2) edge (3)
(2) edge (4)
(3) edge (4)
(5) edge (2)
(5) edge (3)
(5) edge (4)
(5) edge (1);

\node at (1.5,0) () {$\longrightarrow$};
\end{tikzpicture}
&
\begin{tikzpicture}[scale=\sc, thick,main node/.style={circle, minimum size=4mm, inner sep=0.1mm,draw,font=\tiny\sffamily}]
\node[main node] at ({cos(\x+(0)*\z)},{sin(\x+(0)*\z)})  (1) {$1$};
\node[main node] at ({cos(\x+(1)*\z)},{sin(\x+(1)*\z)})  (2) {$2$};
\node[main node] at ({cos(\x+(2)*\z)},{sin(\x+(2)*\z)})  (3) {$3$};
\node[main node] at ({cos(\x+(3)*\z)},{sin(\x+(3)*\z)})  (4) {$4$};
\node[main node] at ({cos(\x+(4)*\z)},{sin(\x+(4)*\z)}) (6) {$5_0$};
\node[main node] at ({ \y* cos(\x+(4)*\z) + (1-\y)*cos(\x+(3)*\z)},{ \y* sin(\x+(4)*\z) + (1-\y)*sin(\x+(3)*\z)}) (5) {$5_1$};
\node[main node] at ({ \y* cos(\x+(4)*\z) + (1-\y)*cos(\x+(5)*\z)},{ \y* sin(\x+(4)*\z) + (1-\y)*sin(\x+(5)*\z)}) (7) {$5_2$};

 \path[every node/.style={font=\sffamily}]
(1) edge (2)
(1) edge (3)
(1) edge (4)
(2) edge (3)
(2) edge (4)
(3) edge (4)
(5) edge (2)
(5) edge (3)
(5) edge (4)
(7) edge (1)
(7) edge (2)
(7) edge (3)
(5) edge (6)
(6) edge (7);
\end{tikzpicture}
\end{tabular}
\end{center}
\caption{Demonstrating the vertex-stretching operation}\label{figstretching}
\end{figure}

We remark that our vertex stretching operation is a generalization of the type (i) stretching operation described in~\cite{LiptakT03} and later studied in~\cite{AguileraEF14} (which further requires that $p=2$ and $A_1 \cap A_2 = \emptyset$), as well as the $k$-stretching operation described in~\cite{BianchiENT17} (which further requires that $p=2$ and the vertices $A_1 \cap A_2$ induce a clique of size $k$ in $G$, with $k=0$ allowed). Also, when $A_1, \ldots, A_p$ are mutually disjoint and at most one of these $p$ sets has size greater than $1$, our vertex stretching specializes to an instance of type (ii) stretching from~\cite{LiptakT03}.

Given a graph $G$ and an integer $p \geq 2$, we define $\S_p(G)$ to be the set of graphs that can be obtained from $G$ by $p$-stretching one vertex. Notice that every graph $H \in \S_p(G)$ is star-homomorphic to $G$ under $g$ where
\begin{equation}\label{eqVertexStretchg}
g(j) \ce
\begin{cases}
j &\tn{if $j \in V(H \ominus v_0)$;}\\
v &\tn{if $j \in \set{v_1, \ldots, v_p}$;}\\
\overline{v} & \tn{if $j = v_0$.}
\end{cases}
\end{equation}
We also define $\S(G) \ce \bigcup_{p \geq 2} \S_p(G)$, and will show that $r_+(H) \geq r_+(G) $ for all $H \in \S(G)$ using Lemma~\ref{lemStarHomo2}. First, we need a tool that uses valid inequalities of $\STAB(G)$ to generate potential valid inequalities for $\STAB(H)$.

\begin{lemma}\label{lemVertexStretchFacet}
Let $H \in \S(G)$ be a graph obtained from $G$ by $p$-stretching vertex $v \in V(G)$, and let $a^{\top}x \leq \b$ be a valid inequality of $\STAB(G)$ where $a \geq 0$. If $d \in \mR_+^p$ satisfies $\sum_{\ell=1}^p d_{\ell} \geq a_v$ and
\begin{equation}\label{lemVertexStretchFaceteq0}
\max \set{ a^{\top}x : x \in \STAB(G), x_{i} = 0~\forall i \in \set{v} \cup \bigcup_{\ell \in T} A_\ell} \leq \b - a_v + \sum_{\ell \not\in T} d_{\ell}
\end{equation}
for all $\emptyset \subset T \subset [p]$, then
\begin{equation}\label{lemVertexStretchFaceteq1}
 \sum_{i \in V(H \ominus v_0)} a_{i} x_{i} + \sum_{\ell=1}^p d_{\ell} x_{v_{\ell}} + \left(\left( \sum_{\ell=1}^p d_\ell \right) - a_v \right) x_{v_0}\leq \b - a_v +  \sum_{\ell=1}^p d_{\ell}
\end{equation}
is valid for $\STAB(H)$.
\end{lemma}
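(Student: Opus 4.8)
The plan is to verify directly that inequality~\eqref{lemVertexStretchFaceteq1} is valid for $\STAB(H)$ by checking it holds at every incidence vector of a stable set $S$ of $H$. Since $H$ is obtained from $G$ by $p$-stretching $v$, the only vertices of $H$ not in $G$ are $v_0, v_1, \ldots, v_p$ (with $v$ removed), and the stretched vertices satisfy: $v_0$ is adjacent to each $v_\ell$, and each $v_\ell$ is adjacent to all of $A_\ell$. The key observation is that in any stable set $S$ of $H$, either $v_0 \in S$ (in which case no $v_\ell \in S$) or $v_0 \notin S$ (in which case some subset of the $v_\ell$'s may be in $S$, subject to the adjacency constraints with the $A_\ell$). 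I would split into these two cases.

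For the case $v_0 \in S$: then $x_{v_0} = 1$ and all $x_{v_\ell} = 0$. The left-hand side of~\eqref{lemVertexStretchFaceteq1} reduces to $\sum_{i \in V(H\ominus v_0)} a_i x_i + \left(\sum_\ell d_\ell - a_v\right)$. Since $S \setminus \{v_0\}$ restricted to $V(G) \setminus \{v\}$ is a stable set of $G$ (namely one avoiding $v$), the partial sum $\sum_{i \neq v} a_i x_i$ is bounded by $\b$ via the original inequality applied with $x_v = 0$; this gives exactly the desired bound $\b - a_v + \sum_\ell d_\ell$. For the case $v_0 \notin S$: let $T \ce \set{\ell \in [p] : v_\ell \in S}$. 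If $T = [p]$, all stretched vertices $v_1,\ldots,v_p$ are in $S$, forcing every vertex of $\bigcup_\ell A_\ell = \Gamma_G(v)$ out of $S$; then the vector $x$ restricted to $G$ together with setting $x_v = 1$ yields a stable set of $G$ (as $v$'s neighbors are all absent), so $\sum_{i \neq v} a_i x_i + a_v \leq \b$, and the left-hand side becomes $\sum_{i} a_i x_i + \sum_\ell d_\ell \leq \b - a_v + \sum_\ell d_\ell$ using $a \geq 0$. The genuinely delicate subcase is $\emptyset \subseteq T \subsetneq [p]$ (proper, possibly empty $T$): here the left-hand side is $\sum_{i \in V(H\ominus v_0)} a_i x_i + \sum_{\ell \in T} d_\ell$, and for the vertices in $A_\ell$ with $\ell \in T$ we know $x_i = 0$ since each such $v_\ell \in S$ excludes its neighbors; this is precisely where hypothesis~\eqref{lemVertexStretchFaceteq0} is invoked, giving $\sum_{i} a_i x_i \leq \b - a_v + \sum_{\ell \notin T} d_\ell$, so the full left-hand side is at most $\b - a_v + \sum_{\ell \notin T} d_\ell + \sum_{\ell \in T} d_\ell = \b - a_v + \sum_\ell d_\ell$.

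The main obstacle I anticipate is handling the bookkeeping cleanly, especially the empty-$T$ case and the coefficient of $x_{v_0}$. When $\emptyset = T$ and $v_0 \notin S$, i.e., \emph{none} of $v_0, v_1, \ldots, v_p$ is in $S$, one must confirm that~\eqref{lemVertexStretchFaceteq0} with $T = \emptyset$ still makes sense; the stated hypothesis only requires $\emptyset \subset T \subset [p]$, so the empty case needs a separate (but easier) argument using the original inequality with $x_v=0$ freely, yielding $\sum_i a_i x_i \leq \b \leq \b - a_v + \sum_\ell d_\ell$ by the assumption $\sum_\ell d_\ell \geq a_v$. The sign condition $\sum_\ell d_\ell \geq a_v$ guarantees the coefficient of $x_{v_0}$ is nonnegative and that dropping $x_{v_0}$ terms in the $v_0 \notin S$ analysis only weakens the left-hand side, which is exactly what is needed. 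I would organize the proof as a clean case analysis on $(v_0 \in S \text{ or not})$ and, within the latter, on the set $T$, invoking~\eqref{lemVertexStretchFaceteq0} only in the proper-nonempty case and treating $T = \emptyset$ and $T = [p]$ as endpoints via the original inequality directly.
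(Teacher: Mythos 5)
Your proof is correct and takes essentially the same route as the paper's: the paper likewise splits on whether $v_0 \in S$, whether all of $v_1,\ldots,v_p$ lie in $S$ (replacing them by $v$ to get a stable set of $G$), or only a proper nonempty subset $T$ does, invoking~\eqref{lemVertexStretchFaceteq0} with that $T$ exactly in the last case. The only cosmetic difference is that the paper restricts to inclusion-wise maximal stable sets, which rules out your $T=\emptyset$ endpoint (a maximal $S$ avoiding all of $v_0,v_1,\ldots,v_p$ could absorb $v_0$), whereas you handle it directly via $\sum_{\ell=1}^p d_\ell \geq a_v$; both dispositions are valid since every coefficient in~\eqref{lemVertexStretchFaceteq1} is nonnegative.
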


\begin{proof}
Suppose $S \subseteq V(H)$ is an inclusion-wise maximal stable set in $H$. We define $S' \subseteq V(G)$ where
\[
S' \ce \begin{cases}
S \setminus \set{v_0} &\tn{if $v_0 \in S$;}\\
(S \setminus \set{v_1,\ldots, v_p} ) \cup \set{v} &\tn{if $\set{v_1, \ldots, v_p} \subseteq S$;}\\
S \setminus \set{v_1,\ldots, v_p} &\tn{if $1 \leq | \set{v_1, \ldots, v_p} \cap S| \leq p-1$.}
\end{cases}
\]
In all cases, $S'$ is a stable set in $G$, and $a^{\top}\chi_{S'} \leq \b$ implies that $\chi_S$ satisfies~\eqref{lemVertexStretchFaceteq1}. Note that the third case is when we require the assumption~\eqref{lemVertexStretchFaceteq0} with $T \ce \set{v_1, \ldots, v_p} \cap S$.
\end{proof}

Due to its similarity with the aforementioned vertex-stretching operations studied in~\cite{LiptakT03}, our vertex-stretching operation shares some similar structural properties, which we point out below.

\begin{proposition}\label{propVertexStretch}
Let $H \in \S(G)$ be a graph obtained from $G$ by $p$-stretching vertex $v \in V(G)$. Then we have the following.
\begin{itemize}
\item[(i)]
Suppose $a^{\top}x \leq \b$ is valid for $\STAB(G)$ where $a \geq 0$. Then 
\begin{equation}\label{propVertexStretcheq1}
 \sum_{i \in V(H \ominus v_0)} a_{i} x_{i} + \sum_{\ell=1}^p a_v x_{v_{\ell}} + (p-1) a_v x_{v_0}\leq \b + (p-1)a_v
\end{equation}
is valid for $\STAB(H)$.
\item[(ii)]
Let $g$ be as defined in~\eqref{eqVertexStretchg}. If $x \not\in \STAB(G)$, then $\tilde{g}(x) \not\in \STAB(H)$.
\item[(iii)]
$r_+(H) \geq r_+(G)$.
\end{itemize}
\end{proposition}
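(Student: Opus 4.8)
The plan is to prove the three parts in order, using (i) to establish (ii) and then combining (ii) with Lemma~\ref{lemStarHomo2} to obtain (iii).

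For part (i) I would simply specialize Lemma~\ref{lemVertexStretchFacet} by taking $d_\ell \ce a_v$ for every $\ell \in [p]$. Then $\sum_{\ell=1}^p d_\ell = p a_v \geq a_v$ (since $p \geq 2$ and $a \geq 0$), so the first hypothesis holds. For the second hypothesis~\eqref{lemVertexStretchFaceteq0}, note that for any $\emptyset \subset T \subset [p]$ the right-hand side equals $\b - a_v + (p-|T|)a_v = \b + (p-1-|T|)a_v$; as $T$ is a proper subset of $[p]$ we have $|T| \leq p-1$, so this quantity is at least $\b$. Meanwhile the left-hand side is a maximum of $a^{\top}x$ over a subset of $\STAB(G)$ (forcing extra coordinates to $0$ only shrinks the feasible region), hence is at most $\max\set{a^{\top}x : x \in \STAB(G)} \leq \b$ by validity. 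Thus~\eqref{lemVertexStretchFaceteq0} holds, and substituting $d_\ell = a_v$ into~\eqref{lemVertexStretchFaceteq1} yields precisely~\eqref{propVertexStretcheq1}.

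For part (ii) I would split on whether $x$ lies in the box $[0,1]^{V(G)}$. Recalling $V(H \ominus v_0) = V(G) \setminus \set{v}$ and the explicit $g$ of~\eqref{eqVertexStretchg}, one reads off $[\tilde{g}(x)]_i = x_i$ for $i \neq v$, $[\tilde{g}(x)]_{v_\ell} = x_v$ for each $\ell$, and $[\tilde{g}(x)]_{v_0} = 1 - x_v$; consequently $\tilde{g}(x) \in [0,1]^{V(H)}$ if and only if $x \in [0,1]^{V(G)}$. So if $x \not\in [0,1]^{V(G)}$, then $\tilde{g}(x) \not\in [0,1]^{V(H)} \supseteq \STAB(H)$ and we are done. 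Otherwise $x \in [0,1]^{V(G)} \setminus \STAB(G)$, and here I would invoke that $\STAB(G)$ is down-monotone, whence each of its facets is either $x_i \geq 0$ or of the form $a^{\top}x \leq \b$ with $a \geq 0$. Since $x \geq 0$, it must violate one of the latter, say $a^{\top}x > \b$. By part (i) the inequality~\eqref{propVertexStretcheq1} is valid for $\STAB(H)$, and substituting $\tilde{g}(x)$ into its left-hand side collapses the stretched terms via $p a_v x_v + (p-1)a_v(1-x_v) = a_v x_v + (p-1)a_v$, giving total value $a^{\top}x + (p-1)a_v$, which exceeds the right-hand side $\b + (p-1)a_v$. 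Hence $\tilde{g}(x)$ violates~\eqref{propVertexStretcheq1}, so $\tilde{g}(x) \not\in \STAB(H)$.

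Part (iii) is then immediate: the text already records that every $H \in \S(G)$ is star-homomorphic to $G$ under the map $g$ of~\eqref{eqVertexStretchg}, and part (ii) is exactly the hypothesis of Lemma~\ref{lemStarHomo2}, so that lemma gives $r_+(H) \geq r_+(G)$. The step needing the most care is (ii): one must guarantee that the violated inequality can be chosen with $a \geq 0$ so that part (i) applies, which is where down-monotonicity of $\STAB(G)$ enters and why the box-violation case is handled separately (it cannot be witnessed by a nonnegative-coefficient facet). The algebraic collapse is the crux that makes the chain work, and it is worth confirming that the coefficients in~\eqref{propVertexStretcheq1} were engineered so that $\tilde{g}(x)$ overshoots by exactly the same margin $a^{\top}x - \b$ that $x$ does.
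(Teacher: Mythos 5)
Your proposal is correct and follows essentially the same route as the paper's proof: part (i) by specializing Lemma~\ref{lemVertexStretchFacet} with $d_\ell \coloneqq a_v$, part (ii) by splitting off the box-violation case and then using a violated nonnegative-coefficient facet together with the algebraic collapse $p\,a_v x_v + (p-1)a_v(1-x_v) = a_v x_v + (p-1)a_v$, and part (iii) via Lemma~\ref{lemStarHomo2}. Your added justifications (down-monotonicity of $\STAB(G)$ to secure $a \geq 0$, and the explicit verification that the right-hand side of~\eqref{lemVertexStretchFaceteq0} is at least $\b$ while the left-hand side is at most $\b$) merely make explicit what the paper leaves implicit.
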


\begin{proof}
First, (i) follows readily from Lemma~\ref{lemVertexStretchFacet} with $d_{\ell} \ce a_v$ for every $\ell \in [p]$. Here, the condition~\eqref{lemVertexStretchFaceteq0} holds as the right hand side is at least $\b$ for all non-empty $T \subset [p]$. 

For (ii), first suppose $x \not \in \STAB(G)$. If $x \not\in [0,1]^{V(G)}$, then $\tilde{g}(x) \not\in [0,1]^{V(H)}$ 
(where $\tilde{g}$ is as defined in~\eqref{eq_tildeg}) and the claim follows. Otherwise, there is a facet $a^{\top}x \leq \b$ of $\STAB(G)$ where $a \geq 0$ that is violated by $x$. Now notice that for every $j \in V(H)$,
\[
[\tilde{g}(x)]_j = \begin{cases}
x_{j} & \tn{if $j \in V(H \ominus v_0)$;}\\
x_v & \tn{if $j \in \set{v_1, \ldots, v_p}$;}\\
1-x_v & \tn{if $j =v_0$.}
\end{cases}
\]
Then $\tilde{g}(x)$ violates~\eqref{propVertexStretcheq1}, and thus does not belong to $\STAB(H)$.

Finally, as we have shown that $H$ is star-homomorphic to $G$ under the function $g$ as defined in~\eqref{eqVertexStretchg}, (iii) follows directly from Lemma~\ref{lemStarHomo2}.
\end{proof}

We remark that Proposition~\ref{propVertexStretch} is a generalization of the corresponding results on types (i) and (ii) stretching from~\cite{LiptakT03}, and our proof uses many of the same ideas from similar arguments therein.

Next, we prove a result somewhat similar to Proposition~\ref{propVertexStretch}(i) that derives some facets of the stable set polytope of the stretched graph.

\begin{proposition}\label{propVertexStretch2}
Let $H \in \S(G)$ be a graph obtained from $G$ by $p$-stretching vertex $v \in V(G)$, and suppose $a^{\top}x \leq \b$ is a facet of $\STAB(G)$ where $a \geq 0$. For every $\ell \in [p]$, define $A_{\ell} \ce \Gamma_H(v_{\ell}) \setminus \set{v_0}$ and 
\begin{equation}\label{propVertexStretch2eq1}
d_{\ell} \ce a_v - \b + \max\set{ a^{\top}x : x \in \STAB(G), x_{i} = 0~\forall i \in  \set{v} \cup \bigcup_{j \in [p], j \neq \ell} A_{j} }.
\end{equation}
If the inequality~\eqref{lemVertexStretchFaceteq1} is valid for $\STAB(H)$, then it is a facet of $\STAB(H)$.
\end{proposition}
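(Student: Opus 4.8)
The plan is to establish that the inequality~\eqref{lemVertexStretchFaceteq1}, with the specific coefficients $d_\ell$ defined in~\eqref{propVertexStretch2eq1}, is a facet of $\STAB(H)$ by producing $|V(H)|$ affinely independent stable sets of $H$ whose incidence vectors satisfy~\eqref{lemVertexStretchFaceteq1} with equality. Since $\STAB(H)$ is full-dimensional (as $H$ has no isolated vertices forcing equalities, which we may assume or handle separately), showing a valid inequality is tight at $|V(H)| = |V(G)| + p$ affinely independent points is equivalent to showing it defines a facet. The first step is to exploit that $a^\top x \leq \b$ is itself a facet of $\STAB(G)$, so there exist $|V(G)|$ affinely independent stable sets $S_1, \ldots, S_{|V(G)|}$ of $G$ with $a^\top \chi_{S_m} = \b$ for each $m$; these are the raw material from which I will manufacture tight stable sets in $H$.

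The core of the argument is a lifting correspondence between tight stable sets of $G$ and tight stable sets of $H$. Given a stable set $S$ of $G$ attaining $a^\top \chi_S = \b$, I would lift it to a stable set of $H$ by deciding how the cluster $\{v_0, v_1, \ldots, v_p\}$ is populated, depending on whether $v \in S$. The choice of $d_\ell$ in~\eqref{propVertexStretch2eq1} is engineered precisely so that, when we take $x_{v_\ell} = 1$ for all $\ell$ and $x_{v_0} = 0$ (mirroring $v \in S$), the left-hand side of~\eqref{lemVertexStretchFaceteq1} reproduces $\b + (\text{the added constant})$; one verifies directly that with $d_\ell$ as defined, a tight stable set of $G$ containing $v$ lifts to a tight stable set of $H$ containing all of $v_1, \ldots, v_p$. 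Separately, the definition~\eqref{propVertexStretch2eq1} says $d_\ell$ equals $a_v - \b$ plus the maximum of $a^\top x$ over stable sets of $G$ avoiding $v$ and all of $\bigcup_{j \neq \ell} A_j$; this is exactly the slack that makes a stable set achieving that inner maximum, lifted by setting only $x_{v_\ell} = 1$ (and $x_{v_0} = 0$, $x_{v_{\ell'}} = 0$ for $\ell' \neq \ell$), tight for~\eqref{lemVertexStretchFaceteq1}. Thus each $d_\ell$ simultaneously guarantees the existence of a tight stable set ``activating'' coordinate $v_\ell$ alone. I would also include the lift setting $x_{v_0} = 1$ (all $v_\ell = 0$), which corresponds to a tight stable set of $G$ with $v$ removed, checking it is tight as well.

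The main obstacle, and the part requiring the most care, is the affine-independence bookkeeping. I would partition the $|V(G)| + p$ required vectors into three groups: (a) lifts of the $|V(G)|$ affinely independent tight stable sets of $G$ via a fixed, consistent lifting rule (e.g.\ always lift $v \in S$ to all $v_\ell = 1, v_0 = 0$, and $v \notin S$ to $v_0 = 1, v_\ell = 0$), which preserves affine independence on the $V(G)$-coordinates and pins down the $v_0$-coordinate; and (b) additional vectors that vary the $v_\ell$-coordinates individually, obtained from the maximizers in~\eqref{propVertexStretch2eq1}. The delicate point is confirming that the group-(b) vectors are affinely independent from each other and from group~(a): the vectors activating $v_\ell$ alone differ in the $v_\ell$-coordinate while agreeing on the pattern at $v_0$, so a careful triangular/elimination argument on the cluster coordinates $\{v_0, v_1, \ldots, v_p\}$ should isolate them. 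I expect the verification that the maximizing stable sets in~\eqref{propVertexStretch2eq1} actually exist and lift to genuinely tight, valid (stable) configurations in $H$ — and that there are enough of them to cover all $p$ extra dimensions without collapsing — to be the technical heart. A clean way to organize this is to observe that~\eqref{lemVertexStretchFaceteq1} restricted to the face $x_{v_0} = 0$ decomposes the cluster contributions, reducing the independence check to a block-triangular structure between the $V(G \ominus \{v\})$-block (handled by the facet hypothesis on $G$) and the $\{v_1, \ldots, v_p\}$-block (handled by the individually-activating lifts).
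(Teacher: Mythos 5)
Your overall framework matches the paper's: lift the $|V(G)|$ affinely independent tight stable sets of $G$ exactly as you describe ($v\in S$ becomes $\set{v_1,\ldots,v_p}$, $v\notin S$ gets $v_0$ added), and then manufacture $p$ additional tight stable sets from maximizers of the optimization problems in~\eqref{propVertexStretch2eq1}. However, your construction of those $p$ extra sets is wrong, and this is the heart of the argument rather than a detail. You propose lifting a maximizer $D_\ell$ of~\eqref{propVertexStretch2eq1} by ``activating $v_\ell$ alone,'' i.e.\ taking $D_\ell \cup \set{v_\ell}$. This fails twice. First, it need not be a stable set: $D_\ell$ is only constrained to avoid $\set{v}\cup\bigcup_{j\neq\ell}A_j$, so it may well contain vertices of $A_\ell = \Gamma_H(v_\ell)\setminus\set{v_0}$, which are adjacent to $v_\ell$. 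Second, even when it happens to be stable, the tightness arithmetic does not close: the left-hand side of~\eqref{lemVertexStretchFaceteq1} evaluates to $a^{\top}\chi_{D_\ell} + d_\ell = (\beta - a_v + d_\ell) + d_\ell = \beta - a_v + 2d_\ell$, whereas the right-hand side is $\beta - a_v + \sum_{j=1}^p d_j$, so equality would force $2d_\ell = \sum_j d_j$ for every $\ell$, which is false in general. Making the singleton pattern tight would instead require $a^{\top}\chi_{D} = \beta - a_v + \sum_{j\neq\ell}d_j$, i.e.\ that the constraint~\eqref{lemVertexStretchFaceteq0} with $T=\set{\ell}$ holds with equality --- something the definition of the $d_j$'s does not provide.

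The correct lift is the \emph{complementary} pattern: take $S'_{n+\ell} \ce D_\ell \cup \set{v_j : j\in[p],\ j\neq \ell}$, which is what the paper does. Stability is then immediate ($D_\ell$ avoids $\bigcup_{j\neq\ell}A_j$, and $v_0\notin S'_{n+\ell}$), and tightness reduces to $a^{\top}\chi_{D_\ell} + \sum_{j\neq\ell}d_j = \beta - a_v + \sum_{j}d_j$, which is precisely the definition $d_\ell = a_v - \beta + a^{\top}\chi_{D_\ell}$. With these sets, the affine-independence check you worried about is routine: each $S'_{n+\ell}$ is the unique set in the list whose incidence vector vanishes at both $v_0$ and $v_\ell$ while taking value $1$ at the other $v_j$'s, and the first $n$ lifted vectors inherit independence from the facet hypothesis on $G$. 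So your block-triangular instinct is sound, but it only works once the cluster pattern is corrected from $\set{v_\ell}$ to $\set{v_j : j \neq \ell}$.
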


\begin{proof}
For convenience, let $n \ce |V(G)|$. Since $a^{\top}x \leq \b$ is a facet of $\STAB(G)$, there exist stable sets $S_1, \ldots, S_{n} \subseteq V(G)$ whose incidence vectors are affinely independent and all satisfy $a^{\top}x \leq \b$ with equality. Also, for every $\ell \in [p]$, let $D_{\ell}$ be a stable set that attains the maximum in the definition of $d_{\ell}$ in~\eqref{propVertexStretch2eq1}. We then define $S_1', \ldots, S_{n+p}'$ as follows. For every $i \in [n]$,
\[
S_i' \ce
\begin{cases}
S_i \cup \set{v_0} &\tn{if $v \not\in S_i$;}\\
(S_i \setminus \set{v}) \cup \set{v_1, \ldots, v_p} &\tn{if $v \in S_i$.}\\
\end{cases}
\]
We also define 
\[
S_{n+i}' \ce D_{i} \cup \set{ v_ j : j \in [p], j \neq i}
\]
for all $i \in [p]$.

Observe that $S_1', \ldots, S_{n+p}'$ must all be stable sets in $H$. Also, using the fact that incidence vectors of $S_1, \ldots, S_n$ are affinely independent and satisfy $a^{\top}x \leq \b$ with equality, we see that the incidence vectors of $S_1', \ldots, S_{n+p}'$ are affinely independent and all satisfy~\eqref{lemVertexStretchFaceteq1} with equality. 

Thus, if we know that~\eqref{lemVertexStretchFaceteq1} is valid for $\STAB(H)$, it must be a facet.
\end{proof}

The special case of $p=2$ in Proposition~\ref{propVertexStretch2} is particularly noteworthy:

\begin{corollary}\label{corVertexStretch2}
Let $H \in \S_2(G)$ be a graph obtained from $G$ by $2$-stretching vertex $v \in V(G)$, and suppose $a^{\top}x \leq \b$ is a facet of $\STAB(G)$ where $a \geq 0$. Define $A_1 \ce \Gamma_H(v_1) \setminus \set{v_0}, A_2 \ce \Gamma_H(v_2) \setminus \set{v_0}$, as well as the quantities
\begin{align*}
d_1 &\ce a_v - \b + \max\set{ a^{\top}x : x \in \STAB(G), x_{i} = 0~\forall i \in  \set{v} \cup A_2}, \\
d_2 &\ce a_v - \b + \max\set{ a^{\top}x : x \in \STAB(G), x_{i} = 0~\forall i \in  \set{v} \cup A_1}.
\end{align*}
If $d_1+d_2 \geq a_v$, then~\eqref{lemVertexStretchFaceteq1} is a facet of $\STAB(H)$.
\end{corollary}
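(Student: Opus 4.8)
The plan is to obtain the corollary as the $p=2$ specialization of Proposition~\ref{propVertexStretch2}. That proposition already does the hard combinatorial work (exhibiting the affinely independent tight stable sets of $H$); its only standing hypothesis is that the inequality~\eqref{lemVertexStretchFaceteq1} be valid for $\STAB(H)$. So the whole task reduces to proving that validity, and I would get it from Lemma~\ref{lemVertexStretchFacet}. The first thing I would check is that the $d_1, d_2$ of the corollary are exactly the quantities~\eqref{propVertexStretch2eq1} of Proposition~\ref{propVertexStretch2} when $p=2$: there $\bigcup_{j \neq 1} A_j = A_2$ and $\bigcup_{j \neq 2} A_j = A_1$, matching the two displayed maxima. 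Hence, once validity is in hand, Proposition~\ref{propVertexStretch2} immediately delivers facetness.

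To apply Lemma~\ref{lemVertexStretchFacet} I need its three hypotheses: $d \in \mR_+^2$, the inequality $\sum_{\ell} d_{\ell} \ge a_v$, and condition~\eqref{lemVertexStretchFaceteq0} for each proper nonempty $T \subset \set{1,2}$, namely $T = \set{1}$ and $T = \set{2}$. The convenient feature to point out is that~\eqref{lemVertexStretchFaceteq0} is built into the very definition of $d_1, d_2$ and holds with equality: rearranging the formula for $d_2$ gives $\max\set{a^{\top}x : x \in \STAB(G),\, x_i = 0~\forall i \in \set{v} \cup A_1} = \b - a_v + d_2$, which is precisely~\eqref{lemVertexStretchFaceteq0} for $T = \set{1}$, and symmetrically the formula for $d_1$ is the $T = \set{2}$ instance. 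The inequality $\sum_{\ell} d_{\ell} \ge a_v$ is literally the corollary's hypothesis $d_1 + d_2 \ge a_v$. Thus the only hypothesis still to verify is $d \in \mR_+^2$, i.e. $d_1, d_2 \ge 0$.

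Establishing $d_1, d_2 \ge 0$ is the step I expect to require the most care, because the naive bound --- delete the forbidden vertices from an $a$-maximum stable set --- loses more than $a_v$ in weight whenever that set avoids $v$ but meets $A_1$ (or $A_2$). The clean route exploits that $a^{\top}x \le \b$ is facet-defining with $a \ge 0$: if $a_v > 0$, then some tight stable set $S^{*}$ must contain $v$, for otherwise every tight stable set would lie in $\set{x : x_v = 0}$, forcing the facet to coincide with $a_v x_v \le 0$, an inequality violated by the stable set $\set{v}$. Since $A_1, A_2 \subseteq \Gamma_G(v)$, any such $S^{*}$ is disjoint from both $A_1$ and $A_2$, so $S^{*} \setminus \set{v}$ witnesses that the relevant maximum is at least $\b - a_v$, giving $d_1, d_2 \ge 0$. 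In the degenerate case $a_v = 0$, each defining maximum is at most $\b$, so $d_1, d_2 \le 0$, and then the hypothesis $d_1 + d_2 \ge a_v = 0$ forces $d_1 = d_2 = 0$. With $d \in \mR_+^2$ confirmed, Lemma~\ref{lemVertexStretchFacet} yields validity of~\eqref{lemVertexStretchFaceteq1}, and Proposition~\ref{propVertexStretch2} promotes it to a facet of $\STAB(H)$.
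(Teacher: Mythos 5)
Your proposal is correct and follows essentially the same route as the paper's proof: specialize Proposition~\ref{propVertexStretch2} to $p=2$, and obtain the needed validity of~\eqref{lemVertexStretchFaceteq1} from Lemma~\ref{lemVertexStretchFacet} by observing that condition~\eqref{lemVertexStretchFaceteq0} for $T=\set{1}$ and $T=\set{2}$ is built into the definitions of $d_1,d_2$, while $d_1+d_2 \geq a_v$ is the stated hypothesis. Your additional verification that $d_1,d_2 \geq 0$ (via a tight stable set containing $v$ when $a_v>0$, since $A_1,A_2 \subseteq \Gamma_G(v)$, and forcing $d_1=d_2=0$ when $a_v=0$) correctly fills in a hypothesis of Lemma~\ref{lemVertexStretchFacet} that the paper's one-line proof leaves implicit.
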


\begin{proof}
This is largely a specialization of Proposition~\ref{propVertexStretch2} to the case $p=2$. Notice that the additional assumption of~\eqref{lemVertexStretchFaceteq1} being valid is not necessary in this case because $\set{1,2}$ has exactly two non-empty and proper subsets, and so the definition of  $d_1, d_2$ herein are enough to guarantee that the assumption~\eqref{lemVertexStretchFaceteq0} is met.
\end{proof}

Finally, we close this section by mentioning a ``reverse'' implication of Proposition~\ref{propVertexStretch}.

\begin{corollary}\label{corVertexStretch}
Let $H$ be a graph, and $v_0 \in V(H)$. If $\Gamma_H(v_0)$ is a stable set, then $r_+(G) \leq r_+(H)$,
where $G$ is the graph obtained from $H$ by contracting the set of vertices $\Gamma_H[v_0]$.
\end{corollary}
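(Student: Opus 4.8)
The plan is to prove the corollary by exhibiting $H$ as a vertex-stretching of $G$ and then quoting Proposition~\ref{propVertexStretch}(iii), which already supplies $r_+(H) \geq r_+(G)$ whenever $H \in \S(G)$. Thus the whole argument reduces to recognizing the contraction of $\Gamma_H[v_0]$ as the inverse of a stretching, and no new inequality work is needed.

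To set this up, I would write $\Gamma_H(v_0) = \set{v_1, \ldots, v_p}$. Because $\Gamma_H(v_0)$ is a stable set, the vertices $v_1, \ldots, v_p$ are pairwise non-adjacent, so for each $\ell$ the set $A_\ell \ce \Gamma_H(v_\ell) \setminus \set{v_0}$ is disjoint from $\Gamma_H[v_0]$, i.e.\ $A_\ell \subseteq V(H) \setminus \Gamma_H[v_0]$. Letting $v$ denote the vertex into which $\Gamma_H[v_0]$ is contracted to form $G$, I would first verify the structural identities $V(G) = (V(H) \setminus \Gamma_H[v_0]) \cup \set{v}$, that $G - v = H - \Gamma_H[v_0]$ (contraction alters no edge outside the closed neighborhood, and the edges $\set{v_0, v_\ell}$ merely become self-loops that are discarded), and that $\Gamma_G(v) = \bigcup_{\ell=1}^p A_\ell$ (since $v_0$ has no neighbor outside $\Gamma_H[v_0]$, only the $A_\ell$ contribute). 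With these in hand, applying the $p$-stretching operation to $v$ in $G$ using exactly the sets $A_1, \ldots, A_p$ reintroduces vertices $v_0, v_1, \ldots, v_p$, joining each $v_\ell$ to $v_0$ and to $A_\ell$; this is precisely the edge set of $H$ incident to $\Gamma_H[v_0]$, so the stretched graph is $H$ itself. Hence $H \in \S_p(G) \subseteq \S(G)$, and Proposition~\ref{propVertexStretch}(iii) finishes the argument.

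The step needing the most care is confirming that the stretching is genuinely well-defined, since its definition requires $p \geq 2$ and each $A_\ell$ non-empty, and both can fail degenerately. If $\deg_H(v_0) = 0$ then $G = H$ and the bound is trivial; if $\deg_H(v_0) = 1$ then contracting the pendant edge $\set{v_0, v_1}$ coincides with deleting $v_0$, so $G \cong H - v_0$ and $r_+(G) \leq r_+(H)$ follows from induced-subgraph monotonicity (the consequence of Lemma~\ref{lemfacet}). The remaining nuisance is an empty $A_\ell$, i.e.\ a neighbor $v_\ell$ whose only neighbor is $v_0$; such a $v_\ell$ is a pendant vertex of $H$. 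I would peel these off first: deleting a pendant neighbor of $v_0$ leaves the contracted graph $G$ unchanged (it does not touch any $A_m$ with $m \neq \ell$), while shrinking $H$ to an induced subgraph, so repeating this yields a graph $H''$ with $G = G''$, all non-empty stretching sets, and $r_+(H'') \leq r_+(H)$. Applying the generic argument to $H''$ then gives $r_+(G) = r_+(G'') \leq r_+(H'') \leq r_+(H)$, which completes the proof once the boundary cases are dispatched.
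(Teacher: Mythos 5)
Your proposal is correct and follows essentially the same route as the paper: recognize the contraction of $\Gamma_H[v_0]$ as the inverse of a $p$-stretching with $A_\ell = \Gamma_H(v_\ell) \setminus \set{v_0}$, so that $H \in \S(G)$, and then invoke Proposition~\ref{propVertexStretch}(iii). Your additional handling of the degenerate cases ($\deg_H(v_0) \leq 1$ and pendant neighbours of $v_0$ giving empty $A_\ell$, dispatched via induced-subgraph monotonicity from Lemma~\ref{lemfacet}) is a careful refinement of boundary cases that the paper's one-line justification passes over silently.
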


\begin{proof}
Let $v_1, \ldots, v_p \in V(H)$ be the neighbours of $v_0$, and define $A_i \ce \Gamma_H(v_i) \setminus \set{v_0}$ for every $i \in [p]$. Then there are two cases: First, if there exists $i \in [p]$ where $A_i = \bigcup_{j \in [p]} A_j$, then $G$ is isomorphic to the subgraph of $H$ induced by $(V(H) \setminus \Gamma_H[v_0]) \cup \set{v_i})$, and thus $r_+(G) \leq r_+(H)$.

Next, if there does not exist  $i \in [p]$ where $A_i = \bigcup_{j \in [p]} A_j$, then $H \in \S_p(G)$, and it follows that $r_+(G) \leq r_+(H)$.
\end{proof}

\def\x{270 - 180/6}
\def\z{360/6}
\def\sc{1.5}
\begin{figure}[ht!]
\begin{center}
\begin{tabular}{cc}

\begin{tikzpicture}[scale=\sc, thick,main node/.style={circle, minimum size=4mm, inner sep=0.1mm,draw,font=\tiny\sffamily}]
\node[main node] at ({cos(\x+(0)*\z)},{sin(\x+(0)*\z)}) (1) {$3$};
\node[main node] at ({cos(\x+(1)*\z)},{sin(\x+(1)*\z)}) (2) {$2$};
\node[main node] at ({cos(\x+(2)*\z)},{sin(\x+(2)*\z)}) (3) {$1$};
\node[main node] at ({cos(\x+(3)*\z)},{sin(\x+(3)*\z)}) (4) {$8$};
\node[main node] at ({cos(\x+(4)*\z)},{sin(\x+(4)*\z)}) (5) {$7$};
\node[main node] at ({cos(\x+(5)*\z)},{sin(\x+(5)*\z)}) (6) {$4$};
\node[main node] at (0,0) (7) {$5$};
\node[main node] at (0, 1.2) (8) {$6$};

 \path[every node/.style={font=\sffamily}]
(1) edge (2)
(2) edge (3)
(3) edge (4)
(4) edge (5)
(5) edge (6)
(6) edge (1)
(7) edge (3)
(7) edge (4)
(7) edge (5)
(7) edge (6)
(8) edge (4)
(8) edge (5);
\end{tikzpicture}

&

\begin{tikzpicture}[scale=\sc,  thick,main node/.style={circle, minimum size=4mm, inner sep=0.1mm,draw,font=\tiny\sffamily}]
\node[main node] at ({cos(\x+(0)*\z)},{sin(\x+(0)*\z)}) (1) {$3$};
\node[main node] at ({cos(\x+(1)*\z)},{sin(\x+(1)*\z)}) (2) {$2$};
\node[main node] at ({cos(\x+(2)*\z)},{sin(\x+(2)*\z)}) (3) {$1$};
%\node[main node] at ({cos(\x+(3)*\z)},{sin(\x+(3)*\z)}) (4) {};
%\node[main node] at ({cos(\x+(4)*\z)},{sin(\x+(4)*\z)}) (5) {};
\node[main node] at ({cos(\x+(5)*\z)},{sin(\x+(5)*\z)}) (6) {$4$};
\node[main node] at (0,0) (7) {$5$};
\node[main node] at (0, 1.2) (8) {$6$};

 \path[every node/.style={font=\sffamily}]
(1) edge (2)
(2) edge (3)
(3) edge (8)
(8) edge (6)
%(5) edge (6)
(6) edge (1)
(7) edge (3)
(7) edge (6)
(7) edge (8);
\end{tikzpicture}
\\
$H$ & $G$
\end{tabular}
\caption{An example in which contracting the closed neighborhood of a vertex increases the graph's $\LS_+$-rank}\label{figContract}
\end{center}
\end{figure}
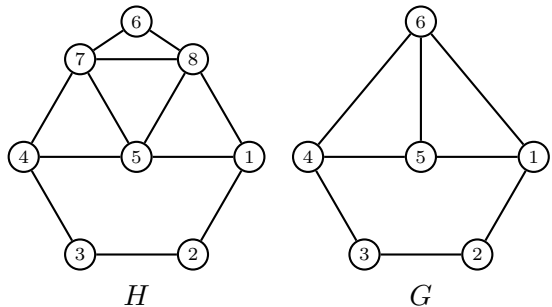

\begin{example}
In general, it is possible that contracting the closed neighborhood of a vertex results in an increase in the graph's $\LS_+$-rank. For example, notice that the graph $H$ in Figure~\ref{figContract} is the union of two $\LS_+$-rank-$1$ graphs whose intersection is the cut clique $\set{7,8}$, and thus it follows from Proposition~\ref{propCliqueCut} that $r_+(H) = 1$. However, contracting $\Gamma_H[6]$ in $H$ results in $G$, which is isomorphic to $G_{2,1}$, and so $r_+(G)=2$.
\end{example}

\section{$\LS_+$-minimal graphs via $2$-stretching cliques}\label{sec5}

In this section, we are interested in studying graphs with the fewest number of vertices with a given $\LS_+$-rank. Given $\ell \in \mN$, define $n_+(\ell)$ to be the minimum number of vertices on which there exists a graph $G$ with $r_+(G) = \ell$. It follows immediately from Theorem~\ref{thmNover3} that $n_+(\ell) \geq 3\ell$ for every $\ell \in \mN$. On the other hand, Theorem~\ref{thmHk} implies that $n_+(\ell) \leq 16\ell$. Thus, we know that $n_+(\ell) = \Theta(\ell)$ asymptotically.

Recall that a graph $G$ is $\ell$-minimal if $r_+(G) = \ell$ and $|V(G)| = 3\ell$. The following result establishes a close connection between $\ell$-minimal graphs and $2$-stretching vertices. 

\begin{theorem}
Let $H$ be an $\ell$-minimal graph where $\ell \geq 2$. Then there exists a graph $G$ where $H \in \S_2(G)$.
\end{theorem}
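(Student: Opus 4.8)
The plan is to run the ``reverse'' construction recorded in the remark preceding Corollary~\ref{corVertexStretch}: if $H$ has a vertex $v_0$ whose open neighborhood $\Gamma_H(v_0)$ is a stable set, then $H \in \S(G)$ for the graph $G$ obtained by contracting $\Gamma_H[v_0]$, and in fact $H \in \S_p(G)$ with $p = |\Gamma_H(v_0)|$. To land specifically in $\S_2(G)$ I need $v_0$ to have degree exactly $2$, and for the recovered operation to be a legitimate $2$-stretching I need the resulting sets $A_\ell = \Gamma_H(v_\ell)\setminus\set{v_0}$ to be non-empty, i.e.\ both neighbors of $v_0$ should have degree at least $2$ in $H$. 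So the entire statement reduces to producing a vertex $v_0$ of degree exactly $2$ whose two neighbors are non-adjacent and each carry an additional edge.

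First I would use the cut-clique tool Proposition~\ref{propCliqueCut} to rule out small cliquey neighborhoods. Suppose a vertex $w$ with $\deg_H(w) \le 2$ has $\Gamma_H(w)$ inducing a clique $C$ (this is automatic when $\deg_H(w)\le 1$, and for $\deg_H(w)=2$ it says the two neighbors are adjacent). Taking $S_1 = \set{w}$, $C = \Gamma_H(w)$, and $S_2 = V(H)\setminus\Gamma_H[w]$ gives a valid clique cut, with $S_2 \neq \emptyset$ since $|\Gamma_H[w]| \le 3 < 3\ell = |V(H)|$ for $\ell \ge 2$. Proposition~\ref{propCliqueCut} then yields $r_+(H) = \max\set{ r_+(H-w),\, r_+(H[\Gamma_H[w]]) }$. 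Here $H[\Gamma_H[w]]$ is a complete graph, hence perfect with $r_+ \le 1$, while $H - w$ has $3\ell - 1$ vertices, so $r_+(H-w) \le \lfloor (3\ell-1)/3\rfloor = \ell - 1$ by Theorem~\ref{thmNover3}. For $\ell \ge 2$ this gives $r_+(H) \le \max\set{\ell-1, 1} = \ell - 1$, contradicting $r_+(H)=\ell$. Consequently no low-degree vertex of $H$ has a clique neighborhood; in particular $\delta(H)\ge 2$, and every degree-$2$ vertex of $H$ has non-adjacent neighbors.

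Second I would guarantee that a degree-$2$ vertex actually exists, via the destruction bound Theorem~\ref{thmDeleteDestroy}(i). Since $r_+(H) = \ell \le \max_{i} r_+(H \ominus i) + 1$, some vertex $i^{\ast}$ satisfies $r_+(H\ominus i^{\ast}) \ge \ell - 1$. As $H \ominus i^{\ast} = H - \Gamma_H[i^{\ast}]$ has $3\ell - 1 - \deg_H(i^{\ast})$ vertices, Theorem~\ref{thmNover3} forces $\ell - 1 \le \lfloor (3\ell - 1 - \deg_H(i^{\ast}))/3\rfloor$, which rearranges to $\deg_H(i^{\ast}) \le 2$. Combined with $\delta(H) \ge 2$ from the previous step, $i^{\ast}$ has degree exactly $2$.

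Finally I would assemble the pieces: set $v_0 = i^{\ast}$ with neighbors $v_1, v_2$. By the first step these are non-adjacent, so $\Gamma_H(v_0) = \set{v_1, v_2}$ is a stable set, and since $\delta(H) \ge 2$ each $A_\ell = \Gamma_H(v_\ell)\setminus\set{v_0}$ is non-empty. Contracting $\Gamma_H[v_0]$ produces a graph $G$ in which $2$-stretching the contracted vertex with sets $A_1, A_2$ returns $H$, so $H \in \S_2(G)$, as desired. I expect the main obstacle to be the clique-neighborhood step: the substance lies in verifying that the clique cut is valid (disjointness, $S_2 \neq \emptyset$, and $\Gamma_H[w]$ complete) and in the fact that $\max\set{\ell-1, 1} = \ell-1$ precisely when $\ell \ge 2$, which is exactly where the hypothesis $\ell \ge 2$ is consumed; the degree bookkeeping then fits together cleanly.
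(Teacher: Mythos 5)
Your first two steps are sound and essentially coincide with the paper's own argument: the destruction bound of Theorem~\ref{thmDeleteDestroy}(i) combined with Theorem~\ref{thmNover3} produces a vertex $v_0$ of degree at most $2$, and Proposition~\ref{propCliqueCut} rules out both $\deg_H(v_0) \leq 1$ (cut vertex) and adjacent neighbours (cut edge); your consolidation of these two exclusions into one clique-neighborhood lemma is fine. The gap is in your final assembly step, where you claim that recovering $H$ as a $2$-stretching of the contraction $G$ only requires the sets $A_\ell = \Gamma_H(v_\ell) \setminus \set{v_0}$ to be non-empty. The paper's definition of $p$-stretching requires each $A_\ell$ to be a non-empty \emph{proper} subset of $\Gamma_G(v)$ (the symbol $\subset$ is strict throughout the paper; compare ``$\emptyset \subset T \subset [p]$'' in Lemma~\ref{lemVertexStretchFacet}). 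For $p=2$ with $A_1 \cup A_2 = \Gamma_G(v)$, properness of $A_2$ is equivalent to $A_1 \setminus A_2 \neq \emptyset$, i.e.\ to $\Gamma_H(v_1) \setminus \Gamma_H(v_2) \neq \emptyset$, and symmetrically for $A_1$. Your degree conditions do not deliver this: nothing in your argument excludes the configuration $\Gamma_H(v_1) \setminus \set{v_0} \subseteq \Gamma_H(v_2)$, in which case the contraction forces $A_2 = \Gamma_G(v)$ and $H \notin \S_2(G)$ under the paper's definition, even though both $A_1$ and $A_2$ are non-empty.

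Proving exactly these two non-empty set differences is the substantive part of the paper's proof, and it uses $\ell$-minimality well beyond degree bookkeeping. The paper takes a facet $a^{\top}x \leq \b$ of $\STAB(H)$ of $\LS_+$-rank $\ell$, notes that minimality forces $a$ to have full support (otherwise, via Lemma~\ref{lemfacet} and Theorem~\ref{thmNover3}, a proper induced subgraph on fewer than $3\ell$ vertices would already have rank $\ell$), and takes $3\ell$ affinely independent tight stable sets. Full support makes each of them inclusion-wise maximal, so each meets $\set{v_0,v_1,v_2}$ in one of the four patterns $\set{v_0}$, $\set{v_1}$, $\set{v_2}$, $\set{v_1,v_2}$; affine independence forces all four patterns to occur, and maximality of a tight set containing $v_1$ but not $v_2$ then produces a vertex of $\Gamma_H(v_2) \setminus \Gamma_H(v_1)$ (and symmetrically). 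Your proposal contains no substitute for this argument, so as written it establishes the contraction--stretch correspondence only under a non-strict reading of $\subset$ that the paper does not use; the crux of the theorem is missing.
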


\begin{proof}
First, since $r_+(H) = \ell$, there exists a vertex $v_0$ where $r_+(H \ominus v_0) = \ell-1$. This implies that $|V(H \ominus v_0)| \geq 3\ell-3$, and thus $\deg(v_0) \leq 2$. If $\deg(v_0) = 1$, $H$ would contain a cut vertex, and Proposition~\ref{propCliqueCut} implies there would be a proper subgraph of $H$ with the same $\LS_+$-rank as $H$. Thus, we obtain that $\deg(v_0) =2$. Let $v_1, v_2$ denote the two neighbours of $v_0$. If $\set{v_1,v_2} \in E(H)$, then the edge would form a cut clique in $H$, and Proposition~\ref{propCliqueCut} again implies that there would a proper subgraph of $H$ with the same $\LS_+$-rank as $H$. Thus, $\set{v_1, v_2}$ must be a stable set in $H$.

Next, define $G$ to be the graph obtained from $H$ by contracting $\set{v_0, v_1, v_2}$, and label the new vertex $v$. We claim that $H$ can be obtained from $G$ by $2$-stretching $v \in V(G)$ with $A_1 \ce \Gamma_H(v_1) \setminus \set{v_0}$ and $A_2 \ce \Gamma_H(v_1) \setminus \set{v_0}$. To prove this, it suffices to show that $A_1, A_2$ are both proper subsets of $A_1 \cup A_2$. Equivalently, we need to show that $\Gamma_H(v_1) \setminus \Gamma_H(v_2)$ and $\Gamma_H(v_2) \setminus \Gamma_H(v_1)$ are non-empty.

Let $a^{\top}x \leq \b$ be a facet of $\STAB(H)$ of $\LS_+$-rank $\ell$. Then there must be stable sets $S_1, \ldots, S_{3\ell} \subseteq V(H)$ whose incidence vectors are affinely independent and all satisfy $a^{\top}x \leq \b$ with equality. Also, since $H$ is $\ell$-minimal, $a$ must have full support. Therefore, $S_i$ must be inclusion-wise maximal for all $i \in [3\ell]$, and hence belongs to one of the following cases:
\begin{enumerate}
\item
$S_i \cap \set{v_0, v_1, v_2} = \set{v_0}$;
\item
$S_i \cap \set{v_0, v_1, v_2} = \set{v_1}$;
\item
$S_i \cap \set{v_0, v_1, v_2} = \set{v_2}$;
\item
$S_i \cap \set{v_0, v_1, v_2} = \set{v_1, v_2}$.
\end{enumerate}

Since $\chi_{S_1}, \ldots, \chi_{S_{3\ell}}$ are affinely independent, one of these stable sets contains $v_0$ and belongs to Case (1), so assume without loss of generality that $v_0 \in S_1$. Now consider the matrix $A$ formed by the row vectors $(\chi_{S_2} - \chi_{S_1})^{\top}, (\chi_{S_3} - \chi_{S_1})^{\top}, \ldots, (\chi_{S_{3\ell}} - \chi_{S_1})^{\top}$. Since $S_1, \ldots, S_{3\ell}$ are affinely independent, $A$ must have linearly independent rows. This means that, if we focus on the submatrix $A'$ of $A$ which consists of just the three columns corresponding to $v_0, v_1$, and $v_2$, $A'$ must have rank $3$. This implies that there must exist at least one $S_i$ belonging to each of Cases (2), (3), and (4). 

Now consider a stable set $S_i$ that belongs to Case (2). Since $S_i$ is inclusion-wise maximal, it must contain a vertex that is adjacent to $v_2$ and not $v_1$, and so we have found a vertex that belongs to $\Gamma_H(v_2) \setminus \Gamma_H(v_1)$. The same argument applied to an $S_i$ from Case (3) gives a vertex that belongs to $\Gamma_H(v_1) \setminus \Gamma_H(v_2)$. This finishes the proof.
\end{proof}

Thus, for the remainder of this section, we will focus on $2$-stretching vertices, and study when that helps (and does not help) in generating $\ell$-minimal graphs. Since we will be studying graphs obtained from applying a sequence of $2$-stretching operations, we recursively define 
\[
\S_2^k(G) \ce \bigcup_{G' \in \S_2^{k-1}(G)} \S_2(G')
\]
for every graph $G$ and integer $k \geq 1$. That is, $\S_2^k(G)$ is the set of graphs that can be obtained from $G$ by a sequence of $k$ $2$-stretching operations. We also let $\S_2^0(G) \ce \set{G}$. The following is a basic property of the graphs in $\S_2^k(G)$.

\begin{lemma}\label{05stretchalpha}
Let $G$ be a graph, and let $H \in \S_2^k(G)$. Then $\a(H) = \a(G) +k$.
\end{lemma}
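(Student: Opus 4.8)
The plan is to prove this by induction on $k$, reducing everything to the single-step claim that one $2$-stretching operation increases the stability number by exactly $1$. For the base case $k=0$, we have $\S_2^0(G) = \set{G}$ and the equality $\a(H) = \a(G)$ is trivial. For the inductive step, any $H \in \S_2^k(G)$ arises as $H \in \S_2(G')$ for some $G' \in \S_2^{k-1}(G)$, where by the inductive hypothesis $\a(G') = \a(G) + (k-1)$. So it suffices to establish the single-step statement: if $H$ is obtained from $G'$ by $2$-stretching a vertex $v$, then $\a(H) = \a(G') + 1$.

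For the single-step claim, I would prove the two inequalities separately. To show $\a(H) \geq \a(G') + 1$, I take a maximum stable set $S$ in $G'$ and build a stable set in $H$ of size $|S| + 1$. Recall that $2$-stretching replaces $v$ by three vertices $v_0, v_1, v_2$, with $v_0$ joined to $v_1, v_2$, and $v_\ell$ joined to the vertices of $A_\ell \subseteq \Gamma_{G'}(v)$. If $v \notin S$, then $S \cup \set{v_0}$ is stable in $H$ (since $v_0$'s only neighbors are $v_1, v_2$, which are not in $S$), giving the desired size $|S|+1$. If $v \in S$, then none of $v$'s neighbors in $G'$ lie in $S$, so in particular no vertex of $A_1 \cup A_2 = \Gamma_{G'}(v)$ is in $S$; hence $(S \setminus \set{v}) \cup \set{v_1, v_2}$ is stable in $H$ and again has size $|S| + 1$.

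For the reverse inequality $\a(H) \leq \a(G') + 1$, I take a maximum stable set $T$ in $H$ and produce a stable set in $G'$ of size $|T| - 1$. The natural construction is exactly the map used in the proof of Lemma~\ref{lemVertexStretchFacet}: project $T$ back to $G'$ by setting $T'$ to be $T \setminus \set{v_0}$ when $v_0 \in T$, to be $(T \setminus \set{v_1, v_2}) \cup \set{v}$ when both $v_1, v_2 \in T$, and to be $T \setminus \set{v_1, v_2}$ otherwise. One checks $T'$ is stable in $G'$ in each case, and that $|T'| \geq |T| - 1$: the first two cases remove or merge exactly the stretched vertices, while in the remaining case at most one of $v_0, v_1, v_2$ lies in $T$ (if $v_0 \in T$ then neither $v_1$ nor $v_2$ is, and we are in the first case), so we delete at most one vertex. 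Combining the two inequalities yields $\a(H) = \a(G') + 1$.

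The only point requiring care, and the part I would treat as the main obstacle, is verifying that the projection map in the upper-bound direction always yields a valid stable set in $G'$ while losing at most one vertex from the count. In particular, when both $v_1, v_2 \in T$, I must confirm that $v$ may legitimately be added: the neighbors of $v$ in $G'$ are $A_1 \cup A_2$, and any such neighbor is adjacent to $v_1$ or $v_2$ in $H$, hence excluded from $T$, so placing $v$ creates no conflict. This case analysis is essentially the same bookkeeping that appears in Lemma~\ref{lemVertexStretchFacet}, so I would lean on that structure, but it must be carried out explicitly to guarantee the count is tight and the equality (not merely an inequality) holds.
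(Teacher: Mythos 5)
Your proof is correct and follows essentially the same route as the paper: the paper reduces to the single-step claim $\a(H) = \a(G) + 1$ and uses exactly your two forward constructions ($S \cup \set{v_0}$ when $v \notin S$, and $(S \setminus \set{v}) \cup \set{v_1, v_2}$ when $v \in S$), phrased as ``if and only if'' correspondences. Your only addition is to spell out the reverse direction $\a(H) \leq \a(G') + 1$ via the projection map from Lemma~\ref{lemVertexStretchFacet} (including the case where $T$ contains exactly one of $v_1, v_2$), which the paper leaves implicit in its terse correspondence argument; this is a welcome completion of the same argument rather than a different approach.
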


\begin{proof}
Let $H \in \S_2(G)$ be a graph obtained from $G$ by $2$-stretching vertex $v \in V(G)$. To prove our claim, it suffices to show that $\a(H) = \a(G) + 1$. Consider a set of vertices $S \subseteq V(G)$. If $v \in S$, then $S$ is a stable set in $G$ if and only if $(S \setminus \set{v}) \cup \set{v_1, v_2}$ is a stable set in $H$. If $v \not\in S$, then $S$ is a stable set in $G$ if and only if $S \cup \set{v_0}$ is a stable set in $H$. Thus, we see that $\a(H) = \a(G) + 1$.
\end{proof}

Recall the graphs $G_{2,1}, G_{2,2}$, and $G_{3,1}$. In Figure~\ref{figKnownEG}, we labelled the vertices of these graphs to highlight the fact that all three graphs can be obtained from applying a number of $2$-stretching operations to a complete graph. In fact, every known $\ell$-minimal graph to date --- the $3$-cycle and the three graphs in Figure~\ref{figKnownEG} --- belongs to $\S^{\ell-1}(K_{\ell+2})$. Thus, for the remainder of this section, we focus on graphs obtained from $2$-stretching vertices of a complete graph, and prove some results about the $\LS_+$-ranks of these graphs. 

Generally, to prove that a given graph $G$ has $\LS_+$-rank at least $p$, one needs to find a point $x \in \LS_+^{p-1}(G) \setminus \STAB(G)$. In particular, showing that $x \in \LS_+^{p-1}(G)$ often requires finding a corresponding certificate matrix $Y$ and showing that the matrix is positive semidefinite (among satisfying other properties imposed by $\LS_+$). Thus, we first provide a framework for easily and reliably verifying the positive semidefiniteness of specific matrices. Given a symmetric matrix $Y \in \mR^{n \times n}$, we say that $U, V \in \mZ^{n \times n}$ is a \emph{$UV$-certificate} of $Y$ if
\begin{itemize}
\item[(i)]
$k Y = U^{\top} U + V$ for some $k \in \mN$, and,
\item[(ii)]
$V$ is diagonally dominant (i.e.,  $V_{ii} \geq \sum_{j \neq i} |V_{ij}|$ for every $i \in [n]$).
\end{itemize}

Observe that the existence of a $UV$-certificate implies that $Y \succeq 0$ (these certificates are sum-of-squares certificates, and every rational matrix $Y \in \mS_+^n$ admits such certificates). Conversely, since every entry in $U$ and $V$ is an integer, verifying (i) and (ii) for a given $UV$-certificate only involves elementary numerical operations on whole numbers.

Next, we show that if we $2$-stretch a vertex in a complete graph, the result is always a graph with $\LS_+$-rank $2$. 

\begin{proposition}\label{prop51}
Let $n \geq 4$. Then $r_+(H) = 2$ for all $H \in \S_2(K_n)$.
\end{proposition}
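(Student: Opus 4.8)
The plan is to prove the two inequalities $r_+(H) \leq 2$ and $r_+(H) \geq 2$ separately. Throughout, write $W \ce V(K_n) \setminus \set{v}$ for the clique of size $n-1$ on the non-stretched vertices, and let $A_1, A_2 \subset W = \Gamma_{K_n}(v)$ be the proper non-empty sets used in the $2$-stretching, so that $A_1 \cup A_2 = W$. Since neither $A_1$ nor $A_2$ equals $W$ while their union is $W$, both $A_1 \setminus A_2$ and $A_2 \setminus A_1$ are non-empty; this simple observation drives the lower bound. (We also have $\a(H) = \a(K_n) + 1 = 2$ by Lemma~\ref{05stretchalpha}, though the induced-subgraph route below does not strictly need it.)

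For the upper bound I would apply Theorem~\ref{thmDeleteDestroy}(i), which reduces the task to checking that $r_+(H \ominus i) \leq 1$ for every vertex $i$ of $H$. This is a short case analysis. Destroying $v_0$ deletes $\set{v_0, v_1, v_2}$ and leaves the clique $K_{n-1}$ on $W$; destroying $v_1$ deletes $\set{v_0, v_1} \cup A_1$ and leaves $(W \setminus A_1) \cup \set{v_2}$, which is a clique since $v_2$ is adjacent to all of $A_2 \supseteq W \setminus A_1$, and symmetrically for $v_2$; and destroying a vertex $w \in W$ deletes all of $W$ together with whichever of $v_1, v_2$ is adjacent to $w$, leaving only $v_0$ and at most one further vertex, i.e.\ at most a single edge. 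Each resulting graph is perfect (indeed complete or trivial), hence has $\LS_+$-rank at most $1$, and so Theorem~\ref{thmDeleteDestroy}(i) yields $r_+(H) \leq 2$.

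For the lower bound I would exhibit an induced subgraph of $H$ isomorphic to one of the two known $2$-minimal graphs $G_{2,1}$ or $G_{2,2}$, and then invoke the monotonicity of $\LS_+$-rank under induced subgraphs (which follows from Lemma~\ref{lemfacet}) to conclude $r_+(H) \geq 2$. The key point is that for any three vertices $a, b, c \in W$, the subgraph of $H$ induced by $\set{a,b,c,v_0,v_1,v_2}$ is again a $2$-stretch of $K_4$: $\set{a,b,c}$ is a triangle, $v_0$ is adjacent only to $v_1, v_2$, and each $v_\ell$ is adjacent precisely to $\set{a,b,c} \cap A_\ell$. It therefore suffices to choose $a,b,c$ so that $\left(\set{a,b,c} \cap A_1,\ \set{a,b,c} \cap A_2\right)$ realizes the adjacency pattern of $G_{2,1}$ (one vertex in $A_1 \setminus A_2$ and two in $A_2 \setminus A_1$, or vice versa) or of $G_{2,2}$ (one vertex in each of $A_1 \setminus A_2$, $A_2 \setminus A_1$, $A_1 \cap A_2$). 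Writing $p = |A_1 \setminus A_2| \geq 1$, $q = |A_2 \setminus A_1| \geq 1$, and $r = |A_1 \cap A_2|$ with $p+q+r = n-1 \geq 3$: if $r \geq 1$ I pick one vertex from each of the three regions to realize $G_{2,2}$, while if $r = 0$ then $p+q \geq 3$ forces $p \geq 2$ or $q \geq 2$, realizing $G_{2,1}$. Here is exactly where the hypothesis $n \geq 4$ (i.e.\ $|W| \geq 3$) is used.

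Both case checks are routine, requiring nothing beyond reading off neighborhoods. The one point that must be handled with care — and the reason the statement is sharp — is the lower bound's reliance on $A_1, A_2$ being \emph{proper} subsets of $\Gamma_{K_n}(v)$. If $A_2 = \Gamma_{K_n}(v) = W$ were permitted, then $\set{v_2} \cup A_1$ would be a cut clique of $H$ separating $\set{v_0,v_1}$ from $W \setminus A_1$, and Proposition~\ref{propCliqueCut} would reduce $H$ to a complete graph together with a bipartite graph, giving $r_+(H) = 1$. Thus the proper-containment hypothesis is precisely what guarantees that both $A_1 \setminus A_2$ and $A_2 \setminus A_1$ are non-empty, which is what always lets us locate a copy of $G_{2,1}$ or $G_{2,2}$. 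Note that this induced-subgraph argument sidesteps the $UV$-certificate machinery entirely; I expect that machinery to be needed instead for the finer rank computations that follow (most notably the verification that $r_+(G_{4,1}) \geq 4$).
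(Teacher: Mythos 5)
Your proof is correct, and it takes a genuinely different route from the paper's on both halves. For the upper bound, the paper deletes the single vertex $n_0$ and invokes Theorem~\ref{thmDeleteDestroy}(ii) after asserting that $H - n_0$ is perfect, whereas you destroy every vertex and invoke Theorem~\ref{thmDeleteDestroy}(i); your case analysis is sound (each $H \ominus i$ is a clique or at most a single edge, the key point being $W \setminus A_1 \subseteq A_2$), and it is arguably more elementary, since it never requires a perfection argument for the $(n+1)$-vertex graph $H - n_0$. For the lower bound, the paper argues in the opposite direction of containment: it relabels so that every $H \in \S_2(K_n)$ becomes a spanning subgraph of the maximal stretch $G_n$ (both $A_i$ of size $n-2$), notes that $G_n$ contains $G_4 \cong G_{2,2}$ as an induced subgraph, and anchors everything on one explicit certificate matrix $Y$ (with a $UV$-certificate for $Y \succeq 0$) showing $\a_{\LS_+}(G_4) = 2.01 > 2 = \a(G_n)$, after which Lemma~\ref{lem05subgraph}(ii) transfers the violation to $H$. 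You instead locate an induced $G_{2,1}$ or $G_{2,2}$ directly inside $H$ via the three-region analysis on $A_1 \setminus A_2$, $A_2 \setminus A_1$, $A_1 \cap A_2$, and quote the known ranks of the two $2$-minimal graphs; that case analysis is airtight, as is your sharpness remark about proper containment. The one caveat is what the paper's heavier route buys: as the remark immediately following the paper's proof documents, the published certificate for $r_+(G_{2,2}) \geq 2$ in~\cite{EscalanteMN06} is invalid, and the proof of Proposition~\ref{prop51} is precisely what re-establishes that claim with a correct certificate. Your argument genuinely needs $r_+(G_{2,2}) \geq 2$ whenever $|A_1 \setminus A_2| = |A_2 \setminus A_1| = 1$ (e.g.\ when $H \cong G_{2,2}$ itself, where no induced $G_{2,1}$ exists), so as a drop-in replacement within this paper it would be circular in that case unless you supply your own PSD certificate for $G_{2,2}$ --- which is exactly the computation the paper performs. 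In other words, you have not sidestepped the certificate machinery so much as relocated it to the base case; the reduction of the general case to the two $2$-minimal graphs, however, is a clean structural observation that the paper's proof only makes implicitly through the embedding into $G_n$.
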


\begin{proof}
Let $H \in \S_2(K_n)$, and assume without loss of generality that $H$ is obtained from $K_n$ by $2$-stretching vertex $n$. Also, let $G_n$ be the graph obtained from $2$-stretching vertex $n$ in $K_n$ with $A_1 \ce [n-1] \setminus \set{1}$ and $A_2 \ce [n-2]$. (For example, $G_4$ is the graph $G_{2,2}$ from Figure~\ref{figKnownEG} and $G_5$ is shown in Figure~\ref{figstretching}.) Now since the sets $A_1, A_2$ used in stretching vertex $n$ to obtain $H$ must satisfy $A_1, A_2 \subset [n-1]$ and $A_1 \cup A_2 = [n-1]$, we see that $H$ must be isomorphic to a subgraph of $G_n$. 

Next, we show that $\a_{\LS_+}(G_4) > 2$. Consider the certificate matrix 

{\footnotesize
\[
Y \ce 
\bbordermatrix{
&& 1 & 2 & 3 & 4_1 & 4_0 & 4_2 \cr
& 200 & 78 & 12 & 78 &78 &78 & 78 \cr
& 78 & 78 & 0 & 0& 39&39 & 0\cr
& 12 & 0 & 12& 0 &0 &12 & 0 \cr
& 78 & 0 &0 & 78& 0&39 & 39\cr
& 78 &39 &0 &0 & 78& 0 & 39\cr
& 78 &39 &12 &39 &0 & 78& 0\cr
& 78 &0 &0 &39 & 39& 0& 78 \cr
}
\]}

Note that the columns of $Y$ are labelled by the vertices in $G_4$ they correspond to (the rows of $Y$ follow the same order of indexing). Observe that $Y \succeq 0$ --- a $UV$-certificate for $Y$ is

{\scriptsize
\[
U \ce \begin{bmatrix}
0 & 0 & 0 & 0 & 0 & 0 & 0\\
-73 & -26 & -141 & -26 & 60 & 132 & 61\\
0 & 124 & 0 & -124 & -200 & 0 & 200\\
27 & -181 & 247 & -181 & 51 & 159 & 51\\
0 & -527 & 0 & 527 & -326 & 0 & 326\\
1 & -166 & -73 & -166 & 449 & -556 & 449\\
1224 & 482 & 60 & 482 & 482 & 485 & 482
\end{bmatrix},
~V \ce 
\begin{bmatrix}
2765 & 917 & 91 & 917 & 316 & -11 & 389\\
917 & 1308 & 3 & -212 & -136 & 10 & 29\\
91 & 3 & 601 & 3 & -280 & 71 & -139\\
917 & -212 & 3 & 1308 & 3 & 10 & -110\\
316 & -136 & -280 & 3 & 1328 & -155 & -45\\
-11 & 10 & 71 & 10 & -155 & 664 & -287\\
389 & 29 & -139 & -110 & -45 & -287 & 1207
\end{bmatrix},
\]}
which gives $7535 Y = U^{\top} U + V$. One can also check that $Ye_i, Y(e_0-e_i) \in \FRAC(G_4)$ for every $i \in V(G_4)$. This shows that $\bar{x} \ce \frac{1}{200} (78, 12, 78, 78, 78, 78)^{\top} \in \LS_+(G_4)$. Now since $\bar{e}^{\top} \bar{x} = 2.01 >2= \a(G_4)$, we see that $r_+(G_4) \geq 2$. Since $G_n$ contains $G_4$ as an induced subgraph for all $n \geq 4$, we conclude that $\a_{\LS_+}(G_n) > 2$. Then Lemma~\ref{lem05subgraph}(ii) implies that $\a_{\LS_+}(H) > 2$. Since $\a(H) = 2$, it follows that $r_+(H) \geq 2$.

Finally, notice that $H - n_0$ must be a perfect graph, so $r_+(H- n_0) \leq 1$ and consequently $r_+(H) \leq 2$. Thus, we conclude that $r_+(H) = 2$.
\end{proof}

We remark that, to obtain the numerical certificates in the proof above, we first found a matrix $Y_0$ by optimizing over $\LS_+(G_4)$ using CVX, a package for specifying and solving convex programs~\cite{CVX, GrantB08}, as well as the SDP solver SeDuMi~\cite{Sturm99}. Then we let $V_0 \ce \ld_0I$, where $\ld_0$ is the least eigenvalue of $Y_0$, and solved for a matrix $U_0$ where $U_0^{\top}U_0 + V_0 = Y_0 $. From there, we ``discretize'' $U_0, V_0$ in a somewhat trial-and-error manner until we obtain a suitable $UV$-certificate. The numerical certificates for the $4$-minimal graph $G_{4,1}$ presented in the proof of Theorem~\ref{thmG41} and Lemmas~\ref{lemG412},~\ref{lemG411}, and~\ref{lemG413} are found in a similar manner. It would be interesting to look into the prospect of generating these certificates via a more systematic and efficient approach.

Next, we also remark that in the proof for $r_+(G_{2,2}) \geq 2$ in~\cite{EscalanteMN06}, the following certificate matrix was given:

{\footnotesize
\[
\frac{1}{2688}
\bbordermatrix{
&& 4_2 & 1 & 3 & 4_1 & 4_0 & 2 \cr
& 2688 & 769 &769 &769 &769 &769 &1538 \cr
& 769 & 769 & 0 & 336 & 413 \frac{7}{13} & 0 & 0 \cr
& 769 &0 &769 & 0 & 336 & 384 & 0 \cr
& 769 &336 &0 &769 & 0 & 384 & 0 \cr
& 769 &413 \frac{7}{13} &336 &0 &769 & 0 & 896 \cr
& 769 & 0&384 & 384 &0 &769 &0 \cr
& 1538 &0 &0 & 0& 896& 0 &1538 \cr
}
\]}
However, the certificate is incorrect: $Y[2, 4_1] = \frac{896}{2688} = \frac{1}{3} > Y[0,4_1]$, and thus violates $Ye_{4_1} \in \cone(\FRAC(G_{2,2}))$. In fact, since the vector $\frac{1}{2688} (769, 769, 769, 769, 769, 1538)^{\top}$ contains only one entry greater than $\frac{1}{3}$, any certificate matrix for this vector cannot contain an off-diagonal entry of $\frac{1}{3}$ (which would have to appear in at least 2 columns in the certificate). Still, the claim that $r_+(G_{2,2}) = 2$ is correct, as shown in the proof of Proposition~\ref{prop51}.

Next, while all graphs in $\S_2(K_n)$ have $\LS_+$-rank $2$, we show that not all graphs in $\S_2^2(K_n)$ have $\LS_+$-rank $3$. Given a graph $G$, we say that a path in $G$ is \emph{sparse} if at most one of the vertices in the path has degree greater than $2$ in $G$. For example, in Figure~\ref{figSparsePath}, the graph on the left contains a sparse path $4_0, 4_1, 5_1, 5_0, 5_2$ of length $4$, while the graph on the right also contains a sparse path $4_0, 4_2, 3, 5_2, 5_0$ of length $4$. Then we have the following.

\def\x{270 - 360/5}
\def\z{360/5}
\def\y{0.7}
\def\sc{2}

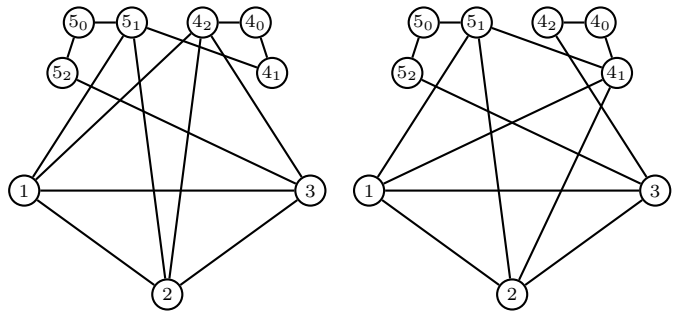
\begin{figure}[ht!]
\begin{center}
\begin{tabular}{cc}
%4- emptyset ; 7-1,2
\begin{tikzpicture}
[scale=\sc, thick,main node/.style={circle, minimum size=4mm, inner sep=0.1mm,draw,font=\tiny\sffamily}]
\node[main node] at ({cos(\x+(0)*\z)},{sin(\x+(0)*\z)}) (1) {$1$};
\node[main node] at ({cos(\x+(1)*\z)},{sin(\x+(1)*\z)}) (2) {$2$};
\node[main node] at ({cos(\x+(2)*\z)},{sin(\x+(2)*\z)}) (3) {$3$};

\node[main node] at ({ \y* cos(\x+(3)*\z) + (1-\y)*cos(\x+(2)*\z)},{ \y* sin(\x+(3)*\z) + (1-\y)*sin(\x+(2)*\z)}) (4) {$4_1$};
\node[main node] at ({cos(\x+(3)*\z)},{sin(\x+(3)*\z)}) (5) {$4_0$};
\node[main node] at ({ \y* cos(\x+(3)*\z) + (1-\y)*cos(\x+(4)*\z)},{ \y* sin(\x+(3)*\z) + (1-\y)*sin(\x+(4)*\z)}) (6) {$4_2$};

\node[main node] at ({ \y* cos(\x+(4)*\z) + (1-\y)*cos(\x+(3)*\z)},{ \y* sin(\x+(4)*\z) + (1-\y)*sin(\x+(3)*\z)}) (7) {$5_1$};
\node[main node] at ({cos(\x+(4)*\z)},{sin(\x+(4)*\z)}) (8) {$5_0$};
\node[main node] at ({ \y* cos(\x+(4)*\z) + (1-\y)*cos(\x+(5)*\z)},{ \y* sin(\x+(4)*\z) + (1-\y)*sin(\x+(5)*\z)}) (9) {$5_2$};

 \path[every node/.style={font=\sffamily}]
(1) edge (2)
(1) edge (3)
(2) edge (3)
(4) edge (5)
(5) edge (6)
(7) edge (8)
(8) edge (9)
(4) edge (7)
(1) edge (6)
(2) edge (6)
(3) edge (6)
(1) edge (7)
(2) edge (7)
(3) edge (9);
\end{tikzpicture}

&

%4-1,2, 7-1,2
\begin{tikzpicture}
[scale=\sc, thick,main node/.style={circle, minimum size=4mm, inner sep=0.1mm,draw,font=\tiny\sffamily}]
\node[main node] at ({cos(\x+(0)*\z)},{sin(\x+(0)*\z)}) (1) {$1$};
\node[main node] at ({cos(\x+(1)*\z)},{sin(\x+(1)*\z)}) (2) {$2$};
\node[main node] at ({cos(\x+(2)*\z)},{sin(\x+(2)*\z)}) (3) {$3$};

\node[main node] at ({ \y* cos(\x+(3)*\z) + (1-\y)*cos(\x+(2)*\z)},{ \y* sin(\x+(3)*\z) + (1-\y)*sin(\x+(2)*\z)}) (4) {$4_1$};
\node[main node] at ({cos(\x+(3)*\z)},{sin(\x+(3)*\z)}) (5) {$4_0$};
\node[main node] at ({ \y* cos(\x+(3)*\z) + (1-\y)*cos(\x+(4)*\z)},{ \y* sin(\x+(3)*\z) + (1-\y)*sin(\x+(4)*\z)}) (6) {$4_2$};

\node[main node] at ({ \y* cos(\x+(4)*\z) + (1-\y)*cos(\x+(3)*\z)},{ \y* sin(\x+(4)*\z) + (1-\y)*sin(\x+(3)*\z)}) (7) {$5_1$};
\node[main node] at ({cos(\x+(4)*\z)},{sin(\x+(4)*\z)}) (8) {$5_0$};
\node[main node] at ({ \y* cos(\x+(4)*\z) + (1-\y)*cos(\x+(5)*\z)},{ \y* sin(\x+(4)*\z) + (1-\y)*sin(\x+(5)*\z)}) (9) {$5_2$};

 \path[every node/.style={font=\sffamily}]
(1) edge (2)
(1) edge (3)
(2) edge (3)
(4) edge (5)
(5) edge (6)
(7) edge (8)
(8) edge (9)
(4) edge (7)
(1) edge (4)
(2) edge (4)
(3) edge (6)
(1) edge (7)
(2) edge (7)
(3) edge (9);
\end{tikzpicture}
\end{tabular}
\caption{Two graphs in $\S_2^2(K_5)$ with sparse paths}\label{figSparsePath}
\end{center}
\end{figure}

\begin{proposition}\label{propSparsePath}
Let $\ell \geq 2$. If $G \in \S_2^{\ell-1}(K_{\ell+2})$ contains a sparse path of length at least $3$, then $G$ is not $\ell$-minimal.
\end{proposition}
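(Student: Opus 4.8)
The plan is to prove the stronger and cleaner statement that $r_+(G) \leq \ell - 1$, where $\ell \ce n-2$. Since $G \in \S_2^{n-3}(K_n)$ has exactly $n + 2(n-3) = 3(n-2) = 3\ell$ vertices, establishing $r_+(G) \leq \ell - 1$ immediately shows $r_+(G) \neq \ell$ and hence that $G$ is not $\ell$-minimal. It is worth noting in advance that the hypothesis $G \in \S_2^{n-3}(K_n)$ enters \emph{only} through this vertex count: the rest of the argument uses nothing about $G$ beyond $|V(G)| = 3\ell$ and the presence of the sparse path, together with Theorem~\ref{thmNover3}, Theorem~\ref{thmDeleteDestroy}(ii), and Proposition~\ref{propCliqueCut}.

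First I would fix a sparse path $P = u_0, u_1, \ldots, u_m$ with $m \geq 3$. By definition at most one vertex of $P$ has degree greater than $2$ in $G$; call it the \emph{hub} $h$ when it exists. Every other vertex of $P$ has degree at most $2$, and since an internal vertex of a path already has degree at least $2$, every non-hub internal vertex of $P$ has degree \emph{exactly} $2$ in $G$ --- so its only neighbours are its two path-neighbours, and it carries no chord or extra incident edge. I then choose a single vertex $i$ to delete: set $i \ce h$ if the hub exists, and otherwise let $i$ be any internal vertex $u_s$ with $1 \leq s \leq m-1$. By Theorem~\ref{thmDeleteDestroy}(ii) we have $r_+(G) \leq r_+(G-i) + 1$, so it suffices to bound $r_+(G-i)$.

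The key step is to strip the remaining path vertices off $G-i$ one at a time without changing the $\LS_+$-rank. Deleting $i$ breaks $P$ into at most two sub-chains of non-hub path vertices; in each sub-chain, the vertex that was adjacent to $i$ now has degree $1$. Because every non-hub path vertex has degree exactly $2$ with its neighbours on the path (an endpoint having a single off-path anchor), shaving this pendant vertex turns the next vertex of the chain into a pendant, and so on down to the endpoint. Each shave is justified by Proposition~\ref{propCliqueCut} applied with the pendant's unique neighbour as a size-one cut clique: with $C$ that neighbour, $S_1$ the pendant, and $S_2$ the remainder, one gets $r_+ = \max\set{r_+(\tn{remainder}), r_+(\tn{single edge})} = r_+(\tn{remainder})$, since a single edge has rank $0$. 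Iterating removes all $m+1$ vertices of $P$, whence $r_+(G-i) = r_+(G')$, where $G'$ is the subgraph of $G$ induced on $V(G) \setminus V(P)$. Finally I would count: $|V(G')| = 3\ell - (m+1) \leq 3\ell - 4$ as $m \geq 3$, so Theorem~\ref{thmNover3} gives $r_+(G') \leq \lfloor (3\ell-4)/3 \rfloor = \ell - 2$, and therefore $r_+(G) \leq r_+(G') + 1 \leq \ell - 1 < \ell$.

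The main point requiring care is the bookkeeping in the stripping step: one must verify, in each configuration of the hub (internal, at an endpoint, or absent), that deleting the single vertex $i$ really does leave \emph{all} remaining path vertices on pendant chains, so that exactly the $m+1$ vertices of $P$ are removed. This rests entirely on the fact that the non-hub path vertices have degree exactly $2$. A couple of degenerate sub-cases should be checked but only help: if an endpoint of $P$ has degree $1$, or if an endpoint's anchor coincides with $h$, then that sub-chain loses its surviving anchor and one removes at least as many vertices, so the bound $|V(G')| \leq 3\ell - 4$ remains safe throughout.
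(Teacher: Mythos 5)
Your proof is correct and follows essentially the same route as the paper's: delete the one high-degree vertex of the sparse path and invoke Theorem~\ref{thmDeleteDestroy}(ii), use Proposition~\ref{propCliqueCut} to strip the remaining degree-$2$ path vertices without changing the $\LS_+$-rank, and then bound the rank of the leftover graph on at most $3\ell-4$ vertices by Theorem~\ref{thmNover3}. The only difference is presentational: you spell out the iterated pendant-shaving via size-one cut cliques and the degenerate hub configurations, which the paper compresses into a single appeal to Proposition~\ref{propCliqueCut}.
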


\begin{proof}
First, observe that each $2$-stretching operation adds $2$ vertices to a given graph, and so every $G \in \S_2^{\ell-1}(K_{\ell+2})$ has $\ell+2 + 2(\ell-1) = 3\ell$ vertices. We show that if $G$ contains a sparse path of length at least $3$, then $r_+(G) < \ell$.

Let $U$ be a set of vertices that induces a sparse path in $G$ where $|U| \geq 4$. Then there is a vertex  $u_1 \in U$ where  $\deg(v) = 2$ for all $v \in U \setminus \set{u_1}$. Now notice that in the graph $G - u_1$, there must exist $u_2 \in U \setminus \set{u_1}$ where $\deg(u_2) = 1$. Thus, Proposition~\ref{propCliqueCut} implies that $G - u_1$ and $G- \set{u_1,u_2}$ must have the same $\LS_+$-rank.

Again, since $U$ induce a sparse path in $G$, there must be a vertex $u_3 \in U \setminus \set{u_1,u_2}$ such that $\deg(u_3) = 1$ in $G - \set{u_1, u_2}$, and so $G - \set{u_1,u_2}$ and $G - \set{u_1,u_2,u_3}$ must have the same $\LS_+$-rank. Iterating this argument until we remove all vertices of $U$ from $G$, we see that $r_+(G - u_1) = r_+(G - U)$.

Since $|U| \geq 4$, $G-U$ has $3\ell  - |U| < 3(\ell-1)$ vertices, and so $r_+(G - u_1) = r_+(G- U) \leq \ell -2$. As a result, $r_+(G) \leq \ell -1$, and thus $G$ is not $\ell$-minimal.
\end{proof}

Thus, both graphs in Figure~\ref{figSparsePath} have $\LS_+$-rank at most $2$. (In fact, they have rank $2$ as they both contain $G_{2,1}$ as an induced subgraph.) Next, we show that if we $2$-stretch a vertex in $K_n$, and then $2$-stretch one of the three new vertices in the stretched graph, the resulting graph cannot have $\LS_+$-rank $3$.

\begin{proposition}\label{05doublestretch}
Let $n \geq 4$. Suppose $G_1 \in \S_2(K_n)$ is obtained by stretching vertex $n$ in $K_n$, and $G_2 \in \S_2^2(K_n)$ is obtained by stretching vertex $n_0, n_1$, or $n_2$ in $G_1$. Then $r_+(G_2) = 2$.
\end{proposition}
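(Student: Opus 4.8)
The plan is to sandwich $r_+(G_2)$ between $2$ and $2$. The lower bound is immediate: since $G_2 \in \S_2(G_1)$ and $r_+(G_1) = 2$ by Proposition~\ref{prop51}, applying the star-homomorphism bound Proposition~\ref{propVertexStretch}(iii) gives $r_+(G_2) \geq r_+(G_1) = 2$. All the content is therefore in the upper bound $r_+(G_2) \leq 2$, which I would extract from Theorem~\ref{thmDeleteDestroy}(ii) by exhibiting a single vertex $w$ with $r_+(G_2 - w) \leq 1$.

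To organize the bookkeeping, write $K \ce \{1,\dots,n-1\}$ for the clique surviving from $K_n$, so that $G_2$ is $K$ together with a small gadget $T$ replacing vertex $n$; the vertices of $T$ attach to $K$ only through a few ``attaching'' vertices, whereas the two stretch-centers (the center $n_0$ of the first stretch and the center of the second stretch) have no neighbour in $K$. The qualitative reason the rank does not climb to $3$ is that the second stretch is applied to a \emph{child} of $n$, so $T$ is a path or near-path hanging off $K$ rather than two independently stretched clique vertices (contrast $G_{3,1}$). I would choose $w$ to be a center or a suitable leaf of the second stretch so that deleting it severs $T$, then clique-cut along $K$ using Proposition~\ref{propCliqueCut}: each resulting piece, taken together with $K$, becomes a clique carrying at most one attaching vertex plus pendant trees, which is a perfect graph of rank at most $1$, while the remaining pieces are trees of rank $0$. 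Concretely: when $n_0$ is stretched, $\Gamma_{G_1}(n_0) = \{n_1,n_2\}$ forces the two new leaves onto $n_1$ and $n_2$ separately, and deleting the new center disconnects the two sides, leaving a perfect graph; when $n_1$ (resp.\ $n_2$) is stretched and $n_0$ lies in exactly one part of the second-stretch partition, deleting the new leaf adjacent to $n_0$ splits $T$ into two pendant paths attached at two non-adjacent clique vertices, again perfect (a clique with at most two non-adjacent attaching vertices admits no induced odd hole). In all these generic situations $G_2 - w$ is perfect, so $r_+(G_2-w)\le 1$ and $r_+(G_2) \le 2$.

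The main obstacle is the subcase where the second stretch is applied to $n_1$ (or $n_2$) with an \emph{overlapping} partition placing $n_0$ in both parts. Then $T$ contains a $4$-cycle, no single deletion disconnects its (up to three) attaching vertices, and $G_2 - w$ is never perfect: it retains an induced odd hole threading a stretch-center and two non-adjacent clique vertices. My plan here is to argue that the remnant nonetheless has $\LS_+$-rank at most $1$. After peeling off pendants and any isolated attaching vertex by further clique cuts (Proposition~\ref{propCliqueCut}), what survives is a clique $K'$ glued to a single odd hole; I would verify that its only facets beyond nonnegativity are clique inequalities and the odd-hole (odd-cycle) inequality, i.e.\ that this remnant is $h$-perfect. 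Since $\LS_+(\cdot)$ satisfies every clique and odd-hole inequality in one round (as recalled in Section~\ref{sec1}), an $h$-perfect graph has $\LS_+$-rank at most $1$, which closes the bound.

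Thus the write-up is a finite case analysis over which child of $n$ is stretched and how the second-stretch attaching sets partition. I expect the routine cases (Case $n_0$, and Cases $n_1,n_2$ with a non-overlapping partition) to be quick once the right $w$ is named and perfectness of ``clique plus at most two non-adjacent pendant attachments'' is recorded as a small lemma. The genuinely delicate step, and the one I would spend the most care on, is the overlapping-partition case: certifying that the surviving ``clique-plus-odd-hole'' remnant has no facets outside the clique/odd-hole families, so that a single $\LS_+$ round already cuts off its unique odd hole and forces $r_+(G_2-w)\le 1$.
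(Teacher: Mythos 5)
Your lower bound (Proposition~\ref{prop51} plus Proposition~\ref{propVertexStretch}(iii)) and your ``generic'' upper-bound cases coincide with the paper's proof, which likewise deletes $n_{00}$ (when $n_0$ is stretched) or the new leaf $n_{12}$ with $\set{n_{12},n_0} \in E(G_2)$ (when $n_1$ or $n_2$ is stretched) and invokes perfectness together with Theorem~\ref{thmDeleteDestroy}(ii). The genuine gap is your overlapping-partition case, and it is worse than your proposal anticipates. Take $n=6$: first stretch with $A_1=\set{1,2,3}$, $A_2=\set{4,5}$, then stretch $n_1$ with $B_1=\set{1,2,n_0}$, $B_2=\set{3,n_0}$. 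In the resulting $G_2$, each of the vertex sets $\set{3,4,5,n_{12},n_0,n_2}$, $\set{1,2,3,n_{11},n_{10},n_{12}}$, $\set{1,2,3,n_{11},n_0,n_{12}}$, $\set{1,2,4,n_{11},n_0,n_2}$, and $\set{1,2,5,n_{11},n_0,n_2}$ induces a copy of $G_{2,1}$ (a triangle plus a path of length two whose endpoints cover the triangle, with no other adjacencies --- check the chords). No single vertex meets all five witness sets, so for \emph{every} $w \in V(G_2)$ the graph $G_2 - w$ contains an induced $G_{2,1}$ and hence has $\LS_+$-rank at least $2$ by induced-subgraph monotonicity (the consequence of Lemma~\ref{lemfacet}). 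Consequently no choice of $w$ yields $r_+(G_2-w)\leq 1$, and the entire strategy of extracting the upper bound from Theorem~\ref{thmDeleteDestroy}(ii) can only ever give $r_+(G_2)\leq 3$ on this instance. In particular your $h$-perfectness step is false as stated: your remnants contain an induced $G_{2,1}$, whose stable set polytope has the full-support facet $\bar{e}^{\top}x \leq 2$, which is neither a clique nor an odd-hole inequality, and an $h$-perfect graph would have $\LS_+$-rank at most $1$ while these remnants have rank $2$. (Even in the smaller $n=5$ overlap instance with $B_1 = \set{1,n_0}$, $B_2 = \set{2,n_0}$, your prescribed deletions --- the second-stretch centre or either of its leaves --- all leave an induced $G_{2,1}$; there the route happens to be salvageable by deleting the \emph{first}-stretch leaf $n_2$ instead, but that escape is unavailable at $n=6$.)

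For calibration against the source: the paper's own proof of this subcase is the single assertion that $G_2 - n_{12}$ is perfect, made even when $\set{n_{11},n_0} \in E(G_2)$, and on the instances above this assertion fails (e.g., in the $n=5$ instance, $n_0, n_2, 3, 1, n_{11}$ induce a $C_5$ in $G_2 - n_{12}$). So your instinct that the overlapping case is the delicate one is correct and in fact sharper than the paper's treatment --- but neither your repair nor the paper's one-liner survives it, and the destruction bound Theorem~\ref{thmDeleteDestroy}(i) does not rescue the $n=6$ instance either, since $G_2 \ominus n_2$ is itself isomorphic to $G_{2,1}$, so that bound also stops at $3$. Closing the overlapping case appears to require a genuinely different argument (for instance, a direct certificate-style proof that $\LS_+^2(G_2) = \STAB(G_2)$), not a refinement of the ``delete one vertex and certify rank at most one'' template around which your entire write-up is organized.
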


\begin{proof}
First, if $G_2$ is obtained from $G_1$ by stretching $n_0$, then $n_{00}, n_{01}, n_{02}$ all have degree $2$. Notice that $G_2 - n_{00}$ must be a perfect graph, and so $r_+(G_2 - n_{00}) \leq 1$, which implies $r_+(G_2) \leq 2$ in this case.

Otherwise, assume without loss of generality that $G_2$ is obtained from $G_1$ by stretching $n_1$, and that $\set{n_{12}, n_0} \in E(G_2)$. (Note that $\set{n_{11}, n_0}$ may or may not be an edge.) Now notice that $G_2 - n_{12}$ is a perfect graph, and thus, $r_+(G_2) \leq 2$ in this case as well. Finally, since $r_+(G_1) = 2$ (from Proposition~\ref{prop51}) and $G_2 \in \S(G_1)$, Proposition~\ref{propVertexStretch}(iii) implies that $r_+(G_2) \geq 2$.
\end{proof}

Thus, to obtain a graph with $\LS_+$-rank $3$ in $\S_2^2(K_n)$, it is necessary that we stretch two of the original vertices of $K_n$. (That is not sufficient though, as shown for the graphs in Figure~\ref{figSparsePath}.)

Next, observe that if $G$ is an $\ell$-minimal graph, then $\STAB(G)$ must have a facet with $\LS_+$-rank $\ell$, and it is necessary that such a facet has full support. (Otherwise it follows from Lemma~\ref{lemfacet} that $G$ has a proper subgraph with the same $\LS_+$-rank, which implies the existence of a graph on fewer than $3\ell$ vertices with $\LS_+$-rank $\ell$.)  We provide more circumstantial evidence that $2$-stretching a number of original vertices of a complete graph is a promising approach for generating $\ell$-minimal graphs by showing that the stable set polytope of many of  these graphs have a full-support facet. Let  $H \in \S_2^{k}(K_{n})$ be a graph that is obtained from $2$-stretching the vertices $D \subseteq V(K_n)$ (and thus $|D| = k$). For every $v \in V(K_n)$, we define the vertices \emph{associated} with $v$ to be $v_0, v_1$, and $v_2$ if $v \in D$, and just the unstretched vertex $v$ if $v \not\in D$. Notice that every vertex in $H$ is associated with exactly one $v \in V(K_n)$. Then we have the following.

\begin{proposition}\label{propSkKlfacet}
Let $k, n$ be integers where $n \geq 3$ and $n \geq k \geq 0$. Suppose $H \in \S_2^{k}(K_{n})$ is obtained from $K_{n}$ by $2$-stretching the vertices $D \subseteq V(K_{n})$. Furthermore, suppose $H$ satisfies the following:
\begin{itemize}
\item[($*$)]  For every $v \in D$ and for every $j \in \set{1,2}$, there exists $u \in V(K_n) \setminus \set{v}$ where $v_{j}$ is not adjacent with any vertex associated with $u$.
\end{itemize}
Then $\sum_{i \in V(H)} x_i \leq k+1$ is a facet of $\STAB(H)$.
\end{proposition}

\begin{proof}
We prove our claim by induction on $k$. When $k=0$, $H = K_{n}$, and the claim obviously holds. Next, assume $1 \leq k \leq n$. Since $H \in \S_2^k(K_n)$, there exists $G \in \S_2^{k-1}(K_{n})$ where $H \in \S_2(G)$. (So there exists $v \in D$ where $H$ is obtained from $G$ by stretching $v$.)

Since $H$ satisfies ($*$) by assumption, so must $G$, and so it follows from the inductive hypothesis that $\sum_{i \in V(G)} x_i \leq k$ is a facet of $\STAB(G)$. To prove our claim, we make use of Proposition~\ref{propVertexStretch2} and show that $d_1=d_2=1$. To do so, let $A_1 \ce \Gamma_H(v_1) \setminus \set{v_0}, A_2 \ce \Gamma_H(v_2) \setminus \set{v_0}$, 
\begin{align*}
c_1 &\ce  \max\set{ \bar{e}^{\top}x : x \in \STAB(G), x_{i} = 0~\forall i \in  \set{v} \cup A_2}, \\
c_2 &\ce  \max\set{ \bar{e}^{\top}x : x \in \STAB(G), x_{i} = 0~\forall i \in  \set{v} \cup A_1}.
\end{align*}
Then it suffices to prove that $c_1=c_2=k$, which would then imply that $d_1=d_2=1$.

First, it is obvious that $c_1, c_2 \leq k$ since $\alpha(G) = k$. Next, consider $\Gamma_H(v_1)$. By the assumption that $H$ satisfies ($*$), at least one of the following must hold:
\begin{itemize}
\item
There exists $u \in V(K_n)$ where $u \not\in D$ (so $u \in V(H)$) and $u \not\in \Gamma_H(v_1)$. Then
\[
S \ce \set{u} \cup \set{ w_0 : w \in D \setminus \set{v}}
\]
is a stable set that gives $c_1 = k$.
\item
There exists $u \in V(K_n), u \neq v$ where $u \in D$ (so $u_0, u_1. u_2 \in V(H)$) and $u_0, u_1, u_2 \not\in \Gamma_H(v_1)$. Then
\[
S \ce \set{u_1, u_2} \cup \set{ w_0 : w \in D \setminus \set{u,v}}
\]
is a stable set that gives $c_1=k$.
\end{itemize}
The same argument shows that $c_2=k$, and this finishes the proof. 
\end{proof}

We remark that the assumption of stretching only the original vertices of $K_{n}$ in Proposition~\ref{propSkKlfacet} is necessary, as shown in the following example.

\begin{example}\label{egSTABfacet}
Recall the graph $G_{2,2}$, introduced in Figure~\ref{figKnownEG} and reproduced below in Figure~\ref{figSTABfacet} (left). Observe that $G_{2,2} \in \S_2(K_4)$, and that $\bar{e}^{\top}x \leq 2$ is a facet of $\STAB(G_{2,2})$. Now, we $2$-stretch the vertex $4_2 \in V(G_{2,2})$ to obtain $H \in \S_2^2(K_4)$ as shown in Figure~\ref{figSTABfacet} (right). Observe that the subgraph of $H$ induced by vertices $1,2,3,4_1,4_0,4_{21}$ is isomorphic to $G_{2,1}$ from Figure~\ref{figKnownEG}. Thus, 
\begin{equation}\label{egSTABfaceteq1}
x_{1} + x_2 + x_3 + x_{4_1} + x_{4_0} + x_{4_{21}} \leq 2
\end{equation}
is valid for $\STAB(H)$. (In fact, one can show that it is a facet of $\STAB(G)$ using Proposition~\ref{propVertexStretch2}.) This implies that $\sum_{i \in V(H)} x_i \leq 3$, which is the sum of~\eqref{egSTABfaceteq1} and the edge inequality $x_{4_{20}} + x_{4_{22}} \leq 1$, is not a facet of $\STAB(H)$.
\end{example}

\def\x{270 - 360/4}
\def\z{360/4}
\def\y{0.7}
\def\sc{2}

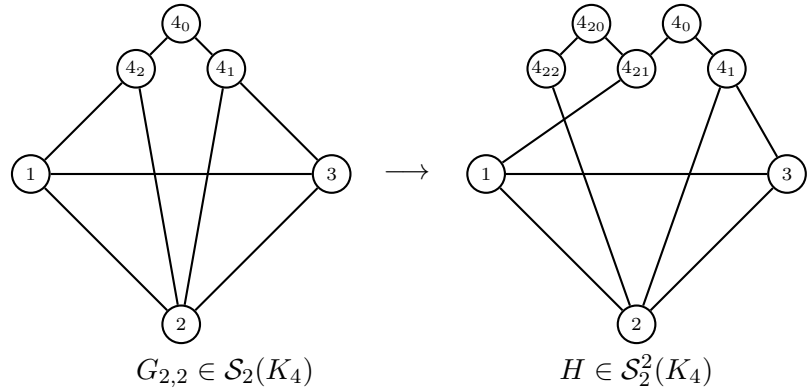
\begin{figure}[ht!]
\begin{center}
\begin{tabular}{cc}
\begin{tikzpicture}[scale=\sc, thick,main node/.style={circle, minimum size=5mm, inner sep=0.1mm,draw,font=\tiny\sffamily}]
\node[main node] at ({cos(\x+(0)*\z)},{sin(\x+(0)*\z)}) (1) {$1$};
\node[main node] at ({cos(\x+(1)*\z)},{sin(\x+(1)*\z)}) (2) {$2$};
\node[main node] at ({cos(\x+(2)*\z)},{sin(\x+(2)*\z)}) (3) {$3$};

\node[main node] at ({ \y* cos(\x+(3)*\z) + (1-\y)*cos(\x+(2)*\z)},{ \y* sin(\x+(3)*\z) + (1-\y)*sin(\x+(2)*\z)}) (4) {$4_1$};
\node[main node] at ({cos(\x+(3)*\z)},{sin(\x+(3)*\z)}) (5) {$4_0$};
\node[main node] at ({ \y* cos(\x+(3)*\z) + (1-\y)*cos(\x+(4)*\z)},{ \y* sin(\x+(3)*\z) + (1-\y)*sin(\x+(4)*\z)}) (6) {$4_2$};

 \path[every node/.style={font=\sffamily}]
(1) edge (2)
(2) edge (3)
(3) edge (1)
(4) edge (5)
(5) edge (6)
(4) edge (2)
(4) edge (3)
(6) edge (1)
(6) edge (2);

\node at (1.5,0) () {$\longrightarrow$};
\end{tikzpicture}
&

\begin{tikzpicture}[scale=\sc, thick,main node/.style={circle, minimum size=5mm, inner sep=0.1mm,draw,font=\tiny\sffamily}]
\node[main node] at ({cos(\x+(0)*\z)},{sin(\x+(0)*\z)}) (1) {$1$};
\node[main node] at ({cos(\x+(1)*\z)},{sin(\x+(1)*\z)}) (2) {$2$};
\node[main node] at ({cos(\x+(2)*\z)},{sin(\x+(2)*\z)}) (3) {$3$};

\node[main node] at ({ \y* cos(\x+(3)*\z) + (1-\y)*cos(\x+(2)*\z) - (1-\y)*cos(\x+(4)*\z)},{ \y* sin(\x+(3)*\z) + (1-\y)*sin(\x+(2)*\z)}) (4) {$4_1$};
\node[main node] at ({cos(\x+(3)*\z) - (1-\y)*cos(\x+(4)*\z)},{sin(\x+(3)*\z)}) (5) {$4_0$};
\node[main node] at ({ \y* cos(\x+(3)*\z) },{ \y* sin(\x+(3)*\z) + (1-\y)*sin(\x+(4)*\z)}) (6) {$4_{21}$};
\node[main node] at ({cos(\x+(3)*\z) + (1-\y)*cos(\x+(4)*\z)},{sin(\x+(3)*\z)})(7) {$4_{20}$};
\node[main node] at ({ \y* cos(\x+(3)*\z) - (1-\y)*cos(\x+(2)*\z) + (1-\y)*cos(\x+(4)*\z)},{ \y* sin(\x+(3)*\z) + (1-\y)*sin(\x+(2)*\z)})  (8) {$4_{22}$};

 \path[every node/.style={font=\sffamily}]
(1) edge (2)
(2) edge (3)
(3) edge (1)
(4) edge (5)
(5) edge (6)
(4) edge (2)
(4) edge (3)
(6) edge (1)
(6) edge (7)
(8) edge (7)
(8) edge (2);
\end{tikzpicture}\\
$G_{2,2} \in \S_2(K_4)$ &
$H \in \S_2^2(K_4)$
\end{tabular}
\end{center}
\caption{A graph in $H \in \S_2^2(K_4)$ (right) where $\bar{e}^{\top}x \leq \alpha(H)$ is not a facet of $\STAB(H)$ (see Example~\ref{egSTABfacet})}\label{figSTABfacet}
\end{figure}

\section{Existence of $4$-minimal graphs}\label{sec6}

\begin{figure}[ht!]
\begin{center}
\def\x{1}
\def\y{3}

\begin{tabular}{c}
\begin{tikzpicture}
[scale=0.8, thick,main node/.style={circle, minimum size=4mm, inner sep=0.1mm,draw,font=\tiny\sffamily}]
\node[main node] at ({cos(90)*\x},{sin(90)*\x}) (1) {$1$};
\node[main node] at ({cos(210)*\x},{sin(210)*\x}) (2) {$2$};
\node[main node] at ({cos(330)*\x},{sin(330)*\x}) (3) {$3$};

\node[main node] at ({cos(90)*\x + cos(90)*\y} ,{sin(90)*\x+ sin(90)*\y}) (11) {$6_0$};
\node[main node] at ({cos(210)*\x+ cos(90)*\y},{sin(210)*\x+ sin(90)*\y}) (12) {$6_2$};
\node[main node] at ({cos(330)*\x+ cos(90)*\y},{sin(330)*\x+ sin(90)*\y}) (10) {$6_1$};

\node[main node] at ({cos(90)*\x + cos(210)*\y} ,{sin(90)*\x+ sin(210)*\y}) (4) {$4_1$};
\node[main node] at ({cos(210)*\x+ cos(210)*\y},{sin(210)*\x+ sin(210)*\y}) (5) {$4_0$};
\node[main node] at ({cos(330)*\x+ cos(210)*\y},{sin(330)*\x+ sin(210)*\y}) (6) {$4_2$};

\node[main node] at ({cos(90)*\x + cos(330)*\y} ,{sin(90)*\x+ sin(330)*\y}) (9) {$5_2$};
\node[main node] at ({cos(210)*\x+ cos(330)*\y},{sin(210)*\x+ sin(330)*\y}) (7) {$5_1$};
\node[main node] at ({cos(330)*\x+ cos(330)*\y},{sin(330)*\x+ sin(330)*\y}) (8) {$5_0$};

 \path[every node/.style={font=\sffamily}]
(1) edge (2)
(1) edge (3)
(2) edge (3)
(4) edge (5)
(5) edge (6)
(7) edge (8)
(8) edge (9)
(10) edge (11)
(11) edge (12)
(4) edge[bend right=30] (9)
(7) edge[bend right=30] (12)
(10) edge[bend right=30] (6)
(1) edge (4)
(2) edge (4)
(2) edge (6)
(3) edge (6)
(2) edge (7)
(3) edge (7)
(3) edge (9)
(1) edge (9)
(3) edge (10)
(1) edge (10)
(1) edge (12)
(2) edge (12);
\end{tikzpicture}\\
\end{tabular}
\caption{An alternative drawing of $G_{4,1}$ to highlight its automorphisms}\label{figG41alt}
\end{center}
\end{figure}
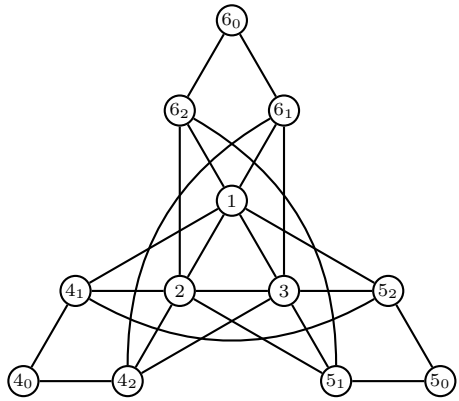

Recall the graph $G_{4,1}$ (Figure~\ref{figG41}), which was introduced in Section~\ref{sec1}. We show in this section that $r_+(G_{4,1})=4$, providing what we believe to be the first known example of a $4$-minimal graph (and the first advance in this direction since 2006~\cite{EscalanteMN06}). Notice that $G_{4,1}$ dovetails nicely with the structural results on $\ell$-minimal graphs developed in Sections~\ref{sec3} to~\ref{sec5} --- in particular, observe from its drawing in Figure~\ref{figG41} that $G_{4,1} \in \S_2^3(K_6)$, and is obtained from stretching three of the original vertices in $K_6$. On the other hand, while these structural results were helpful in our discovery of $G_{4,1}$ and putting the graph into perspective, the proof we present for $r_+(G_{4,1}) = 4$ is largely based on numerical certificates and the symmetries of $G_{4,1}$.

To that end, we point out two important automorphisms of $G_{4,1}$ that will be useful in simplifying our analysis of its $\LS_+$-rank. Consider the alternative drawing of $G_{4,1}$ in Figure~\ref{figG41alt}, and define the functions $f_1, f_2 : V(G_{4,1}) \to V(G_{4,1})$ as follows:
\begin{center}
\begin{tabular}{c|c|c|c|c|c|c|c|c|c|c|c|c}
$i$ & $1$ & $2$ & $3$ & $4_1$ & $4_0$ & $4_2$ & $5_1$ & $5_0$ & $5_2$ & $6_1$ & $6_0$ & $6_2$\\
\hline
$f_1(i)$ & $2$ & $3$ & $1$ & $5_1$ & $5_0$ & $5_2$ & $6_1$ & $6_0$ & $6_2$ & $4_1$ & $4_0$ & $4_2$\\
\hline
$f_2(i)$ & $1$ & $3$ & $2$ & $5_2$ & $5_0$ & $5_1$ & $4_2$ & $4_0$ & $4_1$ & $6_2$ & $6_0$ & $6_1$\\
\end{tabular}
\end{center}
 
Visually, $f_1$ corresponds to rotating the graph $G_{4,1}$ in Figure~\ref{figG41alt} counterclockwise by $\frac{2\pi}{3}$, and $f_2$ corresponds to reflecting the figure along the centre vertical line. Now we are ready to prove the main result of this section.

\begin{theorem}\label{thmG41}
The $\LS_+$-rank of $G_{4,1}$ is $4$.
\end{theorem}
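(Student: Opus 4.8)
The upper bound is immediate: since $|V(G_{4,1})| = 12$, Theorem~\ref{thmNover3} gives $r_+(G_{4,1}) \leq 4$. The entire content of the theorem is therefore the lower bound $r_+(G_{4,1}) \geq 4$, and the plan is to exhibit a point $\bar{x} \in \LS_+^3(G_{4,1})$ with $\bar{e}^{\top}\bar{x} > 4$. Since $G_{4,1} \in \S_2^3(K_6)$, Lemma~\ref{05stretchalpha} gives $\a(G_{4,1}) = \a(K_6)+3 = 4$, so such a point lies outside $\STAB(G_{4,1})$; combined with the observation that $\a_{\LS_+^p}(G) > \a(G)$ forces $r_+(G) \geq p+1$ (here $p=3$), this completes the proof.

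To keep the construction manageable, I would first exploit the automorphisms $f_1, f_2$. Because $\LS_+$ respects the coordinate permutations induced by graph automorphisms, $\LS_+^3(G_{4,1})$ is invariant under the group $\langle f_1, f_2\rangle$; since it is also convex, averaging any feasible high-objective point over the group produces a feasible point with the same objective that is constant on each vertex orbit. The three orbits are $\{1,2,3\}$, $\{4_0,5_0,6_0\}$, and $\{4_1,4_2,5_1,5_2,6_1,6_2\}$, so I would look for $\bar{x}$ of the form $x_i = a$ on the first orbit, $b$ on the second, $c$ on the third, chosen (numerically) so that $3a+3b+6c > 4$ while $\bar{x}$ remains in $\LS_+^3$.

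Membership $\bar{x}\in\LS_+^3(G_{4,1})$ I would verify by producing a nested family of certificate matrices. At the top level this means a matrix $Y \in \widehat{\LS}_+\big(\LS_+^2(G_{4,1})\big)$ with $Ye_0 = \begin{bmatrix}1\\\bar{x}\end{bmatrix}$; by symmetry $Y$ may be taken $\langle f_1,f_2\rangle$-invariant, which cuts the slice conditions $Ye_i, Y(e_0-e_i) \in \cone(\LS_+^2(G_{4,1}))$ down to one representative per orbit. By Lemma~\ref{lemfacet}, the ``in'' slice $Ye_i$ (normalised) is a point of $\LS_+^2(G_{4,1})$ on the face $\{x_i=1\}$, i.e.\ effectively a point of $\LS_+^2(G_{4,1}\ominus i)$, while the ``out'' slice $Y(e_0-e_i)$ lives in $\LS_+^2(G_{4,1}-i)$. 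The destructions are benign --- for instance $G_{4,1}\ominus 1$ is a forest and hence of rank $0$, so its slice only needs to land in $\FRAC$ --- so the real work is to certify the ``out'' slices as members of $\LS_+^2$ of the $11$-vertex graphs $G_{4,1}-i$, and this must be recursed once more down to $\LS_+^1$ and finally to $\LS_+^0 = \FRAC$. Throughout, I would discharge each positive-semidefiniteness requirement using an integer $UV$-certificate as in Proposition~\ref{prop51}, so that every check reduces to exact arithmetic on whole numbers.

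The main obstacle is precisely this tower of certificate matrices: while the symmetry reduction collapses the twelve vertices to three orbit representatives at each level, one still has to build explicit, mutually consistent matrices at levels $3$, $2$, and $1$ (the level-$2$ certificates on the deleted subgraphs $G_{4,1}-i$ being the bulkiest), and verify that the slices of each certificate are exactly the normalised points fed to the next level down. The choice of the parameters $(a,b,c)$ is constrained by the need for all of these matrices to be simultaneously PSD with the correct cone memberships; I expect these values to be found by solving the governing semidefinite feasibility problem numerically and then rationalising, after which the $UV$-certificates turn the whole verification into a finite, exact computation.
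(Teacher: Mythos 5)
Your overall architecture coincides with the paper's: the upper bound from Theorem~\ref{thmNover3}; a $\langle f_1,f_2\rangle$-symmetric point, constant on the three orbits $\set{1,2,3}$, $\set{4_0,5_0,6_0}$, $\set{4_1,4_2,5_1,5_2,6_1,6_2}$ (the paper's $\bar{x}$ has orbit values $(0.2534,\,0.7502,\,0.165)$ and objective $4.0008>4=\a(G_{4,1})$); a tower of certificate matrices whose slice conditions are cut down to one representative per orbit by the automorphisms and handled via Lemma~\ref{lemfacet}; and integer $UV$-certificates for all PSD checks. However, there is a genuine gap in where you locate the work. Your claim that the destructions are benign, so the ``in'' slices $Ye_i$ only need to land in $\FRAC$, is false for the orbit of $6_0$: while $G_{4,1}\ominus 1$ is indeed a forest, $G_{4,1}\ominus 6_0$ is a $9$-vertex graph of $\LS_+$-rank $3$ --- the paper names it $G_{3,2}$ and proves it is $3$-minimal. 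Via the face $\set{x_{6_0}=1,\ x_{6_1}=x_{6_2}=0}$, the condition $Ye_{6_0}\in\cone(\LS_+^2(G_{4,1}))$ reduces to membership of the restricted vector in $\LS_+^2(G_{3,2})$, and since $r_+(G_{3,2})=3$, the set $\LS_+^2(G_{3,2})$ is a proper, nontrivial tightening of $\FRAC$; a $\FRAC$ check certifies nothing here. This in-slice is in fact one of the two heaviest pieces of the entire verification (the paper's Lemma~\ref{lemG411}, which builds its own certificate matrix $Y_1$ on $G_{4,1}\ominus 6_0$ and recurses a further level down to stable sets). As written, your plan terminates at vertex $6_0$ with an unverified condition.

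The repair is exactly the recursion you already intend for the ``out'' slices, applied on the in-side as well; note that the easy and hard cases distribute differently from your prediction: for each orbit representative exactly one of the two slices is cheap, but which one alternates. In the paper, $Y_0e_1$, $Y_0e_{4_1}$, and $Y_0(e_0-e_{6_0})$ are shown to be entrywise \emph{dominated} by conic combinations of stable-set incidence vectors, hence lie in $\cone(\STAB(G_{4,1}))\subseteq\cone(\LS_+^2(G_{4,1}))$ directly (using that the cone is lower-comprehensive), whereas $Y_0e_{6_0}$, $Y_0(e_0-e_1)$, and $Y_0(e_0-e_{4_1})$ require level-$2$ machinery --- the last of these economically, by reusing a symmetric image of the already-certified $Y_0e_{6_0}$ rather than a fresh matrix. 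The domination trick (inequalities rather than exact conic decompositions) is worth adopting in your scheme, as it provides slack when rationalizing numerically found solutions. The remaining ingredients of your proposal --- $\a(G_{4,1})=4$ via Lemma~\ref{05stretchalpha}, the orbit structure, symmetry-invariance of the certificates, and the inference $\a_{\LS_+^3}(G_{4,1})>\a(G_{4,1})\Rightarrow r_+(G_{4,1})\geq 4$ --- are all sound and match the paper.
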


\begin{proof}%[Proof of Theorem~\ref{thmG41}]
For convenience, let $G \ce G_{4,1}$ throughout this proof. Since $G$ has $12$ vertices, by Theorem~\ref{thmNover3} it suffices to show that $r_+(G) \geq 4$. Consider the matrix $Y_0$ defined as follows:

{\scriptsize
\[
Y_0 \ce
\bbordermatrix{
&& 1 & 2 & 3 & 4_1 & 4_0 & 4_2 & 5_1 & 5_0 & 5_2 & 6_1 & 6_0 & 6_2 \cr
&100000 & 25340 & 25340 & 25340 & 16500 & 75020 & 16500 & 16500 & 75020 & 16500 & 16500 & 75020 & 16500\cr
&25340 & 25340 & 0 & 0 & 0 & 17502 & 7838 & 7838 & 17502 & 0 & 0 & 25340 & 0\cr
&25340 & 0 & 25340 & 0 & 0 & 25340 & 0 & 0 & 17502 & 7838 & 7838 & 17502 & 0\cr
&25340 & 0 & 0 & 25340 & 7838 & 17502 & 0 & 0 & 25340 & 0 & 0 & 17502 & 7838\cr
&16500 & 0 & 0 & 7838 & 16500 & 0 & 8073 & 589 & 15911 & 0 & 589 & 15419 & 1081\cr
&75020 & 17502 & 25340 & 17502 & 0 & 75020 & 0 & 15419 & 51150 & 15911 & 15911 & 51150 & 15419\cr
&16500 & 7838 & 0 & 0 & 8073 & 0 & 16500 & 1081 & 15419 & 589 & 0 & 15911 & 589\cr
&16500 & 7838 & 0 & 0 & 589 & 15419 & 1081 & 16500 & 0 & 8073 & 589 & 15911 & 0\cr
&75020 & 17502 & 17502 & 25340 & 15911 & 51150 & 15419 & 0 & 75020 & 0 & 15419 & 51150 & 15911\cr
&16500 & 0 & 7838 & 0 & 0 & 15911 & 589 & 8073 & 0 & 16500 & 1081 & 15419 & 589\cr
&16500 & 0 & 7838 & 0 & 589 & 15911 & 0 & 589 & 15419 & 1081 & 16500 & 0 & 8073\cr
&75020 & 25340 & 17502 & 17502 & 15419 & 51150 & 15911 & 15911 & 51150 & 15419 & 0 & 75020 & 0\cr
&16500 & 0 & 0 & 7838 & 1081 & 15419 & 589 & 0 & 15911 & 589 & 8073 & 0 & 16500 \cr
}.
\]}
Again, the columns of $Y_0$ are labelled by the vertices in $G$ they correspond to, with the rows of $Y_0$ following the same order of indexing.

We prove our claim by showing that $Y_0 \in \widehat{\LS}_+^3(G)$. First, one can check that $Y_0 \succeq 0$ (a $UV$-certificate is provided in Table~\ref{tabUV}). Moreover, observe that for all $i,j \in V(G)$,
\[
Y_0[i,j] = Y_0[f_1(i), f_1(j)] = Y_0[f_2(i), f_2(j)], 
\]
and thus the entries of $Y_0$ exhibit the same symmetries of the graph that are exposed by the automorphisms $f_1$ and $f_2$. Hence, to show that $Y_0 \in \widehat{\LS}_+^3(G)$, it suffices to verify the conditions $Y_0e_i, Y_0(e_0-e_i) \in \cone(\LS_+^2(G))$ for $i \in \set{1, 4_1, 6_0}$, since for every other vertex $j$ there is an automorphism of $G$ that would map $j$ to one of these three vertices.

Next, notice that
\begin{allowdisplaybreaks}
\begin{align*}
Y_0e_1 \leq{} &
17502 \begin{bmatrix} 1 \\ \chi_{\set{ 1, 4_0 ,5_0 , 6_0}} \end{bmatrix} + 
7838 \begin{bmatrix} 1 \\ \chi_{\set{ 1 , 4_2	, 5_1 , 6_0}} \end{bmatrix} 
%\right)
,\\
Y_0e_{4_1} \leq{} &
7838 \begin{bmatrix} 1 \\ \chi_{\set{ 3 , 4_1 , 5_0 , 6_0}} \end{bmatrix} + 
589 \begin{bmatrix} 1 \\ \chi_{\set{ 4_1 , 4_2 , 5_1 , 6_0}} \end{bmatrix} + 
6992 \begin{bmatrix} 1 \\ \chi_{\set{ 4_1 , 4_2 , 5_0 , 6_0}} \end{bmatrix}\\
 & +
492 \begin{bmatrix} 1 \\ \chi_{\set{ 4_1 , 4_2 , 5_0 , 6_2}} \end{bmatrix} + 
589 \begin{bmatrix} 1 \\ \chi_{\set{ 4_1 , 5_0 , 6_1 , 6_2}} \end{bmatrix}
,\\
Y_0(e_0-e_{6_0}) \leq{} &
%\frac{1}{25340} 
 7366 \begin{bmatrix} 1 \\ \chi_{\set{ 2 , 4_0 , 5_0 , 6_1}} \end{bmatrix} + 
 476 \begin{bmatrix} 1 \\ \chi_{\set{ 2 , 4_0 , 5_2 , 6_1}} \end{bmatrix} + 
 476 \begin{bmatrix} 1 \\ \chi_{\set{ 3 , 4_1 , 5_0 , 6_2}} \end{bmatrix} \\
 & + 7366 \begin{bmatrix} 1 \\ \chi_{\set{ 3 , 4_0 , 5_0 , 6_2}} \end{bmatrix} + 
 605 \begin{bmatrix} 1 \\ \chi_{\set{ 4_1 , 4_2 , 5_0 , 6_2}} \end{bmatrix} + 
 605 \begin{bmatrix} 1 \\ \chi_{\set{ 4_0 , 5_1 , 5_2 , 6_1}} \end{bmatrix}\\
 & +  8058 \begin{bmatrix} 1 \\ \chi_{\set{ 4_0 , 5_0 , 6_1 , 6_2}} \end{bmatrix} + 
28 \begin{bmatrix} 1 \\ 0 \end{bmatrix}.
\end{align*}
\end{allowdisplaybreaks}

Notice that all incidence vectors above correspond to stable sets in $G$. Since $\cone(\STAB(G))$ is a lower-comprehensive convex cone, we obtain that $Y_0e_1, Y_0e_{4_1}, Y_0(e_0-e_{6_0}) \in \cone(\STAB(G)) \subseteq \cone(\LS_+^2(G))$. The details for $Y_0(e_0 - e_1), Y_0e_{6_0},  Y_0(e_0 - e_{4_1}) \in \cone(\LS_+^2(G))$ are provided, respectively, in the proofs of Lemmas~\ref{lemG412},~\ref{lemG411}, and~\ref{lemG413} in Appendix~\ref{secA1}.

Finally, let $\bar{x}$ be the vector such that $Y_0e_0 = 100000 \begin{bmatrix} 1\\ \bar{x} \end{bmatrix}$. Since $Y_0 \in \widehat{\LS}_+^3(G)$, we have $\bar{x} \in \LS_+^3(G)$. Thus, we see that 
\[
\a_{\LS_+^3}(G) \geq \bar{e}^{\top} \bar{x} = 4.0008 > 4 = \a(G).
\]
This shows $\bar{x} \not\in \STAB(G)$ and thus $r_+(G) \geq 4$.
\end{proof}

Notice that $G_{4,1}$ contains $24$ edges. Also, notice that $G_{4,1} \in \S_2^3(K_6)$ and satisfies ($*$) from Proposition~\ref{propSkKlfacet}. Consequently, every graph $G \in \S_2^3(K_6)$ which is a subgraph of $G_{4,1}$ also satisfies ($*$), and so Proposition~\ref{propSkKlfacet} assures that the inequality $\bar{e}^{\top}x \leq 4$ is a facet of $\STAB(G)$. Thus, by Lemma~\ref{lem05subgraph}(ii) it follows that every graph in $\S_2^3(K_6)$ which is a subgraph of $G_{4,1}$ (which can have as few as $21$ edges) also has $\LS_+$-rank $4$, giving more examples of $4$-minimal graphs. The six non-isomorphic proper subgraphs of $G_{4,1}$ that belong to $\S_2^3(K_6)$ are listed in Figure~\ref{figG41subs}.

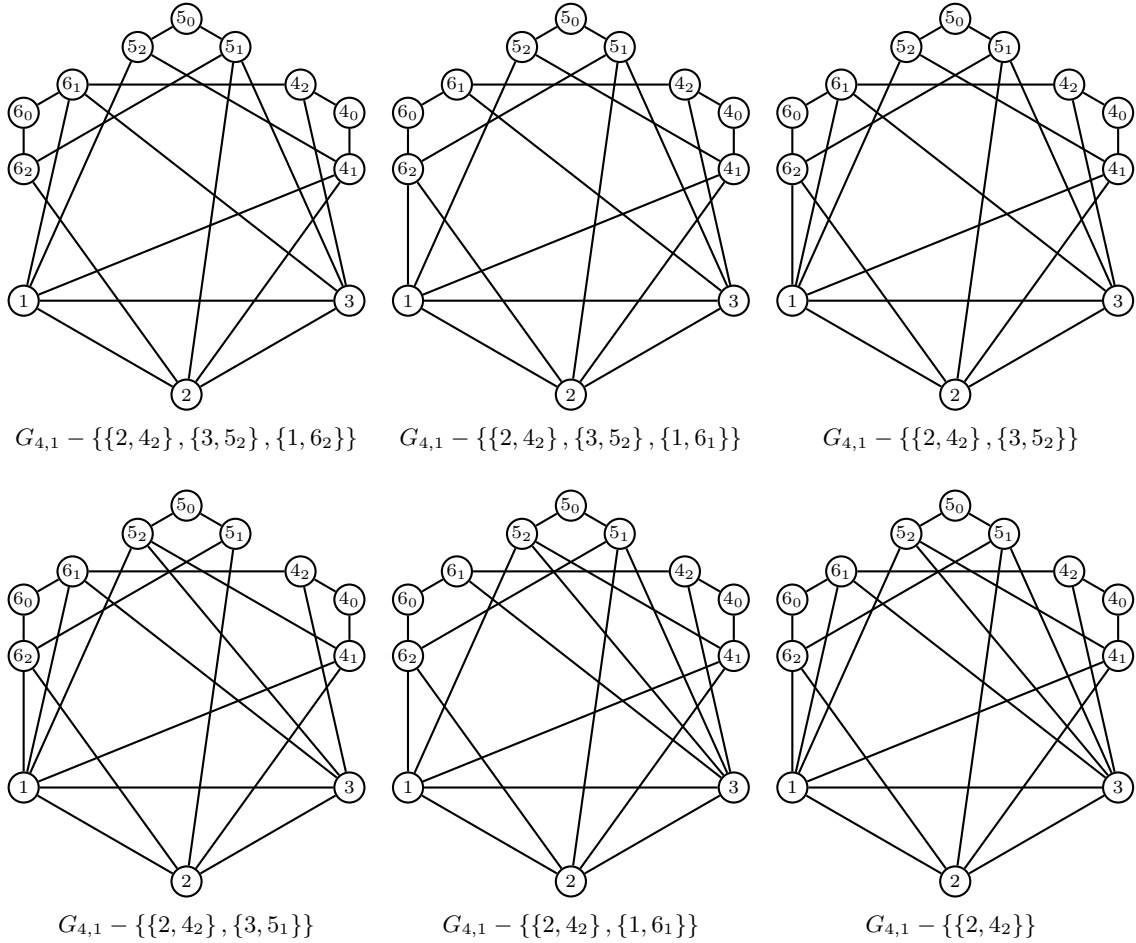
\begin{figure}[ht!]
\begin{center}
\def\x{270 - 360/6}
\def\z{360/6}
\def\y{0.7}
\def\sc{2.5}

\def\vp{
\node[main node] at ({cos(\x+(0)*\z)},{sin(\x+(0)*\z)}) (1) {$1$};
\node[main node] at ({cos(\x+(1)*\z)},{sin(\x+(1)*\z)}) (2) {$2$};
\node[main node] at ({cos(\x+(2)*\z)},{sin(\x+(2)*\z)}) (3) {$3$};

\node[main node] at ({ \y* cos(\x+(3)*\z) + (1-\y)*cos(\x+(2)*\z)},{ \y* sin(\x+(3)*\z) + (1-\y)*sin(\x+(2)*\z)}) (4) {$4_1$};
\node[main node] at ({cos(\x+(3)*\z)},{sin(\x+(3)*\z)}) (5) {$4_0$};
\node[main node] at ({ \y* cos(\x+(3)*\z) + (1-\y)*cos(\x+(4)*\z)},{ \y* sin(\x+(3)*\z) + (1-\y)*sin(\x+(4)*\z)}) (6) {$4_2$};

\node[main node] at ({ \y* cos(\x+(4)*\z) + (1-\y)*cos(\x+(3)*\z)},{ \y* sin(\x+(4)*\z) + (1-\y)*sin(\x+(3)*\z)}) (7) {$5_1$};
\node[main node] at ({cos(\x+(4)*\z)},{sin(\x+(4)*\z)}) (8) {$5_0$};
\node[main node] at ({ \y* cos(\x+(4)*\z) + (1-\y)*cos(\x+(5)*\z)},{ \y* sin(\x+(4)*\z) + (1-\y)*sin(\x+(5)*\z)}) (9) {$5_2$};

\node[main node] at ({ \y* cos(\x+(5)*\z) + (1-\y)*cos(\x+(4)*\z)},{ \y* sin(\x+(5)*\z) + (1-\y)*sin(\x+(4)*\z)}) (10) {$6_1$};
\node[main node] at ({cos(\x+(5)*\z)},{sin(\x+(5)*\z)}) (11) {$6_0$};
\node[main node] at ({ \y* cos(\x+(5)*\z) + (1-\y)*cos(\x+(6)*\z)},{ \y* sin(\x+(5)*\z) + (1-\y)*sin(\x+(6)*\z)}) (12) {$6_2$};

}
\def\ep{ 
\path[every node/.style={font=\sffamily}]
(1) edge (2)
(1) edge (3)
(2) edge (3)
(4) edge (5)
(5) edge (6)
(7) edge (8)
(8) edge (9)
(10) edge (11)
(11) edge (12)
(4) edge (9)
(7) edge (12)
(10) edge (6)
(1) edge (4)
(3) edge (6)
(2) edge (7)
(1) edge (9)
(3) edge (10)
(2) edge (12);
}

\begin{tabular}{ccc}
\begin{tikzpicture}[scale=\sc, thick,main node/.style={circle, minimum size=4mm, inner sep=0.1mm,draw,font=\tiny\sffamily}]
\vp \ep
\path[every node/.style={font=\sffamily}]
(2) edge (4)
%(2) edge (6)
(3) edge (7)
%(3) edge (9)
(1) edge (10)
%(1) edge (12)
;
\end{tikzpicture}
&

\begin{tikzpicture}[scale=\sc, thick,main node/.style={circle, minimum size=4mm, inner sep=0.1mm,draw,font=\tiny\sffamily}]
\vp \ep
\path[every node/.style={font=\sffamily}]
(2) edge (4)
%(2) edge (6)
(3) edge (7)
%(3) edge (9)
%(1) edge (10)
(1) edge (12);
\end{tikzpicture}
&

\begin{tikzpicture}[scale=\sc, thick,main node/.style={circle, minimum size=4mm, inner sep=0.1mm,draw,font=\tiny\sffamily}]
\vp \ep
\path[every node/.style={font=\sffamily}]
(2) edge (4)
%(2) edge (6)
(3) edge (7)
%(3) edge (9)
(1) edge (10)
(1) edge (12);
\end{tikzpicture}
\\
{\footnotesize $G_{4,1} - \set{ \set{2,4_2}, \set{3,5_2}, \set{1,6_2}}$ }&
{\footnotesize $G_{4,1} - \set{ \set{2,4_2}, \set{3,5_2}, \set{1,6_1}}$} &
{\footnotesize $G_{4,1} - \set{ \set{2,4_2}, \set{3,5_2}}$ }
\\
\\
\begin{tikzpicture}[scale=\sc, thick,main node/.style={circle, minimum size=4mm, inner sep=0.1mm,draw,font=\tiny\sffamily}]
\vp \ep
\path[every node/.style={font=\sffamily}]
(2) edge (4)
%(2) edge (6)
%(3) edge (7)
(3) edge (9)
(1) edge (10)
(1) edge (12);
\end{tikzpicture}
&

\begin{tikzpicture}[scale=\sc, thick,main node/.style={circle, minimum size=4mm, inner sep=0.1mm,draw,font=\tiny\sffamily}]
\vp \ep
\path[every node/.style={font=\sffamily}]
(2) edge (4)
%(2) edge (6)
(3) edge (7)
(3) edge (9)
%(1) edge (10)
(1) edge (12);
\end{tikzpicture}
&

\begin{tikzpicture}[scale=\sc, thick,main node/.style={circle, minimum size=4mm, inner sep=0.1mm,draw,font=\tiny\sffamily}]
\vp \ep
\path[every node/.style={font=\sffamily}]
(2) edge (4)
%(2) edge (6)
(3) edge (7)
(3) edge (9)
(1) edge (10)
(1) edge (12);
\end{tikzpicture}\\
{\footnotesize $G_{4,1} - \set{ \set{2,4_2}, \set{3,5_1}}$ }&
{\footnotesize $G_{4,1} - \set{ \set{2,4_2}, \set{1,6_1}}$} &
{\footnotesize $G_{4,1} - \set{ \set{2,4_2}}$ }

\end{tabular}

\caption{The six non-isomorphic proper subgraphs of $G_{4,1}$ that belong to $S_2^3(K_6)$ (and thus are $4$-minimal)}\label{figG41subs}
\end{center}
\end{figure}

%This implies that the densest $4$-minimal graph has density at least $d(G_{4,1}) = \frac{24}{\binom{12}{2}} = \frac{4}{11}$, while the sparsest $4$-minimal graph has density at most $\frac{21}{\binom{12}{2}} = \frac{7}{22}$.

Moreover, the fact that $G_{4,1}$ is $4$-minimal also provides some new examples of $3$-minimal graphs.

\begin{corollary}
Let $G_{3,2} \ce G_{4,1} \ominus 6_0$. Then $G_{3,2}$ is a $3$-minimal graph.
\end{corollary}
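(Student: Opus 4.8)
The plan is to establish the two bounds $r_+(G_{3,2}) \leq 3$ and $r_+(G_{3,2}) \geq 3$ separately. First I would record the structure of $G_{3,2}$. Since $\Gamma_{G_{4,1}}(6_0) = \set{6_1, 6_2}$, the destruction $G_{4,1} \ominus 6_0 = G_{4,1} - \set{6_0, 6_1, 6_2}$ is the subgraph of $G_{4,1}$ induced by the nine vertices $\set{1,2,3,4_1,4_0,4_2,5_1,5_0,5_2}$. Deleting the entire stretch $\set{6_0,6_1,6_2}$ of vertex $6$ from $G_{4,1} \in \S_2^3(K_6)$ returns precisely the graph obtained from $K_5$ (on $\set{1,2,3,4,5}$) by $2$-stretching vertices $4$ and $5$, so $G_{3,2} \in \S_2^2(K_5)$, and Lemma~\ref{05stretchalpha} gives $\alpha(G_{3,2}) = \alpha(K_5) + 2 = 3$. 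The upper bound $r_+(G_{3,2}) \leq 3$ is then immediate from Theorem~\ref{thmNover3}, as $|V(G_{3,2})| = 9$.

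For the lower bound I would reuse the certificate $Y_0 \in \widehat{\LS}_+^3(G_{4,1})$ produced in the proof of Theorem~\ref{thmG41}. The key observation is a lower-bound counterpart to Theorem~\ref{thmDeleteDestroy}(i): by the definition of $\widehat{\LS}_+^3$ we have $Y_0 e_{6_0} \in \cone(\LS_+^2(G_{4,1}))$, and since $Y_0[0,6_0] = 75020 > 0$, the normalized vector $z \ce \frac{1}{75020} Y_0 e_{6_0}$ lies in $\LS_+^2(G_{4,1})$. Because $Y_0 e_0 = \diag(Y_0)$ we get $z_{6_0} = 1$, whereupon the edge inequalities force $z_{6_1} = z_{6_2} = 0$; thus $z$ sits on the face of $[0,1]^{V(G_{4,1})}$ on which the deleted coordinates are fixed. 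By the remark following Lemma~\ref{lemfacet}, restricting $z$ to $V(G_{3,2})$ yields a point $\hat z \in \LS_+^2(G_{3,2})$.

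It then remains to read the relevant entries off the $6_0$-column of $Y_0$ and sum them: the nine coordinates indexed by $V(G_{3,2})$ total $225304$, so $\bar{e}^{\top} \hat z = 225304/75020 > 3 = \alpha(G_{3,2})$. Since $\bar{e}^{\top} x \leq \alpha(G_{3,2})$ is valid for $\STAB(G_{3,2})$, this shows $\hat z \in \LS_+^2(G_{3,2}) \setminus \STAB(G_{3,2})$, hence $r_+(G_{3,2}) \geq 3$ and therefore $r_+(G_{3,2}) = 3$; together with $|V(G_{3,2})| = 9 = 3 \cdot 3$ this makes $G_{3,2}$ $3$-minimal. The argument is essentially bookkeeping once $Y_0$ is in hand; the only points requiring care are confirming $\alpha(G_{3,2}) = 3$ (so that the coordinate sum genuinely separates $\hat z$ from the integer hull) and checking that the column-slice of a $\widehat{\LS}_+^k$-certificate really descends to a valid $\LS_+^{k-1}$-point on the destroyed graph — which is exactly the mechanism underlying Theorem~\ref{thmDeleteDestroy}(i), here run in the separating direction.
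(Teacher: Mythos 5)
Your proof is correct, but it takes a genuinely different route from the paper's. The paper argues structurally and nonconstructively: since $r_+(G_{4,1}) = 4$ (Theorem~\ref{thmG41}), Theorem~\ref{thmDeleteDestroy}(i) guarantees a vertex $i$ with $r_+(G_{4,1} \ominus i) = 3$, Theorem~\ref{thmNover3} then forces $|V(G_{4,1} \ominus i)| \geq 9$ and hence $\deg(i) = 2$, so $i \in \set{4_0, 5_0, 6_0}$; since the three destructions at these vertices are pairwise isomorphic (via the rotation $f_1$), each of them --- in particular $G_{3,2}$ --- is $3$-minimal. You instead bypass the destruction bound and the automorphism argument entirely and extract an explicit separating point: slicing the certificate $Y_0$ along the $6_0$-column, normalizing by $[Y_0 e_{6_0}]_0 = 75020$, and restricting to the induced subgraph $G_{3,2}$ gives $\hat z \in \LS_+^2(G_{3,2})$ with $\bar{e}^{\top}\hat z = 225304/75020 > 3 = \alpha(G_{3,2})$ (I verified the column sum and the value $\alpha(G_{3,2}) = 3$ via $G_{3,2} \in \S_2^2(K_5)$ and Lemma~\ref{05stretchalpha}). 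Each step is sound: $Y_0 e_{6_0} \in \cone(\LS_+^2(G_{4,1}))$ holds by definition of $\widehat{\LS}_+^3$, and the restriction step needs only the induced-subgraph consequence of Lemma~\ref{lemfacet} stated in Section~\ref{sec2} --- your face-of-the-cube discussion is a harmless but unnecessary detour, since $G_{3,2}$ is an induced subgraph of $G_{4,1}$ regardless. It is worth noting that your $\hat z$ is, up to scaling, exactly the vector $v = Y_1 e_0$ that the paper's Lemma~\ref{lemG411} certifies, so your argument in effect re-runs the content of that lemma in the separating direction. What each approach buys: the paper's proof is shorter, requires no new computation, and additionally shows that \emph{any} rank-$3$ destruction of $G_{4,1}$ must occur at one of the three degree-$2$ vertices and yields the same graph up to isomorphism; yours is constructive, avoids Theorem~\ref{thmDeleteDestroy}(i), and delivers the quantitative bonus $\alpha_{\LS_+^2}(G_{3,2}) > 3$. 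One small caution: your claim that deleting $\set{6_0, 6_1, 6_2}$ from $G_{4,1}$ lands in $\S_2^2(K_5)$ does require checking that neither stretching set for vertices $4$ and $5$ is contained in $\set{6_1, 6_2}$ (it holds here, e.g.\ $4_2$ retains neighbours $2,3$ and $5_1$ retains $2,3$ after the deletion), but this is a routine inspection of the edge list.
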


\begin{proof}
Since $G_{4,1}$ is $4$-minimal, there exists vertex $i \in V(G_{4,1})$ where $G \ominus i$ has $\LS_+$-rank $3$ and $\deg(i) = 2$. Thus, $i \in \set{4_0, 5_0, 6_0}$. Now observe that $G_{4,1} \ominus 4_0, G_{4,1} \ominus 5_0, G_{4,1} \ominus 6_0$ are all isomorphic to each other. Thus, $G_{3,2}$ is $3$-minimal.
\end{proof}

By Lemma~\ref{lem05subgraph}(ii) again, every graph in $\S_2^2(K_5)$ that is a subgraph of $G_{3,2}$ is $3$-minimal. Figure~\ref{figG32subs} illustrates $G_{3,2}$ (top left) and its five non-isomorphic proper subgraphs that belong to $\S_2^2(K_5)$. Notice that one of these graphs (top right of Figure~\ref{figG32subs}) is isomorphic to $G_{3,1}$, the first $3$-minimal graph discovered in~\cite{EscalanteMN06}.

\begin{figure}[ht!]
\begin{center}
\def\x{270 - 360/5}
\def\z{360/5}
\def\y{0.7}
\def\sc{2}
\def\vp{
\node[main node] at ({cos(\x+(0)*\z)},{sin(\x+(0)*\z)}) (1) {$1$};
\node[main node] at ({cos(\x+(1)*\z)},{sin(\x+(1)*\z)}) (2) {$2$};
\node[main node] at ({cos(\x+(2)*\z)},{sin(\x+(2)*\z)}) (3) {$3$};

\node[main node] at ({ \y* cos(\x+(3)*\z) + (1-\y)*cos(\x+(2)*\z)},{ \y* sin(\x+(3)*\z) + (1-\y)*sin(\x+(2)*\z)}) (4) {$4_1$};
\node[main node] at ({cos(\x+(3)*\z)},{sin(\x+(3)*\z)}) (5) {$4_0$};
\node[main node] at ({ \y* cos(\x+(3)*\z) + (1-\y)*cos(\x+(4)*\z)},{ \y* sin(\x+(3)*\z) + (1-\y)*sin(\x+(4)*\z)}) (6) {$4_2$};

\node[main node] at ({ \y* cos(\x+(4)*\z) + (1-\y)*cos(\x+(3)*\z)},{ \y* sin(\x+(4)*\z) + (1-\y)*sin(\x+(3)*\z)}) (7) {$5_1$};
\node[main node] at ({cos(\x+(4)*\z)},{sin(\x+(4)*\z)}) (8) {$5_0$};
\node[main node] at ({ \y* cos(\x+(4)*\z) + (1-\y)*cos(\x+(5)*\z)},{ \y* sin(\x+(4)*\z) + (1-\y)*sin(\x+(5)*\z)}) (9) {$5_2$};
}
\def\ep{ 
\path[every node/.style={font=\sffamily}]
(1) edge (2)
(1) edge (3)
(2) edge (3)
(4) edge (5)
(5) edge (6)
(7) edge (8)
(8) edge (9)
(4) edge (9)
(1) edge (4)
(3) edge (6)
(2) edge (7)
(1) edge (9);
}

\begin{tabular}{ccc}
\begin{tikzpicture}[scale=\sc, thick,main node/.style={circle, minimum size=4mm, inner sep=0.1mm,draw,font=\tiny\sffamily}]
\vp \ep
\path[every node/.style={font=\sffamily}]
(2) edge (4)
(2) edge (6)
(3) edge (7)
(3) edge (9)
;
\end{tikzpicture}
&
\begin{tikzpicture}[scale=\sc, thick,main node/.style={circle, minimum size=4mm, inner sep=0.1mm,draw,font=\tiny\sffamily}]
\vp \ep
\path[every node/.style={font=\sffamily}]
(2) edge (4)
%(2) edge (6)
(3) edge (7)
%(3) edge (9)
;
\end{tikzpicture}
&
\begin{tikzpicture}[scale=\sc, thick,main node/.style={circle, minimum size=4mm, inner sep=0.1mm,draw,font=\tiny\sffamily}]
\vp \ep
\path[every node/.style={font=\sffamily}]
(2) edge (4)
%(2) edge (6)
%(3) edge (7)
(3) edge (9)
;
\end{tikzpicture}
\\
{\footnotesize $G_{3,2}$ }&
{\footnotesize $G_{3,2} - \set{ \set{2,4_2}, \set{3,5_2}}$} &
{\footnotesize $G_{3,2} - \set{ \set{2,4_2}, \set{3,5_1}}$ }
\\
\\
\begin{tikzpicture}[scale=\sc, thick,main node/.style={circle, minimum size=4mm, inner sep=0.1mm,draw,font=\tiny\sffamily}]
\vp \ep
\path[every node/.style={font=\sffamily}]
(2) edge (4)
%(2) edge (6)
(3) edge (7)
(3) edge (9)
;
\end{tikzpicture}
&
\begin{tikzpicture}[scale=\sc, thick,main node/.style={circle, minimum size=4mm, inner sep=0.1mm,draw,font=\tiny\sffamily}]
\vp \ep
\path[every node/.style={font=\sffamily}]
%(2) edge (4)
(2) edge (6)
(3) edge (7)
%(3) edge (9)
;
\end{tikzpicture}
&
\begin{tikzpicture}[scale=\sc, thick,main node/.style={circle, minimum size=4mm, inner sep=0.1mm,draw,font=\tiny\sffamily}]
\vp \ep
\path[every node/.style={font=\sffamily}]
%(2) edge (4)
(2) edge (6)
(3) edge (7)
(3) edge (9)
;
\end{tikzpicture}
\\
{\footnotesize $G_{3,2}- \set{ \set{2,4_2}}$}&
{\footnotesize $G_{3,2} - \set{ \set{2,4_1}, \set{3,5_2}}$} &
{\footnotesize $G_{3,2} - \set{ \set{2,4_2}}$ }
\end{tabular}

\caption{The graph $G_{3,2}$ (top left) and its five non-isomorphic proper subgraphs that belong to $S_2^2(K_5)$ (and thus are $3$-minimal)}\label{figG32subs}
\end{center}
\end{figure}
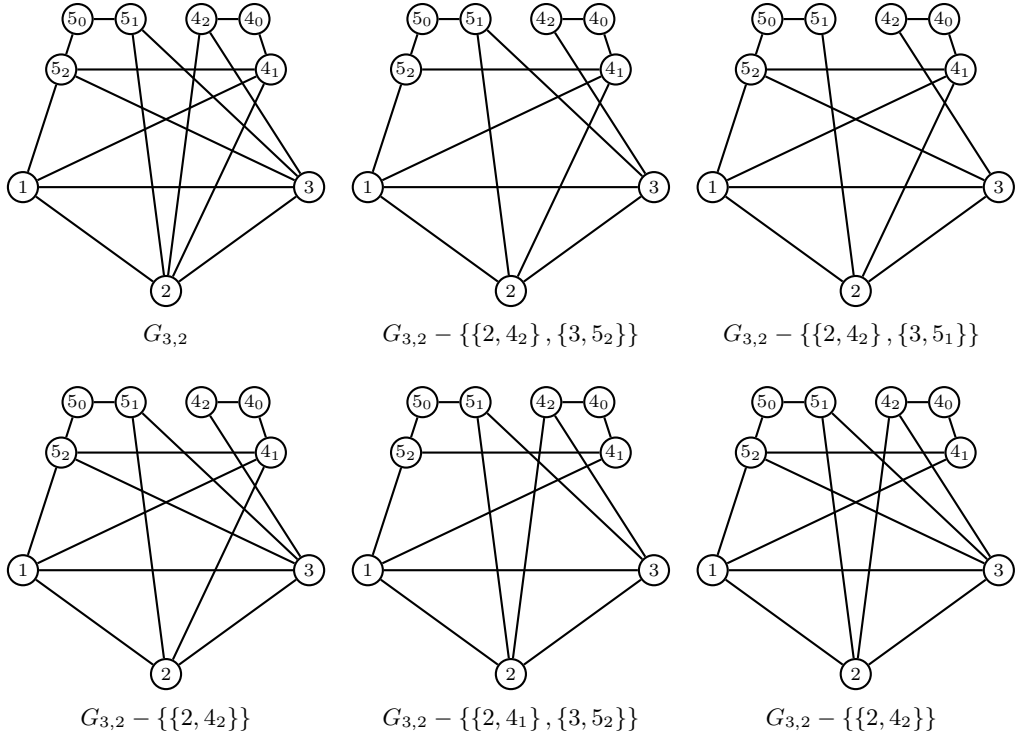

%This implies that the density of a $3$-minimal graph can be as high as $d(G_{3,2}) = \frac{16}{\binom{9}{2}} = \frac{4}{9}$ and as low as $d(G_{3,1}) = \frac{14}{\binom{9}{2}} = \frac{7}{18}$.

We close the section by showing that there are no $3$-minimal graphs with fewer edges than $G_{3,1}$.

\begin{proposition}\label{propSparsest3min}
Suppose $G$ is a $3$-minimal graph. Then $|E(G)| \geq 14$.
\end{proposition}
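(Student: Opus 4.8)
The plan is to combine the destruction bound with the classification of $2$-minimal graphs, and then to control the number of edges joining the ``stretched'' triple to the rest of $G$.

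\emph{Reduction via minimality.} First I would record that a $3$-minimal $G$ can have no cut clique: if some clique $C$ disconnected $G$ into non-empty parts $S_1, S_2$, then Proposition~\ref{propCliqueCut} gives $r_+(G) = \max\set{r_+(G-S_1), r_+(G-S_2)}$, but both $G-S_1$ and $G-S_2$ are induced subgraphs on fewer than $9$ vertices, hence of $\LS_+$-rank at most $2$ by Theorem~\ref{thmNover3}, contradicting $r_+(G)=3$. From the absence of cut cliques I would deduce that $G$ has minimum degree at least $2$ (a vertex of degree $\leq 1$ is separated by its neighbour, a clique of size $\leq 1$) and that the two neighbours of any degree-$2$ vertex are non-adjacent (otherwise they form a cut clique isolating that vertex).

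\emph{Locating the core.} Next, since $r_+(G)=3$, Theorem~\ref{thmDeleteDestroy}(i) yields a vertex $v_0$ with $r_+(G \ominus v_0) \geq 2$; as $G \ominus v_0$ is a proper induced subgraph it has rank exactly $2$, so $|V(G \ominus v_0)| \geq 6$ and $\deg(v_0) \leq 2$. With the minimum-degree bound this forces $\deg(v_0)=2$, so $G \ominus v_0$ is a rank-$2$ graph on exactly $6$ vertices, i.e.\ a $2$-minimal graph, and hence isomorphic to $G_{2,1}$ (with $8$ edges) or $G_{2,2}$ (with $9$ edges) by the classification of Escalante, Montelar and Nasini~\cite{EscalanteMN06}. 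Writing $v_1,v_2$ for the non-adjacent neighbours of $v_0$ and $N_1, N_2 \subseteq V(G \ominus v_0)$ for their neighbours inside the core, I would decompose $|E(G)| = |E(G \ominus v_0)| + 2 + c$, where the $2$ counts $\set{v_0,v_1},\set{v_0,v_2}$ and $c = |N_1| + |N_2|$. Thus it suffices to prove $c \geq 4$ when the core is $G_{2,1}$ and $c \geq 3$ when it is $G_{2,2}$.

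\emph{Bounding the cross-edges.} The easy estimate is $c \geq 2$: by the affine-independence argument used to prove that $\ell$-minimal graphs arise from $2$-stretching, the tight stable sets of a full-support facet force both $N_1 \setminus N_2$ and $N_2 \setminus N_1$ to be non-empty. To improve this I would first dispose of $c=2$ uniformly. There $N_1 = \set{p}$, $N_2 = \set{q}$ with $p \neq q$, so $G$ is the core with a length-$4$ path $p - v_1 - v_0 - v_2 - q$ attached at $p$ and $q$. Deleting $p$ turns $v_1,v_0,v_2$ into a pendant path, which repeated applications of Proposition~\ref{propCliqueCut} peel off; what remains is the core minus the single vertex $p$, a graph on $5$ vertices of rank at most $1$ by Theorem~\ref{thmNover3}. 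Hence $r_+(G-p) \leq 1$, and Theorem~\ref{thmDeleteDestroy}(ii) gives $r_+(G) \leq 2$, a contradiction. This already settles the $G_{2,2}$ case (where $c \geq 3$ yields $|E(G)| \geq 14$) and reduces the $G_{2,1}$ case to ruling out $c=3$.

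\emph{The main obstacle.} The genuinely hard case is the core $G_{2,1}$ with $c=3$, where $N_1 = \set{w_1}$ and $N_2 = \set{w_2,w_3}$ are disjoint (so $|E(G)| = 13$); here $v_1,v_0$ are adjacent degree-$2$ vertices, so the no-cut-clique property already forces $v_2 \not\sim w_1$, yet no separating clique is visible and the single-vertex deletions no longer drop the rank below $2$ automatically. I expect this to require a finite check over the few placements of $w_1,w_2,w_3$ in $G_{2,1}$ (up to $\Aut(G_{2,1})$), showing in each that $r_+(G) \leq 2$ --- either by exhibiting a deletion $G-i$ that is perfect or otherwise of rank at most $1$, so that Theorem~\ref{thmDeleteDestroy}(ii) applies, or, when no such reduction exists, by producing an explicit membership argument certifying $\LS_+^2(G) = \STAB(G)$. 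Since $G_{3,1}$ attains $14$ edges the bound is tight, and checking that none of these $13$-edge configurations reaches rank $3$ is the crux of the argument.
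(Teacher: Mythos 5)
Your setup coincides with the paper's own proof: destruction (Theorem~\ref{thmDeleteDestroy}(i)) locates a degree-$2$ vertex $v_0$ whose destruction leaves a $2$-minimal core isomorphic to $G_{2,1}$ or $G_{2,2}$; your edge decomposition $|E(G)| = 2 + c + |E(G \ominus v_0)|$ is the paper's identity $|E(G)| = \delta(A) + \delta(A,B) + \delta(B)$ with $A = \set{v_0,v_1,v_2}$; and your disposal of $c=2$ (delete $p$, peel the resulting pendant path with Proposition~\ref{propCliqueCut}, land on a $5$-vertex graph of rank at most $1$ by Theorem~\ref{thmNover3}, conclude via Theorem~\ref{thmDeleteDestroy}(ii)) is exactly the paper's sparse-path step (Proposition~\ref{propSparsePath}). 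Up to that point everything you write is correct.

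The genuine gap is precisely where you stop: the case of core $G_{2,1}$ with $c=3$, i.e.\ $|E(G)| = 13$, is not proved but deferred to ``a finite check'' that you never carry out, and your fallback for configurations admitting no rank-reducing deletion --- producing ``an explicit membership argument certifying $\LS_+^2(G) = \STAB(G)$'' --- is unsupported by the paper's toolkit: every upper bound on $r_+$ in the paper comes from structural reductions (perfect subgraphs, deletion, clique cuts), never from directly verifying that a second-level relaxation equals the stable set polytope, which for a specific $9$-vertex graph would be a substantial project in its own right. The paper closes this case with a second round of your own core-classification rather than enumeration: since $r_+(G)=3$, Theorem~\ref{thmDeleteDestroy}(ii) gives $r_+(G-u) \geq 2$, where $u$ is $v_1$'s unique neighbour in $B$; in $G-u$ the vertex $v_0$ is a cut vertex, so Proposition~\ref{propCliqueCut} yields $r_+(G-A') \geq 2$ for $A' = \set{u,v_1,v_0}$, forcing the second $6$-vertex graph $G-A'$ to be $2$-minimal as well and hence $\delta(B') \geq 8$; since $\delta(A')=2$ and $\delta(A',B') = \delta(\set{v_0},B') + \delta(\set{u},B') = 1 + (\deg(u)-1) = \deg(u)$, the identity $13 = \delta(A') + \delta(A',B') + \delta(B')$ pins $\deg(u)=3$ and $\delta(B')=8$, so both $G-A$ and $G-A'$ must be the $8$-edge core $G_{2,1}$, which forces $v_2$ to be adjacent to both neighbours of $u$ in $G \ominus v_0$ and determines $G$ uniquely (Figure~\ref{figSparsest3min}). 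That single surviving candidate has a vertex $w$ with $r_+(G-w) = 1$, whence $r_+(G) \leq 2$, a contradiction. So the missing idea is to apply the destruction/core counting a second time around $u$; this replaces your entire case analysis over placements of $w_1,w_2,w_3$ and eliminates any need for an SDP certificate.
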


\begin{proof}
Since $G$ is $3$-minimal, there must exist vertex $v_0$ where $r_+(G \ominus v_0) = 2$. This implies that $|V(G \ominus v_0)| \geq 6$, and thus $\deg(v_0) \leq 2$. Since $\ell$-minimal graphs cannot have cut vertices, we see that $\deg(v_0) = 2$ and $|V(G \ominus v)|=6$, and so $G \ominus v$ is isomorphic to either $G_{2,1}$ (8 edges) or $G_{2,2}$ (9 edges).

Let $v_1, v_2$ be the two neighbours of $v_0$, and let $A \ce \set{v_0, v_1, v_2}$ and $B \ce V(G) \setminus A$. Observe that 
\begin{equation}\label{propSparsest3mineq1}
|E(G)| = \delta(A) + \delta(B) + \delta(A,B).
\end{equation}
 Since $|E(G)| \leq 13, \delta(A) = 2$, and $\delta(B) \geq 8$, we obtain $\delta(A,B) \leq 3$. Again, $G$ being $3$-minimal implies that $\deg(v_1), \deg(v_2) \geq 2$, and so we obtain that $2 \leq \delta( \set{v_1}, B) + \delta(\set{v_2}, B) \leq 3$. Thus, we may assume without loss of generality that $\delta( \set{v_1}, B)=1$, and let $u$ be the only neighbour of $v_1$ in $B$.

If $\delta(\set{v_2}, B) = 1$, then $u, v_1, v_0, v_2$ form a sparse path of length $3$ (with $\deg(v_1) = \deg(v_0) = \deg(v_2) = 2$), and Proposition~\ref{propSparsePath} implies that $G$ is not $3$-minimal. Now suppose $\delta(\set{v_2}, B) =2$. This means that $\delta(A,B) = 3$, and so from~\eqref{propSparsest3mineq1} we know that $|E(G)| =13, \delta(B) = 8$, and $G-A$ is indeed isomorphic to $G_{2,1}$ and not $G_{2,2}$.

Next, since $r_+(G) = 3$, we obtain that $r_+(G - u) \geq 2$. However, notice that $v_0$ is a cut vertex in $G-u$. Thus, if we let $A' \ce \set{u,v_1,v_0}$ and $B' \ce V(G) \setminus A'$, then we see that $r_+(G - A') \geq 2$. Since $G - A'$ has $6$ vertices, it must be isomorphic to $G_{2,1}$ or $G_{2,2}$. Thus, we see that $\delta(B') \geq 8$. Also, $\delta(A') = 2$ and 
\[
\delta(A',B') = \delta(\set{v_0}, B') + \delta(\set{u}, B')= 1 + (\deg(u)-1) = \deg(u).
\]
Since $13 = |E(G)| = \delta(A') + \delta(A',B') + \delta(B')$, we obtain that $\deg(u) = \delta(A',B') = 3$, and $\delta(B') = 8$. Thus, $G- A'$ is also isomorphic to $G_{2,1}$ and not $G_{2,2}$. For both $G-A$ and $G-A'$ to be isomorphic to $G_{2,1}$, $v_2$ must be adjacent to the two neighbours of $u$ in $G \ominus v_0$. Thus, $G$ is isomorphic to the graph shown in Figure~\ref{figSparsest3min}.

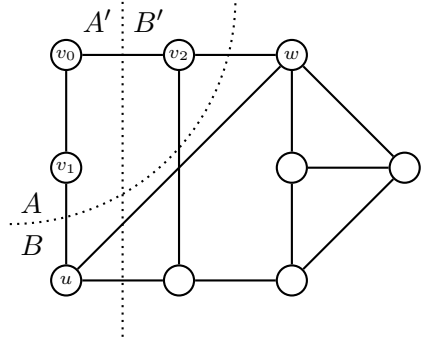
\begin{figure}[ht!]
\begin{center}
\begin{tikzpicture}
[scale=1.5, thick,main node/.style={circle, minimum size=4mm, inner sep=0.1mm,draw,font=\tiny\sffamily}]
\node[main node] at (3,1) (1) {$~$};
\node[main node] at (2,1) (2) {$~$};
\node[main node] at (2,2) (3) {$w$};

\node[main node] at (1,2) (4) {$v_2$};
\node[main node] at (0,2) (5) {$v_0$};
\node[main node] at (0,1) (6) {$v_1$};

\node[main node] at (0,0) (7) {$u$};
\node[main node] at (1,0) (8) {$~$};
\node[main node] at (2,0) (9) {$~$};

  \path[every node/.style={font=\sffamily}]
(1) edge (2)
(1) edge (3)
(2) edge (3)
(4) edge (5)
(5) edge (6)
(7) edge (8)
(8) edge (9)
(4) edge (3)
(4) edge (8)
(6) edge (7)
(7) edge (3)
(9) edge (2)
(9) edge (1)
;

\node[anchor=north east] at (0.5,2.5) () {$A'$};
\node[anchor=north west] at (0.5,2.5) () {$B'$};

\node[anchor=south west] at (-0.5,0.5) () {$A$};
\node[anchor=north west] at (-0.5,0.5) () {$B$};

\draw[dotted] (0.5,-0.5)--(0.5,2.5);
\draw[dotted] (-0.5, 0.5) edge[bend right = 45] (1.5,2.5);

\end{tikzpicture}
\caption{Illustrating the proof of Proposition~\ref{propSparsest3min}}\label{figSparsest3min}
\end{center}
\end{figure}

However, notice that $G - w$ has $\LS_+$-rank $1$, which contradicts $r_+(G) = 3$. This completes the proof.
\end{proof}

\section{Revisiting $H_k$ and constructing sparse graphs with high $\LS_+$-rank}\label{sec7}

In this section, we revisit the graphs $H_k$ defined in Section~\ref{sec1}, and obtain other related graphs with high $\LS_+$-ranks by applying some of our results on vertex stretching. First, we point out that the $\LS_+$-rank lower bound in Theorem~\ref{thmHk} also applies to some particular subgraphs of $H_k$. For every $k \geq 3$, define 
\[
H_k' \ce H_k - \set{1_0, 1_2, 2_0, 2_1}.
\]
\def\y{0.70}
\def\sc{2}
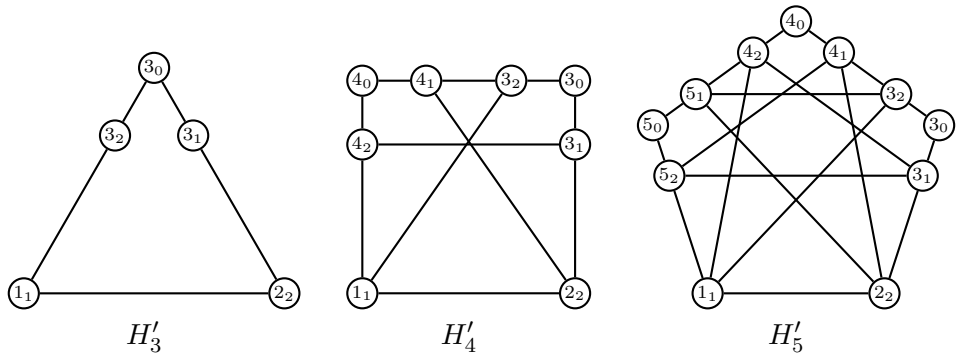
\begin{figure}[ht!]
\begin{center}
\begin{tabular}{ccc}

\def\x{270 - 180/3}
\def\z{360/3}
\begin{tikzpicture}[scale=\sc, thick,main node/.style={circle, minimum size=4mm, inner sep=0.1mm,draw,font=\tiny\sffamily}]
\node[main node] at ({cos(\x+(0)*\z)},{sin(\x+(0)*\z)}) (1) {$1_1$};
\node[main node] at ({cos(\x+(1)*\z)},{sin(\x+(1)*\z)}) (6) {$2_2$};

\node[main node] at ({ \y* cos(\x+(2)*\z) + (1-\y)*cos(\x+(1)*\z)},{ \y* sin(\x+(2)*\z) + (1-\y)*sin(\x+(1)*\z)}) (7) {$3_1$};
\node[main node] at ({cos(\x+(2)*\z)},{sin(\x+(2)*\z)}) (8) {$3_0$};
\node[main node] at ({ \y* cos(\x+(2)*\z) + (1-\y)*cos(\x+(3)*\z)},{ \y* sin(\x+(2)*\z) + (1-\y)*sin(\x+(3)*\z)}) (9) {$3_2$};

 \path[every node/.style={font=\sffamily}]
(8) edge (7)
(8) edge (9)
(1) edge (6)
(1) edge (9)
(7) edge (6);
\end{tikzpicture}

&

\def\x{270 - 180/4}
\def\z{360/4}

\begin{tikzpicture}[scale=\sc, thick,main node/.style={circle, minimum size=4mm, inner sep=0.1mm,draw,font=\tiny\sffamily}]

\node[main node] at ({cos(\x+(0)*\z)},{sin(\x+(0)*\z)}) (1) {$1_1$};
\node[main node] at ({cos(\x+(1)*\z)},{sin(\x+(1)*\z)}) (6) {$2_2$};

\node[main node] at ({ \y* cos(\x+(2)*\z) + (1-\y)*cos(\x+(1)*\z)},{ \y* sin(\x+(2)*\z) + (1-\y)*sin(\x+(1)*\z)}) (7) {$3_1$};
\node[main node] at ({cos(\x+(2)*\z)},{sin(\x+(2)*\z)}) (8) {$3_0$};
\node[main node] at ({ \y* cos(\x+(2)*\z) + (1-\y)*cos(\x+(3)*\z)},{ \y* sin(\x+(2)*\z) + (1-\y)*sin(\x+(3)*\z)}) (9) {$3_2$};

\node[main node] at ({ \y* cos(\x+(3)*\z) + (1-\y)*cos(\x+(2)*\z)},{ \y* sin(\x+(3)*\z) + (1-\y)*sin(\x+(2)*\z)}) (10) {$4_1$};
\node[main node] at ({cos(\x+(3)*\z)},{sin(\x+(3)*\z)}) (11) {$4_0$};
\node[main node] at ({ \y* cos(\x+(3)*\z) + (1-\y)*cos(\x+(4)*\z)},{ \y* sin(\x+(3)*\z) + (1-\y)*sin(\x+(4)*\z)}) (12) {$4_2$};

 \path[every node/.style={font=\sffamily}]
(8) edge (7)
(8) edge (9)
(11) edge (10)
(11) edge (12)
(1) edge (6)
(1) edge (9)
(1) edge (12)
(7) edge (6)
(7) edge (12)
(10) edge (6)
(10) edge (9);
\end{tikzpicture}

&

\def\x{270 - 180/5}
\def\z{360/5}

\begin{tikzpicture}[scale=\sc, thick,main node/.style={circle, minimum size=4mm, inner sep=0.1mm,draw,font=\tiny\sffamily}]
\node[main node] at ({cos(\x+(0)*\z)},{sin(\x+(0)*\z)}) (1) {$1_1$};
\node[main node] at ({cos(\x+(1)*\z)},{sin(\x+(1)*\z)}) (6) {$2_2$};

\node[main node] at ({ \y* cos(\x+(2)*\z) + (1-\y)*cos(\x+(1)*\z)},{ \y* sin(\x+(2)*\z) + (1-\y)*sin(\x+(1)*\z)}) (7) {$3_1$};
\node[main node] at ({cos(\x+(2)*\z)},{sin(\x+(2)*\z)}) (8) {$3_0$};
\node[main node] at ({ \y* cos(\x+(2)*\z) + (1-\y)*cos(\x+(3)*\z)},{ \y* sin(\x+(2)*\z) + (1-\y)*sin(\x+(3)*\z)}) (9) {$3_2$};

\node[main node] at ({ \y* cos(\x+(3)*\z) + (1-\y)*cos(\x+(2)*\z)},{ \y* sin(\x+(3)*\z) + (1-\y)*sin(\x+(2)*\z)}) (10) {$4_1$};
\node[main node] at ({cos(\x+(3)*\z)},{sin(\x+(3)*\z)}) (11) {$4_0$};
\node[main node] at ({ \y* cos(\x+(3)*\z) + (1-\y)*cos(\x+(4)*\z)},{ \y* sin(\x+(3)*\z) + (1-\y)*sin(\x+(4)*\z)}) (12) {$4_2$};

\node[main node] at ({ \y* cos(\x+(4)*\z) + (1-\y)*cos(\x+(3)*\z)},{ \y* sin(\x+(4)*\z) + (1-\y)*sin(\x+(3)*\z)}) (13) {$5_1$};
\node[main node] at ({cos(\x+(4)*\z)},{sin(\x+(4)*\z)}) (14) {$5_0$};
\node[main node] at ({ \y* cos(\x+(4)*\z) + (1-\y)*cos(\x+(5)*\z)},{ \y* sin(\x+(4)*\z) + (1-\y)*sin(\x+(5)*\z)}) (15) {$5_2$};

 \path[every node/.style={font=\sffamily}]
(8) edge (7)
(8) edge (9)
(11) edge (10)
(11) edge (12)
(14) edge (13)
(14) edge (15)
(1) edge (6)
(1) edge (9)
(1) edge (12)
(1) edge (15)
(7) edge (6)
(7) edge (12)
(7) edge (15)
(10) edge (6)
(10) edge (9)
(10) edge (15)
(13) edge (6)
(13) edge (9)
(13) edge (12);
\end{tikzpicture}
\\
$H_3'$ & $H_4'$ & $H_5'$ 
\end{tabular}
\caption{Several graphs in the family $H_k'$}\label{figH_k'}
\end{center}
\end{figure}

Figure~\ref{figH_k'} illustrates the graphs $H_k'$ for $k=3,4,5$. Notice that $H_k' \in \S_2^{k-2}(K_k)$ for all $k \geq 3$ --- this is apparent if one takes the drawings of $H_k'$ from Figure~\ref{figH_k'} and relabels the vertices $1_1$ and $2_2$ by $1$ and $2$ respectively. Then we have the following.

\begin{proposition}\label{propHk'}
For every $k \geq 3$, $r_+(H_k') \geq \frac{3k}{16}$.
\end{proposition}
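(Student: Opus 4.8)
The plan is to pull the $\LS_+$-rank certificate that Theorem~\ref{thmHk} provides for $H_k$ down to the induced subgraph $H_k' = H_k - \{1_0, 1_2, 2_0, 2_1\}$ and to check that it survives the restriction. Two stable-set numbers drive the argument. Since $H_k' \in \S_2^{k-2}(K_k)$, Lemma~\ref{05stretchalpha} gives $\alpha(H_k') = \alpha(K_k) + (k-2) = k-1$; the same reasoning (namely $H_k \in \S_2^{k}(K_k)$), or a direct construction, gives $\alpha(H_k) = k+1$. Set $p \ce \lceil 3k/16 \rceil - 1$, so that exhibiting a point of $\LS_+^p(H_k') \setminus \STAB(H_k')$ proves $r_+(H_k') \geq p+1 \geq \tfrac{3k}{16}$.

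I would start from the lower-bound certificate of \cite{AuT23}: a point $\bar x \in \LS_+^p(H_k)$ with $\bar e^\top \bar x > \alpha(H_k) = k+1$ (that is, the bound in Theorem~\ref{thmHk} is realized through $\alpha_{\LS_+^p}(H_k) > \alpha(H_k)$). Because $H_k'$ is the subgraph of $H_k$ induced by $V(H_k) \setminus \{1_0, 1_2, 2_0, 2_1\}$, the restriction property following Lemma~\ref{lemfacet} yields $\bar x' \ce \bar x|_{V(H_k')} \in \LS_+^p(H_k')$. The key observation is that the four deleted vertices carry little objective weight: $\{1_0, 1_2\}$ and $\{2_0, 2_1\}$ are both edges of $H_k$, so $\bar x \in \FRAC(H_k)$ forces $\bar x_{1_0} + \bar x_{1_2} \leq 1$ and $\bar x_{2_0} + \bar x_{2_1} \leq 1$, whence the total mass removed by the restriction is at most $2$. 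Consequently
\[
\bar e^\top \bar x' \;=\; \bar e^\top \bar x - \big(\bar x_{1_0} + \bar x_{1_2} + \bar x_{2_0} + \bar x_{2_1}\big) \;>\; (k+1) - 2 \;=\; k-1 \;=\; \alpha(H_k').
\]
Since $\bar x' \in \LS_+^p(H_k')$ has objective value exceeding $\alpha(H_k')$, it cannot lie in $\STAB(H_k')$, and therefore $r_+(H_k') \geq p+1 \geq \tfrac{3k}{16}$, as claimed.

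The one place that needs care --- and the step I expect to be the crux --- is the opening claim that the $H_k$ certificate can be taken with strictly super-optimal objective value, i.e.\ that $\alpha_{\LS_+^p}(H_k) > \alpha(H_k)$ rather than merely $\LS_+^p(H_k) \neq \STAB(H_k)$. This is the usual way such lower bounds are established (and is the form in which I would read the construction of \cite{AuT23}), but if instead the certificate only violated some other facet of $\STAB(H_k)$, one would need to verify that the violated inequality restricts to a valid inequality of $\STAB(H_k')$ that $\bar x'$ still violates --- for instance by first averaging $\bar x$ over $\Aut(H_k)$ (using the invariance of $\LS_+^p$ under graph automorphisms) to obtain a symmetric certificate before restricting. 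Everything else --- the two $\alpha$-computations and the edge-constraint bound on the deleted mass --- is routine.
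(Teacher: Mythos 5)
Your overall strategy --- restrict a certificate for $H_k$ to $H_k'$ via Lemma~\ref{lemfacet} and compare against $\alpha(H_k') = k-1$ from Lemma~\ref{05stretchalpha} --- is indeed the paper's strategy, and your two $\alpha$-computations and the mass-at-most-$2$ bound (via the edges $\set{1_0,1_2}$ and $\set{2_0,2_1}$) are correct. But the step you yourself flag as the crux is a genuine gap: the certificate from \cite{AuT23} is \emph{not} supplied as a point with $\bar{e}^{\top}\bar{x} > \alpha(H_k)$. As the paper recalls, it is a symmetric point $w_k(a,b) \in \LS_+^p(H_k)$ violating the weighted full-support inequality $w_k(k-1,k-2)^{\top}x \leq k(k-1)$, and violating that inequality does not imply $\bar{e}^{\top}\bar{x} > k+1$. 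In fact the weighted inequality cuts strictly deeper than the cardinality bound in the symmetric direction: at $x = w_k(t,t)$ the weighted inequality forces $t \leq \frac{k-1}{3k-4}$, while $\bar{e}^{\top}x \leq k+1$ only forces $t \leq \frac{k+1}{3k}$, and $\frac{k-1}{3k-4} < \frac{k+1}{3k}$ for all $k \geq 3$. So a point can violate the weighted inequality while satisfying the cardinality inequality on $H_k$, and nothing in Theorem~\ref{thmHk} or \cite{AuT23} establishes $\alpha_{\LS_+^p}(H_k) > k+1$; your main line therefore does not go through as written.

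Your fallback (symmetrize over $\Aut(H_k)$, then check that the violated inequality restricts usefully) points in the right direction but omits the decisive computation, which is exactly what the paper's proof supplies. Since the certificate is already symmetric of the form $w_k(a,b)$, its violation of the weighted inequality reads $2k(k-1)a + k(k-2)b > k(k-1)$, i.e., $(2k-2)a + (k-2)b > k-1$. The four deleted vertices are two $a$-vertices ($1_2$, $2_1$) and two $b$-vertices ($1_0$, $2_0$), so $H_k'$ retains exactly $2k-2$ $a$-vertices and $k-2$ $b$-vertices, and the restricted point satisfies $\bar{e}^{\top}w_k(a,b)' = (2k-2)a + (k-2)b > k-1 = \alpha(H_k')$. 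In other words, the weighted violation on $H_k$, after restriction, \emph{is} the cardinality violation on $H_k'$ --- the weights $(k-1,k-2)$, scaled by $\frac{1}{k}$ against the vertex counts $(2k,k)$ in $H_k$, reproduce precisely the counts $(2k-2,k-2)$ surviving in $H_k'$ --- and no mass estimate is needed. This identity is the missing step your proposal required.
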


\begin{proof}
For convenience, let $p \ce \left\lceil \frac{3k}{16} \right\rceil -1$ throughout this proof. Also, given $a,b \in \mR$, we define the vector $w_k(a,b) \in \mR^{V(H_k)}$ such that
\[
[w_k(a,b)]_j \ce \begin{cases}
a & \tn{if $j \in \set{i_1, i_2: i \in [k]}$;}\\
b & \tn{if $j \in \set{i_0 : i \in [k]}$.}
\end{cases}
\]
In~\cite{AuT24}, it was shown that there exists real numbers $a,b$ where $w_k(a,b)$ is contained in $\LS_+^p(H_k)$ and violates the inequality
\begin{equation}\label{lem62eq0}
w_k(k-1, k-2)^{\top} x \leq k(k-1),
\end{equation}
which is valid for $\STAB(H_k)$~\cite[Lemma 9(ii)]{AuT24}.

Now let $w_k(a,b)' \in \mR^{V(H_k')}$ be the vector obtained from $w_k(a,b)$ by removing the four entries that correspond to vertices which are not in $H_k'$. Then by Lemma~\ref{lemfacet}, we have $w_k(a,b)' \in \LS_+^p(H_k')$. On the other hand, the fact that $w_k(a,b)$ violates~\eqref{lem62eq0} implies that $(k-1)(2ka) + (k-2)(kb) > k(k-1)$, which implies that 
\[
\a_{\LS_+^p}(H_k') \geq \bar{e}^{\top} w_k(a,b)' = (2k-2)a + (k-2)b > k-1.
\]
However, since $H_k' \in \S_2^{k-2}(K_k)$, it follows from Lemma~\ref{05stretchalpha} that $\a(H_k') = k-1$. This implies that $w_k(a,b)' \in \LS_+^p(H_k') \setminus \STAB(H_k')$, and that $r_+(H_k') \geq p+1 \geq \frac{3k}{16}$.
\end{proof}

In fact, we can use the argument above to find many subgraphs of $H_k'$ for which the $\LS_+$-rank lower bound given in Proposition~\ref{propHk'} applies.

\begin{proposition}\label{propH_kSubgraph}
Let $G \in \S_2^{k-2}(K_k)$ be a subgraph of $H_k'$. Then $r_+(G) \geq \frac{3k}{16}$.
\end{proposition}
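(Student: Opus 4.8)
The plan is to reduce the statement to Proposition~\ref{propHk'} by reusing the very same fractional point, after first observing that the hypotheses force $G$ to be a tower of $2$-stretchings. The first thing I would do is a vertex count. Since $G \in \S^{k-2}(K_k)$, it arises from $K_k$ by $k-2$ stretching operations; if the $i$-th of these is a $p_i$-stretching (with each $p_i \geq 2$), then it adds exactly $p_i$ vertices, so
\[
|V(G)| = k + \sum_{i=1}^{k-2} p_i \geq k + 2(k-2) = 3k-4.
\]
On the other hand, $G$ is a subgraph of $H_k'$, and $|V(H_k')| = |V(H_k)| - 4 = 3k-4$, so $|V(G)| \leq 3k-4$. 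Hence $\sum_i p_i = 2(k-2)$ with every $p_i \geq 2$, which forces $p_i = 2$ for all $i$. Thus $G \in \S_2^{k-2}(K_k)$, and Lemma~\ref{05stretchalpha} gives $\a(G) = \a(K_k) + (k-2) = k-1$.

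Next I would transport the witness from the proof of Proposition~\ref{propHk'}. There we produced real numbers $a,b$ (with $p \ce \left\lceil \frac{3k}{16} \right\rceil - 1$) for which $w_k(a,b)' \in \LS_+^p(H_k')$ and
\[
\bar{e}^{\top} w_k(a,b)' = (2k-2)a + (k-2)b > k-1.
\]
Because $E(G) \subseteq E(H_k')$, the graph $G$ has a weaker edge system, so $\FRAC(G) \supseteq \FRAC(H_k')$, and monotonicity of $\LS_+$ (noted just before Lemma~\ref{lem05subgraph}) propagates through $p$ rounds to give $\LS_+^p(H_k') \subseteq \LS_+^p(G)$. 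In particular $w_k(a,b)' \in \LS_+^p(G)$.

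Finally I would close the argument exactly as in Proposition~\ref{propHk'}: since $\a(G) = k-1$ while $\bar{e}^{\top} w_k(a,b)' > k-1$, the point $w_k(a,b)'$ violates the valid inequality $\bar{e}^{\top}x \leq \a(G)$ of $\STAB(G)$, so $w_k(a,b)' \notin \STAB(G)$. Therefore $w_k(a,b)' \in \LS_+^p(G) \setminus \STAB(G)$, which yields $r_+(G) \geq p+1 = \left\lceil \frac{3k}{16}\right\rceil \geq \frac{3k}{16}$.

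The only genuinely new ingredient is the vertex-counting step that upgrades the hypothesis $G \in \S^{k-2}(K_k)$ to $G \in \S_2^{k-2}(K_k)$; this is what licenses the clean value $\a(G) = k-1$ via Lemma~\ref{05stretchalpha}, since for a general $p$-stretching the stability number can jump by $p-1 > 1$ and the bound would degrade. Everything after that is a routine transfer from Proposition~\ref{propHk'} using edge-monotonicity, so I do not anticipate a serious obstacle. The one point meriting a second look is whether ``subgraph'' is meant in the spanning sense of Lemma~\ref{lem05subgraph}; but even without assuming this, the inequality $|V(G)| \geq 3k-4$ together with $|V(H_k')| = 3k-4$ forces $V(G) = V(H_k')$, so the monotonicity argument applies either way.
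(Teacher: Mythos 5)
Your proof is correct and follows essentially the same route as the paper's: the paper likewise transfers the violation of $\bar{e}^{\top}x \leq k-1$ from $H_k'$ to $G$ (it invokes Lemma~\ref{lem05subgraph}(ii), which is just the edge-monotonicity of $\LS_+$ that you iterate by hand) and then concludes from $\a(G)=k-1$ that $r_+(G) \geq p+1$. The one place where you add something is your vertex-counting step, and it is a genuine improvement in rigor: the paper cites Lemma~\ref{05stretchalpha} to get $\a(G)=k-1$, but that lemma is stated only for $\S_2^{k-2}(K_k)$, whereas the hypothesis here is $G \in \S^{k-2}(K_k)$ with general $p$-stretchings allowed --- and for a $p$-stretching with $p \geq 3$ the stability number can indeed jump by $p-1 > 1$ (e.g.\ $3$-stretching a vertex of $K_k$ already yields a stable set $\set{v_1,v_2,v_3}$ of size $3$). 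Your observation that $|V(G)| = k + \sum_i p_i \geq 3k-4 = |V(H_k')|$ forces every $p_i = 2$ (and, as a bonus, forces $V(G) = V(H_k')$, which is exactly what Lemma~\ref{lem05subgraph} needs) is precisely the justification that makes the paper's citation of Lemma~\ref{05stretchalpha} legitimate, so your write-up closes a small gap that the paper's own proof glosses over.
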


\begin{proof}
Again, let $p \ce \left\lceil \frac{3k}{16} \right\rceil -1$, and let $G \in \S_2^{k-2}(K_k)$ be a subgraph of $H_k'$.  Since $\a_{\LS_+^p}(H_k') > k-1$ (as shown in the proof of Proposition~\ref{propHk'},  Lemma~\ref{lem05subgraph}(ii) implies that $\a_{\LS_+^{p}}(G) > k-1$.  But then Lemma~\ref{05stretchalpha} implies that $\a(G) = k-1$. Thus, $r_+(G) \geq p+1 \geq \frac{3k}{16}$.
\end{proof}

Given a graph $G$, define the \emph{edge density} of $G$ to be $d(G) \ce \frac{|E(G)|}{ \binom{|V(G)|}{2}}$. For instance, $d(G) = 1$ for complete graphs, and $d(G) = 0$ for empty graphs. An interesting contrast that has emerged in the study of lift-and-project relaxations of the stable set polytope of graphs is that dense graphs tend to have high lift-and-project ranks with respect to operators that produce polyhedral relaxations, whereas graphs from both ends of the density spectrum tend to be of small lift-and-project ranks with respect to semidefinite operators. Thus, it is interesting to note that
\[
d(H_k') = \frac{k^2-k-1}{\binom{3k-4}{2}} = \frac{2}{9} + o(k).
\]
Moreover, it follows from Proposition~\ref{propH_kSubgraph} that the $\LS_+$-rank lower bound we showed for $H_k'$ also applies for many subgraphs of $H_k'$ with lower edge densities. For an example, given $k \geq 3$, we define the graph $H_k''$ where $V(H_k'') \ce V(H_k')$, with $E(H_k'')$ consisting of the following edges:
\begin{itemize}
\item[(i)]
$\set{1_1, 2_2}$;
\item[(ii)]
$\set{1_1, i_2}, \set{2_2, i_1}, \set{i_0, i_1}$, and $\set{i_0,i_2}$ for every $i \in \set{3, \ldots, k}$;
\item[(iii)]
$\set{i_2,j_1}$ for all $i,j \in \set{3,\ldots,k}$ where $(j-i)~\tn{mod}~(k-2) < \frac{k-2}{2}$;
\item[(iv)]
$\set{i_2,j_1}$ for all $i,j \in \set{3,\ldots,k}$ where $j-i = \frac{k-2}{2}$.
\end{itemize}
Observe that (iv) only contributes edges when $k$ is even. Also, for every $k \geq 3$, notice that $H_k''$ is a subgraph of $H_k'$, and that $H_k'' \in \S_2^{k-2}(K_k)$ (see Figures~\ref{figH_k'Subgraph} and~\ref{figH_k''Stretch}, respectively, for drawings of $H_5''$ and $H_6''$). Furthermore, $H_k''$ has the fewest edges among all graphs in $\S_2^{k-2}(K_k)$. To see this, suppose we start with a complete graph $K_k$ with vertex labels $1_1, 2_2, 3, 4, \ldots, k$, and $2$-stretch the vertices $3,4, \ldots, k$ to obtain a graph $G \in \S_2^{k-2}(K_k)$. If we define the sets $S_1 \ce \set{1_1}, S_2 \ce \set{2_2}$, and $S_i \ce \set{i_0, i_1, i_2}$ for all $i \in \set{3, \ldots, k}$, then there must be at least one edge in $G$ joining $S_i$ and $S_j$ for all distinct $i,j \in [k]$. To minimize the number of edges in $G$, one can ensure that the sets $A_1, A_2$ are disjoint in each vertex stretching operation. This would result in a graph with exactly one edge joining $S_i, S_j$ for all distinct $i,j \in [k]$, which is indeed the case for $H_k''$. 

It is easy to check that $|E(H_k'')| = \frac{k^2+3k-8}{2}$, and thus $d(H_k'') = \frac{1}{9} + o(k)$. Thus, we see that there are many subgraphs of $H_k'$ with edge densities between $\frac{1}{9}$ and $\frac{2}{9}$ for which the rank lower bound in Proposition~\ref{propH_kSubgraph} applies.

\def\x{270 - 180/5}
\def\z{360/5}
\def\y{0.70}
\def\sc{2}

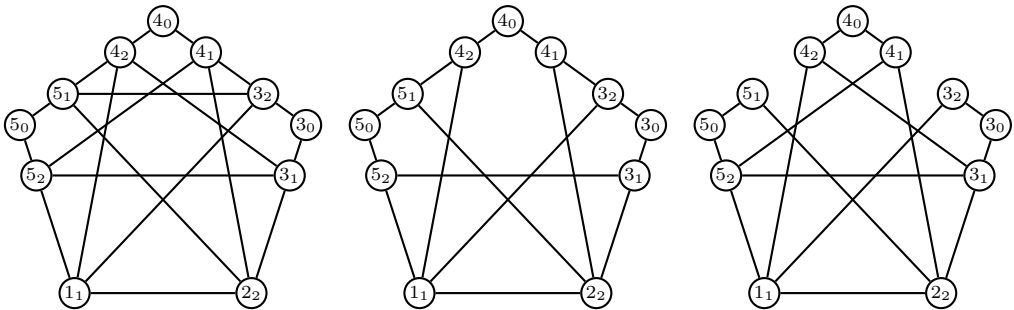
\begin{figure}[ht!]
\begin{center}
\begin{tabular}{ccc}

\begin{tikzpicture}[scale=\sc, thick,main node/.style={circle, minimum size=4mm, inner sep=0.1mm,draw,font=\tiny\sffamily}]
\node[main node] at ({cos(\x+(0)*\z)},{sin(\x+(0)*\z)}) (1) {$1_1$};
\node[main node] at ({cos(\x+(1)*\z)},{sin(\x+(1)*\z)}) (6) {$2_2$};

\node[main node] at ({ \y* cos(\x+(2)*\z) + (1-\y)*cos(\x+(1)*\z)},{ \y* sin(\x+(2)*\z) + (1-\y)*sin(\x+(1)*\z)}) (7) {$3_1$};
\node[main node] at ({cos(\x+(2)*\z)},{sin(\x+(2)*\z)}) (8) {$3_0$};
\node[main node] at ({ \y* cos(\x+(2)*\z) + (1-\y)*cos(\x+(3)*\z)},{ \y* sin(\x+(2)*\z) + (1-\y)*sin(\x+(3)*\z)}) (9) {$3_2$};

\node[main node] at ({ \y* cos(\x+(3)*\z) + (1-\y)*cos(\x+(2)*\z)},{ \y* sin(\x+(3)*\z) + (1-\y)*sin(\x+(2)*\z)}) (10) {$4_1$};
\node[main node] at ({cos(\x+(3)*\z)},{sin(\x+(3)*\z)}) (11) {$4_0$};
\node[main node] at ({ \y* cos(\x+(3)*\z) + (1-\y)*cos(\x+(4)*\z)},{ \y* sin(\x+(3)*\z) + (1-\y)*sin(\x+(4)*\z)}) (12) {$4_2$};

\node[main node] at ({ \y* cos(\x+(4)*\z) + (1-\y)*cos(\x+(3)*\z)},{ \y* sin(\x+(4)*\z) + (1-\y)*sin(\x+(3)*\z)}) (13) {$5_1$};
\node[main node] at ({cos(\x+(4)*\z)},{sin(\x+(4)*\z)}) (14) {$5_0$};
\node[main node] at ({ \y* cos(\x+(4)*\z) + (1-\y)*cos(\x+(5)*\z)},{ \y* sin(\x+(4)*\z) + (1-\y)*sin(\x+(5)*\z)}) (15) {$5_2$};

 \path[every node/.style={font=\sffamily}]
(1) edge (6)
(8) edge (7)
(8) edge (9)
(11) edge (10)
(11) edge (12)
(14) edge (13)
(14) edge (15)
(1) edge (9)
(1) edge (12)
(1) edge (15)
(6) edge (7)
(6) edge (10)
(6) edge (13)
(7) edge (12)
(7) edge (15)
(10) edge (9)
(10) edge (15)
(13) edge (9)
(13) edge (12);
\end{tikzpicture}
&

\begin{tikzpicture}[scale=\sc, thick,main node/.style={circle, minimum size=4mm, inner sep=0.1mm,draw,font=\tiny\sffamily}]
\node[main node] at ({cos(\x+(0)*\z)},{sin(\x+(0)*\z)}) (1) {$1_1$};
\node[main node] at ({cos(\x+(1)*\z)},{sin(\x+(1)*\z)}) (6) {$2_2$};

\node[main node] at ({ \y* cos(\x+(2)*\z) + (1-\y)*cos(\x+(1)*\z)},{ \y* sin(\x+(2)*\z) + (1-\y)*sin(\x+(1)*\z)}) (7) {$3_1$};
\node[main node] at ({cos(\x+(2)*\z)},{sin(\x+(2)*\z)}) (8) {$3_0$};
\node[main node] at ({ \y* cos(\x+(2)*\z) + (1-\y)*cos(\x+(3)*\z)},{ \y* sin(\x+(2)*\z) + (1-\y)*sin(\x+(3)*\z)}) (9) {$3_2$};

\node[main node] at ({ \y* cos(\x+(3)*\z) + (1-\y)*cos(\x+(2)*\z)},{ \y* sin(\x+(3)*\z) + (1-\y)*sin(\x+(2)*\z)}) (10) {$4_1$};
\node[main node] at ({cos(\x+(3)*\z)},{sin(\x+(3)*\z)}) (11) {$4_0$};
\node[main node] at ({ \y* cos(\x+(3)*\z) + (1-\y)*cos(\x+(4)*\z)},{ \y* sin(\x+(3)*\z) + (1-\y)*sin(\x+(4)*\z)}) (12) {$4_2$};

\node[main node] at ({ \y* cos(\x+(4)*\z) + (1-\y)*cos(\x+(3)*\z)},{ \y* sin(\x+(4)*\z) + (1-\y)*sin(\x+(3)*\z)}) (13) {$5_1$};
\node[main node] at ({cos(\x+(4)*\z)},{sin(\x+(4)*\z)}) (14) {$5_0$};
\node[main node] at ({ \y* cos(\x+(4)*\z) + (1-\y)*cos(\x+(5)*\z)},{ \y* sin(\x+(4)*\z) + (1-\y)*sin(\x+(5)*\z)}) (15) {$5_2$};

 \path[every node/.style={font=\sffamily}]
(1) edge (6)
(8) edge (7)
(8) edge (9)
(11) edge (10)
(11) edge (12)
(14) edge (13)
(14) edge (15)
(1) edge (9)
(1) edge (12)
(1) edge (15)
(6) edge (7)
(6) edge (10)
(6) edge (13)
(9) edge (10)
(12) edge (13)
(7) edge (15);
\end{tikzpicture}
&

\begin{tikzpicture}[scale=\sc, thick,main node/.style={circle, minimum size=4mm, inner sep=0.1mm,draw,font=\tiny\sffamily}]
\node[main node] at ({cos(\x+(0)*\z)},{sin(\x+(0)*\z)}) (1) {$1_1$};
\node[main node] at ({cos(\x+(1)*\z)},{sin(\x+(1)*\z)}) (6) {$2_2$};

\node[main node] at ({ \y* cos(\x+(2)*\z) + (1-\y)*cos(\x+(1)*\z)},{ \y* sin(\x+(2)*\z) + (1-\y)*sin(\x+(1)*\z)}) (7) {$3_1$};
\node[main node] at ({cos(\x+(2)*\z)},{sin(\x+(2)*\z)}) (8) {$3_0$};
\node[main node] at ({ \y* cos(\x+(2)*\z) + (1-\y)*cos(\x+(3)*\z)},{ \y* sin(\x+(2)*\z) + (1-\y)*sin(\x+(3)*\z)}) (9) {$3_2$};

\node[main node] at ({ \y* cos(\x+(3)*\z) + (1-\y)*cos(\x+(2)*\z)},{ \y* sin(\x+(3)*\z) + (1-\y)*sin(\x+(2)*\z)}) (10) {$4_1$};
\node[main node] at ({cos(\x+(3)*\z)},{sin(\x+(3)*\z)}) (11) {$4_0$};
\node[main node] at ({ \y* cos(\x+(3)*\z) + (1-\y)*cos(\x+(4)*\z)},{ \y* sin(\x+(3)*\z) + (1-\y)*sin(\x+(4)*\z)}) (12) {$4_2$};

\node[main node] at ({ \y* cos(\x+(4)*\z) + (1-\y)*cos(\x+(3)*\z)},{ \y* sin(\x+(4)*\z) + (1-\y)*sin(\x+(3)*\z)}) (13) {$5_1$};
\node[main node] at ({cos(\x+(4)*\z)},{sin(\x+(4)*\z)}) (14) {$5_0$};
\node[main node] at ({ \y* cos(\x+(4)*\z) + (1-\y)*cos(\x+(5)*\z)},{ \y* sin(\x+(4)*\z) + (1-\y)*sin(\x+(5)*\z)}) (15) {$5_2$};
 \path[every node/.style={font=\sffamily}]
(1) edge (6)
(8) edge (7)
(8) edge (9)
(11) edge (10)
(11) edge (12)
(14) edge (13)
(14) edge (15)
(1) edge (9)
(1) edge (12)
(1) edge (15)
(6) edge (7)
(6) edge (10)
(6) edge (13)
(7) edge (12)
(10) edge (15)
(7) edge (15);
\end{tikzpicture}
\end{tabular}
\caption{$H_5'$ (left), $H_5''$ (centre), and another subgraph of $H_5'$ in $\S_2^3(K_5)$ with the fewest possible edges (right)}\label{figH_k'Subgraph}
\end{center}
\end{figure}

Finally, we point out that we can further stretch the vertices of $H_k''$ to obtain very sparse graphs with arbitrarily high $\LS_+$-ranks. Given a graph $G$ and a vertex $v \in V(G)$, define
\[
w(v) \ce \begin{cases}
1 & \tn{if $\deg(v) \leq 3$;}\\
\deg(v) - 1 & \tn{if $\deg(v) \geq 4$.}
\end{cases}
\]
We also define $w(G) \ce \sum_{v \in V(G)} w(v)$. Then we have the following.

\begin{lemma}\label{lemStretchToDeg3}
For every graph $G$, there exists a graph $H$ that can be obtained from $G$ by a sequence of vertex stretching operations where $\deg(v) \leq 3$ for all $v \in V(H)$, and $|V(H)| \leq w(G)$.
\end{lemma}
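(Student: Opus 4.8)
The plan is to process the vertices of degree at least $4$ one at a time, replacing each such vertex by a small collection of vertices of degree at most $3$ via a sequence of stretches, while never increasing the degree of any other vertex. To guarantee the latter, whenever I stretch a vertex $v$ I will always take the sets $A_1, \ldots, A_p$ to be \emph{pairwise disjoint}, so that they partition $\Gamma(v)$; then every neighbour of $v$ lies in exactly one $A_\ell$, its single edge to $v$ is merely rerouted to $v_\ell$, and its degree is unchanged. Consequently the order in which I process the high-degree vertices is irrelevant, and any vertex already brought down to degree at most $3$ stays that way throughout the rest of the process.

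The heart of the argument is the following sub-claim, proved by strong induction on $d$: a vertex of degree $d \geq 4$ can be replaced, by a sequence of disjoint-set stretches, by at most $d-1$ vertices each of degree at most $3$. For the base cases $d \in \set{4,5,6}$ a single stretch suffices: a $2$-stretch with $|A_1|=|A_2|=2$ handles $d=4$ (yielding $v_0$ of degree $2$ and $v_1,v_2$ of degree $3$, so $3 = d-1$ vertices); a $3$-stretch with $|A_1|=|A_2|=2$, $|A_3|=1$ handles $d=5$ ($4 = d-1$ vertices); and a $3$-stretch with $|A_1|=|A_2|=|A_3|=2$ handles $d=6$ ($4 \leq d-1$ vertices). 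For the inductive step $d \geq 7$, I apply a $3$-stretch with $|A_1|=|A_2|=2$ and $|A_3|=d-4$: this produces $v_0,v_1,v_2$ of degree exactly $3$ together with a single vertex $v_3$ of degree $(d-4)+1 = d-3 \geq 4$. Applying the induction hypothesis to $v_3$ (whose neighbourhood $\set{v_0} \cup A_3$ is again handled by disjoint stretches, leaving $v_0$ at degree $3$) replaces it with at most $(d-3)-1 = d-4$ further low-degree vertices, for a total of at most $3 + (d-4) = d-1$, as required.

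With the sub-claim in hand, I would finish by summing over all vertices. Writing $L = \set{v : \deg_G(v) \leq 3}$, the resulting graph $H$ has all degrees at most $3$ and
\[
|V(H)| \leq |L| + \sum_{v \,:\, \deg_G(v) \geq 4} (\deg_G(v) - 1) = \sum_{v \in V(G)} w(v) = w(G),
\]
since $w(v) = 1$ for $v \in L$ and $w(v) = \deg_G(v)-1$ otherwise.

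The step I expect to be the main obstacle is calibrating the recursion so that the vertex count stays at most $d-1$: the naive approach of repeatedly $2$-stretching off a single degree-$3$ vertex costs two new vertices per unit of degree reduction and overshoots the bound already for $d \geq 8$. The key observation that makes the bound tight is that a single $3$-stretch can \emph{finish} three vertices ($v_0,v_1,v_2$) while cutting the residual degree by exactly $3$, so that the ratio of new vertices to degree reduction is exactly $1$, matching the target $d-1$. I would also take care to verify that the base cases $d \in \set{4,5,6}$ cover all residues modulo $3$, so that the recursion $d \mapsto d-3$ indeed terminates in one of them.
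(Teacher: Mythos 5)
Your proof is correct and uses essentially the same construction as the paper: the identical disjoint-part stretches (a $2$-stretch with $|A_1|=|A_2|=2$ for degree $4$, a $3$-stretch with parts $2,2,1$ for degree $5$, and a $3$-stretch with parts $2,2,d-4$ for degree $d \geq 6$), chosen disjoint precisely so that neighbour degrees never increase. The only difference is bookkeeping: you run a per-vertex strong induction showing each degree-$d$ vertex expands into at most $d-1$ finished vertices and then sum, whereas the paper iterates one stretch at a time and tracks the potential $w(G)=|V(G)|+w_1(G)+w_2(G)$, showing $w$ never increases while the excess degree $w_2$ strictly decreases.
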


\begin{proof}
First, if every vertex in $G$ has degree at most $3$, then $H=G$ suffices, so we now assume there exists $v \in V(G)$ with $\deg(v) \geq 4$. Next, we define
\begin{align*}
w_1(G) & \ce | \set{v \in V(G) : \deg(v) \geq 4}|,\\
w_2(G) & \ce \sum_{ i \in V(G)} \max\set{ \deg(i)-3, 0}.
\end{align*}
Then $w(G) = |V(G)| + w_1(G) + w_2(G)$ for every graph $G$. It is helpful to think of $w_2(G)$ as the total ``excess'' vertex degree in $G$, and $w_2(G) = 0$ if and only if $\deg(v) \leq 3$ for all $v \in V(G)$. Now notice that
\begin{itemize}
\item
If $v \in V(G)$ has $\deg(v) = 4$, we can $2$-stretch it with $|A_1| = |A_2|=2$. In this case, we obtain $H \in \S(G)$ with $|V(H)| = |V(G)| + 2$, $w_1(H) = w_1(G) - 1$, and $w_2(H) = w_2(G) -1$.
\item
If $v \in V(G)$ has $\deg(v) = 5$, we can $3$-stretch it with $|A_1| = |A_2|=2$ and $|A_3|=1$. In this case, we obtain $H \in \S(G)$ with $|V(H)| = |V(G)| + 3$, $w_1(H) = w_1(G) - 1$, and $w_2(H) = w_2(G) - 2$.
\item
If $v \in V(G)$ with $\deg(v) = p \geq 6$, we can $3$-stretch it with $|A_1| = |A_2|=2$ and $|A_3|=p-4$. In this case, we obtain $H \in \S(G)$ with $|V(H)| = |V(G)| + 3$, $w_1(H) \leq w_1(G)$, and $w_2(H) = w_2(G) - 3$. (More precisely, notice that $w_1(H) = w_1(G) - 1$ if $p =6$ and $w_1(H) = w_1(G)$ if $p \geq 7$.)
\end{itemize}

In all cases, we see that given a graph $G$ with $w_2(G) > 0$, we can apply a stretching operation to obtain $H \in \S(G)$ such that $w(H) \leq w(G)$ and $w_2(H) < w_2(G)$. Iterating this process would result in a graph $H$ with $w_1(H) = w_2(H) = 0$, which would satisfy $|V(H)| = w(H) \leq w(G)$.
\end{proof}

Then we have the following.

\begin{theorem}\label{thmStretchH_k''}
For every $k \geq 5$, there exists a graph $G$ on $k^2 - 4$ vertices such that $\deg(i) \leq 3$ for every $i \in V(G)$, and $r_+(G) \geq r_+(H_k'')$.
\end{theorem}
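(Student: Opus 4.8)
The plan is to apply Lemma~\ref{lemStretchToDeg3} directly to $G_0 \ce H_k''$. That lemma produces a graph $H$, obtained from $H_k''$ by a finite sequence of vertex-stretching operations, in which every vertex has degree at most $3$ and which satisfies $|V(H)| \le w(H_k'')$. Two things then remain: to bound $r_+(H)$ from below, and to bound $w(H_k'')$ by $k^2-4$.

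For the rank, I would unwind the sequence of single-vertex stretches produced in the proof of Lemma~\ref{lemStretchToDeg3}. Writing this sequence as $H_k'' = G_0, G_1, \ldots, G_t = H$ with $G_{s+1} \in \S(G_s)$ for each $s$, repeated application of Proposition~\ref{propVertexStretch}(iii) gives $r_+(G_{s+1}) \ge r_+(G_s)$, and hence $r_+(H) \ge r_+(H_k'')$ by a one-line induction on $t$. This is the easy half and uses nothing beyond the cited proposition.

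The substantive step is the vertex count. I would first observe that every vertex of $H_k''$ has degree at least $2$: the vertices $i_0$ have degree exactly $2$, while $1_1$, $2_2$, and every $i_1, i_2$ have degree at least $2$ directly from the edge list (i)--(iv). For any vertex with $\deg(v) \ge 2$ the definition of $w$ yields $w(v) \le \deg(v) - 1$, with equality unless $\deg(v) = 3$ (in which case $w(v) = 1 < 2$). Summing over all vertices, using the handshake identity, and substituting the values $|V(H_k'')| = 3k-4$ and $|E(H_k'')| = \tfrac{k^2+3k-8}{2}$ recorded above,
\begin{align*}
w(H_k'') = \sum_{v \in V(H_k'')} w(v) &\le \sum_{v \in V(H_k'')} \big(\deg(v) - 1\big) = 2|E(H_k'')| - |V(H_k'')| \\
&= (k^2+3k-8) - (3k-4) = k^2 - 4.
\end{align*}
Combined with Lemma~\ref{lemStretchToDeg3}, this produces a graph $H$ of maximum degree at most $3$, with at most $k^2-4$ vertices, and with $r_+(H) \ge r_+(H_k'')$.

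Finally, to meet the stated count of exactly $k^2-4$ vertices I would append $k^2 - 4 - |V(H)|$ isolated vertices; applying Proposition~\ref{propCliqueCut} with the empty cut clique (taking $S_1$ to be a single isolated vertex and $S_2$ the rest) shows that deleting each such vertex leaves the $\LS_+$-rank unchanged, so the padded graph has the same rank as $H$, and isolated vertices trivially respect the degree bound. I expect the only real work here to be bookkeeping: verifying $\min_v \deg_{H_k''}(v) \ge 2$ so that the inequality $w(v) \le \deg(v)-1$ is available at \emph{every} vertex (this is exactly where the clean bound $w(H_k'') \le k^2-4$ would fail if some vertex had degree $0$ or $1$), and keeping in mind that the lemma delivers at most, rather than exactly, $k^2-4$ vertices, so that the isolated-vertex padding is genuinely needed for the small cases $k \in \{5,6\}$.
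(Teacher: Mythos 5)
Your proof is correct, and its skeleton matches the paper's: apply Lemma~\ref{lemStretchToDeg3} to $H_k''$, then use Proposition~\ref{propVertexStretch}(iii) inductively along the stretching sequence to get rank monotonicity. Where you genuinely differ is in bounding $w(H_k'')$. The paper works with the decomposition $w(G) = |V(G)| + w_1(G) + w_2(G)$ from the lemma's proof and computes each piece from the degree sequence ($|V(H_k'')| = 3k-4$, $w_1(H_k'') \le 2k-2$, and $w_2(H_k'') = k^2-5k+2$, which sum to $k^2-4$), whereas you observe that $w(v) \le \deg(v)-1$ at every vertex of degree at least $2$, verify the minimum-degree condition for $H_k''$, and obtain $w(H_k'') \le 2|E(H_k'')| - |V(H_k'')| = (k^2+3k-8)-(3k-4) = k^2-4$ in one line from the handshake identity and the edge count already recorded in Section~\ref{sec7}. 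Your route is cleaner and less error-prone; the paper's explicit accounting buys slightly more information, namely the observation that the bound is not tight for small $k$ (the paper notes $w(H_5'')=15$ and $w(H_6'')=28$, versus $21$ and $32$). Your final step is a repair rather than a variation: the paper's proof, like Lemma~\ref{lemStretchToDeg3} itself, only delivers $|V(G)| \le k^2-4$, while the theorem statement asserts a graph on exactly $k^2-4$ vertices. Padding with isolated vertices and invoking Proposition~\ref{propCliqueCut} with $C = \emptyset$ (so the padded graph has rank $\max\set{r_+(H),0} = r_+(H)$, and isolated vertices trivially satisfy the degree bound) validly closes this at-most-versus-exactly gap, which the paper silently elides --- and, as you note, the padding is genuinely needed at least for $k \in \set{5,6}$, where the construction produces strictly fewer than $k^2-4$ vertices.
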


\begin{proof}
Given $k \geq 5$, consider the graph $H_k''$. Notice that $\deg(1_1) = \deg(2_2) = k-1$. Moreover, for every $i \in \set{3,\ldots,k}$, we have $\deg(i_0) = 2$, $\deg(i_1), \deg(i_2) \geq 3$, and $\deg(i_1) + \deg(i_2) = k+1$. 

Thus, using notation from the proof of Lemma~\ref{lemStretchToDeg3}, we obtain that $|V(H_k'')| = 3k-4$, $w_1(H_k'') \leq 2k-2$, and $w_2(H_k'') = k^2-5k+2$ (as each of $1_1, 2_2$ contributes $k-4$ to the sum, while $i_1$ and $i_2$ together contribute $k-5$ for every $i \in \set{3,\ldots,k}$). Therefore, $w(H_k'') \leq k^2-4$. Thus, we can apply Lemma~\ref{lemStretchToDeg3} to obtain a graph $G$ from stretching vertices of $H_k''$ where $|V(G)| \leq k^2-4$ and $\deg(v) \leq 3$ for all $v \in V(G)$. Since stretching a vertex cannot decrease the $\LS_+$-rank of a graph (Proposition~\ref{propVertexStretch}), the claim follows.
\end{proof}

Note that the bound $w_1(H_k'') \leq 2k-2$ is not tight for $k=5$ and $k=6$. In those cases, we can obtain a yet better bound as $w(H_5'') = 15$ and $w(H_6'') = 28$. Figure~\ref{figH_k''Stretch} illustrates $H_6''$ (left), and a stretched graph with $w(H_6'') = 28$ vertices which has maximum degree $3$ (right). Note that we suppressed the vertex labels in this figure to reduce cluttering.

\def\x{270 - 180/6}
\def\z{360/6}
\def\y{0.88}
\def\w{0.94}
\def\sc{2.5}

\begin{figure}[ht!]
\begin{center}
\begin{tabular}{cc}

\begin{tikzpicture}[scale=\sc, thick,main node/.style={circle, minimum size=1mm, fill=black, inner sep=0.1mm,draw,font=\tiny\sffamily}]
\node[main node] at ({cos(\x+(0)*\z)},{sin(\x+(0)*\z)}) (1) {};
\node[main node] at ({cos(\x+(1)*\z)},{sin(\x+(1)*\z)}) (6) {};

\node[main node] at ({ \y* cos(\x+(2)*\z) + (1-\y)*cos(\x+(1)*\z)},{ \y* sin(\x+(2)*\z) + (1-\y)*sin(\x+(1)*\z)}) (7) {};
\node[main node] at ({cos(\x+(2)*\z)},{sin(\x+(2)*\z)}) (8) {};
\node[main node] at ({ \y* cos(\x+(2)*\z) + (1-\y)*cos(\x+(3)*\z)},{ \y* sin(\x+(2)*\z) + (1-\y)*sin(\x+(3)*\z)}) (9) {};

\node[main node] at ({ \y* cos(\x+(3)*\z) + (1-\y)*cos(\x+(2)*\z)},{ \y* sin(\x+(3)*\z) + (1-\y)*sin(\x+(2)*\z)}) (10) {};
\node[main node] at ({cos(\x+(3)*\z)},{sin(\x+(3)*\z)}) (11) {};
\node[main node] at ({ \y* cos(\x+(3)*\z) + (1-\y)*cos(\x+(4)*\z)},{ \y* sin(\x+(3)*\z) + (1-\y)*sin(\x+(4)*\z)}) (12) {};

\node[main node] at ({ \y* cos(\x+(4)*\z) + (1-\y)*cos(\x+(3)*\z)},{ \y* sin(\x+(4)*\z) + (1-\y)*sin(\x+(3)*\z)}) (13) {};
\node[main node] at ({cos(\x+(4)*\z)},{sin(\x+(4)*\z)}) (14) {};
\node[main node] at ({ \y* cos(\x+(4)*\z) + (1-\y)*cos(\x+(5)*\z)},{ \y* sin(\x+(4)*\z) + (1-\y)*sin(\x+(5)*\z)}) (15) {};

\node[main node] at ({ \y* cos(\x+(5)*\z) + (1-\y)*cos(\x+(4)*\z)},{ \y* sin(\x+(5)*\z) + (1-\y)*sin(\x+(4)*\z)}) (16) {};
\node[main node] at ({cos(\x+(5)*\z)},{sin(\x+(5)*\z)}) (17) {};
\node[main node] at ({ \y* cos(\x+(5)*\z) + (1-\y)*cos(\x+(6)*\z)},{ \y* sin(\x+(5)*\z) + (1-\y)*sin(\x+(6)*\z)}) (18) {};

 \path[every node/.style={font=\sffamily}]
(1) edge (6)
(8) edge (7)
(8) edge (9)
(11) edge (10)
(11) edge (12)
(14) edge (13)
(14) edge (15)
(17) edge (16)
(17) edge (18)
(1) edge (9)
(1) edge (12)
(1) edge (15)
(1) edge (18)
(6) edge (7)
(6) edge (10)
(6) edge (13)
(6) edge (16)
(9) edge (10)
(9) edge (13)
(12) edge (13)
(12) edge (16)
(15) edge (16)
(18) edge (7);
\end{tikzpicture}

&

\begin{tikzpicture}[scale=\sc, thick,main node/.style={circle, minimum size=1mm, fill=black, inner sep=0.1mm,draw,font=\tiny\sffamily}]

%\node[main node] at ({ (2*\y-1)* cos(\x+(0)*\z) + 2*(1-\y)*cos(\x+(-1)*\z)},{ (2*\y-1)* sin(\x+(0)*\z) + 2*(1-\y)*sin(\x+(-1)*\z)}) (1y) {};
\node[main node] at ({ \y* cos(\x+(0)*\z) + (1-\y)*cos(\x+(-1)*\z)},{ \y* sin(\x+(0)*\z) + (1-\y)*sin(\x+(-1)*\z)}) (1y) {};
\node[main node] at ({cos(\x+(0)*\z)},{sin(\x+(0)*\z)}) (1) {};
\node[main node] at ({ \y* cos(\x+(0)*\z) + (1-\y)*cos(\x+(1)*\z)},{ \y* sin(\x+(0)*\z) + (1-\y)*sin(\x+(1)*\z)}) (1b) {};
%\node[main node] at ({ (2*\y-1)* cos(\x+(0)*\z) + 2*(1-\y)*cos(\x+(1)*\z)},{ (2*\y-1)* sin(\x+(0)*\z) + 2*(1-\y)*sin(\x+(1)*\z)}) (1b) {};

\node[main node] at ({\w* (cos(\x+(0)*\z)) + (1-\w)*(\y* cos(\x+(3)*\z) + (1-\y)*cos(\x+(4)*\z))},{\w*(sin(\x+(0)*\z)) + (1-\w)*( \y* sin(\x+(3)*\z) + (1-\y)*sin(\x+(4)*\z))}) (1c) {};

%\node[main node] at ({ (2*\y-1)* cos(\x+(1)*\z) + 2*(1-\y)*cos(\x+(0)*\z)},{ (2*\y-1)* sin(\x+(1)*\z) + 2*(1-\y)*sin(\x+(0)*\z)}) (6y) {};
\node[main node] at ({ \y* cos(\x+(1)*\z) + (1-\y)*cos(\x+(0)*\z)},{ \y* sin(\x+(1)*\z) + (1-\y)*sin(\x+(0)*\z)}) (6y) {};
\node[main node] at ({cos(\x+(1)*\z)},{sin(\x+(1)*\z)}) (6) {};
\node[main node] at ({ \y* cos(\x+(1)*\z) + (1-\y)*cos(\x+(2)*\z)},{ \y* sin(\x+(1)*\z) + (1-\y)*sin(\x+(2)*\z)}) (6b) {};
%\node[main node] at ({ (2*\y-1)* cos(\x+(1)*\z) + 2*(1-\y)*cos(\x+(2)*\z)},{ (2*\y-1)* sin(\x+(1)*\z) + 2*(1-\y)*sin(\x+(2)*\z)}) (6b) {};

\node[main node] at ({(-\w)* (cos(\x+(0)*\z)) - (1-\w)*(\y* cos(\x+(3)*\z) + (1-\y)*cos(\x+(4)*\z))},{\w*(sin(\x+(0)*\z)) + (1-\w)*( \y* sin(\x+(3)*\z) + (1-\y)*sin(\x+(4)*\z))}) (6c) {};

\node[main node] at ({ \y* cos(\x+(2)*\z) + (1-\y)*cos(\x+(1)*\z)},{ \y* sin(\x+(2)*\z) + (1-\y)*sin(\x+(1)*\z)}) (7) {};
\node[main node] at ({cos(\x+(2)*\z)},{sin(\x+(2)*\z)}) (8) {};
\node[main node] at ({ \y* cos(\x+(2)*\z) + (1-\y)*cos(\x+(3)*\z)},{ \y* sin(\x+(2)*\z) + (1-\y)*sin(\x+(3)*\z)}) (9) {};
\node[main node] at ({ (2*\y-1)* cos(\x+(2)*\z) + 2*(1-\y)*cos(\x+(3)*\z)},{ (2*\y-1)* sin(\x+(2)*\z) + 2*(1-\y)*sin(\x+(3)*\z)}) (9a) {};
\node[main node] at ({ (3*\y-2)* cos(\x+(2)*\z) + 3*(1-\y)*cos(\x+(3)*\z)},{ (3*\y-2)* sin(\x+(2)*\z) + 3*(1-\y)*sin(\x+(3)*\z)}) (9b) {};

\node[main node] at ({ \y* cos(\x+(3)*\z) + (1-\y)*cos(\x+(2)*\z)},{ \y* sin(\x+(3)*\z) + (1-\y)*sin(\x+(2)*\z)}) (10) {};
\node[main node] at ({cos(\x+(3)*\z)},{sin(\x+(3)*\z)}) (11) {};
\node[main node] at ({ \y* cos(\x+(3)*\z) + (1-\y)*cos(\x+(4)*\z)},{ \y* sin(\x+(3)*\z) + (1-\y)*sin(\x+(4)*\z)}) (12) {};
\node[main node] at ({ (2*\y-1)* cos(\x+(3)*\z) + 2*(1-\y)*cos(\x+(4)*\z)},{ (2*\y-1)* sin(\x+(3)*\z) + 2*(1-\y)*sin(\x+(4)*\z)}) (12a) {};
\node[main node] at ({ (3*\y-2)* cos(\x+(3)*\z) + 3*(1-\y)*cos(\x+(4)*\z)},{ (3*\y-2)* sin(\x+(3)*\z) + 3*(1-\y)*sin(\x+(4)*\z)}) (12b) {};

\node[main node] at ({ (3*\y-2)* cos(\x+(4)*\z) + 3*(1-\y)*cos(\x+(3)*\z)},{ (3*\y-2)* sin(\x+(4)*\z) + 3*(1-\y)*sin(\x+(3)*\z)}) (13y) {};
\node[main node] at ({ (2*\y-1)* cos(\x+(4)*\z) + 2*(1-\y)*cos(\x+(3)*\z)},{ (2*\y-1)* sin(\x+(4)*\z) + 2*(1-\y)*sin(\x+(3)*\z)}) (13z) {};
\node[main node] at ({ \y* cos(\x+(4)*\z) + (1-\y)*cos(\x+(3)*\z)},{ \y* sin(\x+(4)*\z) + (1-\y)*sin(\x+(3)*\z)}) (13) {};
\node[main node] at ({cos(\x+(4)*\z)},{sin(\x+(4)*\z)}) (14) {};
\node[main node] at ({ \y* cos(\x+(4)*\z) + (1-\y)*cos(\x+(5)*\z)},{ \y* sin(\x+(4)*\z) + (1-\y)*sin(\x+(5)*\z)}) (15) {};

\node[main node] at ({ (3*\y-2)* cos(\x+(5)*\z) + 3*(1-\y)*cos(\x+(4)*\z)},{ (3*\y-2)* sin(\x+(5)*\z) + 3*(1-\y)*sin(\x+(4)*\z)}) (16y) {};
\node[main node] at ({ (2*\y-1)* cos(\x+(5)*\z) + 2*(1-\y)*cos(\x+(4)*\z)},{ (2*\y-1)* sin(\x+(5)*\z) + 2*(1-\y)*sin(\x+(4)*\z)}) (16z) {};
\node[main node] at ({ \y* cos(\x+(5)*\z) + (1-\y)*cos(\x+(4)*\z)},{ \y* sin(\x+(5)*\z) + (1-\y)*sin(\x+(4)*\z)}) (16) {};
\node[main node] at ({cos(\x+(5)*\z)},{sin(\x+(5)*\z)}) (17) {};
\node[main node] at ({ \y* cos(\x+(5)*\z) + (1-\y)*cos(\x+(6)*\z)},{ \y* sin(\x+(5)*\z) + (1-\y)*sin(\x+(6)*\z)}) (18) {};

 \path[every node/.style={font=\sffamily}]
(1y) edge (1)
(1) edge (1b)
(6y) edge (6)
(6) edge (6b)
(9) edge (9a)
(9a) edge (9b)
(12) edge (12a)
(12a) edge (12b)
(13y) edge (13z)
(13z) edge (13)
(16y) edge (16z)
(16z) edge (16)
(1b) edge (6)
(8) edge (7)
(8) edge (9)
(11) edge (10)
(11) edge (12)
(14) edge (13)
(14) edge (15)
(17) edge (16)
(17) edge (18)
(1b) edge (9)
(1) edge (12)
(1y) edge (15)
(1y) edge (18)
(6b) edge (7)
(6b) edge (10)
(6) edge (13)
(6y) edge (16)
(9b) edge (10)
(9b) edge (13y)
(12b) edge (13)
(12b) edge (16y)
(15) edge (16y)
(18) edge (7);
\end{tikzpicture}
\end{tabular}
\caption{$H_6''$ (left), and a $28$-vertex graph with maximum degree $3$ obtained from stretching $H_6''$ (right)}\label{figH_k''Stretch}
\end{center}
\end{figure}
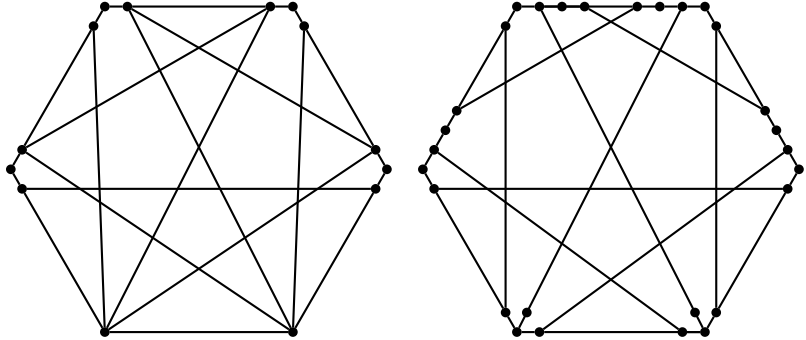

%For example, Figure~\ref{figH_k''Stretch} illustrates $H_6''$ and one possible graph obtained from successively stretching vertices of degree greater than $3$ until there are no such vertices. Note that we suppressed the vertex labels in this figure to reduce cluttering.

Also, since $r_+(H_k'') = \Theta(k)$, it follows from Theorem~\ref{thmStretchH_k''} that there exists a family of graphs $G$ with maximum degree $3$ where $r_+(G) = \Omega(\sqrt{|V(G)|})$. This bound asymptotically matches the previously known bound achieved by line graphs of odd cliques, whose vertex degrees grow without bound.

Finally, it was pointed to us by a reviewer that the family of graphs $H_k'$ coincide with
the family of graphs $G_k$ in~\cite[page 675]{DobreV2015} (also studied recently in~\cite[Section 5.5]{Vargas23}). It is very interesting that the families of graphs $H_k'$ have been considered as challenging instances for other but related convex relaxations of the stable set polytope. These graphs are also related to four graphs $G_8$, $G_{11}$, $G_{14}$ and $G_{17}$ considered as minimal obstructions in~\cite{PenaVZ2007} to the hierarchies discussed there which are related to the hierarchy
proposed in~\cite{deKlerkP2002}. These four graphs, as well as the graph $H_8$ recently studied in~\cite{LaurentV23} and ~\cite{Vargas23}, are related to our family $H_k''$ as they all are among graphs in $\S_2^{k-2}(K_k)$ with the fewest possible edges. These connections raise some more hope that some of our techniques and approaches in this paper may be useful for analyzing other convex relaxations of the stable set polytope.

\section{Some Future Research Directions}\label{sec8}

In this section, we mention some follow-up questions to our work in this manuscript that could lead to interesting future research.

\begin{problem}\label{pro1}
Is there an $\ell$-minimal graph $G$ in $\S_2^{\ell-1}(K_{\ell+2})$ for all $\ell \in \mN$?
\end{problem}

Results from~\cite{LiptakT03, EscalanteMN06} show that the answer is ``yes'' for $\ell \in \{1,2,3\}$. Our $4$-minimal graph $G_{4,1}$ shows that this is also true for $\ell = 4$. Does the pattern continue for larger $\ell$? And more importantly, how can we verify the $\LS_+$-rank of these graphs analytically, as opposed to primarily relying on specific numerical certificates?

\begin{problem}
Given $\ell \in \mN$, what are the maximum and minimum possible edge densities of $\ell$-minimal graphs?
\end{problem}

Given $\ell \in \mN$, let $d^+(\ell)$ (resp. $d^-(\ell)$) be the maximum (resp. minimum) possible edge density of an $\ell$-minimal graph. It was previously known that $d^+(1) = d^-(1) = 1$ (attained by the $3$-cycle), $d^+(2) = \frac{3}{5}$ ($G_{2,2}$), $d^-(2) = \frac{8}{15}$ ($G_{2,1}$), and $d^-(3) \leq \frac{7}{18}$ ($G_{3,1}$). In this work we showed that $d^-(3) = \frac{7}{18}$ (Proposition~\ref{propSparsest3min}) and $d^+(3) \geq \frac{4}{9}$ ($G_{3,2}$). For $\ell=4$, the discovery of $G_{4,1}$ and the other $4$-minimal graphs presented in Figure~\ref{figG41subs} show that $d^-(4) \leq \frac{7}{22}$ and $d^+(4) \geq \frac{4}{11}$. Can we prove tight bounds for $d^+(\ell)$ and/or $d^-(\ell)$ in general?

\begin{problem}
How many non-isomorphic $\ell$-minimal graphs are there for each $\ell \geq 1$?
\end{problem}

Given $\ell \in \mN$, let $c(\ell)$ denote the number of non-isomorphic $\ell$-minimal graphs. We know that $c(1)= 1$ (the triangle) and $c(2)=2$ ($G_{2,1}$ and $G_{2,2}$). We showed in Section~\ref{sec6} that $c(3) \geq 6$ ($G_{3,2}$ and its subgraphs in Figure~\ref{figG32subs}) and $c(4) \geq 7$ ($G_{4,1}$ and its subgraphs in Figure~\ref{figG41subs}). Does $c(\ell)$ grow without bound as $\ell$ increases? If so, at what rate asymptotically?

\begin{problem}\label{prosparse1}
What is the fastest growing function $f$ such that there exist graphs $G$ with maximum degree at most three and $r_+(G) = \Theta(f(|V(G)|))$?
\end{problem}

\begin{problem}\label{prosparse2}
What is the fastest growing function $f$ such that there exist cubic graphs $G$ with $r_+(G) = \Theta(f(|V(G)|))$?
\end{problem}

We proved in Section~\ref{sec7} that there exist very sparse graphs (maximum degree at most three) with $r_+(G) = \Omega(\sqrt{|V(G)|})$. Since all graphs $G$ with maximum degree at most two satisfy $r_+(G) \leq 1$, Problems~\ref{prosparse1} and~\ref{prosparse2} are really about the sparsest graphs with high $\LS_+$-ranks.

\begin{problem}
What can we say about the lift-and-project ranks of graphs for other positive semidefinite lift-and-project operators? To start with some concrete questions for this research problem, what are the solutions of Problems~\ref{pro1}-\ref{prosparse2} when we replace $\LS_+$ with $\Las, \BZ_+, \Theta_k$, or $\SA_+$? (For Problem~\ref{pro1}, we may have different sets $\S$, based on different graph operations, for different lift-and-project operators.)
\end{problem}

After $\LS_+$, many stronger semidefinite lift-and-project operators (such as $\Las$~\cite{Lasserre01},\\
$\BZ_+$~\cite{BienstockZ04}, $\Theta_k$~\cite{GouveiaPT10}, and $\SA_+$~\cite{AuT16}) have been proposed. While these stronger operators are capable of producing tighter relaxations than $\LS_+$, these SDP relaxations can also be more computationally challenging to solve. For instance, while the $\LS_+^k$-relaxation of a set $P \subseteq [0,1]^n$ involves $O(n^k)$ PSD constraints of order $O(n)$, the operators $\Las^k, \BZ_+^k$ and $\SA_+^k$ all impose one (or more) PSD constraint of order $\Omega(n^k)$ in their formulations. We briefly mentioned at the end of Section~\ref{sec3} that some of our tools for analyzing $\LS_+$ relaxations can be extended to these other operators, as well as pointed out some related work on other convex relations of the stable set polytope at the end of Section~\ref{sec7}. It would be interesting to determine the corresponding properties of graphs which are minimal with respect to the stronger lift-and-project operators.

\section*{Declarations}

{\bf Conflict of interest:} The authors declare that they have no conflict of interest.

\bibliographystyle{alpha}
\bibliography{ref}

\begin{thebibliography}{BENW23}

\bibitem[AEF14]{AguileraEF14}
N\'{e}stor~E. Aguilera, Mariana~S. Escalante, and Pablo~G. Fekete.
\newblock On the facets of lift-and-project relaxations under graph operations.
\newblock {\em Discrete Appl. Math.}, 164(part 2):360--372, 2014.

\bibitem[AT16]{AuT16}
Yu~Hin Au and Levent Tun\c{c}el.
\newblock A comprehensive analysis of polyhedral lift-and-project methods.
\newblock {\em SIAM J. Discrete Math.}, 30(1):411--451, 2016.

\bibitem[AT18]{AuT18}
Yu~Hin Au and Levent Tun\c{c}el.
\newblock Elementary polytopes with high lift-and-project ranks for strong
  positive semidefinite operators.
\newblock {\em Discrete Optim.}, 27:103--129, 2018.

\bibitem[AT24]{AuT24}
Yu~Hin Au and Levent Tun{\c{c}}el.
\newblock Stable set polytopes with high lift-and-project ranks for the
  {L}ov{\'a}sz--{S}chrijver {SDP} operator.
\newblock {\em Math. Program.}, 2024.

\bibitem[BENT13]{BianchiENT13}
Silvia~M. Bianchi, Mariana~S. Escalante, Graciela~L. Nasini, and Levent
  Tun\c{c}el.
\newblock Lov{\'a}sz-{S}chrijver {SDP}-operator and a superclass of
  near-perfect graphs.
\newblock {\em Electronic Notes in Discrete Mathematics}, 44:339--344, 2013.

\bibitem[BENT17]{BianchiENT17}
Silvia~M. Bianchi, Mariana~S. Escalante, Graciela~L. Nasini, and Levent
  Tun\c{c}el.
\newblock Lov\'{a}sz-{S}chrijver {SDP}-operator, near-perfect graphs and
  near-bipartite graphs.
\newblock {\em Math. Program.}, 162(1-2, Ser. A):201--223, 2017.

\bibitem[BENW23]{BianchiENW23}
Silvia~M. Bianchi, Mariana~S. Escalante, Graciela~L. Nasini, and Annegret~K.
  Wagler.
\newblock Lov\'{a}sz-{S}chrijver {PSD}-operator and the stable set polytope of
  claw-free graphs.
\newblock {\em Discrete Appl. Math.}, 332:70--86, 2023.

\bibitem[BZ04]{BienstockZ04}
Daniel Bienstock and Mark Zuckerberg.
\newblock Subset algebra lift operators for 0-1 integer programming.
\newblock {\em SIAM J. Optim.}, 15(1):63--95, 2004.

\bibitem[CRST06]{ChudnovskyRST06}
Maria Chudnovsky, Neil Robertson, Paul Seymour, and Robin Thomas.
\newblock The strong perfect graph theorem.
\newblock {\em Ann. of Math. (2)}, 164(1):51--229, 2006.

\bibitem[dKP02]{deKlerkP2002}
Etienne de~Klerk and Dmitrii~V. Pasechnik.
\newblock Approximation of the stability number of a graph via copositive
  programming.
\newblock {\em SIAM J. Optim.}, 12(4):875--892, 2002.

\bibitem[DV15]{DobreV2015}
Cristian Dobre and Juan Vera.
\newblock Exploiting symmetry in copositive programs via semidefinite
  hierarchies.
\newblock {\em Math. Program.}, 151(2):659--680, 2015.

\bibitem[EMN06]{EscalanteMN06}
Mariana~S. Escalante, M.~S. Montelar, and Graciela~L. Nasini.
\newblock Minimal {$N_+$}-rank graphs: progress on {L}ipt\'{a}k and
  {T}un\c{c}el's conjecture.
\newblock {\em Oper. Res. Lett.}, 34(6):639--646, 2006.

\bibitem[GB08]{GrantB08}
Michael~C. Grant and Stephen~P. Boyd.
\newblock Graph implementations for nonsmooth convex programs.
\newblock In {\em Recent advances in learning and control}, volume 371 of {\em
  Lect. Notes Control Inf. Sci.}, pages 95--110. Springer, London, 2008.

\bibitem[GB14]{CVX}
Michael Grant and Stephen Boyd.
\newblock {CVX}: Matlab software for disciplined convex programming, version
  2.1.
\newblock \url{http://cvxr.com/cvx}, March 2014.

\bibitem[GPT10]{GouveiaPT10}
Jo\~{a}o Gouveia, Pablo~A. Parrilo, and Rekha~R. Thomas.
\newblock Theta bodies for polynomial ideals.
\newblock {\em SIAM J. Optim.}, 20(4):2097--2118, 2010.

\bibitem[GT01]{GoemansT01}
Michel~X. Goemans and Levent Tun\c{c}el.
\newblock When does the positive semidefiniteness constraint help in lifting
  procedures?
\newblock {\em Math. Oper. Res.}, 26(4):796--815, 2001.

\bibitem[Knu94]{Knuth1994}
Donald~E. Knuth.
\newblock The sandwich theorem.
\newblock {\em Electron. J. Combin.}, 1:Article 1, approx. 48, 1994.

\bibitem[Las01]{Lasserre01}
Jean~B. Lasserre.
\newblock An explicit exact {SDP} relaxation for nonlinear 0-1 programs.
\newblock In {\em Integer programming and combinatorial optimization
  ({U}trecht, 2001)}, volume 2081 of {\em Lecture Notes in Comput. Sci.}, pages
  293--303. Springer, Berlin, 2001.

\bibitem[Lau02]{Laurent2001}
Monique Laurent.
\newblock Tighter linear and semidefinite relaxations for max-cut based on the
  {L}ov\'{a}sz-{S}chrijver lift-and-project procedure.
\newblock {\em SIAM J. Optim.}, 12(2):345--375, 2001/02.

\bibitem[Lau03]{Laurent2003b}
Monique Laurent.
\newblock Lower bound for the number of iterations in semidefinite hierarchies
  for the cut polytope.
\newblock {\em Math. Oper. Res.}, 28(4):871--883, 2003.

\bibitem[Lov79]{Lovasz79}
L\'{a}szl\'{o} Lov\'{a}sz.
\newblock On the {S}hannon capacity of a graph.
\newblock {\em IEEE Trans. Inform. Theory}, 25(1):1--7, 1979.

\bibitem[LS91]{LovaszS91}
L.~Lov\'{a}sz and A.~Schrijver.
\newblock Cones of matrices and set-functions and {$0$}-{$1$} optimization.
\newblock {\em SIAM J. Optim.}, 1(2):166--190, 1991.

\bibitem[LT03]{LiptakT03}
L{\'a}szl{\'o} Lipt{\'a}k and Levent Tun{\c{c}}el.
\newblock The stable set problem and the lift-and-project ranks of graphs.
\newblock {\em Math. Program.}, 98(1-3, Ser. B):319--353, 2003.
\newblock Integer programming (Pittsburgh, PA, 2002).

\bibitem[LV23]{LaurentV23}
Monique Laurent and Luis~F. Vargas.
\newblock Exactness of parrilo’s conic approximations for copositive matrices
  and associated low order bounds for the stability number of a graph.
\newblock {\em Mathematics of Operations Research}, 48(2):1017--1043, 2023.

\bibitem[PnVZ07]{PenaVZ2007}
Javier Pe\~{n}a, Juan Vera, and Luis~F. Zuluaga.
\newblock Computing the stability number of a graph via linear and semidefinite
  programming.
\newblock {\em SIAM J. Optim.}, 18(1):87--105, 2007.

\bibitem[SA90]{SheraliA90}
Hanif~D. Sherali and Warren~P. Adams.
\newblock A hierarchy of relaxations between the continuous and convex hull
  representations for zero-one programming problems.
\newblock {\em SIAM J. Discrete Math.}, 3(3):411--430, 1990.

\bibitem[ST99]{StephenT99}
Tamon Stephen and Levent Tun\c{c}el.
\newblock On a representation of the matching polytope via semidefinite
  liftings.
\newblock {\em Math. Oper. Res.}, 24(1):1--7, 1999.

\bibitem[Stu99]{Sturm99}
Jos~F. Sturm.
\newblock Using {S}e{D}u{M}i 1.02, a {MATLAB} toolbox for optimization over
  symmetric cones.
\newblock {\em Optim. Methods Softw.}, 11/12(1-4):625--653, 1999.
\newblock Interior point methods.

\bibitem[Var23]{Vargas23}
Luis~F. Vargas.
\newblock {\em Sum-of-squares representations for copositive matrices and
  independent sets in graphs}.
\newblock PhD thesis, Tilburg University, 2023.

\bibitem[Wag22]{Wagler22}
Annegret~K. Wagler.
\newblock On the {L}ov\'{a}sz-{S}chrijver {PSD}-operator on graph classes
  defined by clique cutsets.
\newblock {\em Discrete Appl. Math.}, 308:209--219, 2022.

\end{thebibliography}

\appendix

\section{Proofs of Lemmas~\ref{lemG412},~\ref{lemG411}, and~\ref{lemG413}}\label{secA1}

The following lemmas provide the deferred technical details from the proof of Theorem~\ref{thmG41}. To reduce cluttering, given $S \subseteq [n]$ we will let $\het{\chi}_S$ denote the vector $\begin{bmatrix} 1 \\ \chi_S \end{bmatrix} \in \mR^{n+1}$.

\begin{lemma}\label{lemG412}
Let $Y_0$ be as defined in the proof of Theorem~\ref{thmG41}. Then
\[
Y_0(e_0 - e_1) \in \cone(\LS_+^2(G_{4,1})).
\]
\end{lemma}

\begin{proof}
First, notice that $[Y_0(e_0 - e_1)]_{1} = 0$. Thus, let $G' \ce G_{4,1} - 1$ and $v$ be the restriction of $Y_0(e_0 - e_1)$ to the coordinates indexed by $\cone(\LS_+^2(G'))$. Then, by Lemma~\ref{lemfacet}, it suffices to show that $v \in \cone(\LS_+^2(G'))$. Consider the matrix
{\scriptsize
\[
Y_2 \ce 
\bbordermatrix{
&& 2 & 3 & 4_1 & 4_0 & 4_2 & 5_1 & 5_0 & 5_2 & 6_1 & 6_0 & 6_2 \cr
 & 74660 & 25340 & 25340 & 16500 & 57518 & 8662 & 8662 & 57518 & 16500 & 16500 & 49680 & 16500 \cr
 & 25340 & 25340 & 0 & 0 & 25340 & 0 & 0 & 17166 & 8174 & 8363 & 16977 & 0 \cr
 & 25340 & 0 & 25340 & 8174 & 17166 & 0 & 0 & 25340 & 0 & 0 & 16977 & 8363 \cr
 & 16500 & 0 & 8174 & 16500 & 0 & 8320 & 342 & 16158 & 0 & 0 & 14067 & 2433 \cr
 & 57518 & 25340 & 17166 & 0 & 57518 & 0 & 7678 & 41360 & 16158 & 16494 & 34971 & 14067 \cr
 & 8662 & 0 & 0 & 8320 & 0 & 8662 & 984 & 7678 & 342 & 0 & 8662 & 0 \cr
 & 8662 & 0 & 0 & 342 & 7678 & 984 & 8662 & 0 & 8320 & 0 & 8662 & 0 \cr
 & 57518 & 17166 & 25340 & 16158 & 41360 & 7678 & 0 & 57518 & 0 & 14067 & 34971 & 16494 \cr
 & 16500 & 8174 & 0 & 0 & 16158 & 342 & 8320 & 0 & 16500 & 2433 & 14067 & 0 \cr
 & 16500 & 8363 & 0 & 0 & 16494 & 0 & 0 & 14067 & 2433 & 16500 & 0 & 8137 \cr
 & 49680 & 16977 & 16977 & 14067 & 34971 & 8662 & 8662 & 34971 & 14067 & 0 & 49680 & 0 \cr
 & 16500 & 0 & 8363 & 2433 & 14067 & 0 & 0 & 16494 & 0 & 8137 & 0 & 16500 \cr
 }.
\]}
We claim that $Y_2 \in \widehat{\LS}_+^2(G')$. First, one can verify that $Y_2 \succeq 0$ (a $UV$-certificate is provided in Table~\ref{tabUV}). Also, notice that the function $f_2$ (restricted to $V(G')$) is an automorphism of $G'$. Moreover, observe that for all $i,j \in V(G')$, $Y_2[i,j] = Y_2[f_2(i), f_2(j)]$. Thus, by symmetry, it only remains to prove the conditions $Y_2e_i, Y_2(e_0-e_i) \in \cone(\LS_+(G'))$ for $i \in \set{2,4_1,4_0,4_2,6_1,6_0}$.

First, notice that
\begin{itemize}
\item
$[Y_2e_{4_0}]_{0} = [Y_2e_{4_0}]_{4_0}$, $[Y_2e_{4_0}]_{4_1} = [Y_2e_{4_0}]_{4_2} = 0$, and that the following matrix certifies that $Y_2e_{4_0}$ (with the entries corresponding to vertices $4_1, 4_0, 4_2$ removed) belongs to $\cone(\LS_+(G' \ominus 4_0))$. 
%Y2e5 = Y2(:,5);
%Remove 4,5,6

{\scriptsize
\[
Y_{21} \ce 
\bbordermatrix{
&& 2 & 3 & 5_1 & 5_0 & 5_2 & 6_1 & 6_0 & 6_2 \cr
 & 57518 & 25340 & 17164 & 7678 & 41360 & 16158 & 16496 & 34970 & 14068 \cr
 & 25340 & 25340 & 0 & 0 & 19057 & 6283 & 5860 & 19444 & 0 \cr
 & 17164 & 0 & 17164 & 0 & 17164 & 0 & 0 & 12010 & 5117 \cr
 & 7678 & 0 & 0 & 7678 & 0 & 7678 & 3125 & 4516 & 0 \cr
 & 41360 & 19057 & 17164 & 0 & 41360 & 0 & 10718 & 26585 & 10400 \cr
 & 16158 & 6283 & 0 & 7678 & 0 & 16158 & 5778 & 8385 & 3668 \cr
 & 16496 & 5860 & 0 & 3125 & 10718 & 5778 & 16496 & 0 & 8910 \cr
 & 34970 & 19444 & 12010 & 4516 & 26585 & 8385 & 0 & 34970 & 0 \cr
 & 14068 & 0 & 5117 & 0 & 10400 & 3668 & 8910 & 0 & 14068 \cr
}
\]}

\item
$[Y_2e_{6_0}]_{0} = [Y_2e_{6_0}]_{6_0}$, $[Y_2e_{6_0}]_{6_1} = [Y_2e_{6_0}]_{6_2} = 0$, and that the following matrix certifies that $Y_2e_{6_0}$ (with the entries corresponding to vertices $6_1, 6_0, 6_2$ removed) belongs to $\cone(\LS_+(G' \ominus 6_0))$. 
%Y2e11 = Y2(:,11);
%Remove 10,11,12

{\scriptsize
\[
Y_{22} \ce 
\bbordermatrix{
&& 2 & 3 & 4_1 & 4_0 & 4_2 & 5_1 & 5_0 & 5_2 & \cr
 & 49680 & 16977 & 16977 & 14068 & 34970 & 8662 & 8662 & 34970 & 14068 \cr
 & 16977 & 16977 & 0 & 0 & 16977 & 0 & 0 & 11129 & 5848 \cr
 & 16977 & 0 & 16977 & 5848 & 11129 & 0 & 0 & 16977 & 0 \cr
 & 14068 & 0 & 5848 & 14068 & 0 & 8220 & 442 & 13626 & 0 \cr
 & 34970 & 16977 & 11129 & 0 & 34970 & 0 & 7578 & 21344 & 13626 \cr
 & 8662 & 0 & 0 & 8220 & 0 & 8662 & 1084 & 7578 & 442 \cr
 & 8662 & 0 & 0 & 442 & 7578 & 1084 & 8662 & 0 & 8220 \cr
 & 34970 & 11129 & 16977 & 13626 & 21344 & 7578 & 0 & 34970 & 0 \cr
 & 14068 & 5848 & 0 & 0 & 13626 & 442 & 8220 & 0 & 14068 \cr
}
\]}

\item
$[Y_2(e_0-e_2)]_{2} =0$, and that the following matrix certifies that $Y_2(e_0-e_2)$ (with the entry corresponding to vertex $2$ removed) belongs to $\cone(\LS_+(G' - 2))$.
%Y2f2 = Y2(:,1) - Y2(:,2);
%remove 2

{\scriptsize
\[
Y_{23} \ce 
\bbordermatrix{
&& 3 & 4_1 & 4_0 & 4_2 & 5_1 & 5_0 & 5_2 & 6_1 & 6_0 & 6_2 \cr
 & 49320 & 25340 & 16500 & 32178 & 8662 & 8662 & 40354 & 8324 & 8137 & 32703 & 16500 \cr
 & 25340 & 25340 & 6118 & 19222 & 0 & 0 & 25340 & 0 & 0 & 19368 & 5972 \cr
 & 16500 & 6118 & 16500 & 0 & 8107 & 595 & 15905 & 0 & 2465 & 10494 & 6006 \cr
 & 32178 & 19222 & 0 & 32178 & 0 & 7688 & 24409 & 7769 & 5672 & 21928 & 10250 \cr
 & 8662 & 0 & 8107 & 0 & 8662 & 974 & 7688 & 555 & 0 & 5933 & 2729 \cr
 & 8662 & 0 & 595 & 7688 & 974 & 8662 & 0 & 8067 & 70 & 8592 & 0 \cr
 & 40354 & 25340 & 15905 & 24409 & 7688 & 0 & 40354 & 0 & 7763 & 24111 & 16243 \cr
 & 8324 & 0 & 0 & 7769 & 555 & 8067 & 0 & 8324 & 374 & 7950 & 257 \cr
 & 8137 & 0 & 2465 & 5672 & 0 & 70 & 7763 & 374 & 8137 & 0 & 8067 \cr
 & 32703 & 19368 & 10494 & 21928 & 5933 & 8592 & 24111 & 7950 & 0 & 32703 & 0 \cr
 & 16500 & 5972 & 6006 & 10250 & 2729 & 0 & 16243 & 257 & 8067 & 0 & 16500 \cr
}
\]}

\item
$[Y_2(e_0-e_{4_1})]_{4_1} =0$, and that the following matrix certifies that $Y_2(e_0-e_{4_1})$ (with the entry corresponding to vertex $4_1$ removed) belongs to $\cone(\LS_+(G' - 4_1))$.
%Y2f4 = Y2(:,1) - Y2(:,4);
%remove 4

{\scriptsize
\[
Y_{24} \ce 
\bbordermatrix{
&& 2 & 3 & 4_0 & 4_2 & 5_1 & 5_0 & 5_2 & 6_1 & 6_0 & 6_2 \cr
 & 58160 & 25340 & 17164 & 57518 & 342 & 8320 & 41360 & 16500 & 16496 & 35612 & 14068 \cr
 & 25340 & 25340 & 0 & 25228 & 0 & 0 & 19229 & 6068 & 5788 & 19552 & 0 \cr
 & 17164 & 0 & 17164 & 17063 & 0 & 0 & 17055 & 0 & 0 & 12187 & 4977 \cr
 & 57518 & 25228 & 17063 & 57518 & 0 & 7946 & 41199 & 16198 & 16378 & 35219 & 13979 \cr
 & 342 & 0 & 0 & 0 & 342 & 340 & 1 & 190 & 0 & 340 & 1 \cr
 & 8320 & 0 & 0 & 7946 & 340 & 8320 & 0 & 8190 & 3046 & 5274 & 0 \cr
 & 41360 & 19229 & 17055 & 41199 & 1 & 0 & 41360 & 0 & 10763 & 26612 & 10417 \cr
 & 16500 & 6068 & 0 & 16198 & 190 & 8190 & 0 & 16500 & 5653 & 8919 & 3601 \cr
 & 16496 & 5788 & 0 & 16378 & 0 & 3046 & 10763 & 5653 & 16496 & 0 & 9091 \cr
 & 35612 & 19552 & 12187 & 35219 & 340 & 5274 & 26612 & 8919 & 0 & 35612 & 0 \cr
 & 14068 & 0 & 4977 & 13979 & 1 & 0 & 10417 & 3601 & 9091 & 0 & 14068 \cr
}
\]}
\end{itemize}
Also, notice that $Y_{21}e_0 = Y_{21}(e_{5_0} + e_{5_2})$. Thus, if we let $Y'_{21}$ be the matrix obtained from $Y_{21}$ by removing the $0^{\tn{th}}$ row and column, then we see that $Y'_{21} \succeq 0 \Rightarrow Y_{21} \succeq 0$. The $UV$-certificates of $Y'_{21}, Y_{22}, Y_{23}$, and $Y_{24}$ are provided in Table~\ref{tabUV}.

Next, observe that

\begin{allowdisplaybreaks}
\begin{align*}
Y_2e_2 \leq{} &
 8291 \het{\chi}_{\set{ 2 , 4_0 , 5_0 , 6_1 }} + 
 8873 \het{\chi}_{\set{ 2 , 4_0 , 5_0 , 6_0 }} + 
 72 \het{\chi}_{\set{ 2 , 4_0 , 5_2 , 6_1 }} + 
 8104 \het{\chi}_{\set{ 2 , 4_0 , 5_2 , 6_0 }} 
,\\
Y_2e_{4_1} \leq{} &
 6365 \het{\chi}_{\set{ 3 , 4_1 , 5_0 , 6_0 }} + 
 1811 \het{\chi}_{\set{ 3 , 4_1 , 5_0 , 6_2 }} + 
 342 \het{\chi}_{\set{ 4_1 , 4_2 , 5_1 , 6_0 }} +
 7361 \het{\chi}_{\set{ 4_1 , 4_2 , 5_0 , 6_0 }} \\ 
 &+
 617 \het{\chi}_{\set{ 4_1 , 4_2 , 5_0 , 6_2 }} + 
 4 \het{\chi}_{\set{ 4_1 , 5_0 , 6_1 , 6_2 }} 
,\\
Y_2e_{4_2} \leq{} &
 642 \het{\chi}_{\set{ 4_1 , 4_2 , 5_1 , 6_0 }} + 
 7678 \het{\chi}_{\set{ 4_1 , 4_2 , 5_0 , 6_0 }} + 
 342 \het{\chi}_{\set{ 4_2 , 5_1 , 5_2 , 6_0 }} 
,\\
Y_2e_{6_1} \leq{} &
 6764 \het{\chi}_{\set{ 2 , 4_0 , 5_0 , 6_1 }} + 
 1599 \het{\chi}_{\set{ 2 , 4_0 , 5_2 , 6_1 }} + 
 4 \het{\chi}_{\set{ 4_1 , 5_0 , 6_1 , 6_2 }} +
 7300 \het{\chi}_{\set{ 4_0 , 5_0 , 6_1 , 6_2 }} \\ 
 &+
  833 \het{\chi}_{\set{ 4_0 , 5_2 , 6_1 , 6_2 }} 
,\\
Y_2(e_0-e_{4_0}) \leq{} &
 7254 \het{\chi}_{\set{ 3 , 4_1 , 5_0 , 6_0 }} + 
 1004 \het{\chi}_{\set{ 3 , 4_1 , 5_0 , 6_2 }} + 
 489 \het{\chi}_{\set{ 4_1 , 4_2 , 5_1 , 6_0 }} +
 6472 \het{\chi}_{\set{ 4_1 , 4_2 , 5_0 , 6_0 }} \\ 
 &+
 1291 \het{\chi}_{\set{ 4_1 , 4_2 , 5_0 , 6_2 }} + 
 137 \het{\chi}_{\set{ 4_1 , 5_0 , 6_1 , 6_2 }} + 
 495 \het{\chi}_{\set{ 4_2 , 5_1 , 5_2 , 6_0 }}
,\\
 Y_2(e_0-e_{4_2}) \leq{} &
 832 \het{\chi}_{\set{ 2 , 4_0 , 5_0 , 6_1 }} + 
 3414 \het{\chi}_{\set{ 2 , 4_0 , 5_0 , 6_0 }} + 
 496 \het{\chi}_{\set{ 2 , 4_0 , 5_2 , 6_1 }} + 
 919 \het{\chi}_{\set{ 2 , 4_0 , 5_2 , 6_0 }} \\ 
 &+
  5480 \het{\chi}_{\set{ 3 , 4_1 , 5_0 , 6_0 }} + 
 2094 \het{\chi}_{\set{ 3 , 4_1 , 5_0 , 6_2 }} + 
 2827 \het{\chi}_{\set{ 3 , 4_0 , 5_0 , 6_0 }} + 
 1634 \het{\chi}_{\set{ 3 , 4_0 , 5_0 , 6_2 }} \\ 
 &+
 700 \het{\chi}_{\set{ 4_1 , 5_0 , 6_1 , 6_2 }} + 
 526 \het{\chi}_{\set{ 4_0 , 5_1 , 5_2 , 6_1 }} + 
 1070 \het{\chi}_{\set{ 4_0 , 5_1 5_2 , 6_0 }} + 
 690 \het{\chi}_{\set{ 4_0 , 5_0 , 6_1 , 6_2 }} \\
 &+
 455 \het{\chi}_{\set{ 4_0 , 5_2 , 6_1 , 6_2 }} + 
 126 \het{\chi}_{\set{ 4_2 , 5_1 , 5_2 , 6_0 }} + 
\frac{44735}{57518}Y_2e_{4_0}
,\\
 Y_2(e_0-e_{6_1}) \leq{} &
 275 \het{\chi}_{\set{ 2 , 4_0 , 5_0 , 6_0 }} + 
 186 \het{\chi}_{\set{ 3 , 4_1 , 5_0 , 6_0 }} + 
 2333 \het{\chi}_{\set{ 3 , 4_1 , 5_0 , 6_2 }} + 
 186 \het{\chi}_{\set{ 3 , 4_0 , 5_0 , 6_0 }} \\
  &+
 5933 \het{\chi}_{\set{ 3 , 4_0 , 5_0 , 6_2 }} + 
 140 \het{\chi}_{\set{ 4_1 , 4_2 , 5_0 , 6_2 }} + 
 227 \het{\chi}_{\set{ 4_0 , 5_1 , 5_2 , 6_0 }} + 
\frac{48880}{49680}Y_2e_{6_0}
,\\
Y_2(e_0-e_{6_0}) \leq{} &
 7474 \het{\chi}_{\set{ 2 , 4_0 , 5_0 , 6_1 }} + 
 978 \het{\chi}_{\set{ 2 , 4_0 , 5_2 , 6_1 }} + 
 978 \het{\chi}_{\set{ 3 , 4_1 , 5_0 , 6_2 }} + 
 7474 \het{\chi}_{\set{ 3 , 4_0 , 5_0 , 6_2 }} \\
  &+
1454 \het{\chi}_{\set{ 4_1 , 5_0 , 6_1 , 6_2 }} + 
 5168 \het{\chi}_{\set{ 4_0 , 5_0 , 6_1 , 6_2 }} + 
 1454 \het{\chi}_{\set{ 4_0 , 5_2 , 6_1 , 6_2 }}.
\end{align*}
\end{allowdisplaybreaks}

Since all incidence vectors above correspond to stable sets in $G'$, and we already showed earlier that $Y_2e_{4_0}, Y_2e_{6_0} \in \cone(\LS_+(G'))$, we obtain that all the vectors above belong to $\cone(\LS_+(G'))$. Thus, we conclude that $Y_0(e_0-e_1) \in \cone(\LS_+^2(G_{4,1}))$.
\end{proof}

\begin{lemma}\label{lemG411}
Let $Y_0$ be as defined in the proof of Theorem~\ref{thmG41}. Then 
\[
Y_0e_{6_0} \in \cone(\LS_+^2(G_{4,1})).
\]
\end{lemma}

\begin{proof}
First, notice that $[Y_0e_{6_0}]_{0} = [Y_0e_{6_0}]_{6_0}$, and $[Y_0e_{6_0}]_{6_1} = [Y_0e_{6_0}]_{6_2} = 0$. Thus, let $G' \ce G_{4,1} \ominus 6_0$ and $v$ be the restriction of $Y_0e_{6_0}$ to the coordinates indexed by $\cone(\LS_+^2(G'))$. Then, by Lemma~\ref{lemfacet}, it suffices to show that $v \in \cone(\LS_+^2(G'))$. Consider the matrix
{\scriptsize
\[
Y_1 \ce 
\bbordermatrix{
&& 1 & 2 & 3 & 4_1 & 4_0 & 4_2 & 5_1 & 5_0 & 5_2 \cr
 & 75020 & 25340 & 17502 & 17502 & 15419 & 51150 & 15911 & 15911 & 51150 & 15419 \cr
 & 25340 & 25340 & 0 & 0 & 0 & 17400 & 7940 & 7940 & 17400 & 0 \cr
 & 17502 & 0 & 17502 & 0 & 0 & 17502 & 0 & 0 & 9571 & 7931 \cr
 & 17502 & 0 & 0 & 17502 & 7931 & 9571 & 0 & 0 & 17502 & 0 \cr
 & 15419 & 0 & 0 & 7931 & 15419 & 0 & 7488 & 396 & 14993 & 0 \cr
 & 51150 & 17400 & 17502 & 9571 & 0 & 51150 & 0 & 15485 & 27920 & 14993 \cr
 & 15911 & 7940 & 0 & 0 & 7488 & 0 & 15911 & 396 & 15485 & 396 \cr
 & 15911 & 7940 & 0 & 0 & 396 & 15485 & 396 & 15911 & 0 & 7488 \cr
 & 51150 & 17400 & 9571 & 17502 & 14993 & 27920 & 15485 & 0 & 51150 & 0 \cr
 & 15419 & 0 & 7931 & 0 & 0 & 14993 & 396 & 7488 & 0 & 15419 \cr
}.
\]}
We claim that $Y_1 \in \widehat{\LS}_+^2(G')$. First, one can verify that $Y_1 \succeq 0$ (a $UV$-certificate is provided in Table~\ref{tabUV}). Also, notice that the function $f_2$ (restricted to $V(G')$) is an automorphism of $G'$. Moreover, observe that for all $i,j \in V(G')$, $Y_1[i,j] = Y_1[f_2(i), f_2(j)]$. Thus, by symmetry, it only remains to prove the conditions $Y_1e_i, Y_1(e_0-e_i) \in \cone(\LS_+(G'))$ for $i \in \set{1,2,4_1,4_0,4_2}$.

First, notice that $[Y_1e_{4_0}]_{0} = [Y_1e_{4_0}]_{4_0}$, $[Y_1e_{4_0}]_{4_1} = [Y_1e_{4_0}]_{4_2} = 0$, and that the following matrix certifies that $Y_1e_{4_0}$ (with the entries corresponding to vertices $4_1, 4_0, 4_2$ removed) belongs to $\cone(\LS_+(G' \ominus 4_0))$. (See Table~\ref{tabUV} for a $UV$-certificate.)
%Y1e5 = Y1(:,6);
%Remove 4,5,6

{\scriptsize
\[
Y_{11} \ce \bbordermatrix{
&& 1 & 2 & 3 & 5_1 & 5_0 & 5_2 \cr
 & 51150 & 17400 & 17502 & 9571 & 15485 & 27920 & 14993 \cr
 & 17400 & 17400 & 0 & 0 & 7544 & 9856 & 0 \cr
 & 17502 & 0 & 17502 & 0 & 0 & 10450 & 7052 \cr
 & 9571 & 0 & 0 & 9571 & 0 & 9571 & 0 \cr
 & 15485 & 7544 & 0 & 0 & 15485 & 0 & 7941 \cr
 & 27920 & 9856 & 10450 & 9571 & 0 & 27920 & 0 \cr
 & 14993 & 0 & 7052 & 0 & 7941 & 0 & 14993 \cr
}
\]}
Now consider the following vectors:

{\footnotesize
\[
\begin{array}{rccccccccccl}
&& 1 & 2 & 3 & 4_1 & 4_0 & 4_2 & 5_1 & 5_0 & 5_2\\
z^{(1)} \ce [&51150 & 17400 & 17502 & 9571 & 0 & 51150 & 0 & 5485 & 27920 & 14933 & ]^{\top}\\ %26 
z^{(2)} \ce [&51150 & 17502 & 17400 & 9571 & 0 & 51150 & 0 & 14933 & 27920 & 15485 & ]^{\top}\\ %27
z^{(3)} \ce [&57518 & 25340 & 0 & 17164 & 14068 & 34970 & 16496 & 16158 & 41360 & 7678 & ]^{\top}\\ %30 
z^{(4)} \ce [&49680 & 0 & 16977 & 16977 1 & 4068 3 & 4970 & 8662 & 8662 & 34970 & 14068 & ]^{\top} %31 
\end{array}
\]}

Notice that $z^{(1)} \in \cone(\LS_+(G'))$ follows from $Y_1e_{4_0} \in \cone(\LS_+(G'))$ as shown above. Then it follows from the symmetry of $G'$ that $z^{(2)} \in \cone(\LS_+(G'))$ as well. $z^{(3)}, z^{(4)} \in \cone(\LS_+(G'))$ follows respectively from $Y_2e_{4_0}, Y_2e_{6_0} \in \cone(\LS_+(G_{4,1} - 1))$, as shown in Lemma~\ref{lemG412}. Next, observe that

\begin{allowdisplaybreaks}
\begin{align*}
Y_1e_1 \leq{} &
17400 \het{\chi}_{\set{ 1 , 4_0 , 5_0 }} + 
7940 \het{\chi}_{\set{ 1 , 4_2 , 5_1 }} 
,\\
Y_1e_2 \leq{} &
9571	 \het{\chi}_{\set{ 2 , 4_0 , 5_0}} + 
7931 \het{\chi}_{\set{ 	2, 4_0 , 5_2}}
,\\
Y_1e_{4_1} \leq{} &
7931 \het{\chi}_{\set{ 3 ,4_1 , 5_0 }} + 
396 \het{\chi}_{\set{ 4_1 , 4_2 , 5_1}} + 
7092 \het{\chi}_{\set{ 4_1 , 4_2 , 5_0}}
,\\
Y_1e_{4_2} \leq{} &
 414 \het{\chi}_{\set{ 1 , 4_0 , 5_1 }} + 
 7523 \het{\chi}_{\set{ 2 , 4_0 , 5_0 }} + 
 7974 \het{\chi}_{\set{ 3 , 4_0 , 5_0 }} 
 ,\\
 Y_1(e_0-e_1) \leq{} &
 874 \het{\chi}_{\set{ 2 , 4_0 , 5_0 }} + 
 3160 \het{\chi}_{\set{ 2 , 4_0 , 5_2 }} + 
 3160 \het{\chi}_{\set{ 3 , 4_1 , 5_0 }} + 
 874 \het{\chi}_{\set{ 3 , 4_0 , 5_0 }} + 
 1100 \het{\chi}_{\set{ 4_1 , 4_2 , 5_0 }} \\
  &+
 1100 \het{\chi}_{\set{ 4_0 , 5_1 , 5_2 }} + 
 \frac{39412}{49680}z^{(4)}
 ,\\
Y_1(e_0-e_{2}) \leq{} &
 1729 \het{\chi}_{\set{ 1 , 4_0 , 5_1 }} + 
 6165 \het{\chi}_{\set{ 1 , 4_0 , 5_0 }} + 
 626 \het{\chi}_{\set{ 1 , 4_2 , 5_1 }} + 
 1749 \het{\chi}_{\set{ 1 , 4_2 , 5_0 }} \\
  &+
 4298 \het{\chi}_{\set{ 3 , 4_1 , 5_0 }} + 
 2999 \het{\chi}_{\set{ 3 , 4_0 , 5_0 }} + 
 1009 \het{\chi}_{\set{ 4_1 , 4_2 , 5_1 }} + 
 1751 \het{\chi}_{\set{ 4_1 , 4_2 , 5_0 }} \\
  &+
 1959 \het{\chi}_{\set{ 4_0 , 5_1 , 5_2 }} + 
 971 \het{\chi}_{\set{ 4_2 , 5_1 , 5_2 }} + 
 \frac{34235}{57518}z^{(3)}+ 
27\het{\chi}_{\emptyset}
,\\
Y_1(e_0-e_{4_1}) \leq{} &
 498 \het{\chi}_{\set{ 1 , 4_0 , 5_1 }} + 
 2500 \het{\chi}_{\set{ 1 , 4_0 , 5_0 }} + 
 639 \het{\chi}_{\set{ 1 , 4_2 , 5_1 }} + 
 6832 \het{\chi}_{\set{ 1 , 4_2 , 5_0 }} \\
  &+
 1613 \het{\chi}_{\set{ 2 , 4_0 , 5_0 }} + 
 1022 \het{\chi}_{\set{ 2 , 4_0 , 5_2 }} + 
 1421 \het{\chi}_{\set{ 3 , 4_0 , 5_0 }} + 
 478 \het{\chi}_{\set{ 4_0 , 5_1 , 5_2 }} \\ 
 &+
 952 \het{\chi}_{\set{ 4_2 , 5_1 , 5_2 }} + 
 \frac{20799}{51150}z^{(1)} +
 \frac{22819}{51150}z^{(2)} +
 28 \het{\chi}_{\emptyset}
 ,\\
Y_1(e_0-e_{4_0}) \leq{} &
 452 \het{\chi}_{\set{ 1 , 4_0 , 5_1 }} + 
 7504 \het{\chi}_{\set{ 2 , 4_0 , 5_0 }} + 
 7931 \het{\chi}_{\set{ 3 , 4_1 , 5_0 }}+ 
 7955 \het{\chi}_{\set{ 3 , 4_0 , 5_0 }} + 
28 \het{\chi}_{\emptyset}
,\\
Y_1(e_0-e_{4_2}) \leq{} &
 234 \het{\chi}_{\set{ 1 , 4_0 , 5_1 }} + 
 195 \het{\chi}_{\set{ 2 , 4_0 , 5_2 }} + 
 7935 \het{\chi}_{\set{ 3 , 4_1 , 5_0 }} + 
 93 \het{\chi}_{\set{ 3 , 4_0 , 5_0 }} \\
  &+
 \frac{46278}{51150} z^{(1)}+
 \frac{4354}{51150}z^{(2)}+
 20\het{\chi}_{\emptyset}.
\end{align*}
\end{allowdisplaybreaks}

Since all incidence vectors above correspond to stable sets in $G'$, we obtain that all the vectors above belong to $\cone(\LS_+(G'))$. Thus, we conclude that $Y_0e_{6_0} \in \cone(\LS_+^2(G_{4,1}))$.
\end{proof}

\begin{lemma}\label{lemG413}
Let $Y_0$ be as defined in the proof of Theorem~\ref{thmG41}. Then 
\[
Y_0(e_0 - e_{4_1}) \in \cone(\LS_+^2(G_{4,1})).
\]
\end{lemma}

\begin{proof}
For convenience, let $G \ce G_{4,1}$ throughout this proof. Using $Y_0e_{6_0} \in \cone(\LS_+^2(G))$ from Lemma~\ref{lemG411} and the symmetry of $G$, we know that the vector

{\footnotesize
\[
\begin{array}{rcccccccccccccl}
&& 1 & 2 & 3 & 4_1 & 4_0 & 4_2 & 5_1 & 5_0 & 5_2& 6_1 & 6_0 & 6_2\\
z \ce [& 75020 & 17502 & 25340 & 17502 &
 0 & 75020 & 0 &
 15419 & 51150 & 15911 & 
 15911 & 51150 & 15419
&]^{\top}
\end{array}
\]}
belongs to $\cone(\LS_+^2(G))$. Now observe that
\begin{align*}
Y_0(e_0-e_{4_1}) \leq{} & \frac{2}{3} z + \frac{1}{3} \Big(
 7726 \het{\chi}_{\set{ 1 , 4_0 , 5_1 , 6_0 }} + 
 17105 \het{\chi}_{\set{ 1 , 4_0 , 5_0 , 6_0 }} + 
 16187 \het{\chi}_{\set{ 1 , 4_2 , 5_0 , 6_0 }} \\
  &+
 8509 \het{\chi}_{\set{ 2 , 4_0 , 5_0 , 6_1 }} + 
 8324 \het{\chi}_{\set{ 2 , 4_0 , 5_0 , 6_0 }} + 
 8509 \het{\chi}_{\set{ 2 , 4_0 , 5_2 , 6_0 }} + 
 9486 \het{\chi}_{\set{ 3 , 4_0 , 5_0 , 6_0 }} \\ 
 &+
 8017 \het{\chi}_{\set{ 3 , 4_0 , 5_0 , 6_2 }} + 
 7403 \het{\chi}_{\set{ 4_0 , 5_0 , 6_1 , 6_2 }} + 
 9170 \het{\chi}_{\set{ 4_2 , 5_1 , 5_2 , 6_0 }} + 
24\het{\chi}_{\emptyset} \Big).
\end{align*}

Notice that all incidence vectors above correspond to stable sets in $G$. Since $\cone(\LS_+^{2}(G))$ is a lower-comprehensive convex cone, it follows that $Y_0(e_0-e_{4_1}) \in \cone(\LS_+^{2}(G))$.
\end{proof}

Finally, we provide in Table~\ref{tabUV} the $UV$-certificates of all PSD matrices used in Theorem~\ref{thmG41} and Lemmas~\ref{lemG412},~\ref{lemG411}, and~\ref{lemG413}.

\begin{center}
\begin{table}
\setlength\arraycolsep{1pt}
\def\arraystretch{0.95}
\resizebox{\textwidth}{!}{
\begin{tabular}{ccc}
 & $U$ & $V$ \\
\hline
%U0
$Y_0$ & 
$\begin{bmatrix}
-2 & 1 & 1 & 1 & 1 & 1 & 1 & 1 & 1 & 1 & 1 & 1 & 1 \\
0 & 114 & -8 & -107 & 152 & 27 & -114 & 202 & 411 & 348 & -353 & -440 & -234 \\
0 & -57 & 128 & -71 & -320 & -492 & -335 & 292 & 270 & 69 & 29 & 221 & 268 \\
-361 & -710 & -711 & -711 & -88 & 434 & -88 & -88 & 434 & -88 & -89 & 434 & -87 \\
0 & -135 & 521 & -387 & 477 & 230 & 31 & 395 & -171 & -771 & -872 & -60 & 741 \\
0 & 525 & -145 & -379 & 731 & -65 & -873 & -779 & -167 & 409 & 48 & 231 & 462 \\
0 & 0 & 0 & 0 & -984 & 0 & 984 & -984 & 0 & 984 & -984 & 0 & 984 \\
820 & -787 & -788 & -787 & 1222 & -569 & 1222 & 1221 & -569 & 1221 & 1222 & -569 & 1222 \\
0 & 1693 & -2745 & 1052 & -901 & 857 & -628 & 1238 & -329 & -652 & -337 & -529 & 1280 \\
0 & -2192 & -370 & 2562 & 909 & 116 & -1116 & 326 & -801 & 1102 & -1235 & 685 & 14 \\
0 & 1845 & -595 & -1250 & 308 & -990 & 1042 & 1707 & -2079 & 1128 & -2014 & 3069 & -2170 \\
0 & -378 & 1787 & -1409 & -2148 & 2972 & -1904 & 1341 & -2343 & 1854 & 808 & -629 & 50 \\
8215 & 2154 & 2154 & 2154 & 1246 & 6313 & 1246 & 1246 & 6313 & 1246 & 1246 & 6313 & 1246
\end{bmatrix}$
 & 
%V0
$\begin{bmatrix}
11050 & 1142 & 1601 & 781 & -196 & 621 & -196 & 624 & 621 & 624 & -557 & 621 & 165 \\
1142 & 15917 & -509 & -499 & -1882 & 3390 & -156 & -27 & 432 & -915 & -1711 & -214 & 1694 \\
1601 & -509 & 12805 & -677 & 958 & -1473 & 1373 & -587 & -70 & 59 & -379 & 235 & 227 \\
781 & -499 & -677 & 14384 & 1269 & 58 & 20 & 75 & 1698 & -1891 & -868 & 901 & 703 \\
-196 & -1882 & 958 & 1269 & 13455 & -739 & -38 & -706 & 1544 & -297 & -936 & 409 & -415 \\
621 & 3390 & -1473 & 58 & -739 & 11866 & -186 & 540 & -364 & 186 & -530 & 381 & 544 \\
-196 & -156 & 1373 & 20 & -38 & -186 & 8568 & -1358 & 220 & 469 & 156 & -1144 & -163 \\
624 & -27 & -587 & 75 & -706 & 540 & -1358 & 10158 & 1776 & -61 & 476 & -1550 & 172 \\
621 & 432 & -70 & 1698 & 1544 & -364 & 220 & 1776 & 11890 & -1026 & 1209 & 198 & 23 \\
624 & -915 & 59 & -1891 & -297 & 186 & 469 & -61 & -1026 & 13347 & -1927 & -193 & 1281 \\
-557 & -1711 & -379 & -868 & -936 & -530 & 156 & 476 & 1209 & -1927 & 11530 & -292 & 43 \\
621 & -214 & 235 & 901 & 409 & 381 & -1144 & -1550 & 198 & -193 & -292 & 9703 & 1337 \\
165 & 1694 & 227 & 703 & -415 & 544 & -163 & 172 & 23 & 1281 & 43 & 1337 & 8773
\end{bmatrix}$
\\
\\
$Y_{2}$ & 
%U2
$\begin{bmatrix}
-2 & 1 & 1 & 0 & 0 & 1 & 1 & 0 & 0 & 1 & 1 & 0\\
0 & 1 & -2 & 72 & 70 & -5 & 5 & -70 & -72 & 0 & 0 & 0\\
-67 & -50 & -51 & 82 & 87 & -35 & -35 & 88 & 82 & -57 & -58 & -56\\
-56 & -129 & -129 & 35 & -44 & -250 & -250 & -44 & 35 & 142 & 278 & 143\\
0 & -110 & 110 & -110 & 141 & 384 & -384 & -141 & 111 & 52 & 0 & -52\\
0 & 216 & -216 & -18 & 25 & 125 & -125 & -25 & 18 & -567 & 0 & 567\\
-144 & -423 & -423 & -424 & 255 & 253 & 252 & 255 & -424 & -48 & 263 & -48\\
362 & -556 & -556 & 538 & -221 & 531 & 531 & -221 & 538 & 546 & -248 & 546\\
0 & 928 & -928 & 442 & -429 & 492 & -492 & 429 & -442 & 429 & 0 & -429\\
-62 & 82 & 82 & -588 & 568 & -470 & -470 & 568 & -588 & 1034 & -1489 & 1034\\
0 & -1020 & 1021 & 1031 & -1020 & 432 & -432 & 1020 & -1031 & -371 & 0 & 371\\
3377 & 1203 & 1203 & 673 & 2686 & 338 & 338 & 2686 & 673 & 717 & 2294 & 717
\end{bmatrix}$
 & 
%V2
$\begin{bmatrix}
4918 & 41 & -26 & -35 & -29 & -89 & -233 & 38 & -35 & -130 & 576 & -9\\
41 & 6219 & 172 & 148 & 1281 & -511 & 448 & -1143 & 908 & -675 & 173 & -80\\
-26 & 172 & 4074 & -79 & -16 & 399 & -532 & 329 & 1079 & 54 & 115 & -922\\
-35 & 148 & -79 & 5725 & -112 & 974 & -28 & -20 & 201 & -548 & 953 & -344\\
-29 & 1281 & -16 & -112 & 4852 & -1238 & -471 & -31 & -79 & -27 & -22 & -1183\\
-89 & -511 & 399 & 974 & -1238 & 5976 & -273 & -691 & 12 & -533 & 26 & -469\\
-233 & 448 & -532 & -28 & -471 & -273 & 6481 & -948 & 934 & -803 & 289 & -295\\
38 & -1143 & 329 & -20 & -31 & -691 & -948 & 4677 & -53 & -1083 & 36 & -14\\
-35 & 908 & 1079 & 201 & -79 & 12 & 934 & -53 & 5504 & -279 & 953 & -613\\
-130 & -675 & 54 & -548 & -27 & -533 & -803 & -1083 & -279 & 4546 & 77 & -53\\
576 & 173 & 115 & 953 & -22 & 26 & 289 & 36 & 953 & 77 & 9521 & -142\\
-9 & -80 & -922 & -344 & -1183 & -469 & -295 & -14 & -613 & -53 & -142 & 4375\\
\end{bmatrix}$
\\
\\
$Y'_{21}$ & 
%U_{21} = 
$\begin{bmatrix}
-44 & -1 & 0 & -20 & -1 & 0 & 0 & 0\\
233 & 22 & 465 & 229 & 20 & -410 & -389 & 86\\
385 & 793 & -241 & -149 & 305 & 244 & -460 & -545\\
439 & 137 & -877 & -360 & 468 & -700 & -18 & 959\\
1565 & -1333 & -1075 & 528 & -1287 & 508 & -788 & -327\\
655 & -1159 & 930 & -530 & 1923 & 1995 & -641 & 1022\\
-1375 & 1105 & -499 & 1255 & -949 & 1180 & -1505 & 1780\\
2466 & 1620 & 307 & 4002 & 775 & 986 & 3444 & 835\\
\end{bmatrix}$
 & 
%V_{21} = 
$\begin{bmatrix}
3758 & -873 & -519 & -1 & 636 & 9 & -429 & -479\\
-873 & 7118 & -18 & 2774 & -91 & -423 & -1334 & -135\\
-519 & -18 & 1822 & 17 & 282 & -78 & 70 & -402\\
-1 & 2774 & 17 & 8485 & -2504 & -708 & 22 & 687\\
636 & -91 & 282 & -2504 & 6178 & -107 & 586 & 225\\
9 & -423 & -78 & -708 & -107 & 3903 & -635 & -904\\
-429 & -1334 & 70 & 22 & 586 & -635 & 5449 & 602\\
-479 & -135 & -402 & 687 & 225 & -904 & 602 & 5052
\end{bmatrix}$
\\
\\
%U22
$Y_{22}$ & 
$\begin{bmatrix}
5 & -3 & -3 & -1 & -1 & -3 & -3 & -1 & -1\\
-1 & -12 & 9 & 8011 & 8038 & -189 & 190 & -8035 & -8009\\
-7048 & -7604 & -7605 & 6746 & 7345 & -5664 & -5664 & 7345 & 6752\\
-1 & 8338 & -8339 & 10389 & -11032 & -29236 & 29235 & 11032 & -10389\\
5767 & 23052 & 23052 & 28635 & -19951 & -26290 & -26290 & -19952 & 28635\\
0 & 82739 & -82739 & 16909 & -15913 & 35612 & -35612 & 15913 & -16909\\
32757 & -40385 & -40385 & 61612 & -35034 & 61878 & 61878 & -35034 & 61613\\
0 & 55744 & -55744 & -88585 & 87206 & -48485 & 48485 & -87206 & 88585\\
239799 & 88332 & 88332 & 60363 & 177671 & 34664 & 34664 & 177671 & 60363
\end{bmatrix}$
 & 
%V22
$\begin{bmatrix}
245290 & 67346 & 43642 & 33983 & 37 & 8790 & 67640 & 11795 & 6716\\
67346 & 262681 & -47235 & -60196 & 1386 & -15331 & -35125 & -4652 & 21255\\
43642 & -47235 & 230858 & -23610 & -7313 & -78942 & 460 & 18674 & -1875\\
33983 & -60196 & -23610 & 293133 & -56794 & -5587 & 28035 & 3646 & 3783\\
37 & 1386 & -7313 & -56794 & 116143 & 891 & -8115 & -5728 & -26068\\
8790 & -15331 & -78942 & -5587 & 891 & 159853 & -29478 & -14768 & -1859\\
67640 & -35125 & 460 & 28035 & -8115 & -29478 & 217945 & -6899 & -36239\\
11795 & -4652 & 18674 & 3646 & -5728 & -14768 & -6899 & 124459 & 2908\\
6716 & 21255 & -1875 & 3783 & -26068 & -1859 & -36239 & 2908 & 120960
\end{bmatrix}$
\\
\\
$Y_{23}$ & 
%U23
$\begin{bmatrix}
-1 & -1 & 0 & 0 & -1 & 0 & 0 & 0 & -1 & 0 & 0\\
3 & 0 & -10 & -5 & -6 & -1 & 11 & -3 & -6 & 23 & 24\\
-1 & 7 & 26 & 29 & 7 & -50 & -61 & -4 & 4 & 18 & 20\\
-56 & -71 & 75 & 74 & -61 & -45 & 63 & 55 & -78 & -15 & -13\\
-20 & -29 & 71 & 77 & -26 & 125 & -34 & -186 & -29 & 4 & 12\\
3 & -6 & 219 & -154 & -525 & -3 & -17 & 16 & 490 & 160 & -165\\
71 & 395 & 245 & -200 & -264 & 256 & -195 & 248 & -296 & -219 & 283\\
-382 & 1053 & -281 & -46 & -387 & -664 & 309 & -650 & -488 & 157 & -530\\
-61 & 135 & -946 & 906 & -666 & -101 & 25 & -28 & 493 & -776 & 714\\
25 & -59 & 764 & -724 & 359 & -748 & 769 & -732 & 442 & -855 & 883\\
2625 & 1525 & 824 & 1782 & 403 & 373 & 2243 & 358 & 385 & 1800 & 822
\end{bmatrix}$
 & 
%V23
$\begin{bmatrix}
3268 & -126 & -24 & -112 & -237 & 198 & -265 & 583 & 192 & -200 & 631\\
-126 & 5487 & -1306 & -32 & 22 & -488 & 501 & -233 & -286 & 8 & 753\\
-24 & -1306 & 3523 & -177 & 4 & 802 & 29 & 9 & -228 & -99 & 75\\
-112 & -32 & -177 & 2815 & -3 & -4 & -57 & 928 & 47 & 597 & -258\\
-237 & 22 & 4 & -3 & 4247 & 119 & -1336 & 82 & -822 & -708 & 8\\
198 & -488 & 802 & -4 & 119 & 3240 & 189 & 165 & 337 & -120 & 68\\
-265 & 501 & 29 & -57 & -1336 & 189 & 4825 & 96 & -29 & 796 & 459\\
583 & -233 & 9 & 928 & 82 & 165 & 96 & 3654 & 262 & 374 & -122\\
192 & -286 & -228 & 47 & -822 & 337 & -29 & 262 & 4115 & -118 & 199\\
-200 & 8 & -99 & 597 & -708 & -120 & 796 & 374 & -118 & 4024 & -139\\
631 & 753 & 75 & -258 & 8 & 68 & 459 & -122 & 199 & -139 & 4828
\end{bmatrix}$
\\
\\
$Y_{24}$ & 
%U24
$\begin{bmatrix}
-23 & -1 & -1 & 0 & 0 & -1 & -1 & -1 & 0 & 0 & -1\\
20 & 7 & 6 & -12 & -25 & 0 & -15 & -19 & 0 & 5 & 10\\
-4 & -1 & 2 & 44 & 43 & -2 & -42 & -42 & -2 & -2 & 0\\
72 & 31 & 31 & -119 & 136 & 21 & 3 & 6 & 13 & 7 & 20\\
-41 & -70 & 16 & -41 & -1 & -164 & -61 & 26 & 192 & 188 & -6\\
-19 & -131 & -304 & -36 & 25 & 107 & 83 & -101 & -95 & 203 & 256\\
-13 & -170 & -62 & -28 & 16 & 369 & 166 & -186 & 294 & 26 & -360\\
82 & -767 & 605 & 68 & 15 & 358 & -253 & 330 & -375 & 269 & 44\\
33 & 510 & -537 & 20 & 15 & 373 & -620 & 652 & -252 & 506 & -562\\
-222 & 245 & 337 & -215 & -3 & -333 & 402 & -620 & -773 & 659 & -567\\
2058 & 955 & 634 & 2045 & 7 & 242 & 1551 & 501 & 533 & 1338 & 451
\end{bmatrix}$
 & 
%V24
$\begin{bmatrix}
3039 & 256 & -4 & 11 & 33 & 616 & 259 & 212 & 195 & 517 & 213\\
256 & 2748 & -376 & -1 & -168 & -556 & 192 & -60 & 147 & -215 & -197\\
-4 & -376 & 2699 & -164 & 9 & -114 & -33 & -861 & 7 & 513 & -52\\
11 & -1 & -164 & 3056 & -981 & -238 & 125 & -117 & 651 & -6 & -116\\
33 & -168 & 9 & -981 & 2948 & -11 & -251 & 5 & -464 & 102 & -130\\
616 & -556 & -114 & -238 & -11 & 3982 & 71 & 516 & 145 & 253 & -36\\
259 & 192 & -33 & 125 & -251 & 71 & 4861 & -1304 & 80 & 1182 & 218\\
212 & -60 & -861 & -117 & 5 & 516 & -1304 & 3960 & 18 & -14 & 8\\
195 & 147 & 7 & 651 & -464 & 145 & 80 & 18 & 2459 & 90 & -12\\
517 & -215 & 513 & -6 & 102 & 253 & 1182 & -14 & 90 & 5059 & 1081\\
213 & -197 & -52 & -116 & -130 & -36 & 218 & 8 & -12 & 1081 & 2689
\end{bmatrix}$
\\
\\
$Y_{1}$ &
%U1
$\begin{bmatrix}
-3 & 1 & 1 & 1 & 1 & 1 & 1 & 1 & 1 & 1 \\
0 & 0 & 312 & -312 & -663 & -1314 & -1080 & 1080 & 1313 & 665 \\
639 & 262 & 1834 & 1834 & -1199 & -1101 & 1007 & 1007 & -1100 & -1201 \\
-573 & -2813 & -954 & -953 & -1543 & 1304 & 1872 & 1872 & 1305 & -1544 \\
0 & 0 & 1938 & -1939 & 3269 & 304 & -1822 & 1822 & -303 & -3269 \\
0 & 0 & 6254 & -6254 & -1331 & -1182 & 4069 & -4069 & 1182 & 1331 \\
-2501 & 2818 & 1055 & 1055 & -5112 & 3150 & -4138 & -4138 & 3151 & -5112 \\
-123 & -9594 & 4954 & 4954 & 2208 & 1240 & -3751 & -3751 & 1240 & 2208 \\
0 & 0 & 3904 & -3904 & -6195 & 8605 & -5528 & 5528 & -8605 & 6195 \\
22183 & 7784 & 5318 & 5318 & 4055 & 15768 & 4304 & 4304 & 15768 & 4055
\end{bmatrix}$
 & 
%V1
$\begin{bmatrix}
15782 & 3040 & 944 & 1517 & -961 & 810 & -1074 & -1074 & 3245 & -256 \\
3040 & 28790 & -3737 & -924 & -1274 & 561 & -1625 & -1625 & 294 & -3563 \\
944 & -3737 & 23948 & -126 & -1931 & 2745 & 203 & -3361 & -4007 & 3326 \\
1517 & -924 & -126 & 21978 & 6048 & -1446 & -7055 & 153 & 828 & -318 \\
-961 & -1274 & -1931 & 6048 & 37807 & -12613 & -459 & -3449 & 188 & 3617 \\
810 & 561 & 2745 & -1446 & -12613 & 36247 & 852 & -2975 & 6408 & -2004 \\
-1074 & -1625 & 203 & -7055 & -459 & 852 & 18359 & 322 & -974 & 2597 \\
-1074 & -1625 & -3361 & 153 & -3449 & -2975 & 322 & 18359 & 1369 & 1267 \\
3245 & 294 & -4007 & 828 & 188 & 6408 & -974 & 1369 & 32772 & -4348 \\
-256 & -3563 & 3326 & -318 & 3617 & -2004 & 2597 & 1267 & -4348 & 27264
\end{bmatrix}$
\\
\\
$Y_{11}$ &
%U11
$\begin{bmatrix}
-2 & 1 & 1 & 1 & 1 & 1 & 1 \\
1218 & 1284 & 1319 & 2378 & -1339 & -3175 & -1273 \\
11 & 2338 & -2256 & -81 & -4392 & 40 & 4398 \\
-981 & 5772 & 5824 & -8802 & -2137 & -362 & -2372 \\
169 & -10840 & 10788 & -537 & -4810 & -823 & 6490 \\
2335 & 1031 & -1134 & -5111 & 10842 & -11453 & 9705 \\
26748 & 9331 & 9439 & 5079 & 7256 & 15899 & 7008
\end{bmatrix}$
 & 
%V11
$\begin{bmatrix}
32150 & -109 & 12752 & 4226 & 506 & -460 & 13236 \\
-109 & 21193 & 568 & -2619 & -6911 & -795 & -2712 \\
12752 & 568 & 20891 & 130 & -600 & 863 & -5224 \\
4226 & -2619 & 130 & 19072 & -4217 & 463 & -1160 \\
506 & -6911 & -600 & -4217 & 20400 & 2412 & -2691 \\
-460 & -795 & 863 & 463 & 2412 & 32551 & -3917 \\
13236 & -2712 & -5224 & -1160 & -2691 & -3917 & 37492
\end{bmatrix}$\\
\end{tabular}}

\caption{$UV$-certificates for matrices in the proofs of Theorem~\ref{thmG41} and Lemmas~\ref{lemG412},~\ref{lemG411}, and~\ref{lemG413}}\label{tabUV}
\end{table}
\end{center}

\end{document}